\newcommand{\orcid}[1]{\href{https://orcid.org/#1}{\includegraphics[width=8pt]{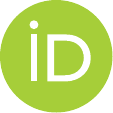}}} 
\definecolor{quantumviolet}{HTML}{53257F} 
\definecolor{quantumgray}{HTML}{555555} 
\definecolor{quantumgreen}{HTML}{007474} 
\definecolor{quantumblue}{HTML}{002366} 
\definecolor{quantumpurple}{HTML}{66023C} 
\definecolor{quantumdarkviolet}{HTML}{5D3954} 
\theoremstyle{theorem}
\newtheorem{theorem}{Theorem}
\newtheorem{definition}{Definition}
\newtheorem{proposition}{Proposition}
\theoremstyle{remark}
\newtheorem{remark}{Remark}[section]
\newcommand{\Av}{\mathfrak{A}}
\newcommand{\Bf}{\mathfrak{B}}
\newcommand{\Zbb}{\mathbb{Z}}
\newcommand{\Cbb}{\mathbb{C}}
\newcommand{\Nbb}{\mathbb{N}}
\newcommand{\one}{\mathbb{1}}
\newcommand{\Rep}{\mathsf{Rep}}
\newcommand{\XR}{\overset{\rightarrow}{X}}
\newcommand{\XL}{\overset{\leftarrow}{X}}
\newcommand{\cone}{(1)}
\newcommand{\ctwo}{(2)}
\newcommand{\cthree}{(3)}
\newcommand\Tr{\operatorname{Tr}}
\newcommand{\id}{\operatorname{id}}
\newcommand{\Irr}{\operatorname{Irr}}
\newcommand{\Hom}{\operatorname{Hom}}
\newcommand{\End}{\operatorname{End}}
\newcommand{\cA}{ {\cal A} }
\newcommand\eB           {\EuScript{B}}
\newcommand\eC           {\EuScript{C}}
\newcommand\eN         {\EuScript{N}}
\begin{document}

\flushbottom

\title{Generalized cluster states from Hopf algebras: non-invertible symmetry and Hopf tensor network representation}

\author[a,b]{Zhian Jia\orcid{0000-0001-8588-173X}}

\affiliation[a]{Centre for Quantum Technologies, National University of Singapore, Singapore 117543, Singapore}
\affiliation[b]{Department of Physics, National University of Singapore, Singapore 117543, Singapore}

\emailAdd{giannjia@foxmail.com}

\abstract{
Cluster states are crucial resources for measurement-based quantum computation (MBQC). It exhibits symmetry-protected topological (SPT) order, thus also playing a crucial role in studying topological phases.
We present the construction of cluster states based on Hopf algebras. By generalizing the finite group valued qudit to a Hopf algebra valued qudit and introducing the generalized Pauli-X operator based on the regular action of the Hopf algebra, as well as the generalized Pauli-Z operator based on the irreducible representation action on the Hopf algebra, we develop a comprehensive theory of Hopf qudits. We demonstrate that non-invertible symmetry naturally emerges for Hopf qudits.
Subsequently, for a bipartite graph termed the cluster graph, we assign the identity state and trivial representation state to even and odd vertices, respectively. Introducing the edge entangler as controlled regular action, we provide a general construction of Hopf cluster states.
To ensure the commutativity of the edge entangler, we propose a method to construct a cluster lattice for any triangulable manifold.
We use the 1d cluster state as an example to illustrate our construction. As this serves as a promising candidate for SPT phases, we construct the gapped Hamiltonian for this scenario and provide a detailed discussion of its non-invertible symmetries. We demonstrate that the 1d cluster state model is equivalent to the quasi-1d Hopf quantum double model with one rough boundary and one smooth boundary. We also discuss the generalization of the Hopf cluster state model to the Hopf ladder model through symmetry topological field theory. Furthermore, we introduce the Hopf tensor network representation of Hopf cluster states by integrating the tensor representation of structure constants with the string diagrams of the Hopf algebra, which can be used to solve the Hopf cluster state model.
}

\keywords{Anyons, Quantum Groups, Topological States of Matter, Topological Field Theories, Symmetry-protected topological (SPT) order}

\maketitle

\section{Introduction}

Quantum computation, with the potential to revolutionize computation by solving certain tasks that are classically intractable, has garnered considerable attention over the past several decades. It remains one of the main challenges in the development of modern quantum technologies \cite{Nielsen2010,preskill1998}.
The standard model for quantum computation is the quantum circuit model \cite{deutsch1989quantum} which acts both as a framework for theoretical investigations and as a guide for experiments. 
The measurement-based, or one-way, model of quantum computation (MBQC) provides a crucial alternative \cite{Raussendorf2001,Raussendorf2003measurement,nielsen2006cluster,briegel2009measurement}. MBQC utilizes highly entangled states, commonly referred to as graph states (sometimes called cluster states by certain authors). In this work, we adopt the term `cluster graph' to denote a bipartite graph with specific additional properties (see Definition~\ref{def:ClusterGraph}). Quantum computing operations are accomplished by performing measurements on these states.

The definition of a qubit graph state is simple.
    For a graph $K=(V,E)$, a qubit graph state is uniquely determined by the underlying graph $K$ \cite{Raussendorf2001,Raussendorf2003measurement,nielsen2006cluster,briegel2009measurement}:
\begin{equation} \label{eq:QubiGraphState}
	|K,\Cbb[\Zbb_2]\rangle =( \prod_{k=1}^{|E|} CZ_{e_k}) ( \otimes_{i=1}^{|V|}  |+\rangle_{v_i}),
\end{equation}
where $CZ$ is controlled-$Z$ gate.
The stabilizers are $T_v=X_v \otimes(\otimes_{w\in N(v)}Z_w)$ with $N(v)$ denoting the set of vertices that are neighbours of $v$.
See Ref.~\cite{hein2006entanglement} for a thorough review of the qubit graph state.
There exist a variety of generalizations of the qubit graph state, for example, hypergraph states \cite{Qu2013encoding,rossi2013quantum}; qudit graph and hypergraph state \cite{Steinhoff2017qudit,Xiong2017qudit,cui2015generalized}; cluster graph state based on finite group \cite{brell2015generalized,fechisin2023noninvertible}; and graph and hypergraph state for continuous variables \cite{Walschaers2018tailoring,Moore2019quantum}.
In addition to its application in MBQC, these highly entangled states also find utility in quantum teleportation, quantum metrology, quantum error-correction, quantum secret sharing, and various other quantum information tasks \cite{hein2006entanglement,Qu2013encoding,rossi2013quantum,Steinhoff2017qudit,Xiong2017qudit,cui2015generalized,fechisin2023noninvertible,Walschaers2018tailoring,Moore2019quantum,Looi2008QECC,Markham2008graph,Keet2010quantum}, among others mentioned in the literature.

From a quantum matter perspective, the one dimensional cluster state is a typical example of symmetry-protected topological (SPT) order.
The Hamiltonian corresponds to Eq.~\eqref{eq:QubiGraphState} is 
\begin{equation}\label{eq:ClusterHamiltonianNormal}
    H_{\rm cluster}=-\sum_i Z_{i-1}X_{i}Z_{i+1}.
\end{equation}
This model has the $\Zbb_2\times \Zbb_2$ SPT order \cite{son2012topological}. Recently, it's also argued that the cluster state model also has a non-invertible global symmetry, described by the fusion category $\Rep(D_8)$ \cite{seifnashri2024cluster}.
As the qubit system is a quantum system valued in the Abelian group $\mathbb{Z}_2$, it is natural to consider the generalization to the finite group case. In this scenario, the Calderbank-Shor-Steane (CSS)-type formulation of the graph state for bipartite graphs must be utilized \cite{brell2015generalized}.
After applying Hadamard gates over even vertices, we obtain the CSS qubit cluster state:
\begin{equation} \label{eq:CSScluster}
\begin{aligned}
     |K,\Cbb[\Zbb_2]\rangle_{\rm CSS}=
     (\prod_{
     \substack{\langle u,v\rangle \\ \\ u\in V_o,v\in V_e}  } CX_{\langle u,v\rangle}) [(\otimes_{u\in V_o} |+\rangle_{u})\otimes (\otimes_{v\in V_e} |0\rangle_v)],
\end{aligned}
\end{equation}
where we have used $\langle u,v\rangle$ to represent adjacent vertices (an edge).
The stabilizer for this state becomes $X_{i-1}X_iX_{i+1}$ for odd $i$ and $Z_{i-1}Z_iZ_{i+1}$ for even $i$, which is of the CSS-type. The corresponding cluster state Hamiltonian is 
\begin{equation}\label{eq:ClusterHamiltonianCSS}
    H=-\sum_{i\in 2\Nbb} Z_{i-1}Z_iZ_{i+1}-\sum_{i\in 2\Nbb+1} X_{i-1}X_iX_{i+1}
\end{equation}

Based on this observation, Brell generalizes the cluster state by using the finite group valued qudit \cite{brell2015generalized} and generalized form of Pauli operators, akin to Kitaev's quantum double model \cite{Kitaev2003,albert2021spin}.
The non-invertible symmetry and SPT order for finite group cluster state are recently discussed in Ref.~\cite{fechisin2023noninvertible}.
Via using tensor network representation, they show that for finite group $G$ valued cluster state, there is a $G\times \Rep(G)$ non-invertible symmetry.

However, Hopf algebras provide a more general framework for studying non-generalized symmetries. Roughly speaking, generalized global symmetries can be classified as invertible and non-invertible symmetries. The invertible symmetries are characterized by groups, while non-invertible symmetries are usually characterized by categories or other algebraic structures, where the elements are generally non-invertible. For more details, see, e.g., Refs. \cite{cordova2022snowmass,brennan2023introduction,mcgreevy2023generalized,luo2023lecture,shao2024whats,SchaferNameki2024ICTP,Bhardwaj2024lecture,delcamp2024higher}. Typical examples of non-invertible symmetries include fusion category symmetries \cite{Frohlich2004kramers,fuchs2007topological,frohlich2010defect,bhardwaj2018finite,chang2019topological,thorngren2019fusion,thorngren2021fusion,komargodski2021symmetries,inamura2022lattice,inamura2023fermionization}.
Depending on the codimension of the topological defect on which the symmetry operator is defined, there can be $0$-form and higher-form symmetries \cite{gaiotto2015generalized,kapustin2017higher,gomes2023introduction,Bhardwaj2024lecture}. Constructing lattice models that exhibit non-invertible symmetries and higher-form symmetries has attracted significant attention in recent years \cite{inamura2022lattice,inamura2023fermionization,bhardwaj2024illustrating,bhardwaj2024lattice}.
Since cluster state models have non-invertible symmetry, building on this line of inquiry, the authors in Ref.~\cite{fechisin2023noninvertible} pose the open problem of constructing the Hopf algebra cluster state and exploring its SPT order. The main objective of this work is to present a general construction of Hopf algebra valued cluster states and develop their tensor network representation. Discussions regarding SPT order will be addressed elsewhere \cite{jia2024cluster}.

Similar to that for finite group case, we find the Hopf cluster state is closely related to the Hopf quantum double model \cite{Buerschaper2013a,chen2021ribbon,jia2023boundary,Jia2023weak}.
The Haar integral of the Hopf algebra plays a crucial role in the Hopf quantum double model, inspired by this, we use the Haar integral as a preferred input state. 
We emphasize the significance of coalgebra and antipode structures for constructing a Hopf algebra valued cluster state.
The generalized Pauli-Z operators are defined based on representations of the Hopf algebra, while the generalized Pauli-X operators are defined using regular actions of the Hopf algebra on itself. The edge entangler is defined by a generalized controlled-X operation, which heavily relies on the comultiplication of the Hopf algebra.
The following are our main contributions: 
(i) We introduce the Hopf algebra-valued qudit and outline the construction of generalized Pauli operators in this scenario. Building upon this foundation, we proceed to construct the Hopf cluster states.
(ii) We introduce a Hopf tensor network that combines Hopf algebra structure constants with string diagrams of the Hopf algebra. We provide a detailed discussion of the Hopf tensor network representation of the Hopf cluster state.
(iii) We delve into the non-invertible symmetry of the Hopf algebra-valued qudit system in detail and develop the stabilizer formalism for the Hopf cluster state. Additionally, we construct the Hamiltonian for the $1$-dimensional cluster graph.
(iv) We also provide a generalization of cluster states to arbitrary graph and hypergraph based on the pairing of (weak) Hopf algebras, this is more similar to the original definition of qubit graph state in Eq.~\eqref{eq:QubiGraphState}.

The paper is organized as follows. 
In Sec.~\ref{sec:HopfQudit}, we provide an overview of the fundamental definition of Hopf algebra, followed by a discussion on constructing a Hopf algebra valued qudit and presenting the corresponding generalized Pauli operators.
We also demonstrate that the Hopf qudit naturally embodies non-invertible symmetries.
Sec.~\ref{sec:HopfCSSCluster} is devoted to constructing the CSS-type Hopf cluster state.
Sec.~\ref{sec:ClusterLattice} explains how to construct a Hopf cluster state on arbitrary given spatial manifold such that the corresponding entangler circuit is of finite depth.
The Hopf tensor network is introduced in Sec.~\ref{sec:HopfTensor}.
In Sec.~\ref{sec:Hopf1d}, we take one dimensional as an example to illustrate our construction and present the explicit Hamiltonian in this case. We show that the Hopf cluster state model has non-invertible symmetries. In Sec.~\ref{sec:QD}, the equivalence between the 1d Hopf cluster state model and the quasi-1d Hopf quantum double model is established. In Sec.~\ref{sec:SymTFT}, we generalize the cluster state model to Hopf ladder model based on symmetry topological field theory (SymTFT).
Sec.~\ref{sec:GraphHypergraphHopf} is devoted to the construction of the general graph and hypergraph state based on a Hopf algebra.
In the last part, we provide some concluding remarks.

\section{Hopf algebra valued qudit}
\label{sec:HopfQudit}

In this section, we introduce the definition of a Hopf algebra and discuss its properties that will be utilized in our construction of Hopf cluster states. We delve into the concept of a Hopf algebra valued qudit, which is a generalization of the finite group valued qudit, and discuss its details. Furthermore, we provide a discussion of the concept of Hopf symmetry and its representation category symmetry for qudits. These concepts play a crucial role in the study of SPT phases for Hopf cluster states, which will be discussed elsewhere \cite{jia2024cluster}.

\subsection{Preliminaries of Hopf algebra theory}
\label{subsec:DefinitionHopf}
Let's begin by providing an abstract definition of a Hopf algebra before delving into its detailed explanation. For a more comprehensive discussion of Hopf algebras, refer to, e.g., Refs.~\cite{abe2004hopf,kassel2012quantum}. Our conventions and notations are the same as those used in our previous work on Hopf and weak Hopf quantum double models~\cite{jia2023boundary,Jia2023weak}. As we will see, our construction of Hopf cluster states is also closely related to the Hopf quantum double model, which will be illustrated later in Sec.~\ref{sec:QD}.

\begin{definition} \label{def:HopfAlg}
A Hopf algebra is a complex vector space $\cA$ equipped with several structure linear morphisms: multiplication $\mu: \cA\otimes \cA\to \cA$, unit $\eta: \mathbb{C}\to \cA$, comultiplication $\Delta: \cA\to \cA\otimes \cA$, counit $\varepsilon: \cA\to \mathbb{C}$ and antipode $S:\cA\to \cA$, for which some consistency conditions are satisfied:
	\begin{enumerate}
		\item 	$(\cA,\mu,\eta)$	is an algebra: $\mu \circ (\mu \otimes \id) =\mu \circ (\id \otimes \mu)$,  and $\mu \circ (\eta \otimes \id)=\id =\mu \circ (\id \otimes \eta)$.
		
		\item $(\cA,\Delta,\varepsilon)$ is a coalgebra:  $(\Delta \otimes \id)\circ \Delta = (\id \otimes \Delta )\circ \Delta$,  and 
		$ (\varepsilon \otimes \id) \circ \Delta = \id = (\id \otimes \varepsilon)\circ \Delta$. 
		
		\item $(\cA,\mu,\eta,\Delta,\varepsilon)$ is a bialgebra: $\Delta$ and $\varepsilon$ are algebra homomorphisms (equivalently $\mu$ and $\eta$ are coalgebra homomorphisms).
		
		\item The antipode $S$ satisfies: $ \mu \circ (S\otimes \id) \circ \Delta =\eta \circ \varepsilon = \mu \circ (\id \otimes S)\circ \Delta$.
	\end{enumerate} 
\end{definition}

To accommodate readers with a background in physics, here we provide a detailed discussion of these algebraic concepts.
An algebra structure $(\mathcal{A}, \mu, \eta)$ can also be represented using string diagrams \cite{turaev2016quantum,bakalov2001lectures}. These string diagrams are quite inspiring for our construction of Hopf tensor networks, and they make tensor calculations very intuitive (see Sec.~\ref{sec:HopfTensor}).
		\begin{equation}
			\mu=\begin{aligned}
			\begin{tikzpicture}
				 \draw[black, line width=1.0pt]  (-0.5, 0) .. controls (-0.4, 1) and (0.4, 1) .. (0.5, 0);
				 \draw[black, line width=1.0pt]  (0,0.75)--(0,1.15);
				\end{tikzpicture}
			\end{aligned},\quad 
		\eta= \begin{aligned}
			\begin{tikzpicture}
				\draw[black, line width=1.0pt]  (0,0.75)--(0,1.5);
				\draw [fill = white](0, 0.73) circle (2pt);
			\end{tikzpicture}
		\end{aligned}\,\,\, .
		\end{equation}
We denote the unit element of $\cA$ as $1_{\cA}=\eta(1)$.
The algebra axioms can be represented as
\begin{gather}	
 \begin{aligned}
			\begin{tikzpicture}
				\draw[black, line width=1.0pt]  (-0.5, 0) .. controls (-0.4, 0.8) and (0.4, 0.8) .. (0.5, -0.5);
				\draw[black, line width=1.0pt]  (-0.08,0.53)--(-0.08,0.93);
				\draw[black, line width=1.0pt]  (-0.8, -0.5) .. controls (-0.7, 0.19) and (-0.3, 0.19) .. (-0.2, -0.5);
			\end{tikzpicture}
		\end{aligned}
=	
	\begin{aligned}
	\begin{tikzpicture}
		\draw[black, line width=1.0pt]  (-0.5, -0.5) .. controls (-0.4, 0.8) and (0.4, 0.8) .. (0.5, 0);
		\draw[black, line width=1.0pt]  (0.08,0.53)--(0.08,0.93);
		\draw[black, line width=1.0pt]  (0.2, -0.5) .. controls (0.3, 0.19) and (0.7, 0.19) .. (0.8, -0.5);
	\end{tikzpicture}
\end{aligned},
	\quad 
			\id\,\,\,
	\begin{aligned}
		\begin{tikzpicture}
			\draw[black, line width=1.0pt] (0,-0.5)--(0,0.7);
		\end{tikzpicture}
	\end{aligned}
= \begin{aligned}
			\begin{tikzpicture}
				\draw[black, line width=1.0pt]   (0,0.3) arc (90:180:0.5);
				\draw [fill = white](-0.5, -0.2) circle (2pt);
				\draw[black, line width=1.0pt] (0,-0.5)--(0,0.7);
			\end{tikzpicture}
		\end{aligned}
	=
			 \begin{aligned}
		\begin{tikzpicture}
			\draw[black, line width=1.0pt]   (0.5,-0.2) arc (0:90:0.5);
			\draw [fill = white](0.5, -0.2) circle (2pt);
			\draw[black, line width=1.0pt] (0,-0.5)--(0,0.7);
		\end{tikzpicture}
	\end{aligned}\,\,\,.
\end{gather}
The group algebra $\Cbb[G]$ is a typical example of algebras, for which the multiplication of $x=\sum_{g\in G}x_g g$ and $y=\sum_{k\in G}y_k k$ is given by $xy=\sum_{l\in G}(\sum_{g\in G }x_g y_{g^{-1}l}) l$ which is induced by group multiplication.

The coalgebra is particularly crucial, as it outputs a multipartite (entangled) state for a single qudit input. The structure of entanglement is naturally encoded in the comultiplication maps.
Fix a basis $\{g_i\}$ of $\cA$, $\Delta:\cA\to \cA\otimes \cA$ can be represented as
\begin{equation}
    \Delta(x)=\sum_{i,j}X_{ij}g_i\otimes g_j,\quad  x\in \cA.
\end{equation}
This can be reformulated as
\begin{equation}
     \Delta(x)=\sum_{i} g_i\otimes h_i, \quad h_i=\sum_j X_{ij}g_j,
\end{equation}
which is similar to the Schmidt decomposition of an entangled state.
This type of representation of the coalgebra is known as Sweedler's notation. Through coassociativity, we can define $\Delta_n$ recursively via $\Delta_n=(\Delta\otimes \text{id}) \circ \Delta_{n-1}$ without ambiguity. Then, we have the following notation
\begin{equation}
\Delta_n(x)=\sum_{(x)} x^{(1)}\otimes \cdots \otimes x^{(n)}.
\end{equation}
See Refs.~\cite{abe2004hopf,kassel2012quantum} for details. Sweedler's notation, much like the Einstein summation convention, can simplify calculations involving many variables. We will also omit the sum symbol whenever necessary for avoiding the clustering of equations. It also has a very natural explanation via tensor networks, as will be discussed in Sec.~\ref{sec:HopfTensor}.
This string diagram for coalgebra is dual to that of algebra (by rotating $180^{\circ}$), 
\begin{equation}
			\Delta=\begin{aligned}
				\begin{tikzpicture}
					\draw[black, line width=1.0pt]  (-0.5, 1) .. controls (-0.4, 0) and (0.4, 0) .. (0.5, 1);
					\draw[black, line width=1.0pt]  (0,0.23)--(0,-0.23);
				\end{tikzpicture}
			\end{aligned},\quad 
			\varepsilon= \begin{aligned}
				\begin{tikzpicture}
					\draw[black, line width=1.0pt]  (0,0.75)--(0,1.5);
					\draw [fill = white](0, 1.5) circle (2pt);
				\end{tikzpicture}
			\end{aligned}\,\,\, .
		\end{equation}
The coalgebra axiom can be represented
  \begin{gather}
	\begin{aligned}
			\begin{tikzpicture}
				\draw[black, line width=1.0pt]  (-0.5, 0.8) .. controls (-0.4, 0) and (0.4, 0) .. (0.5, 1.3);
				\draw[black, line width=1.0pt]  (-0.1,-0.16)--(-0.1,0.24);
				\draw[black, line width=1.0pt]  (-0.8, 1.3) .. controls (-0.7, 0.62) and (-0.3, 0.62) .. (-0.2, 1.3);
			\end{tikzpicture}
		\end{aligned}
	=
		\begin{aligned}
		\begin{tikzpicture}
			\draw[black, line width=1.0pt]  (-0.5, 1.3) .. controls (-0.4, 0) and (0.4, 0) .. (0.5, 0.8);
			\draw[black, line width=1.0pt]  (0.1,-0.16)--(0.1,0.24);
			\draw[black, line width=1.0pt]  (0.2, 1.3) .. controls (0.3, 0.62) and (0.7, 0.62) .. (0.8, 1.3);
		\end{tikzpicture}
	\end{aligned},
	\quad 
		\id\,\,\,
			\begin{aligned}
			\begin{tikzpicture}
				\draw[black, line width=1.0pt] (0,-0.5)--(0,0.7);
			\end{tikzpicture}
		\end{aligned}
		=\begin{aligned}
			\begin{tikzpicture}
				\draw[black, line width=1.0pt]   (-0.5,0.5) arc (180:270:0.5);
				\draw [fill = white](-0.5, 0.5) circle (2pt);
				\draw[black, line width=1.0pt] (0,-0.5)--(0,0.7);
			\end{tikzpicture}
		\end{aligned}
		=
		\begin{aligned}
			\begin{tikzpicture}
				\draw[black, line width=1.0pt]   (0,0) arc (270:360:0.5);
				\draw [fill = white](0.5, 0.5) circle (2pt);
				\draw[black, line width=1.0pt] (0,-0.5)--(0,0.7);
			\end{tikzpicture}
		\end{aligned}.
	\end{gather}
Notice that expressed in elements, the second equation is of the form $\sum_{(x)}\varepsilon(x^{\cone})x^{\ctwo}=x= \sum_{(x)}\varepsilon(x^{\ctwo})x^{\cone}$.
The set $\Cbb[G]$ also has a coalgebra structure $\Delta(g)=g\otimes g$ and $\varepsilon(g)=1$ for all $g\in G$.

The string diagrams for bialgebra conditions (3 of Definition~\ref{def:HopfAlg}) can be represented as: (i) $\Delta(xy)=\Delta(x)\Delta(y)$: 
\begin{equation}
			\begin{aligned}
			\begin{tikzpicture}
				\draw[black, line width=1.0pt]   (-0.5,0)..   controls (-0.4,0.8) and (0.4,0.8)             ..(0.5,0);
				\draw[black, line width=1.0pt] (0,0.6)--(0,1);
					\draw[black, line width=1.0pt]   (-0.5,1.6)..   controls (-0.4,0.8) and (0.4,0.8)             ..(0.5,1.6);
			\end{tikzpicture}
		\end{aligned}
	=
		\begin{aligned}
		\begin{tikzpicture}
		\draw[black, line width=1.0pt] (-0.1,-1) arc (180:360:0.3);
		\draw[black, line width=1.0pt] (0.5,0) arc (0:180:0.3);
		 \draw[black, line width=1.0pt] (-0.1,0)--(-0.1,-1);
		 	\draw[black, line width=1.0pt] (1.6,0) arc (0:180:0.3);
		 	\draw[black, line width=1.0pt] (1,-1) arc (180:360:0.3);
		 \draw[black, line width=1.0pt] (1.6,0)--(1.6,-1);
		 	 \draw[black, line width=1.0pt] (0.2,-1.3)--(0.2,-1.6);
		 	  \draw[black, line width=1.0pt] (0.2,0.3)--(0.2,0.6);
		 	  \draw[black, line width=1.0pt] (1.3,0.3)--(1.3,0.6);
		 	  \draw[black, line width=1.0pt] (1.3,-1.3)--(1.3,-1.6);
			\braid[
			width=0.5cm,
			height=0.5cm,
			line width=1.0pt,
			style strands={1}{black},
			style strands={2}{black}] (Kevin)
			s_1^{-1} ;
		\end{tikzpicture}
	\end{aligned}\,\,\,.
	\end{equation}
and (ii) $\varepsilon(xy)=\varepsilon(x)\varepsilon(y)$:
\begin{equation}
			\begin{aligned}
			\begin{tikzpicture}
				 \draw[black, line width=1.0pt]  (-0.5, 0) .. controls (-0.4, 1) and (0.4, 1) .. (0.5, 0);
				 \draw[black, line width=1.0pt]  (0,0.75)--(0,1.15);
                \draw [fill = white](0, 1.15) circle (2pt);				\end{tikzpicture}
			\end{aligned}=
		\begin{aligned}
				\begin{tikzpicture}
					\draw[black, line width=1.0pt]  (0,0.75)--(0,1.5);
					\draw [fill = white](0, 1.5) circle (2pt);
				\end{tikzpicture}
			\end{aligned} \,\,
                \begin{aligned}
				\begin{tikzpicture}
					\draw[black, line width=1.0pt]  (0,0.75)--(0,1.5);
					\draw [fill = white](0, 1.5) circle (2pt);
				\end{tikzpicture}
			\end{aligned} \,\, .
		\end{equation}
The group algebra $\Cbb[G]$ is also a bialgebra.

The antipode map plays a similar role to taking inverses in a group, but in algebra, elements generally do not have inverses. Thus, we use a map $S: \mathcal{A} \rightarrow \mathcal{A}$ to characterize this operation.
The axiom of antipode can be expressed in string diagram as
\begin{equation}
    \begin{aligned}	
				\begin{tikzpicture}
					\draw[black, line width=1.0pt] (0.5,0) arc (0:180:0.3);
					\draw[black, line width=1.0pt] (-0.1,-0.8) arc (180:360:0.3);
					\draw[black, line width=1.0pt] (-0.1,0)--(-0.1,-0.8);
					\draw[black, line width=1.0pt] (0.5,-0.8)--(0.5,-0.65);
					\draw[black, line width=1.0pt] (0.5,0)--(0.5,-0.3);
					\draw[black, line width=1.0pt] (0.2,-1.1)--(0.2,-1.35);
					\draw[black, line width=1.0pt] (0.2,0.3)--(0.2,0.55);
					\draw (0.3,-0.3) rectangle (0.7,-0.65);
					\node (start) [at=(Kevin-1-s),yshift=-0.47cm] {$S$};
				\end{tikzpicture}
			\end{aligned}
			=
  \begin{aligned}
   \begin{tikzpicture}
					\draw[black, line width=1.0pt]  (0,0.75)--(0,2.25);
					\draw [fill = white](0, 1.5) circle (2pt);
				\end{tikzpicture}
    \end{aligned}
    =
    \begin{aligned}	
				\begin{tikzpicture}
					\draw[black, line width=1.0pt] (0.5,0) arc (0:180:0.3);
					\draw[black, line width=1.0pt] (-0.1,-0.8) arc (180:360:0.3);
					\draw[black, line width=1.0pt] (-0.1,-0.65)--(-0.1,-0.8);
					\draw[black, line width=1.0pt] (-0.1,0)--(-0.1,-0.3);
					\draw[black, line width=1.0pt] (0.5,-0.8)--(0.5,0);
					\draw[black, line width=1.0pt] (0.2,-1.1)--(0.2,-1.35);
					\draw[black, line width=1.0pt] (0.2,0.3)--(0.2,0.55);
					\draw (-0.3,-0.3) rectangle (0.1,-0.65);
					\node (start) [at=(Kevin-1-s),xshift=-0.58cm,yshift=-0.47cm] {$S$};
				\end{tikzpicture}
			\end{aligned}.
\end{equation}
Expressed in elements, we have
$\sum_{(x)}S(x^{(1)})x^{(2)}=\varepsilon(x)1_{\cA}=\sum_{(x)}x^{(1)}S(x^{(2)})$.
The antipode map also satisfies 
\begin{equation}
\begin{aligned}
    S(1_{\cA})=1_{\cA},
    S(xy)=S(y)S(x),
    \varepsilon(S(x))=\varepsilon(x),\\
    \sum_{(x)}S(x^{(2)})\otimes S(x^{(1)})=\sum_{(S(x)} S(x)^{(1)}\otimes S(x)^{(2)}.
\end{aligned}
\end{equation}
We will also assume the Hopf algebra to be both semisimple and cosemisimple, implying $S^2=\operatorname{id}_{\cA}$ by the Larson-Radford theorem \cite{larson1988semisimple}. From $S^2=\operatorname{id}_{\cA}$, we obtain
\begin{equation}
    \sum_{(x)}S(x^{(2)})x^{(1)}=\varepsilon(x),
\end{equation}
which plays a crucial role later in constructing entangler gates (See Sec.~\ref{sec:HopfCSSCluster}).

A $*$-operation on a Hopf algbra $\cA$ is a antilinear map $*:\cA\to \cA$ which satisfies
\begin{equation}
    \begin{aligned}
        (x^*)^*=x, (xy)^*=y^*x^*, 1_{\cA}^*=1_{\cA},\\
       \Delta(x^*)=(\Delta(x))^*.
    \end{aligned}
\end{equation}
And $\cA$ is called $C^*$ Hopf algebra if there is a $*$-embedding of $\cA$ into the algebra of operators for some Hilbert space.
Hereinafter, we will assume that our Hopf algebra is a $C^*$ Hopf algebra unless otherwise stated.
For  $C^*$ Hopf algebra, we have 
\begin{equation}
    S\circ *\circ S\circ *=\id.
\end{equation}
Using $S^2=\id$, we obtain
\begin{equation}\label{eq:Sstar}
    S(x)^*=S(x^*).
\end{equation}

Another notion we will use is the Haar integral $\lambda\in \cA$, which is defined as a normalized two-side integral.
A left (resp. right) integral of $\cA$ is an element $\ell$ (resp. $r$) satisfying $x\ell=\varepsilon(x) \ell$ (resp. $rx=r\varepsilon(x)$) for all $x\in \cA$. 
If $h$ is simultaneously left and right integral, it's called a two-side integral.
A left (resp. right) integral $\ell$ (resp. $r$) is called normalized if $\varepsilon(\ell)=1$ (resp. $\varepsilon(r)=1$). We see that  normalized integral is idempotent $\ell^2=\varepsilon(\ell) \ell=\ell$ (resp.  $r^2=r\varepsilon(r) =r$).
For a semisimple Hopf algebra $\cA$, there exist a unique Haar integral.
When $\cA=\Cbb[G]$, the Haar integral is of the form
\begin{equation}
    \lambda=\frac{1}{|G|}\sum_{g\in G} g.
\end{equation}
For general Hopf algebra, the Haar integral can be expressed as in Eq.~\eqref{eq:HaarLambda2}.

For a $C^*$ Hopf algebra $\cA$, the dual space $\bar{\cA}:=\operatorname{Hom}(\cA,\Cbb)$ consists of all bounded linear functions from $\cA$ to $\Cbb$ is also a $C^*$ Hopf algebra.
Consider the canonical pairing $\langle\bullet, \bullet \rangle: \bar{\cA}\times \cA \to \mathbb{C}$, we have
\begin{align}
&	\langle \bar{\mu}(\varphi\otimes \psi),x\rangle=\langle\varphi\otimes \psi, \Delta(x)\rangle,\label{eq:MultiplicationBarA}\\
&	\langle \bar{\eta} (1),x\rangle= \varepsilon(x),   \; \text{i.e.,}\; \bar{1}=\varepsilon,\\
&   \langle \bar{\Delta}(\varphi), x\otimes y \rangle=\langle \varphi, \mu(x\otimes y)\rangle,\label{eq:ComultiplicationBarA} \\
&    \bar{\varepsilon}(\varphi)=\langle \varphi, \eta(1)\rangle,\\
&    \langle \bar{S}(\varphi),x\rangle =\langle \varphi, S(x)\rangle.
\end{align}
The $*$-operation on $\bar{\cA}$ can be defined as
\begin{equation}
	\langle \varphi^*, x\rangle=\overline{ \langle \varphi, S(x)^*\rangle}. 
\end{equation}
We denote the Haar integral of $\bar{\cA}$ as $\Lambda$, which is also called a Haar measure of $\cA$. For a Hopf algebra $\cA$ endowed with a $*$-structure, we can introduce an inner product over $\cA$ as 
\begin{equation}\label{eq:innerProd}
    \langle x,y\rangle = \Lambda(x^*y).
\end{equation}
It can be proven that the above definition is a complex inner product.
Also notice that for $\cA=\Cbb[G]$, the Haar measure is $\Lambda=\delta_{1_G}(\cdot)$.

\subsection{Hopf algebra valued qudit and non-invertible Hopf symmetries}

With the above preparation, we now introduce the Hopf algebra valued qudit  (Hopf qudit), which is, by definition, a $C^*$ Hopf algebra $\cA$ equipped with the inner product in Eq.~\eqref{eq:innerProd}.
The standard basis of $\cA$ is a basis which includes $1_{\cA}$ as a special element:
\begin{equation}
   \{1_{\cA}=g_0,g_1,\cdots,g_{d-1}\}.
\end{equation}
Notice that any qudit can be regarded as a Hopf qudit. For example, qudit $\mathbb{C}^d$ can be regarded as a $d$-dimensional Hopf algebra $\mathcal{A}=\Cbb[\Zbb_d]$. The Hopf algebra structure for a given qudit is not unique, and for any $d$, such Hopf algebra structure exists.
Any element $h\in \cA$ can be regarded as a state 
\begin{equation}
   | h\rangle=\sum_{i=0}^{|\cA|-1} c_i |g_i\rangle.
\end{equation}

For a $C^*$ Hopf algebra $\cA$, a representation is an algebra homomorphism $\Gamma: \cA\to \operatorname{End}(V)$, viz.,
\begin{equation}
    \Gamma(g)\Gamma(h)=\Gamma(gh), \quad \Gamma(1_{\cA})=I_{d_{\Gamma}},
\end{equation}
where $d_{\Gamma}=\dim V$ is called the dimension of the representation.
In this work, we restrict our attention to finite-dimensional representations. The isomorphism classes of irreps of $\cA$ will be denoted as $\operatorname{Irr}(\cA)$.
If we use $\chi_{\Gamma}$ to denote the character corresponding to irreducible representation $\nu \in \Irr(\cA)$, the Haar integral of $\bar{\cA}$ can be expressed as \cite{larson1971characters}:
\begin{equation}\label{eq:HaarLambda1}
    \Lambda=\frac{1}{|\cA|}\sum_{\Gamma \in \Irr(\cA)} d_{\Gamma}\chi_{\Gamma}=\frac{1}{|\cA|}\sum_{\Gamma \in \Irr(\cA)} \chi_{\Gamma}(1_{\cA})\chi_{\Gamma}.
\end{equation}
From this, we see that the Haar integral of $\cA$ can be expressed as 
\begin{equation}\label{eq:HaarLambda2}
    \lambda=\frac{1}{|\cA|}\sum_{\Gamma \in \Irr(\bar{\cA})} d_{\Gamma}\chi_{\Gamma}=\frac{1}{|\cA|}\sum_{\Gamma \in \Irr(\bar{\cA})} {\varepsilon}(\chi_{\Gamma})\chi_{\Gamma},
\end{equation}
where we have used the fact $\chi_{\Gamma}\in \bar{\bar{\cA}}
\cong \cA$ is an element in $\cA$.

The trivial representation of $\cA$ is $\Cbb$ which is equipped with the action $g\triangleright z=\varepsilon(g)z,g\in \cA, z\in \Cbb$.
Recall that $\operatorname{End}(\Cbb)=\Cbb$, and $\varepsilon:\cA \to \Cbb$ satisfies $\varepsilon(gh)=\varepsilon(g)\varepsilon(h)$ and $\varepsilon(1_{\cA})=1$.
Unlike the group case, the tensor product of two representations $(\Gamma_1,V)$ and $(\Gamma_2,W)$ is a representation with action defined by
\begin{equation}\label{eq:RepTensor}
    h\triangleright (v\otimes w)=\sum_{(h)}\Gamma_1(h^{(1)}) v\otimes \Gamma_2(h^{(2)})w.
\end{equation}
The dual representation $\bar{\Gamma}$ of $(\Gamma,V)$ is defined as $\bar{V}:=\operatorname{Hom}(V,\mathbb{C})$, with the left $\cA$ action given by $(x\triangleright f)(y)=f(S(x)\triangleright y)$ for $f\in \bar{V}$. The category of finite dimensional representations of a Hopf algebra $\cA$ will be denoted as $\Rep(\cA)$.
The categories $\Rep(\cA)$ of finite dimensional representations of $\cA$ is a unitary fusion category~\cite{etingof2016tensor}. 
The fusion rule can be expressed as
\begin{equation}
      \Gamma\otimes \Phi=\oplus_{\Psi}N_{\Gamma,\Phi}^{\Psi}\Psi,
\end{equation}
where $N_{\Gamma,\Phi}^{\Psi}\in \Zbb_+$ is called the fusion symbol.
The Grothendieck ring $\operatorname{Gr}(\Rep(\cA))$ of $\Rep(\cA)$ is $\Zbb_+$-ring defined by 
\begin{equation}
    [\Gamma]\cdot [\Phi]=\sum_{\Psi}N_{\Gamma,\Phi}^{\Psi}[\Psi],
\end{equation}
where we have used the notation $[\Gamma]$ to stress the equivalence class of irreps.
The Grothendieck ring can also be expressed using the characters of irreps, we define
\begin{equation}\label{eq:MultiplicationChi}
    \chi_{\Gamma}\cdot \chi_{\Phi}:=\chi_{\Gamma\otimes \Phi}=\sum_{\Psi}N_{\Gamma,\Phi}^{\Psi} \chi_{\Psi}.
\end{equation}
Notice that $\chi_{\Gamma}$'s are functions over $\cA$, thus $\chi_{\Gamma}\in \bar{\cA}$. It satisfies
\begin{equation}
    \chi_{\Gamma}(ab)=\chi_{\Gamma}(ba),
\end{equation}
which implies that $\chi_{\Gamma}$ is cocommutative (from Eq.~\eqref{eq:ComultiplicationBarA}).
The multiplication in Eq.~\eqref{eq:MultiplicationChi} is also consistent with that of Eq.~\eqref{eq:MultiplicationBarA}:
\begin{equation}
    \langle \chi_{\Gamma}\cdot \chi_{\Phi},x\rangle=\sum_{(x)}\chi_{\Gamma}(x^{\cone}) \chi_{\Phi}(x^{\ctwo})=\sum_{\Psi}N_{\Gamma,\Phi}^{\Psi} \chi_{\Psi}(x).
\end{equation}
The above equations will be crucial for establishing the Hopf cluster state model and the Hopf quantum double model.

The standard basis plays a similar role as Pauli-Z basis $|0\rangle,|1\rangle$ for a qubit.
For $\Gamma\in \Rep(\cA)$, we introduce generalized Pauli-Z operators as follows:
\begin{equation}
Z_{\Gamma}|h\rangle=\sum_{(h)} |h^{(1)}\rangle \otimes \Gamma(h^{(2)}),
\end{equation}
which is well-defined since $Z_{\Gamma}=(\text{id}\otimes \Gamma)\circ \Delta$. Recall that $\Gamma: \mathcal{A} \to \operatorname{End}(V)$ has its domain in $\mathcal{A}$.
Similarly, we define
\begin{equation}
    Z^{\ddagger}_{\Gamma}|h\rangle=\sum_{(h)} \Gamma(S(h^{\cone}))\otimes |h^{\ctwo}\rangle.
\end{equation}
We can also introduce
\begin{align}
    \tilde{Z}_{\Gamma}|h\rangle= \sum_{(h)}\Gamma(h^{(1)})\otimes |h^{(2)}\rangle,\\
    \tilde{Z}_{\Gamma}^{\ddagger}|h\rangle= \sum_{(h)} |h^{(1)}\rangle\otimes \Gamma(S(h^{(2)})).
\end{align}
When $\mathcal{A} = \mathbb{C}[G]$, $Z_{\Gamma}$ and $Z^{\ddagger}_{\Gamma}$ take the same forms as Eqs (11) and (12) in Ref.~\cite{fechisin2023noninvertible}. And in this case, we have $Z_{\Gamma}=\tilde{Z}_{\Gamma}$ and $Z_{\Gamma}^{\ddagger}=\tilde{Z}_{\Gamma}^{\ddagger}$ (up to a reordering),  since $\Delta(g)=g\otimes g$ for all $g\in G$.
A similar notation is $Z_{\Gamma_{ij}}=(\text{id}\otimes \Gamma_{ij})\circ \Delta$:
\begin{equation}
    Z_{\Gamma_{ij}}|h\rangle =\sum_{(h)}\Gamma_{ij}(h^{\ctwo})|h^{\cone}\rangle.
\end{equation}
We can also introduce $Z_{\Gamma_{ij}}^{\ddagger}$,  $\tilde{Z}_{\Gamma_{ij}}$ and  $\tilde{Z}_{\Gamma_{ij}}^{\ddagger}$ in a similar way.
The comultiplication plays a crucial role, as we will see later, we cannot use a naive generalization of group case via $\sum_k\Gamma(g_k)\otimes |g_k\rangle \langle g_k|$.

The trace in the representation space ${V}$ will be denoted as $\operatorname{Tr}'$. Then, it's clear that
\begin{equation}
    (\Tr' Z_{\Gamma}) |1_{\cA}\rangle= d_{\Gamma}|1_{\cA}\rangle.
\end{equation}
We introduce 
\begin{equation}\label{eq:PauliZ}
    Z=\sum_{\Gamma\in \Irr(\cA)} \frac{d_{\Gamma}}{|\cA|} (\Tr' Z_{\Gamma}),
\end{equation}
and we have
\begin{equation}
     Z|1_{\cA}\rangle=|1_{\cA}\rangle.
\end{equation}
The unit element state plays the same role as that of Pauli-Z $+1$ eigenstate $|0\rangle$, $Z|0\rangle=|0\rangle$.

We introduce the following operators by tracing over representation spaces
\begin{equation}
    J_{\Gamma}=\Tr' Z_{\Gamma}, J^{\ddagger}_{\Gamma}=\Tr' Z^{\ddagger}_{\Gamma},  \tilde{J}_{\Gamma}=\Tr' \tilde{Z}_{\Gamma}, \tilde{J}^{\ddagger}_{\Gamma}=\Tr' \tilde{Z}^{\ddagger}_{\Gamma}. 
\end{equation}
Recall that in the quantum double model, we defined the left action of $\bar{\cA}$ on $\cA$ as
\begin{equation}
    T^{\varphi}_+|x\rangle=\sum_{(x)} \varphi(x^{\ctwo})|x^{\cone}\rangle, \quad  T^{\varphi}_-|x\rangle=\sum_{(x)} \bar{S}(\varphi)(x^{\cone})|x^{\ctwo}\rangle.
\end{equation}
The right action of $\bar{\cA}$ on $\cA$ are
\begin{equation}
    \tilde{T}^{\varphi}_+|x\rangle=\sum_{(x)} \bar{S}(\varphi)(x^{\ctwo})|x^{\cone}\rangle, \quad  \tilde{T}^{\varphi}_-|x\rangle=\sum_{(x)} \varphi(x^{\cone})|x^{\ctwo}\rangle.
\end{equation}
We have adopted the conventions we used in Ref.\cite[Eqs. (4.1)-(4.8)]{Jia2023weak} for quantum double model.
It's easy to verify that $J_{\Gamma}$ is the same as $T^{\chi_{\Gamma}}_+$. In fact, we have the following correspondence:
\begin{align}
   J_{\Gamma}= T^{\chi_{\Gamma}}_+,\quad 
    J_{\Gamma}^{\ddagger}= T^{\chi_{\Gamma}}_{-},\label{eq:JT1}\\
    \tilde{J}_{\Gamma}= \tilde{T}^{\chi_{\Gamma}}_{-},\quad
    \tilde{J}^{\ddagger}_{\Gamma}= \tilde{T}^{\chi_{\Gamma}}_{+}.\label{eq:JT2}
\end{align}
Eq.~\eqref{eq:HaarLambda1} and the fact that Haar measure $\Lambda$ in invariant under the action of antipode implies that
\begin{equation}\label{eq:Z1}
  T^{\Lambda}_+ = \sum_{\Gamma\in \Irr(\cA)} \frac{d_{\Gamma}}{|\cA|} (\Tr' Z_{\Gamma})=Z = \sum_{\Gamma\in \Irr(\cA)} \frac{d_{\Gamma}}{|\cA|} (\Tr' \tilde{Z}^{\ddagger}_{\Gamma})=\tilde{T}_+^{\Lambda},
\end{equation}
\begin{equation}\label{eq:Z2}
  \tilde{T}^{\Lambda}_- = \sum_{\Gamma\in \Irr(\cA)} \frac{d_{\Gamma}}{|\cA|} (\Tr' \tilde{Z}_{\Gamma})=\tilde{Z}= \sum_{\Gamma\in \Irr(\cA)} \frac{d_{\Gamma}}{|\cA|} (\Tr' {Z}^{\ddagger}_{\Gamma})= {T}^{\Lambda}_-.
\end{equation}
In fact, for Haar measure $\Lambda$, we have
\begin{equation}\label{eq:leftright}
    \sum_{(x)}x^{\cone} \Lambda(x^{(2)})= \Lambda(x)= \sum_{(x)}x^{\ctwo} \Lambda(x^{(1)}).
\end{equation}
This can be derived by taking $h=1$ for Eqs. (16) and (17) in Ref.~\cite{nikshych2002structure}.
Combining Eq.~\eqref{eq:leftright} and Eqs.~\eqref{eq:Z1} and \eqref{eq:Z2}, we obtain
\begin{equation}
    Z=\tilde{Z}, \quad Z|x\rangle=\Lambda(x) |1_{\cA}\rangle=|1_{\cA}\rangle \langle 1_{\cA}|x\rangle, \forall x\in \cA.
\end{equation}
Notice that $Z^2=T_+^{\Lambda^2}=T_+^{\Lambda}$. Since $\Lambda^2=\Lambda$, we see that $Z$ is idempotent.
From $Z^{\dagger}=(T_+^{\Lambda})^{\dagger}=T_+^{\Lambda^*}$ (this is not straightforward; see the proof of Theorem~\ref{thm:1dHopfLattice} for a discussion), and using $\Lambda^*=\Lambda$, we see that $Z$ is Hermitian.

These correspondence will play crucial roles in Sec.~\ref{sec:QD} for establishing the equivalence between 1d Hopf cluster state model and quasi-1d Hopf quantum double model.

\begin{table}
    \centering
    \begin{tabular}{|c|c|}
    \hline
Hopf qudit &  $\mathcal{H}_v=\cA$   \\
    \hline
    Standard basis  & $|g_0\rangle=|1_{\cA}\rangle,\cdots, |g_{|\cA|-1}\rangle$  \\     
        \hline
         Irrep basis & $ |\Gamma_{ij}\rangle=\sqrt{{d_{\Gamma}}{|\cA|}} \sum_{(\lambda)} \Gamma_{ij}(\lambda^{(1)})|\lambda^{(2)}\rangle$  \\
          \hline       
 Regular action  & $\begin{aligned}
& \XR_g|h\rangle =|gh\rangle,\XL_g|h\rangle =|hS(g)\rangle \\ 
& \tilde{\XR}_g|h\rangle =|S(g)h\rangle,\tilde{\XL}_g|h\rangle =|hg\rangle \end{aligned}$   \\
           \hline

Irrep action  &  $\begin{aligned}
   & Z_{\Gamma}|h\rangle=\sum_{(h)} |h^{(1)}\rangle \otimes \Gamma(h^{(2)})\\
   &  Z^{\ddagger}_{\Gamma}|h\rangle=\sum_{(h)} \Gamma(S(h^{\cone}))\otimes |h^{\ctwo}\rangle\\
   &  \tilde{Z}_{\Gamma} =\sum_{(h)} \Gamma(h^{\cone})\otimes |h^{\ctwo}\rangle\\
   &  \tilde{Z}^{\ddagger}_{\Gamma}  =\sum_{(h)} |h^{(1)}\rangle \otimes \Gamma(S(h^{(2)})) 
\end{aligned}$ \\
           \hline   
           
Symmetry action  &  $\begin{aligned}
   &  J_{\Gamma} = \Tr' Z_{\Gamma} , J^{\ddagger}_{\Gamma} = \Tr' Z^{\ddagger}_{\Gamma}   \\
   &   \tilde{J}_{\Gamma} = \Tr' Z_{\Gamma} , \tilde{J}^{\ddagger}_{\Gamma} = \Tr' Z^{\ddagger}_{\Gamma} 
\end{aligned}$ \\   
  \hline 
  
  Generalized Pauli operators &    $\begin{aligned}
& X=\XR_{\lambda}=\XL_{\lambda} =\tilde{\XR}_{\lambda}=\tilde{\XL}_{\lambda}=\tilde{X} \\
 & Z=\sum_{\Gamma\in \Irr(\cA)} \frac{d_{\Gamma}}{|\cA|} J_{\Gamma}=\sum_{\Gamma\in \Irr(\cA)} \frac{d_{\Gamma}}{|\cA|} \tilde{J}_{\Gamma}^{\ddagger}\\
  & \tilde{Z}=\sum_{\Gamma\in \Irr(\cA)} \frac{d_{\Gamma}}{|\cA|} \tilde{J}_{\Gamma}=\sum_{\Gamma\in \Irr(\cA)} \frac{d_{\Gamma}}{|\cA|} {J}_{\Gamma}^{\ddagger}\\
  &Z=\tilde{Z}
\end{aligned}$ \\
\hline
    \end{tabular}
    \caption{Summary of the quantum operations for Hopf qudit system.}
    \label{tab:HopfQudit}
\end{table}

We have constructed the generalized Pauli-Z operators and the standard basis, which play the role of qubit $Z$ operator and $|0\rangle, |1\rangle$ states. Let's now construct the generalized Pauli-X operator and the states corresponding to $|+\rangle,|-\rangle$ states.
Before that, we will first prove some results about the irreps of Hopf algebras. While these results may be familiar to experts in Hopf algebra, they do not appear elsewhere in the physical literature. Therefore, for comprehensiveness and convenience, we provide the proofs here.

\begin{proposition}[Schur's lemma for (weak) Hopf algebra]\label{prop:Schur}
Let $\Gamma:\cA\to \End(V)$ and $\Phi:\cA\to \End(W)$ be two irreducible representations of $\cA$, and let $f:V\to W$ be a linear map such that $\Phi(x)\circ f= f\circ \Gamma(x)$ for all $x\in \cA$. Then (i) If $\Gamma$ and $\Phi$ are not isomorphic, $f=0$; (ii) When $\Gamma=\Phi$, we have $f\propto \id_V$.
\end{proposition}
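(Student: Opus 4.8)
The plan is to treat a representation of $\cA$ purely as a module over the underlying associative algebra $(\cA,\mu,\eta)$: the intertwining condition $\Phi(x)\circ f = f\circ \Gamma(x)$ says exactly that $f$ is a morphism of $\cA$-modules, and neither the comultiplication, the counit, nor the antipode enters the argument. Consequently the statement is just the classical Schur's lemma for simple modules over a unital algebra, and the same proof covers the weak Hopf case verbatim once one notes that $\Gamma(1_\cA)$ acts as the identity on $V$ so that the module structure is honest.

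First I would show that the kernel and image of an intertwiner are subrepresentations. If $v\in\ker f$ then $f(\Gamma(x)v)=\Phi(x)f(v)=0$, so $\Gamma(x)v\in\ker f$; thus $\ker f\subseteq V$ is $\Gamma$-invariant. Likewise $\Phi(x)f(v)=f(\Gamma(x)v)\in\operatorname{im} f$, so $\operatorname{im} f\subseteq W$ is $\Phi$-invariant. Irreducibility of $V$ forces $\ker f\in\{0,V\}$ and irreducibility of $W$ forces $\operatorname{im} f\in\{0,W\}$. For part (i), suppose $f\neq 0$; then $\ker f\neq V$ gives $\ker f=0$ (so $f$ is injective) and $\operatorname{im} f\neq 0$ gives $\operatorname{im} f=W$ (so $f$ is surjective), whence $f$ is an isomorphism of representations and $\Gamma\cong\Phi$. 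Taking the contrapositive yields $f=0$ whenever $\Gamma\not\cong\Phi$.

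For part (ii) I would use that $\Cbb$ is algebraically closed and that $V$ is finite-dimensional: the endomorphism $f:V\to V$ then has an eigenvalue $\lambda\in\Cbb$. Since $\id_V$ is trivially an intertwiner, $f-\lambda\,\id_V$ is again a morphism of representations, and its kernel is nonzero (it contains a $\lambda$-eigenvector). By the injectivity dichotomy established above, a morphism out of an irreducible module with nontrivial kernel must vanish, so $f-\lambda\,\id_V=0$, i.e.\ $f=\lambda\,\id_V$. I do not anticipate a genuine obstacle here; the only points requiring care are the two standing hypotheses that make the eigenvalue step go through---finite-dimensionality of the representation space and algebraic closedness of the ground field---together with the remark that in the weak Hopf setting the algebra unit still acts well enough that kernels and images remain genuine submodules.
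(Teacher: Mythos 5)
Your proof is correct and follows essentially the same route as the paper's: show that $\ker f$ and $\operatorname{im} f$ are invariant subspaces, invoke irreducibility for part (i), and extract an eigenvalue of $f$ over $\Cbb$ to reduce part (ii) to part (i). The extra remarks on finite-dimensionality, algebraic closedness, and why only the algebra structure is used are sound but do not change the argument.
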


\begin{proof}
(i) Consider the nontrivial case that $f\neq 0$. We see that $\operatorname{Ker} f$ is stable under the action of $\cA$. Since for $v\in \operatorname{Ker} f$, $\Phi(x)\circ f (v)= f( \Gamma(x) v)=0$. Thus $\Gamma(x)v\in \operatorname{Ker} f$ for all $x\in \cA$. Since $(\Gamma,V)$ is irreducible, $\operatorname{Ker} f$ can only be $0$ or $V$.
A similar argument shows that $\operatorname{Im} f=W$. Thus $f$ is an isomorphism.

(ii) Since $f$ is a linear map over a complex field, thus must be at least one eigenvalue $p$ for $f$. Then $\operatorname{Ker} (f-p\id)\neq 0$. This implies that $f-p\id=1$ from assertion (i).
\end{proof}

For Haar integral $\lambda\in \cA$ and arbitrary $x\in \cA$, we have the following \cite[Lemma 2]{bridgeman2023invertible}
\begin{equation}\label{eq:lambdaSpecial}
  \sum_{(\lambda)}  xS(\lambda^{(1)})\otimes \lambda^{(2)}= S(\lambda^{(1)})\otimes \lambda^{(2)}x.
\end{equation}
Then for a linear map $f: V\to W$, we define
\begin{equation}\label{eq:Fave}
    F=\sum_{(\lambda)} \Phi(S(\lambda^{(1)})) \circ f \circ \Gamma(\lambda^{\ctwo}).
\end{equation}
It's clear from Eq.~\eqref{eq:lambdaSpecial} that
\begin{equation}
   \Phi(x) \circ F =F\circ \Gamma(x),\quad \forall x\in \cA.
\end{equation}

\begin{proposition}\label{prop:orthgonality}
 (1)  For any $f:V\to W$, and define $F$ as in Eq.~\eqref{eq:Fave}, we have: (i)  If $\Gamma$ and $\Phi$ are not isomorphic, $F=0$; (ii) when $\Gamma=\Phi$, we have
    \begin{equation}\label{eq:Ftrace}
        F=\frac{1}{d_{\Gamma}}\Tr' (f) \id_V.
    \end{equation}
    
(2) From the above result, we obtain the orthogonality relation for irreps of $\cA$:
\begin{equation}
    \sum_{(\lambda)} \Phi_{ij}(S(\lambda^{\cone})) \Gamma_{kl}(\lambda^{\ctwo})=\frac{\delta_{\Phi,\Gamma} \delta_{il}\delta_{jk} }{d_{\Gamma}}=  \sum_{(\lambda)} \Phi_{ij}(S(\lambda^{\ctwo})) \Gamma_{kl}(\lambda^{\cone}).
\end{equation}
where we have used the cocommutativity of $\lambda$.
\end{proposition}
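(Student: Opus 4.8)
The plan is to deduce the entire proposition from Schur's lemma (Proposition~\ref{prop:Schur}) applied to the averaged map $F$ of Eq.~\eqref{eq:Fave}, whose intertwining property $\Phi(x)\circ F = F\circ\Gamma(x)$ has already been recorded above as a consequence of Eq.~\eqref{eq:lambdaSpecial}. Part~(2) will then follow mechanically by specializing the arbitrary map $f$ to matrix units and reading off the components of the resulting identity of linear maps.

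For Part~(1)(i), since $F$ intertwines the non-isomorphic irreps $\Gamma$ and $\Phi$, Proposition~\ref{prop:Schur}(i) forces $F=0$ immediately. For Part~(1)(ii) I take $W=V$ and $\Phi=\Gamma$; Proposition~\ref{prop:Schur}(ii) then gives $F=c\,\id_V$ for some scalar $c$, and the only real work is to pin down $c$. I would do this by taking the representation-space trace $\Tr'$ of both sides. On one hand $\Tr'(c\,\id_V)=c\,d_{\Gamma}$. On the other hand, applying cyclicity of the trace to each summand of Eq.~\eqref{eq:Fave},
\begin{equation*}
\Tr'\bigl(\Gamma(S(\lambda^{\cone}))\,f\,\Gamma(\lambda^{\ctwo})\bigr)
=\Tr'\bigl(f\,\Gamma(\lambda^{\ctwo}S(\lambda^{\cone}))\bigr),
\end{equation*}
and summing over $(\lambda)$. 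The key identity is $\sum_{(\lambda)}\lambda^{\ctwo}S(\lambda^{\cone})=1_{\cA}$: invoking the cocommutativity of the Haar integral $\lambda$ (which holds since $\lambda$ is a combination of the cocommutative characters, Eq.~\eqref{eq:HaarLambda2}) this equals $\sum_{(\lambda)}\lambda^{\cone}S(\lambda^{\ctwo})=\varepsilon(\lambda)1_{\cA}=1_{\cA}$ by the antipode axiom and the normalization $\varepsilon(\lambda)=1$. Hence $\Tr'(F)=\Tr'\bigl(f\,\Gamma(1_{\cA})\bigr)=\Tr'(f)$, so $c=\Tr'(f)/d_{\Gamma}$, which is exactly Eq.~\eqref{eq:Ftrace}.

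For Part~(2), I would choose $f=E_{jk}:V\to W$ to be the elementary map with single nonzero entry $1$ in row $j$, column $k$, so that $f_{ab}=\delta_{aj}\delta_{bk}$. Writing out the $(i,l)$ matrix entry of $F=\sum_{(\lambda)}\Phi(S(\lambda^{\cone}))\circ f\circ\Gamma(\lambda^{\ctwo})$ produces precisely $\sum_{(\lambda)}\Phi_{ij}(S(\lambda^{\cone}))\,\Gamma_{kl}(\lambda^{\ctwo})$. Feeding in Part~(1): if $\Gamma\not\cong\Phi$ this entry vanishes, matching $\delta_{\Phi,\Gamma}=0$; if $\Gamma=\Phi$ it equals the $(i,l)$ entry of $\tfrac{1}{d_{\Gamma}}\Tr'(E_{jk})\,\id_V=\tfrac{1}{d_{\Gamma}}\delta_{jk}\,\id_V$, namely $\tfrac{1}{d_{\Gamma}}\delta_{jk}\delta_{il}$. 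This is the first claimed equality. The second equality follows from the first by interchanging $\lambda^{\cone}\leftrightarrow\lambda^{\ctwo}$, which is legitimate precisely because $\lambda$ is cocommutative.

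The only genuinely delicate step is the evaluation of the constant $c$, i.e.\ verifying $\sum_{(\lambda)}\lambda^{\ctwo}S(\lambda^{\cone})=1_{\cA}$; one must be careful here that the naive antipode axiom yields $\lambda^{\cone}S(\lambda^{\ctwo})$ rather than $\lambda^{\ctwo}S(\lambda^{\cone})$, so the cocommutativity of the Haar integral is essential and should be flagged explicitly rather than used silently. Everything else reduces to bookkeeping with Sweedler indices together with the already-established intertwining property of $F$.
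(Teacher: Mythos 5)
Your proof is correct and follows essentially the same route as the paper's: Schur's lemma applied to the intertwiner $F$, a trace computation to fix the scalar in case (ii), and specialization of $f$ to matrix units to extract the orthogonality relation. Your explicit verification that $\sum_{(\lambda)}\lambda^{\ctwo}S(\lambda^{\cone})=1_{\cA}$ requires the cocommutativity of the Haar integral is a welcome clarification of a step the paper performs silently when it writes $\Tr'[\Gamma(S(\lambda^{\cone}))\circ f\circ\Gamma(\lambda^{\ctwo})]=\Tr'[\Gamma(\varepsilon(\lambda)1_{\cA})f]$.
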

\begin{proof}
   (1) This is a direct result of Proposition~\ref{prop:Schur}.  Notice that
    \begin{equation}
        \Tr' F= \sum_{(\lambda)} \Tr' [\Gamma(S(\lambda^{(1)})) \circ f \circ \Gamma(\lambda^{\ctwo})]=\Tr' [\Gamma(\varepsilon(\lambda)1_{\cA}) f]=\Tr' f,
    \end{equation}
which implies Eq.~\eqref{eq:Ftrace}.

(2) From (1), we have
\begin{equation}
    \begin{aligned}
         \sum_{(\lambda)} \sum_{jk} \Phi_{ij}(S(\lambda^{\cone}))f_{jk} \Gamma_{kl}(\lambda^{\ctwo})=  \frac{\delta_{\Phi,\Gamma}}{d_{\Gamma}} \delta_{il} \sum_{jk}
          f_{jk}\delta_{jk} .
    \end{aligned}
\end{equation}
This implies the required result.
\end{proof}

For irreps $\Gamma\in \Rep(\cA)$, the fusion basis states are defined using Haar integral $\lambda$ as follows:
\begin{equation}
    |\Gamma_{ij}\rangle=\sqrt{{d_{\Gamma}|\cA|}} \sum_{(\lambda)} \Gamma_{ij}(\lambda^{(1)})|\lambda^{(2)}\rangle, 
\end{equation}
where $i,j=1,\cdots, d_{\Gamma}$. From Peter-Weyl relation $H\cong \oplus_{\Gamma\in \operatorname{Irr}(\cA)} \bar{\Gamma}\otimes \Gamma$ we see that $|\cA|=\sum_{\Gamma\in \operatorname{Irr}(\cA)}d_{\Gamma}^2$.
Also notice that $\lambda$ is cocommutative $\sum_{(\lambda)}\lambda^{\cone}\otimes \lambda^{\ctwo}=\sum_{(\lambda)}\lambda^{\ctwo}\otimes \lambda^{\cone}$, we can equivalently define
\begin{equation}
     |\Gamma_{ij}\rangle=\sqrt{{d_{\Gamma}|\cA|} } \sum_{(\lambda)} \Gamma_{ij}(\lambda^{(2)})|\lambda^{(1)}\rangle, 
\end{equation}
Also note that the factor chosen here is different from that in Refs. \cite{buerschaper2013electric,chen2021ribbon,jia2023boundary,Jia2023weak}. This adjustment ensures that the basis is orthonormal under the inner product in Eq.~\eqref{eq:innerProd}. 

\begin{proposition}
    The fusion basis $|\Gamma_{ij}\rangle$ is an orthonormal basis for Hopf qudit $\cA$.
\end{proposition}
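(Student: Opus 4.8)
The plan is to prove orthonormality, $\langle\Gamma_{ij},\Phi_{kl}\rangle=\delta_{\Gamma\Phi}\delta_{ik}\delta_{jl}$, and then close by a dimension count. The fusion vectors are indexed by triples $(\Gamma,i,j)$ with $\Gamma\in\Irr(\cA)$ and $1\le i,j\le d_\Gamma$, so their number is $\sum_{\Gamma}d_\Gamma^2=|\cA|=\dim\cA$ by the Peter--Weyl decomposition quoted just above the statement. Hence any orthonormal family of this cardinality is automatically a basis, and orthonormality is the only thing left to establish.

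First I would unwind the inner product $\langle\cdot,\cdot\rangle=\Lambda(\cdot^{*}\,\cdot)$ of Eq.~\eqref{eq:innerProd}. Writing $v=\sqrt{d_\Gamma|\cA|}\sum_{(\lambda)}\Gamma_{ij}(\lambda^{(1)})\lambda^{(2)}$ for the element underlying $|\Gamma_{ij}\rangle$, I compute $v^{*}$ from three ingredients: the irreps may be taken unitary, so $\overline{\Gamma_{ij}(x)}=\Gamma_{ji}(x^{*})$; the $*$-structure is compatible with comultiplication, $\Delta(x^{*})=(\Delta x)^{*}$ componentwise; and the Haar integral is self-adjoint, $\lambda^{*}=\lambda$ (both $\lambda^{*}$ and $S(\lambda)$ are again normalized two-sided integrals, hence equal $\lambda$ by uniqueness, using $\varepsilon(x^{*})=\overline{\varepsilon(x)}$). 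Together these give $v^{*}=\sqrt{d_\Gamma|\cA|}\sum_{(\lambda)}\Gamma_{ji}(\lambda^{(1)})\lambda^{(2)}$, so that
\[
\langle\Gamma_{ij},\Phi_{kl}\rangle=\sqrt{d_\Gamma d_\Phi}\,|\cA|\sum_{(\lambda),(\mu)}\Gamma_{ji}(\lambda^{(1)})\,\Phi_{kl}(\mu^{(1)})\,\Lambda(\lambda^{(2)}\mu^{(2)}),
\]
where $\lambda,\mu$ denote two copies of the Haar integral.

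The main obstacle is that this expression carries two independent integrals, while the orthogonality relation of Proposition~\ref{prop:orthgonality} involves only one. The crux is therefore a collapse identity expressing the antipode through the integrals,
\[
\sum_{(\mu)}\mu^{(1)}\,\Lambda(z\,\mu^{(2)})=\frac{1}{|\cA|}\,S(z),\qquad z\in\cA,
\]
which I would derive from the invariance of the Haar measure in Eq.~\eqref{eq:leftright} together with the special integral identity Eq.~\eqref{eq:lambdaSpecial}; this is the Hopf-algebraic ``Fourier inversion'' underlying Eqs.~(16)--(17) of Ref.~\cite{nikshych2002structure}, and it reduces to $\tfrac1{|G|}S(z)$ in the group case as a sanity check. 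Applying the functional $\Phi_{kl}$ to this identity and substituting $z=\lambda^{(2)}$ collapses the double sum into a single integral with an antipode inserted on the $\Phi$ factor,
\[
\langle\Gamma_{ij},\Phi_{kl}\rangle=\sqrt{d_\Gamma d_\Phi}\sum_{(\lambda)}\Gamma_{ji}(\lambda^{(1)})\,\Phi_{kl}(S(\lambda^{(2)})).
\]

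Finally I would feed this into the second form of the orthogonality relation in Proposition~\ref{prop:orthgonality}(2), which gives $\sum_{(\lambda)}\Phi_{kl}(S(\lambda^{(2)}))\,\Gamma_{ji}(\lambda^{(1)})=\delta_{\Gamma\Phi}\delta_{ik}\delta_{jl}/d_\Gamma$; the prefactor $\sqrt{d_\Gamma d_\Phi}/d_\Gamma$ then evaluates to $1$ on the support of the Kronecker deltas, yielding $\langle\Gamma_{ij},\Phi_{kl}\rangle=\delta_{\Gamma\Phi}\delta_{ik}\delta_{jl}$ and hence an orthonormal basis. I expect the collapse identity to be the only genuinely nontrivial step; the cocommutativity of $\lambda$ makes the two equivalent definitions of $|\Gamma_{ij}\rangle$ coincide and keeps the index bookkeeping in the final contraction consistent.
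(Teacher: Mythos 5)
Your proposal is correct and follows essentially the same route as the paper's proof: compute $v^*$ via unitarity of the irreps and self-adjointness of the Haar integral, collapse the double integral using the Haar-measure identity $\sum_{(x)}x^{(1)}\Lambda(yx^{(2)})=\sum_{(y)}S(y^{(1)})\Lambda(y^{(2)}x)$ to land on a single integral with an antipode inserted, and finish with the orthogonality relation of Proposition~\ref{prop:orthgonality}. The only cosmetic differences are that you package the collapse step as a standalone ``Fourier inversion'' identity and state the Peter--Weyl dimension count explicitly to upgrade the orthonormal family to a basis, both of which the paper does implicitly.
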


\begin{proof}
For clarity, we use $\eta,\lambda$ to denote the unique Haar integral of $\cA$,  we have
\begin{equation}
\begin{aligned}
        \langle \Phi_{ij}|\Gamma_{kl}\rangle=&\sum_{(\lambda),(\eta)} |\cA|\sqrt{d_{\Gamma}d_{\Phi} } \Phi^*_{ij}(\eta^{(1)}) \Gamma(\lambda^{(1)}) \Lambda((\eta^{\ctwo})^*\lambda^{\ctwo})\\
        =&\sum_{(\lambda),(\eta)} |\cA|\sqrt{d_{\Gamma}d_{\Phi} } \Phi^{\dagger}_{ji}(\eta^{(1)}) \Gamma(\lambda^{(1)}) \Lambda((\eta^{\ctwo})^*\lambda^{\ctwo}),\\
        =&\sum_{(\lambda),(\eta)} |\cA|\sqrt{d_{\Gamma}d_{\Phi} } \Phi_{ji}((\eta^{(1)})^*) \Gamma(\lambda^{(1)}) \Lambda((\eta^{\ctwo})^*\lambda^{\ctwo}),\\
        =&\sum_{(\lambda),(\eta)} |\cA|\sqrt{d_{\Gamma}d_{\Phi} } \Phi_{ji}((\eta^{(1)})) \Gamma(\lambda^{(1)}) \Lambda(\eta^{\ctwo}\lambda^{\ctwo}),
\end{aligned}
\end{equation}
where $\Lambda$ is Haar measure, we have used the property of unitary representation $\Gamma^{\dagger}(x)=\Gamma(x^*)$ and the properties for Haar integral $\eta=\eta^*$ and for $*$-operation $\Delta(\lambda^*)=\Delta(\lambda)^*$.
Using the property of Haar measure (see Ref.~\cite[Eq. (16)]{nikshych2002structure})
\begin{equation}
    \sum_{(x)}x^{\cone} \Lambda(y x^{\ctwo})= \sum_{(y)} S(y^{\cone}) \Lambda(y^{\ctwo}x),
\end{equation}
and cocommutativity of $\Lambda$, we obtain (ommiting the summation symbol for comultipication)
\begin{equation}
\begin{aligned}
     \eta^{\cone} \Lambda(\eta^{\ctwo}\lambda^{\ctwo}) =  &\eta^{\cone} \Lambda(\lambda^{\ctwo}\eta^{\ctwo}) = S(\lambda^{\ctwo}) \Lambda(\eta \lambda^{(3)})\\
     =&\varepsilon(\lambda^{\cthree})S(\lambda^{\ctwo}) \Lambda(\eta) =S(\lambda^{\ctwo}) \frac{1}{|\cA|}.
\end{aligned}
\end{equation}
Then we have
\begin{equation}
\begin{aligned}
        \langle \Phi_{ij}|\Gamma_{kl}\rangle=&\sum_{(\lambda),(\eta)} d_{\Gamma}d_{\Phi} \Phi_{ji}(\eta^{(1)}) \Gamma(\lambda^{(1)}) \Lambda((\eta^{\ctwo})\lambda^{\ctwo})\\
        =& \sum_{(\lambda)} |\cA|\sqrt{d_{\Gamma}d_{\Phi} } \Phi_{ji}(S(\lambda^{\ctwo})) \Gamma_{kl}(\lambda^{\cone}) \frac{1}{|\cA|}\\
        =&   \sqrt{d_{\Gamma}d_{\Phi} }  \frac{\delta_{\Phi,\Gamma} \delta_{ik}\delta_{jl} }{d_{\Gamma}}\\
        =& \delta_{\Phi,\Gamma} \delta_{ik}\delta_{jl},
\end{aligned}
\end{equation}
where we have used the orthogonality relation in Proposition~\ref{prop:orthgonality}.
\end{proof}

If we choose $\Gamma$ as trivial representation $\Gamma=\mathbb{1}$, then $d_{\Gamma}=1$ and $\Gamma(g)=\varepsilon(g)$, this implies that 
\begin{equation}
    |\mathbb{1}\rangle = {\sqrt{|\cA|}} |\lambda\rangle.
\end{equation}
We introduce the following generalized Pauli-X operators
\begin{align}
\overset{\rightarrow}{X}_{g}: |h\rangle \mapsto |gh\rangle,\quad 
\overset{\leftarrow}{X}_{g}: |h\rangle \mapsto |hS(g)\rangle. 
\end{align}
They are left and right regular actions of $\cA$ on itself. We can also introduce the right regular action $\tilde{\XR}_g$ and $\tilde{\XL}_g$ in a similar fashion:
\begin{equation}
    \tilde{\XR}_g:|h\rangle \to |S(g)h\rangle,\quad  \tilde{\XL}_g:|h\rangle \to |hg\rangle.
\end{equation}
Recall that in Refs.~\cite{Kitaev2003,Buerschaper2013a,chen2021ribbon,Jia2023weak}, these operators are denoted as $L_{\pm}^g,\tilde{L}_{\pm}^g$.
From the definition of Haar integral and $S(\lambda)=\lambda$, it's clear that 
\begin{align}
    \XR_g|\mathbb{1}\rangle= \varepsilon(g)|\mathbb{1}\rangle
    =\XL_g|\mathbb{1}\rangle, \quad \tilde{\XR}_g|\mathbb{1}\rangle= \varepsilon(g)|\mathbb{1}\rangle
    =\tilde{\XL}_g|\mathbb{1}\rangle,
\end{align}
If we set $X=\XR_{\lambda}=\XL_{\lambda}$ (notice that $\lambda=S(\lambda)$, from which it's easy to show $\XR_{\lambda}=\XL_{\lambda}$), we have
\begin{equation}
  X|\mathbb{1}\rangle=|\mathbb{1}\rangle.  
\end{equation}
Similarly, we can set $\tilde{X}=\tilde{\XR}_{\lambda}=\tilde{\XL}_{\lambda}$, and it's clear $X=\tilde{X}$.
From $\lambda^*=\lambda$ and $\XR_{\lambda}^{\dagger}=\XR_{\lambda^*}$, we see that $X$ is Hermitian. From $\lambda^2=\lambda$, we see that $X$ is a projector.
The trivial representations state plays the same role as that of Pauli-X $+1$ eigenstate $|+\rangle$, $X|+\rangle=|+\rangle$.

\subsection{Non-invertible symmetries: Hopf symmetry and fusion categorical symmetry}

For a quantum system governed by the Hamiltonian operator $H:\mathcal{H}\to \mathcal{H}$, it exhibits a (group) $G$-symmetry if the Hilbert space $\mathcal{H}$ functions as a $\Cbb[G]$-module, and the representations $U_g$ satisfy $[U_g,H]=0$ for all $g\in G$. From the definition of representation, we see that all $U_g$ must be invertible and $U_g^{-1}=U_{g^{-1}}$. In this context, `invertible symmetry' refers to the group symmetry. 
The representation is termed unitary when $U_g^{\dagger}=U_g^{-1}$. It's worth noting that all finite group symmetries are equivalent to unitary symmetries. However, for continuous groups, while they are invertible, they are not necessarily unitary in general.
If the ground state space of $H$ is non-degenerate, the unitarity of $U_g$ implies $U_g|\psi_{\rm GS}\rangle=|\psi_{\rm GS}\rangle$.
Therefore, for elements in group algebra $x\in \Cbb[G]$, $[U_x,H]=0$ and $U_x|\psi_{\rm GS}\rangle=(\sum_{g\in G} x_gU_g)|\psi_{\rm GS}\rangle=\varepsilon(x) |\psi_{\rm GS}\rangle$.

Non-invertible symmetry is defined as a symmetry characterized by an algebraic structure (e.g., ring, Hopf algebra, comodule algebra, category) where $U_g$ is not generally invertible \cite{cordova2022snowmass,brennan2023introduction,mcgreevy2023generalized,luo2023lecture,shao2024whats,SchaferNameki2024ICTP,Bhardwaj2024lecture}.
Hopf symmetry is a typical example of non-invertible symmetry \cite{bais2003hopf,jia2023boundary,Jia2023weak,jia2024weakTube}. A thorough discussion of Hopf symmetry for Hopf cluster states and its SPT phases will be provided elsewhere \cite{jia2024cluster}.
See Sec.~\ref{subsec:stabilizer} and Sec.~\ref{sec:Hopf1d} for brief discussions.
In this part, we will only provide a discussion from the perspective of a single Hopf qudit.
For a Hopf algebra $\mathcal{A}$, a quantum system that exhibits a $\mathcal{A}$ Hopf symmetry must form a representation $\mathcal{V}$ of $\mathcal{A}$, denoted by $\Gamma:\mathcal{A}\to \operatorname{End}(\mathcal{V})$.
It's notable that $\Phi(h)$ is typically non-invertible. Unlike group algebra, it's often impossible to identify a set of invertible basis elements for a general Hopf algebra.
However, we can still define the symmetry as a representation $U_x$ such that $[U_x,H]=0$ for all $x\in \cA$.
This non-invertibility implies that $U_x|\psi_{\rm GS}\rangle\neq |\psi_{\rm GS}\rangle$ generally.
A state $|\psi\rangle\in \mathcal{V}$ is called Hopf invariant if \cite{bais2003hopf,Jia2023weak}
\begin{equation}
    \Phi(h)|\psi\rangle=\varepsilon(h)|\psi\rangle, \forall h\in \cA.
\end{equation}

For Hopf qudit, if we define the Hamiltonian as 
\begin{equation}\label{eq:HopfQuditH}
    H_X=-X=-\XR_{\lambda}=-\XL_{\lambda}=-\tilde{\XR}_{\lambda}=-\tilde{\XL}_{\lambda}.
\end{equation}
It's clear that $[\XR_g,H]=0$ for all $g\in \cA$, which means that $H$ has a non-invertible $\cA$-symmetry. Similarly, we have $\cA$-symmetry $\XL_g$, and two $\cA^{\rm op}$-symmetries: $\tilde{\XR}_g$ and $\tilde{\XL}_g$ , where $\cA^{\rm op}$ means $g\cdot^{\rm op}h:=h\cdot g$.

\begin{proposition}
The Hopf qudit Hamiltonian $H_X$ has two Hopf $\cA$-symmetries $\XL_g,\XR_g$ if we treat $H_X=-\XR_{\lambda}=-\XL_{\lambda}$, and it has two $\cA^{op}$-symmetries $\tilde{\XL}_g,\tilde{\XR}_g$ if we treat $H_X=-\tilde{\XR}_{\lambda}=-\tilde{\XL}_{\lambda}$.
The trivial representation state $|\one\rangle$ is a ground state of Hamiltonian in Eq.~\eqref{eq:HopfQuditH}, and we have:

(i)    The trivial representation state $|\mathbb{1}\rangle$ is Hopf invariant under Hopf $\cA$-symmetry $\XR_h$, $\XL_h$:
    \begin{equation}
        \XR_h|\one\rangle=\varepsilon(h)|\one\rangle, \quad \XL_h|\one\rangle= \varepsilon(h)|\one\rangle.
    \end{equation}

 (ii)  The state $|\one\rangle$ is  also Hopf invariant under Hopf $\cA^{\rm op}$-symmetry $\tilde{\XR}_h$, $\tilde{\XL}_h$. It's clear that $\tilde{\XR}_g \tilde{\XR}_h=\tilde{\XR}_{g\cdot^{\rm op}h}$, and  $\tilde{\XL}_g \tilde{\XL}_h=\tilde{\XL}_{g\cdot^{\rm op}h}$. For state $|\one\rangle$, we have
        \begin{equation}
        \tilde{\XR}_h|\one\rangle=\varepsilon(h)|\one\rangle, \quad \tilde{\XL}_h|\one\rangle= \varepsilon(h)|\one\rangle.
    \end{equation}
\end{proposition}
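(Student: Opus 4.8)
The plan is to reduce every assertion to the single structural fact, already recorded above the statement, that $|\one\rangle=\sqrt{|\cA|}\,|\lambda\rangle$ is (proportional to) the normalized two-sided Haar integral $\lambda$, for which $g\lambda=\varepsilon(g)\lambda=\lambda g$ for all $g\in\cA$, together with $S(\lambda)=\lambda$ and the antipode identity $\varepsilon\circ S=\varepsilon$. First I would establish the Hopf-invariance identities (i) and (ii) by applying each generator directly to $|\lambda\rangle$: using $|\one\rangle=\sqrt{|\cA|}\,|\lambda\rangle$ one finds $\XR_h|\one\rangle=\sqrt{|\cA|}\,|h\lambda\rangle=\sqrt{|\cA|}\,\varepsilon(h)|\lambda\rangle=\varepsilon(h)|\one\rangle$; likewise $\XL_h|\one\rangle=\sqrt{|\cA|}\,|\lambda S(h)\rangle=\varepsilon(S(h))|\one\rangle=\varepsilon(h)|\one\rangle$, using the right-integral property on $S(h)$ together with $\varepsilon\circ S=\varepsilon$; and analogously $\tilde{\XR}_h|\one\rangle=\sqrt{|\cA|}\,|S(h)\lambda\rangle=\varepsilon(h)|\one\rangle$ and $\tilde{\XL}_h|\one\rangle=\sqrt{|\cA|}\,|\lambda h\rangle=\varepsilon(h)|\one\rangle$. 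These are exactly the invariance relations listed just before the statement, so (i) and (ii) follow immediately.

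Next I would verify that $\XL_g,\XR_g$ (resp. $\tilde{\XL}_g,\tilde{\XR}_g$) commute with $H_X$. The key input is that each family of regular actions composes multiplicatively: $\XR_g\XR_h=\XR_{gh}$ and $\XL_g\XL_h=\XL_{gh}$ reproduce the multiplication of $\cA$, while $\tilde{\XR}_g\tilde{\XR}_h=\tilde{\XR}_{hg}$ and $\tilde{\XL}_g\tilde{\XL}_h=\tilde{\XL}_{hg}$ reproduce that of $\cA^{\rm op}$; each is a one-line check on a basis vector $|x\rangle$, using $S(gh)=S(h)S(g)$ in the $\XL$ and $\tilde{\XR}$ cases. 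Then, since $X=\XR_{\lambda}=\XL_{\lambda}=\tilde{\XR}_{\lambda}=\tilde{\XL}_{\lambda}$, the two-sided integral property gives for instance $\XR_g\,X=\XR_{g\lambda}=\varepsilon(g)\XR_{\lambda}=\XR_{\lambda g}=X\,\XR_g$, so $[\XR_g,X]=0$; the identical computation with the appropriate composition rule handles $\XL_g$, $\tilde{\XR}_g$ and $\tilde{\XL}_g$. With $H_X=-X$ as in Eq.~\eqref{eq:HopfQuditH}, this yields the claimed $\cA$-symmetries and $\cA^{\rm op}$-symmetries.

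Finally I would upgrade ``eigenstate'' to ``ground state''. As noted in the excerpt, $X$ is a Hermitian idempotent: $X^2=\XR_{\lambda^2}=\XR_{\lambda}=X$ from $\lambda^2=\lambda$, and $X^{\dagger}=X$ from $\lambda^*=\lambda$. Hence $X$ is an orthogonal projector with spectrum contained in $\{0,1\}$, so $H_X=-X$ has spectrum contained in $\{0,-1\}$ and minimal eigenvalue $-1$. Since $X|\one\rangle=|\one\rangle$ we get $H_X|\one\rangle=-|\one\rangle$, attaining this minimum, so $|\one\rangle$ is a ground state. No step is a genuine obstacle; the only place demanding care is bookkeeping the identities $\varepsilon\circ S=\varepsilon$ and the opposite multiplication $g\cdot^{\rm op}h=hg$, so that the four generator families are consistently matched to $\cA$ versus $\cA^{\rm op}$.
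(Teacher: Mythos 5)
Your proposal is correct and takes essentially the same route as the paper: the paper's entire proof is the one-line observation that everything follows from $S(\lambda)=\lambda$ and $x\lambda=\lambda x=\varepsilon(x)\lambda$, which are exactly the facts you invoke. You simply spell out the consequences (the invariance identities, the commutators via the composition laws of the regular actions, and the ground-state claim via $X$ being a Hermitian projector) that the paper leaves implicit.
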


\begin{proof}
    This is a direct result of the fact that, for Haar integral $\lambda$, $S(\lambda)=\lambda$ and  $x\lambda=\lambda x =\varepsilon(x)\lambda$.
\end{proof}

As we will see later, the Hopf cluster state has Hopf symmetries (Sec.~\ref{subsec:stabilizer} and Sec.~\ref{sec:Hopf1d}).
It's also worth mentioning that the ground state of the Hopf quantum double model also exhibits this type of Hopf symmetry \cite{jia2023boundary,Jia2023weak}.
Also note that if we choose $h \in \operatorname{Comm}(\cA)$ as an element in the center of $\cA$ and set $H = -\XR_h$, we see that $H$ will have Hopf $\cA$-symmetry given by $\XR_g$. This indicates that non-invertible Hopf symmetry for Hopf qudit system is a common phenomenon.

Another kind of non-invertible symmetry is the categorical symmetry, which is usually characterized by the unitary fusion category \cite{Frohlich2004kramers,fuchs2007topological,frohlich2010defect,bhardwaj2018finite,chang2019topological,thorngren2019fusion,thorngren2021fusion,komargodski2021symmetries,inamura2022lattice,inamura2023fermionization}.
The Hopf qudit possesses a categorical symmetry described by $\Rep(\cA)$, for which we have the fusion rule
\begin{equation}
    \Phi \otimes \Gamma =\oplus_{\Psi\in \Irr(\cA)} N_{\Phi\Gamma}^{\Psi} \Psi,
\end{equation}
where $N_{\Phi\Gamma}^{\Psi}\in \mathbb{Z}_{\geq 0}$ is called fusion multiplicity.
The symmetry operators $U_{\Gamma}$ are labeled by objects $\Gamma\in \Rep(\cA)$, their composition law is give by the fusion rule
\begin{equation}
    U_{\Gamma}U_{\Phi}=U_{\Gamma\otimes \Phi}.
\end{equation}
We can similarly introduce the $\Rep(\cA)^{\rm rev}$ symmetry, where $\Gamma\otimes^{\rm rev} \Phi:=\Phi\otimes \Gamma$.

\begin{proposition}\label{prop:RepSym}
If we define the Hamiltonian for Hopf qudit as
\begin{equation}
    H=-Z=-\tilde{Z}.
\end{equation}
The Hopf qudit have non-invertible $\Rep(\cA)$ and $\Rep^{\rm rev}(\cA)$ symmetries.

(i) 
If we treat the Hamiltonian for Hopf qudit as
\begin{equation}
    H_Z=-Z=-\sum_{\Gamma\in \Irr(\cA)} \frac{d_{\Gamma}}{|\cA|} (\Tr' Z_{\Gamma}),.
\end{equation}
The Hopf qudit system will have a non-invertible symmetry characterized by the category $\Rep(\cA)$.
The symmetry operator is given by $J_{\Gamma} = \Tr' Z_{\Gamma}$, and we have 
\begin{equation}
    J_{\Gamma} \circ J_{\Phi}=
    J_{\Gamma\otimes \Phi}
    =J_{\oplus_{\Psi\in \Irr(\cA)} N_{\Phi\Gamma}^{\Psi} \Psi}.
\end{equation} 
For ground state $|1_{\cA}\rangle$, 
\begin{equation}
    J_{\Gamma}|1_{\cA}\rangle=d_{\Gamma}|1_{\cA}\rangle.
\end{equation}
If we define $J_{\Gamma}^{\ddagger}=\Tr' Z_{\Gamma}^{\ddagger}$, we have 
\begin{equation}
    J_{\Gamma}^{\ddagger}J_{\Phi}^{\ddagger}=J^{\ddagger}_{\Gamma\otimes \Phi}=J^{\ddagger}_{\oplus_{\Psi\in \Irr(\cA)}N_{\Gamma \Phi}^{\Psi} \Psi}.
\end{equation}
This also gives a $\Rep(\cA)$ symmetry and we also have 
\begin{equation}
   J_{\Gamma}^{\ddagger}|1_{\cA}\rangle=d_{\Gamma}|1_{\cA}\rangle. 
\end{equation}

(ii) 
If we treat the Hamiltonian for Hopf qudit as
\begin{equation}
    H_{\tilde{Z}}=-\tilde{Z}=-\sum_{\Gamma\in \Irr(\cA)} \frac{d_{\Gamma}}{|\cA|} (\Tr' \tilde{Z}_{\Gamma}),.
\end{equation}
The Hopf qudit system will have a non-invertible symmetry characterized by the category $\Rep(\cA)^{\rm rev}$, where $\Gamma\otimes^{\rm rev} \Phi:=\Phi\otimes \Gamma$.
The symmetry is realized by $E_{\Gamma}=\Tr' \tilde{Z}_{\Gamma}$ and $E_{\Gamma}^{\ddagger}=\Tr' \tilde{Z}^{\ddagger}_{\Gamma}$.
We have
\begin{align}
    E_{\Gamma}E_{\Phi}=E_{\Gamma\otimes^{\rm rev}\Phi}=E_{\oplus_{\Psi\in \Irr(\cA)}N_{ \Phi \Gamma}^{\Psi} \Psi}\\
    E_{\Gamma}^{\ddagger}E_{\Phi}^{\ddagger}=E_{\Gamma\otimes^{\rm rev}\Phi}^{\ddagger}=E_{\oplus_{\Psi\in \Irr(\cA)}N_{ \Phi \Gamma}^{\Psi} \Psi}^{\ddagger},
\end{align}
and for state $|1_{\cA}\rangle$, 
\begin{equation}
E_{\Gamma}|1_{\cA}\rangle=d_{\Gamma}|1_{\cA}\rangle=E_{\Gamma}^{\ddagger}|1_{\cA}\rangle.
\end{equation}
\end{proposition}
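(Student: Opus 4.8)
The plan is to reduce every composition law in the proposition to the algebra structure of the dual Hopf algebra $\bar{\cA}$, exploiting the correspondences $J_{\Gamma}=T^{\chi_{\Gamma}}_+$, $J^{\ddagger}_{\Gamma}=T^{\chi_{\Gamma}}_-$, $E_{\Gamma}=\tilde{T}^{\chi_{\Gamma}}_-$ and $E^{\ddagger}_{\Gamma}=\tilde{T}^{\chi_{\Gamma}}_+$ already recorded in Eqs.~\eqref{eq:JT1}--\eqref{eq:JT2}, together with the fact that $\Gamma\mapsto\chi_{\Gamma}$ carries the tensor product to the product $\bar{\mu}$ of $\bar{\cA}$, that is $\chi_{\Gamma}\cdot\chi_{\Phi}=\chi_{\Gamma\otimes\Phi}$ by Eq.~\eqref{eq:MultiplicationChi}. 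Thus the whole statement will follow once I know how $T_{\pm}$ and $\tilde{T}_{\pm}$ compose as operators and how they act on $|1_{\cA}\rangle$.

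First I would establish the composition rules of the four actions purely at the level of $\bar{\cA}$. Using only coassociativity of $\Delta$ and the defining pairing $\langle\bar{\mu}(\varphi\otimes\psi),x\rangle=\sum_{(x)}\varphi(x^{\cone})\psi(x^{\ctwo})$ from Eq.~\eqref{eq:MultiplicationBarA}, a one-line Sweedler computation gives
\begin{equation}
T^{\varphi}_+ T^{\psi}_+ = T^{\bar{\mu}(\varphi\otimes\psi)}_+, \qquad \tilde{T}^{\varphi}_- \tilde{T}^{\psi}_- = \tilde{T}^{\bar{\mu}(\psi\otimes\varphi)}_-,
\end{equation}
so that $T_+$ is an algebra homomorphism (a left action) while $\tilde{T}_-$ is an anti-homomorphism (a right action), the surviving standard-basis label sitting on the opposite tensor factor in the two cases. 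Specializing to $\varphi=\chi_{\Gamma}$, $\psi=\chi_{\Phi}$ then yields $J_{\Gamma}J_{\Phi}=J_{\Gamma\otimes\Phi}$ and $E_{\Gamma}E_{\Phi}=E_{\Phi\otimes\Gamma}=E_{\Gamma\otimes^{\rm rev}\Phi}$; the order reversal in the second identity is exactly the origin of the $\Rep^{\rm rev}(\cA)$ symmetry. Expanding the fusion rule $\chi_{\Gamma\otimes\Phi}=\sum_{\Psi}N_{\Gamma\Phi}^{\Psi}\chi_{\Psi}$ and using linearity of $\Gamma\mapsto J_{\Gamma}$ delivers the stated $J_{\oplus_{\Psi}N\Psi}$ form.

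For the daggered operators I would repeat the computation, the only new ingredient being that the antipode reverses comultiplication, $\Delta(S(x))=\sum_{(x)}S(x^{\ctwo})\otimes S(x^{\cone})$. This is precisely what makes $T^{\varphi}_- T^{\psi}_-=T^{\bar{\mu}(\varphi\otimes\psi)}_-$ a homomorphism and $\tilde{T}^{\varphi}_+\tilde{T}^{\psi}_+=\tilde{T}^{\bar{\mu}(\psi\otimes\varphi)}_+$ an anti-homomorphism, giving $J^{\ddagger}_{\Gamma}J^{\ddagger}_{\Phi}=J^{\ddagger}_{\Gamma\otimes\Phi}$ and $E^{\ddagger}_{\Gamma}E^{\ddagger}_{\Phi}=E^{\ddagger}_{\Gamma\otimes^{\rm rev}\Phi}$. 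The eigenvalue statements on the ground state are then immediate: since $\Delta(1_{\cA})=1_{\cA}\otimes 1_{\cA}$, $S(1_{\cA})=1_{\cA}$ and $\Gamma(1_{\cA})=I_{d_{\Gamma}}$, each of these operators acts on $|1_{\cA}\rangle$ by the scalar $\chi_{\Gamma}(1_{\cA})=\Tr'\Gamma(1_{\cA})=d_{\Gamma}$. Finally, to confirm these are genuine symmetries I would verify $[J_{\Gamma},Z]=0$ and its analogues: writing $Z=T^{\Lambda}_+$ as in Eq.~\eqref{eq:Z1}, the two-sided integral property of the Haar measure gives $\varphi\Lambda=\varphi(1_{\cA})\Lambda=\Lambda\varphi$, hence $\chi_{\Gamma}\Lambda=d_{\Gamma}\Lambda=\Lambda\chi_{\Gamma}$ and all four actions commute with the Hamiltonian.

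I expect the only genuine subtlety, as opposed to routine Sweedler bookkeeping, to be keeping straight which of the four actions is a homomorphism and which an anti-homomorphism of $\bar{\cA}$: that single distinction is what separates the $\Rep(\cA)$ symmetry carried by the $J$-type operators from the $\Rep^{\rm rev}(\cA)$ symmetry carried by the $E$-type operators, and a misplaced antipode or a swapped Sweedler leg there would silently invert the fusion ordering and spoil the \emph{rev} structure.
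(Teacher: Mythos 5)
Your proposal is correct and follows essentially the same route as the paper: the paper proves $J_{\Gamma}\circ J_{\Phi}=J_{\Gamma\otimes\Phi}$ by the same Sweedler/coassociativity computation (just written directly on $|h\rangle$ rather than first packaged as the $\bar{\cA}$-module identity $T^{\varphi}_+T^{\psi}_+=T^{\varphi\psi}_+$), and it likewise establishes $[J_{\Gamma},H_Z]=0$ via the identifications $J_{\Gamma}=T^{\chi_{\Gamma}}_+$ and $H_Z=T^{\Lambda}_+$. Your explicit tracking of which of the four actions is a homomorphism versus an anti-homomorphism is a slightly more systematic write-up of what the paper dismisses as ``proved analogously,'' but it is not a different argument.
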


\begin{proof}
To prove this, recall that $Z_{\Gamma}=(\text{id}\otimes \Gamma)\circ \Delta$, this means that
\begin{equation}
    J_{\Gamma} = (\id \otimes \Tr )\circ (\text{id}\otimes \Gamma)\circ \Delta,
\end{equation}
viz., $ J_{\Gamma}$ is map form Hopf qudit $\cA$ to itself. Then for $h\in \cA$, we have
\begin{equation}
    \begin{aligned}
         J_{\Gamma}\circ J_{\Phi}|h\rangle= &\sum_{(h)} J_{\Gamma} |h^{\cone} \rangle \otimes \Tr \Phi(h^{\ctwo})\\
         =& \sum_{(h)} |h^{\cone} \rangle \otimes \Tr \Gamma(h^{\ctwo}) \otimes \Tr \Phi(h^{\cthree})\\
         =& \sum_{(h)} |h^{\cone} \otimes \Tr [(\Gamma\otimes \Phi)(h^{\ctwo})]\\
         = &  \sum_{(h)} |h^{\cone} \otimes \Tr[ (\oplus_{\Psi\in \Irr(\cA)} N_{\Phi\Gamma}^{\Psi} \Psi)(h^{\ctwo})]\\
         =&J_{\oplus_{\Psi\in \Irr(\cA)} N_{\Phi\Gamma}^{\Psi} \Psi} |h\rangle.
    \end{aligned}
\end{equation}
Notice that we have used the definition of tensor product for two representations given in Eq.~\eqref{eq:RepTensor}. A more intuitive proof based Hopf tensor network will be given in Sec.~\ref{sec:HopfTensor}. All other equations can be proven analogously.

We also need to show that $[J_{\Gamma},H_Z]=0$. This is clear from the fact that $J_{\Gamma}=T^{\chi_{\Gamma}}_+$ and $H_Z=T^{\Lambda}_+$, with $\chi_{\Gamma}$ being the character of $\Gamma$ and $\Lambda$ the Haar measure of $\cA$. See Sec.~\ref{sec:QD} for a more detailed discussion.

All other claims can be proved in similar ways.
\end{proof}

Notice that if we use the characters $\chi_{\Gamma}$ to represent the categorical symmetry, we see that the $\Rep(\cA)$ symmetry can be regarded as a subalgebra of the dual Hopf algebra $\bar{\cA}$.
The Hopf qudit $\cA$ can be regarded as a left $\bar{\cA}$-module with the left action given by $T^{\varphi}_+$, viz., we have
\begin{equation}
    T^{\bar{1}}_+=\id_{\cA},\quad T^{\varphi \psi}_+=T^{\varphi}_+T_{+}^{\psi},
\end{equation}
where $\bar{1}=\varepsilon$ is the unit of $\bar{\cA}$. Using Eqs.~\eqref{eq:JT1} and \eqref{eq:JT2}, we see that the $\Rep(\cA)$ and $\Rep(\cA)^{\rm rev}$ symmetries discussed above can be embedded into $\bar{\cA}$.

\section{CSS-type Hopf Cluster states}
\label{sec:HopfCSSCluster}

With the preparation outlined above, in this section, we will construct the CSS-type Hopf cluster state, which we will simply refer to as the Hopf cluster state. Our construction has the property that it can recover the finite group cluster state \cite{brell2015generalized,fechisin2023noninvertible} when we set the Hopf algebra as the group algebra.

Indeed, our construction relies on a bipartite graph $K = (V, E)$, where each odd vertex has a local ordering of edges connected to it. We term this type of graph a cluster graph. Below is a formal definition (See Fig.~\ref{fig:cluster-graph} for a depiction):

\begin{definition}[Cluster graph]\label{def:ClusterGraph}
A graph $K(V,E)$ is called a cluster graph if (i) it is a bipartite graph; (ii) each edge is oriented; and (iii) there is global ordering of edge set, denoted as $e_1,\cdots,e_{|E|}$.
\end{definition}

\begin{figure}[t]
    \centering
    \includegraphics[width=5cm]{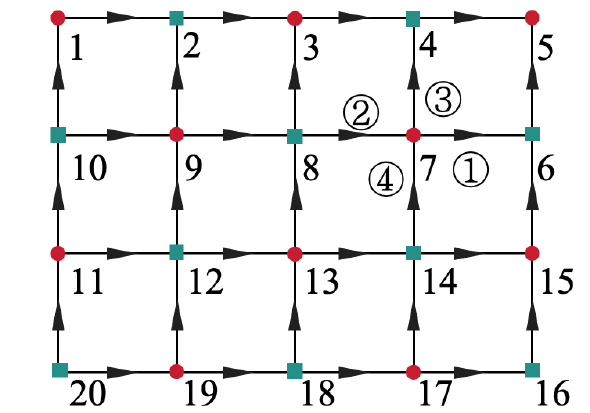}
    \caption{An illustration of cluster graph. The red vertices represent odd vertices, while the teal vertices represent even vertices. The vertices are labeled with the ordering from $1$ to $20$, and examples of local ordering for edges connected to odd vertex $v_7$ are shown as \textcircled{1}, $\cdots$, and \textcircled{4}.}
    \label{fig:cluster-graph}
\end{figure}

The condition (i) implies that we can partition the vertices into even and odd vertices, denoted as $V_e$ and $V_o$, respectively. Furthermore, there is no edge connecting odd vertices with odd vertices, and similarly for even vertices.
This means that the edge set can be divided into disjoint union $E=\sqcup_{v_o\in V_o} N_{E}(v_o)$, with $N_{E}(v_o)$ the set of edges connected to odd vertex $v_o$.

The global ordering of the edge set in condition (iii) is equivalent to the existence of local orderings for all vertices. Notice that for the group case, we only need to have a local ordering for the edge set $N_E(v_e)$ for all even vertices \cite{brell2015generalized}, since the group algebra is cocommutative.
For the Hopf qudit, the local ordering of $N_E(v_o)$  plays a crucial role. To ensure a finite-depth circuit, we can choose a special configuration of edge orientation for $N_E(v_e)$ at each even vertex $v_e$. This will be illustrated in Sec.~\ref{sec:ClusterLattice}.

\subsection{Vertex states and edge entangler operations}

For a cluster graph $K(V,E)$, we assign Hopf qudit $\mathcal{H}_v=\cA$ to each vertex $v\in V$ and the total space is $\otimes_{v\in V}\mathcal{H}_v$. This kind of Hopf qudit network also plays a crucial role in the Hopf quantum double model \cite{Kitaev2003,Buerschaper2013a,buerschaper2013electric,jia2023boundary,Jia2023weak}.
The initial state of the cluster state is a product of Pauli-X preferred states and Pauli-Z preferred states. This concept can be generalized to Hopf qudits as follows:

\begin{definition}
    We introduce two preferred states in the Hopf algebra $\cA$, one is unit element $|1_{\cA}\rangle$  call Z-preferred state;  the other is trivial representation state $|\mathbb{1}\rangle$ called X-preferred state.
\end{definition}

As we have proved, the Z and X preferred states satisfy the following properties:
\begin{enumerate}
        \item For all irreps $\Gamma\in \Rep(\cA)$, we have  $Z_{\Gamma^{ii}}|1_{\cA}\rangle = |1_{\cA}\rangle$,         $J_{\Gamma}|1_{\cA}\rangle=d_{\Gamma}|1_{\cA}\rangle$, $Z|1_{\cA}\rangle=|1_{\cA}\rangle$.
        \item $X$-preferred state $|\mathbb{1}\rangle$ satisfies $\XR_g|\one\rangle=\varepsilon(g)|\one\rangle=\XL_g|\one\rangle$ and $X|\one\rangle=|\one\rangle$.
    \end{enumerate}

    The first statement comes from the fact $\Delta(1_{\cA})=1_{\cA}\otimes 1_{\cA}$, while the second is a property of the Haar integral. For a cluster graph $K$, we assign $X$-preferred state $|\one\rangle$ to each odd vertex and $X$-preferred state  $|1_{\cA}\rangle$ to each even vertex: 
\begin{equation}
    |\Omega\rangle=(\otimes_{v_o}|\one\rangle_{v_o}) \otimes (\otimes_{v_e}|1_{\cA}\rangle_{v_e}),
\end{equation}
which will be called the preferred product state.


The edge entangler for a finite group cluster state manifests as controlled gates, inherently dependent on the chosen basis. At first glance, it might seem plausible to generalize the edge entangler solely using the algebraic structure of $\mathcal{A}$ and a selected basis, without resorting to the coalgebra structure and antipode morphism. However, such an approach is not feasible. The reasoning behind this limitation is as follows.
Choosing a basis $\{1_{\cA}=g_0,g_1,\cdots,g_{d-1}\}$ of the $d$ dimensional Hopf algebra $\cA$, using the regular $\cA$ action on $\cA$, we could define
\begin{equation}
    \overset{\rightarrow}{X}_{g_k}: |h\rangle \mapsto |g_kh\rangle,\quad  k=0,\cdots,d-1.
\end{equation}
However, unlike in the group case, the basis elements of a Hopf algebra are generally not invertible. This noninvertibility poses several challenges.
We cannot define $\overset{\leftarrow}{X}_{g_k}$ solely using the algebraic structure of $\mathcal{A}$, as there may not be an inverse $g_k^{-1}$. To address this, we must utilize the antipode map $S: \mathcal{A} \to \mathcal{A}$ and define
\begin{equation}
\overset{\leftarrow}{X}_{g_k}: |h\rangle \mapsto |hS(g_k)\rangle,\quad k=0,\cdots,d-1. \label{eq:XLhopf}
\end{equation}
An even more significant issue arises when defining the edge entangler operation as
\begin{equation*}
C \overset{\rightarrow}{X} =\sum_{k,l=0}^{d-1} |g_k\rangle \langle g_k|\otimes |g_kg_l\rangle \langle g_l|,
\end{equation*}
as this map is generally non-invertible. A similar issue exists for $C \overset{\leftarrow}{X}$.

Drawing from the aforementioned observations, we define the edge entangler by leveraging the comprehensive structure of a Hopf algebra. This includes not only the algebraic structure (as a regular action) but also the coalgebra structure and the antipode map.
For the left action, the edge entangler is defined as follows:
\begin{equation}
    C\overset{\rightarrow}{X}_{i,j}: |g\rangle_i|h\rangle_j\mapsto \sum_{(g)}|g^{(2)}\rangle_i |g^{(1)}h\rangle_j,
\end{equation}
which is invertible, and its inverse is given by:
\begin{equation}
     C\overset{\rightarrow}{X}_{i,j}^{-1}: |g\rangle_i|h\rangle_j\mapsto \sum_{(g)}|g^{(2)}\rangle_i |S(g^{(1)})h\rangle_j.
\end{equation}
Notice that $C\overset{\rightarrow}{X}:\cA\otimes \cA \to \cA\otimes \cA$ can be expressed as a composition of structure maps of Hopf algebra
\begin{equation}
  C\overset{\rightarrow}{X} = (\text{id} \otimes \mu)\circ (\tau \otimes \text{id})  \circ (\Delta\otimes \text{id}),  
\end{equation}
where $\tau(a\otimes b)=b\otimes a$ is the twist map. All other entangler operations can be expressed in a similar form as well.
We could also define edge entangler from the right regular action of Hopf algebra on itself as follows:
\begin{equation}
    C\overset{\leftarrow}{X}_{i,j}: |g\rangle_i|h\rangle_j\mapsto \sum_{(g)}|g^{(2)}\rangle_i |hS(g^{(1)})\rangle_j,
\end{equation}
and its inverse is given by:
\begin{equation}
     C\overset{\leftarrow}{X}_{i,j}^{-1}: |g\rangle_i|h\rangle_j\mapsto \sum_{(g)}|g^{(2)}\rangle_i |hg^{(1)}\rangle_j.
\end{equation}
Observe that by setting the Hopf algebra as the group algebra $\mathcal{A} = \mathbb{C}[G]$, the aforementioned definitions align seamlessly with finite group case \cite{brell2015generalized,fechisin2023noninvertible}.

To show that $C\overset{\rightarrow}{X}_{i,j}^{-1} \circ C\overset{\rightarrow}{X}_{i,j}= \id_{\cA}\otimes \id_{\cA}$, we need to use the fact $S^2=\id_{\cA}$ when $\cA$ is semisimple and cosemisimple \cite{larson1988semisimple}, this further implies $S(g^{(2)})g^{(1)}=\varepsilon(g)1_{\cA}$. With this and the Hopf algebra axiom $\sum_{(g)} g^{(2)} \varepsilon(g^{(1)})=g$ we have:
\begin{equation}
\begin{aligned}
    |g\rangle_i |h\rangle_j \overset{C\overset{\rightarrow}{X}_{i,j}}{\mapsto}  \sum_{(g)}|g^{(2)}\rangle_i |g^{(1)}h\rangle_j 
   & \overset{C\overset{\rightarrow}{X}^{-1}_{i,j}}{\mapsto} \sum_{(g)}|g^{(3)}\rangle_i |S(g^{(2)})g^{(1)}h\rangle_j\\
    =&\sum_{(g)}|g^{(2)}\rangle_i |\varepsilon(g^{(1)}) 1_{\cA}h\rangle_j = |g\rangle_i |h\rangle_j.
\end{aligned}
\end{equation}
To establish  $C\overset{\leftarrow}{X}_{i,j}^{-1}  \circ C\overset{\leftarrow}{X}_{i,j}= \id_{\cA}\otimes \id_{\cA}$, we only require the following Hopf algebra axioms: $S(g^{(1)})g^{(2)}=\varepsilon(g)1_{\cA}$ and $\sum_{(g)} g^{(2)} \varepsilon(g^{(1)})=g$.

\begin{remark}
Notice that the above construction does not hold for weak Hopf algebras \cite{BOHM1998weak}, as $\sum_{(g)}g^{(1)}S(g^{(2)})\neq \varepsilon(g)$ in this case. However, since we won't use the inevitability of local gates, our construction of Hopf cluster state also works for weak Hopf case.
A better way of generalization to the weak Hopf cluster state requires using the equivalence between the cluster state model and the quantum double model, which will be established in Sec.~\ref{sec:QD}.
\end{remark}

\begin{table}
    \centering
    \begin{tabular}{|c|c|}
    \hline
Total space &  $\mathcal{H}_{tot}=\otimes_{v\in V}\mathcal{H}_v$, $\mathcal{H}_v=\cA$   \\
    \hline
    Z-preferred state  & $|1_{\cA}\rangle$, $Z|1_{\cA}\rangle=|1_{\cA}\rangle$  \\     
        \hline
        X-preferred state & $ |\one\rangle=\sqrt{|\cA|} |\lambda\rangle$, $X|\one\rangle=|\one\rangle$  \\
          \hline   
Preferred product state & $ |\Omega\rangle =(\otimes_{v_o} |\lambda\rangle_{v_o})\otimes (\otimes_{v_e} |1_{\cA}\rangle_{v_e})$\\
\hline
Edge entangler  &  $\begin{aligned}
    C\overset{\rightarrow}{X}_{i,j}: |g\rangle_i|h\rangle_j\mapsto \sum_{(g)}|g^{(2)}\rangle_i |g^{(1)}h\rangle_j\\
     C\overset{\rightarrow}{X}_{i,j}^{-1}: |g\rangle_i|h\rangle_j\mapsto \sum_{(g)}|g^{(2)}\rangle_i |S(g^{(1)})h\rangle_j\\
       C\overset{\leftarrow}{X}_{i,j}: |g\rangle_i|h\rangle_j\mapsto \sum_{(g)}|g^{(2)}\rangle_i |hS(g^{(1)})\rangle_j\\
        C\overset{\leftarrow}{X}_{i,j}^{-1}: |g\rangle_i|h\rangle_j\mapsto \sum_{(g)}|g^{(2)}\rangle_i |hg^{(1)}\rangle_j
\end{aligned}$ \\
           \hline           
    \end{tabular}
    \caption{Summary of preferred states and edge entangler operations for Hopf cluster state.}
    \label{tab:HopfCluster}
\end{table}

\subsection{CSS-type Hopf cluster state}

Now, let's construct the CSS-type Hopf cluster state.
The input state of the cluster circuit is the preferred product state
\begin{equation}
    |\Omega\rangle =(\otimes_{v_o} |\one\rangle_{v_o})\otimes (\otimes_{v_e} |1_{\cA}\rangle_{v_e}).
\end{equation}
Recall that we have assumed a global ordering of the edge set. Denoting them with subscripts, we have $e_1, \cdots, e_{|E|}$.
Since each edge is directed and the cluster graph is bipartite, there are two possible configurations of the edges:
\begin{itemize}
\item If the edge is directed from an odd vertex to an even vertex, we set $U_e=C\XR$:
\begin{equation}
     \begin{aligned}
        \begin{tikzpicture}        
            \draw[line width=.6pt,black,-latex] (0,0)--(1,0);
            \draw[line width=.6pt,black,] (2,0)--(0.8,0);
            \draw[black,fill=red] (0,0) circle (0.15);  
             \draw[draw=black,fill=teal] (1.85,-0.15) rectangle ++(0.3,0.3);
            \node[ line width=0.6pt, dashed, draw opacity=0.5] (a) at (0.8,0.4){$C\XR$};
        \end{tikzpicture}
    \end{aligned}
\end{equation}
\item If the edge is directed from an even vertex to an odd vertex, we set $U_e=C\XL$:
\begin{equation}
     \begin{aligned}
        \begin{tikzpicture}        
            \draw[line width=.6pt,black,-latex] (2,0)--(1,0);
            \draw[line width=.6pt,black,] (1.2,0)--(0,0);
            \draw[black,fill=red] (0,0) circle (0.15);  
             \draw[draw=black,fill=teal] (1.85,-0.15) rectangle ++(0.3,0.3);
            \node[ line width=0.6pt, dashed, draw opacity=0.5] (a) at (0.8,0.4){$C\XL$};
        \end{tikzpicture}
    \end{aligned}
\end{equation}
\end{itemize}
In this manner, the CSS-type Hopf cluster state for a given cluster graph $K$ and Hopf algebra $\mathcal{A}$ is defined as
\begin{equation}\label{eq:HopfCluster}
|K,\mathcal{A}\rangle= U_{e_{|E|}}\cdots U_{e_1} |\Omega\rangle.
\end{equation}
We will also call $U_E= U_{e_{|E|}}\cdots U_{e_1}$ (where $E$ represents the edge set of graph $K$) the circuit operations and $ |\Omega\rangle$ the initial state.

\begin{proposition} \label{prop:Uedge}
For the edge entangler operations, we have the following observations:
\begin{enumerate}
       \item  Notice that if the distance between two edges $e,e'$ is greater or equal to two, viz., they are not connected via some vertex, then the edge entangler operations are commutative: $[U_e,U_{e'}]$.
       
       \item For each odd vertex $v_o$ with input state $h$, the local ordering $e_1,\cdots,e_n$ yields 
       \begin{equation}
       \sum_{(h)}   |h^{(n+1)}\rangle_{v_o} \otimes X^{\leftrightarrow}_{h^{(n)}}\otimes \cdots \otimes X^{\leftrightarrow}_{h^{(1)}},
       \end{equation}
       where $X^{\leftrightarrow}$ is chosen as $\XR$ or $\XL$  depending on the orientations of each edge.
       When $\cA$ is cocommutative ($\Delta=\tau\circ \Delta$), these edge operators will commute with each other, as is the case for the group algebra $\Cbb[G]$. In this scenario, there is no need for a local ordering of edges for each odd vertex.
       \item For each even vertex $v_e$, if $e$ is an input edge and $e'$ is an output edge, 
      \begin{equation*}
      \begin{aligned}
        \begin{tikzpicture}        
            \draw[line width=.6pt,black,-latex] (0,0)--(1,0);
            \draw[line width=.6pt,black,] (2,0)--(0.8,0);
            \draw[line width=.6pt,black,-latex] (2,0)--(3,0);
            \draw[line width=.6pt,black,] (2.8,0)--(4,0);
            \draw[black,fill=red] (0,0) circle (0.15);  
            \draw[black,fill=red] (4,0) circle (0.15);  
             \draw[draw=black,fill=teal] (1.85,-0.15) rectangle ++(0.3,0.3);
            \node[ line width=0.6pt, dashed, draw opacity=0.5] (a) at (0.8,0.4){$U_{e}$};
            \node[ line width=0.6pt, dashed, draw opacity=0.5] (a) at (3.2,0.4){$U_{e'}$};
        \end{tikzpicture}
    \end{aligned}
    \end{equation*}
       then $[U_{e},U_{e'}]$ vanishes.
For two input edges (or output edges), the edge entangler operators do not commute in general.
\end{enumerate}
\end{proposition}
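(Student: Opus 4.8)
The plan is to reduce each of the three assertions to a single structural feature of $\cA$, using the explicit composition $C\XR=(\id\otimes\mu)\circ(\tau\otimes\id)\circ(\Delta\otimes\id)$ (and its $C\XL$ analogue) recorded above, in which the odd endpoint always plays the role of the control (the factor that is comultiplied) and the even endpoint the role of the target (the factor acted on by $\mu$). Assertion (1) is then immediate: if $e$ and $e'$ share no vertex, $U_e$ is supported on $\cH_v\otimes\cH_w$ and $U_{e'}$ on $\cH_{v'}\otimes\cH_{w'}$ with $\{v,w\}\cap\{v',w'\}=\emptyset$, so the two operators act on disjoint tensor legs of $\otimes_{u\in V}\cH_u$ and commute for purely kinematical reasons, with no Hopf-algebraic input required.

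For assertion (2) I would induct on the number $n$ of edges incident to the odd vertex $v_o$, processed in the prescribed local order $e_1,\dots,e_n$. Each $U_{e_k}$ leaves the even targets of the earlier edges untouched and acts on the \emph{current} state $x$ at $v_o$ by $x\mapsto\sum_{(x)}x^{(2)}$, depositing $x^{(1)}$ as $\XR_{x^{(1)}}$ or $\XL_{x^{(1)}}$ on the even endpoint of $e_k$ according to its orientation. The base case $n=1$ is the definition of $C\XR/C\XL$. In the inductive step the state at $v_o$ after the first $n-1$ entanglers is $h^{(n)}$, and applying $U_{e_n}$ splits it as $h^{(n)}\otimes h^{(n+1)}$; coassociativity of $\Delta$ (i.e.\ the unambiguous definition of $\Delta_{n+1}$) guarantees that this relabelling is consistent, yielding the displayed expression with $h^{(k)}$ on $e_k$ and $h^{(n+1)}$ retained at $v_o$. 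For the cocommutative case I would note that transposing $e_k$ and $e_{k+1}$ in the product merely interchanges the legs $h^{(k)}$ and $h^{(k+1)}$; when $\Delta=\tau\circ\Delta$ these agree after summation, so $[U_{e_k},U_{e_{k+1}}]=0$ and the local ordering becomes irrelevant, recovering the group-algebra situation of $\Cbb[G]$.

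For assertion (3) the decisive observation is that the even vertex $v_e$ is always the target. An edge directed \emph{into} $v_e$ is of type odd$\to$even, hence equals $C\XR$ and acts on the state $y$ at $v_e$ by the left regular action $y\mapsto g^{(1)}y$; an edge directed \emph{out of} $v_e$ is of type even$\to$odd, hence equals $C\XL$ and acts by the right regular action $y\mapsto y\,S(g'^{(1)})$. Since the two controls sit on distinct odd qudits (and so commute trivially), the only question is the order of the two multiplications on $\cH_{v_e}$. For one input and one output these are a left and a right multiplication, and both orders give $g^{(1)}\,y\,S(g'^{(1)})$ by associativity of $\mu$, so $[U_e,U_{e'}]=0$. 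For two inputs both act on the left, and the two orders produce $g^{(1)}g'^{(1)}y$ against $g'^{(1)}g^{(1)}y$, which differ unless $\cA$ is commutative; the two-output case is identical with right multiplications. This establishes the asserted non-commutativity in general.

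The only delicate point—and the step I would treat most carefully—is fixing the orientation-to-algebra dictionary: correctly reading off from each directed edge which endpoint is comultiplied and whether the even target receives a left multiplication, a right multiplication, or an antipode-twisted one. Once this dictionary is pinned down, assertion (1) is disjointness, assertion (2) is coassociativity, and assertion (3) is associativity together with (non)commutativity, so no genuinely hard estimate remains; the residual work is routine Sweedler bookkeeping, which the string-diagram calculus of Sec.~\ref{sec:HopfTensor} would render essentially pictorial.
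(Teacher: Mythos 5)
Your proposal is correct and follows essentially the same route as the paper's own (very terse) proof: disjoint supports for assertion (1), unwinding the definition via coassociativity for assertion (2), and the commutation of left and right regular actions on the even target (i.e.\ associativity of $\mu$, the paper's $[\XR_v,\XL_w]=0$) for assertion (3). Your version merely fills in the Sweedler bookkeeping and the orientation dictionary that the paper leaves implicit, and both are handled correctly.
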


\begin{proof}
   1. This is because for $d(e,e')\geq 2$, $U_e$ and $U_{e'}$ have support on different Hopf qudits.
   2. It's a direct consequence of the definition of edge entangler operations.
   3. Since $[\XR_v,\XL_w]=0$ for any $w,v\in V$, we have $[C\XL_e,C\XR_{e'}]=0$.
\end{proof}
The third statement of Proposition~\ref{prop:Uedge} is crucial: if the even vertex $v_o$ has more than two bonds ($|N_E(v_e)| > 2$), the commutativity cannot be reached in general. There must be a local ordering to make the definition of the Hopf cluster state well-defined.

\begin{proposition}
    For a general cluster graph, the circuit operator $U_E= U_{e_{|E|}}\cdots U_{e_1}$ in general cannot be decomposed into the composition of some commutative local gates.
\end{proposition}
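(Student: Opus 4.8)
The plan is to prove the statement by producing an explicit counterexample, since the qualifier ``in general'' only asks for one cluster graph and one Hopf algebra for which a commuting local presentation is impossible. The obstruction has already been isolated in Proposition~\ref{prop:Uedge}: two edge entanglers sharing a vertex commute only when the relevant structure map is (co)commutative --- the multiplication at an even vertex carrying two input (or two output) edges, and the comultiplication at an odd vertex of degree $\geq 2$. First I would therefore take the smallest cluster graph forcing a shared vertex of degree two, namely a single odd vertex $o$ joined to two even vertices $e_1,e_2$ by edges both oriented $o\to e_1$ and $o\to e_2$, so that $U_{e_1}=C\XR_{o,e_1}$ and $U_{e_2}=C\XR_{o,e_2}$. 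For the algebra I would pick $\cA=\mathrm{Fun}(G)$ with $G$ a non-abelian finite group; this is a semisimple and cosemisimple $C^*$ Hopf algebra that is commutative but \emph{not} cocommutative, which is exactly what is needed to spoil commutativity at the odd vertex.

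Next I would evaluate both orderings on a generic product state $|g\rangle_o|h_1\rangle_{e_1}|h_2\rangle_{e_2}$, using coassociativity to write $\Delta_2(g)=\sum_{(g)}g^{(1)}\otimes g^{(2)}\otimes g^{(3)}$. Applying $U_{e_2}$ then $U_{e_1}$ yields
\begin{equation}
U_{e_1}U_{e_2}\,|g\rangle_o|h_1\rangle_{e_1}|h_2\rangle_{e_2}=\sum_{(g)}|g^{(3)}\rangle_o\,|g^{(2)}h_1\rangle_{e_1}\,|g^{(1)}h_2\rangle_{e_2},
\end{equation}
while the reverse order gives $\sum_{(g)}|g^{(3)}\rangle_o\,|g^{(1)}h_1\rangle_{e_1}\,|g^{(2)}h_2\rangle_{e_2}$. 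The two outputs agree for all inputs precisely when $\sum_{(g)}g^{(1)}\otimes g^{(2)}=\sum_{(g)}g^{(2)}\otimes g^{(1)}$, i.e.\ when $\cA$ is cocommutative; for $\cA=\mathrm{Fun}(G)$ with $G$ non-abelian this fails, so $[U_{e_1},U_{e_2}]\neq 0$. This shows the natural edge-local decomposition of $U_E$ is genuinely order dependent, and it is also what motivates the finite-depth configurations of Sec.~\ref{sec:ClusterLattice}.

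The step I expect to be the main obstacle is upgrading ``the edge entanglers do not commute'' to ``no decomposition of $U_E$ into mutually commuting local gates exists'', since in principle a regrouping into different local gates might still commute. Here I would first fix the meaning of \emph{local} as supported on a single edge, reducing the claim to the impossibility of a factorization $U_E=V_1V_2$ with $V_1$ on $\{o,e_1\}$, $V_2$ on $\{o,e_2\}$ and $[V_1,V_2]=0$. The cleanest route is to track the correlation $U_E$ builds between $e_1$ and $e_2$: these two registers interact only through the shared register $o$, and the explicit images above show that the copies $g^{(1)},g^{(2)}$ distributed to $e_1,e_2$ are interchanged between the two orderings, so the correlation carries an intrinsic asymmetry under exchanging the roles of $e_1$ and $e_2$. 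A commuting pair $V_1,V_2$ would produce the same operator whether applied as $V_1V_2$ or $V_2V_1$, forcing the $e_1$--$e_2$ correlation to be symmetric under that exchange; comparing with the asymmetric form of the images --- for instance by projecting the $o$ register onto fixed states and comparing the resulting two-qudit operators on $e_1,e_2$ --- turns the failure of cocommutativity into the non-existence of a commuting local presentation. Finally I would note the even-vertex analogue: at an even vertex with two input edges a parallel computation gives $g^{(1)}g'^{(1)}h$ versus $g'^{(1)}g^{(1)}h$, which differ whenever $\cA$ is non-commutative, so by Proposition~\ref{prop:Uedge}(3) both the non-commutativity and the non-cocommutativity of $\cA$ independently obstruct a commuting decomposition.
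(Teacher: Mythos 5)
Your computation of the two orderings of $C\XR_{o,e_1}$ and $C\XR_{o,e_2}$ at a shared odd vertex is correct, and it correctly isolates cocommutativity as the condition for those two entanglers to commute; the choice $\cA=\mathrm{Fun}(G)$ with $G$ non-abelian is a legitimate witness. But the step you yourself flag as the main obstacle is where the argument fails. The claim that a commuting factorization $U_E=V_1V_2$ would force the correlation between the two even registers to be symmetric under exchanging their roles does not follow: $V_1$ and $V_2$ are different operators on different supports, and $V_1V_2=V_2V_1$ says nothing about invariance of the product under the swap of the two even vertices (take $V_1=A\otimes I$ and $V_2=B\otimes I$ acting only on $o$ with $[A,B]=0$ but $A\neq B$). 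Likewise, projecting the $o$ register onto fixed states always yields an operator of bounded operator-Schmidt rank across the $e_1|e_2$ cut whether or not $V_1$ and $V_2$ commute, so it does not detect the asymmetry you want. Ruling out \emph{arbitrary} commuting local gates $V_1,V_2$ reproducing $U_{e_1}U_{e_2}$ is a genuinely harder operator-algebraic question than showing $[U_{e_1},U_{e_2}]\neq 0$, and your sketch does not answer it.

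Separately, you are proving (or attempting to prove) a stronger statement than the paper does, on an example too small to exhibit the intended failure. The paper's proof only considers decompositions obtained by partitioning the edge set, $E=E_1\sqcup\cdots\sqcup E_m$ with $[U_{E_i},U_{E_j}]=0$, and runs a combinatorial propagation: edges meeting at an odd vertex must lie in one block; at an even vertex only the input/output split is permitted by Proposition~\ref{prop:Uedge}, so two or more input edges at an even vertex drag their odd endpoints --- and hence all edges at those odd vertices --- into the same block; iterating, the block grows until it can engulf all of $E$, destroying locality on a large graph. In your three-vertex example the entire circuit is already supported on three qudits, so even after grouping $U_{e_1}$ and $U_{e_2}$ together nothing non-local has been produced. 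To repair the proof you should either adopt the paper's restricted notion of decomposition and run the propagation on a graph in which some even vertex has at least two input edges, or supply a genuinely new argument for the stronger no-go against arbitrary commuting local gates.
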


\begin{proof}
   Notice that, to divide the edge set into some disjoint union of subsets $E_1\cup\cdots \cup E_m$ such that $[U_{E_{i}},U_{E_j}]=0$, we can only divide the edge set along the even vertex, since the edges connected to an odd vertex are not commutative in general. Based on the third statement of Proposition~\ref{prop:Uedge}, we must divide $N_E(v_e)$ into the input edge set $N_E^i(v_e)$ and the output edge set $N_E^o(v_e)$, and then cut the edge set of the graph along the even vertex in this way to satisfy the locality condition. However, when $|N_E^i(v_e)|\geq 2$, this will introduce more than two odd vertices, all the edges connected to these vertices must be put into the same set since they are not commutative. These odd vertices will introduce more even vertices that connect to them. Repeating the dividing along these new even vertices will introduce some other odd vertices, eventually, the size of the local set can reach the size of the total edge set $E$.
\end{proof}

Although we cannot have locality condition for the general graph, when the bonds of all even vertices are two and $|N_E^i(v_e)|=|N_E^o(v_e)|=1$, we can make the local stabilizers commute with each other.
The details will be illustrated in Sec.~\ref{sec:ClusterLattice}, which is crucial for discussing SPT phases of the Hopf cluster states \cite{jia2024cluster}.

\begin{remark}
 Notice that our construction of Hopf cluster state is based on controlled left regular actions. We can introduce the controlled right regular actions
    \begin{equation}
        C\tilde{\XL}_{i,j}|h\rangle\otimes |g\rangle=|h^{\ctwo}\rangle\otimes |gh^{\cone}\rangle,\quad
        C\tilde{\XR}_{i,j}|h\rangle\otimes |g\rangle=|h^{\ctwo}\rangle\otimes |S(h^{\cone})g\rangle.
    \end{equation}
By setting the same preferred initial state $|\Omega\rangle$ and edge entangler as the above operators, we can also construct a Hopf cluster state. All the properties we discussed in this section, and will discuss in the sequel, also hold for this construction.
\end{remark}

\subsection{Stabilizers and non-invertible symmetry}
\label{subsec:stabilizer}

The stabilizer group for the qubit state is a subgroup of the Pauli group for which the code state is invariant under the action of the stabilizers. Here we will adopt a more general definition of stabilizers, which are the operators that leave the state of our interest invariant.

For the Hopf cluster state, the stabilizers can be constructed in a similar way as that for the qubit cluster state. 
Recall that for Haar integral $\lambda=S(\lambda)$ which implies $\XR_{\lambda}=\XL_{\lambda}$. Set this operator as $X=\XR_{\lambda}=\XL_{\lambda}$ we have $X|\one\rangle = |\one\rangle$.
For each odd vertex $v_o$ and $X(v_o)$, the stabilizer is of the form
\begin{equation}
    T(v_o)=    U_{e_{|E|}}\cdots  U_{e_{1}} X(v_o)  U_{e_{1}}^{-1} \cdots  U_{e_{|E|}}^{-1}.
\end{equation}
It's clear that for Hopf cluster state $|K,\cA\rangle$, we have $T(v_o) |K,\cA\rangle=|K,\cA\rangle$.
Similarly, for the even vertex $v_e$, the operator $Z (v_e)$ in Eq.~\eqref{eq:PauliZ} leaves $|1_{\mathcal{A}} \rangle_{v_e}$ invariant.
The corresponding stabilizer is defined as
\begin{equation}
    Q(v_e)=    U_{e_{|E|}}\cdots  U_{e_{1}} Z(v_e)  U_{e_{1}}^{-1} \cdots  U_{e_{|E|}}^{-1}.
\end{equation}
It's easy to verify $Q(v_e) |K,\cA\rangle=|K,\cA\rangle$.

For odd vertex, we can also introduce
\begin{equation}\label{eq:THopfright}
      \overset{\rightarrow}{T}_g(v_o)=    U_{e_{|E|}}\cdots  U_{e_{1}} \XR_g(v_o)  U_{e_{1}}^{-1} \cdots  U_{e_{|E|}}^{-1}, g\in \cA.
\end{equation}
\begin{equation}\label{eq:THopfleft}
      \overset{\leftarrow}{T}_g(v_o)=    U_{e_{|E|}}\cdots  U_{e_{1}} \XL_g(v_o)  U_{e_{1}}^{-1} \cdots  U_{e_{|E|}}^{-1}.
\end{equation}
They form representations of Hopf algebra $\cA$ in the sense that 
\begin{equation}
    \overset{\rightarrow}{T}_{1_{\cA}}=I,\quad \overset{\rightarrow}{T}_g \overset{\rightarrow}{T}_h=\overset{\rightarrow}{T}_{gh},
\end{equation}
similarly for $ \overset{\leftarrow}{T}_g(v_o)$.
From $g\lambda=\varepsilon(g)\lambda=\lambda S(g)$, it's clear that 
\begin{equation}
    \overset{\rightarrow}{T}_g|K,\cA\rangle=\varepsilon(g)|K,\cA\rangle=\overset{\leftarrow}{T}_g|K,\cA\rangle.
\end{equation}
To summarize, we have:

\begin{proposition}
    The Hopf cluster state supports two Hopf $\cA$-symmetries on each odd vertex, which are defined by Eqs.~\eqref{eq:THopfright} and \eqref{eq:THopfleft}.   Similarly, it also supports two Hopf $\cA^{\rm op}$-symmetries, if we replace the left regular action with the right regular action.
\end{proposition}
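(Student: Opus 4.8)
The plan is to reduce the statement to two ingredients already established above: that the circuit operator $U_E = U_{e_{|E|}}\cdots U_{e_1}$ is invertible, and that the left and right regular actions already furnish single-site symmetries of the preferred product state $|\Omega\rangle$. First I would record that $U_E$ is invertible, being a composition of the invertible edge entanglers $C\XR,C\XL$ whose inverses were written down above. Consequently conjugation $A\mapsto U_E A U_E^{-1}$ is an algebra automorphism of $\End(\mathcal{H}_{tot})$: it preserves products and the identity. It therefore suffices to check that $g\mapsto\XR_g(v_o)$ and $g\mapsto\XL_g(v_o)$ are genuine $\cA$-representations on $\mathcal{H}_{tot}$ and then transport them through $U_E$.

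The representation property is immediate for the right action, since $\XR_{1_{\cA}}=I$ and $\XR_g\XR_h|k\rangle=|ghk\rangle=\XR_{gh}|k\rangle$. For the left action one must invoke that $S$ is an anti-homomorphism: $\XL_g\XL_h|k\rangle=|kS(h)S(g)\rangle=|kS(gh)\rangle=\XL_{gh}|k\rangle$, while $S(1_{\cA})=1_{\cA}$ gives $\XL_{1_{\cA}}=I$. Hence both $\XR$ and $\XL$ are honest $\cA$-representations — the antipode flip exactly cancels the order reversal of right multiplication — so the conjugated operators $\overset{\rightarrow}{T}_g=U_E\XR_g(v_o)U_E^{-1}$ and $\overset{\leftarrow}{T}_g=U_E\XL_g(v_o)U_E^{-1}$ of Eqs.~\eqref{eq:THopfright}--\eqref{eq:THopfleft} satisfy $\overset{\rightarrow}{T}_{1_{\cA}}=\overset{\leftarrow}{T}_{1_{\cA}}=I$ together with $\overset{\rightarrow}{T}_g\overset{\rightarrow}{T}_h=\overset{\rightarrow}{T}_{gh}$ and $\overset{\leftarrow}{T}_g\overset{\leftarrow}{T}_h=\overset{\leftarrow}{T}_{gh}$.

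For the invariance I would use $|K,\cA\rangle=U_E|\Omega\rangle$, so that $\overset{\rightarrow}{T}_g|K,\cA\rangle=U_E\,\XR_g(v_o)|\Omega\rangle$. Since $\XR_g(v_o)$ acts nontrivially only on the odd-vertex factor $|\one\rangle_{v_o}$ and $\XR_g|\one\rangle=\varepsilon(g)|\one\rangle$ (the Haar-integral property $g\lambda=\varepsilon(g)\lambda$), this collapses to $\varepsilon(g)|K,\cA\rangle$; the identical computation with $\XL_g|\one\rangle=\varepsilon(g)|\one\rangle$ (from $\lambda S(g)=\varepsilon(g)\lambda$ and $\varepsilon(S(g))=\varepsilon(g)$) yields the $\overset{\leftarrow}{T}$ statement. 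For the $\cA^{\rm op}$ version I would simply substitute the right regular actions $\tilde{\XR},\tilde{\XL}$: the only change is $\tilde{\XL}_g\tilde{\XL}_h=\tilde{\XL}_{g\cdot^{\rm op}h}$ and $\tilde{\XR}_g\tilde{\XR}_h=\tilde{\XR}_{g\cdot^{\rm op}h}$, so conjugation by $U_E$ produces $\cA^{\rm op}$-representations, and invariance again follows from $\tilde{\XR}_g|\one\rangle=\varepsilon(g)|\one\rangle=\tilde{\XL}_g|\one\rangle$.

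The argument is essentially formal, so there is no deep obstacle; the only point that genuinely requires care is distinguishing $\cA$- from $\cA^{\rm op}$-representations — this hinges entirely on $S$ being an anti-homomorphism — and observing that $\XR_g(v_o)$ (resp. its variants) is supported on the single vertex $v_o$, which is what makes the conjugation against $U_E$ reduce cleanly to the single-qudit statement proved just above.
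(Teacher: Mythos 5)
Your proof is correct and follows essentially the same route as the paper: the paper's argument (given in the text surrounding the proposition) likewise conjugates the single-site representations $\XR_g,\XL_g$ through the invertible circuit $U_E$ and derives invariance of $|K,\cA\rangle$ from the Haar-integral identities $g\lambda=\varepsilon(g)\lambda=\lambda S(g)$. Your explicit verification that $\XL$ is an $\cA$-representation via the anti-homomorphism property of $S$ is a detail the paper leaves implicit, but it is the same argument.
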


For even vertex and $\Gamma\in \Rep(\cA)$, we introduce 
\begin{equation}\label{eq:RepQ}
   Q_{\Gamma}(v_e)=   U_{e_{|E|}}\cdots  U_{e_{1}} J_{\Gamma}(v_e)  U_{e_{1}}^{-1} \cdots  U_{e_{|E|}}^{-1}.
\end{equation}
Then it's easy to check that 
\begin{equation}\label{eq:RepQd}
     Q_{\Gamma}(v_e)|K,\cA\rangle=d_{\Gamma}|K,\cA\rangle.
\end{equation}
And 
\begin{equation}
     Q_{\Gamma}(v_e) Q_{\Phi}(v_e)= Q_{\oplus_{\Psi}N_{\Gamma\Phi}^{\Psi} \Psi}(v_e).
\end{equation}
From $J_{\Gamma}^{\ddagger}$, we define
\begin{equation}\label{eq:RepQd2}
       Q_{\Gamma}^{\ddagger}(v_e)=   U_{e_{|E|}}\cdots  U_{e_{1}} J_{\Gamma}^{\ddagger}(v_e)  U_{e_{1}}^{-1} \cdots  U_{e_{|E|}}^{-1}.
\end{equation}
Similar results hold. For $E_{\Gamma}$ and $E_{\Gamma}^{\ddagger}$, we can construct two $\Rep(\cA)^{\rm rev}$-symmetries.
To summarize, we have:
\begin{proposition}
    The Hopf cluster state supports two $\Rep(\cA)$-symmetries on each even vertex, which is defined by Eqs.~\eqref{eq:RepQ} and \eqref{eq:RepQd2}. Similarly, it also supports two $\Rep(\cA)^{\rm rev}$-symmetries on each even vertex, if we replace $J_{\Gamma}$ and  $J_{\Gamma}^{\ddagger}$ with $\tilde{J}_{\Gamma}$ and  $\tilde{J}_{\Gamma}^{\ddagger}$ in Eqs.~\eqref{eq:RepQ} and \eqref{eq:RepQd2}.
\end{proposition}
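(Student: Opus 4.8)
The plan is to reduce every claim to the single-qudit statements already proved in Proposition~\ref{prop:RepSym}, exploiting the fact that the even-vertex operators are conjugates of local operators by the circuit unitary. Concretely, each of $Q_\Gamma(v_e)$, $Q_\Gamma^\ddagger(v_e)$ and the reversed versions $E_\Gamma(v_e)$, $E_\Gamma^\ddagger(v_e)$ is obtained from a local operator supported at $v_e$ — namely $J_\Gamma(v_e)$, $J_\Gamma^\ddagger(v_e)$, $\tilde J_\Gamma(v_e)$, $\tilde J_\Gamma^\ddagger(v_e)$ — by conjugation with the circuit operator $U_E = U_{e_{|E|}}\cdots U_{e_1}$, while the cluster state itself is $|K,\cA\rangle = U_E|\Omega\rangle$ by Eq.~\eqref{eq:HopfCluster}. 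Since each edge entangler $U_e$ is invertible (its inverse is recorded in Table~\ref{tab:HopfCluster}), $U_E$ is invertible and conjugation $A\mapsto U_E A\,U_E^{-1}$ is an automorphism of the operator algebra; this is the structural fact that makes everything go through without ever computing the (generically nonlocal) explicit form of the conjugated operators.

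First I would establish the composition law. Writing $Q_\Gamma(v_e)=U_E\,J_\Gamma(v_e)\,U_E^{-1}$ and using $U_E^{-1}U_E=\id$ gives $Q_\Gamma(v_e)Q_\Phi(v_e)=U_E\bigl(J_\Gamma(v_e)J_\Phi(v_e)\bigr)U_E^{-1}$, so the fusion composition is inherited verbatim from the single-qudit identity $J_\Gamma J_\Phi = J_{\Gamma\otimes\Phi}$ of Proposition~\ref{prop:RepSym}(i), yielding $Q_\Gamma(v_e)Q_\Phi(v_e)=Q_{\Gamma\otimes\Phi}(v_e)$. The identical argument applied to $J_\Gamma^\ddagger$ reproduces the $Q_\Gamma^\ddagger$ fusion rule. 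For the reversed case I would invoke Proposition~\ref{prop:RepSym}(ii): because $\tilde J_\Gamma=\Tr'\tilde Z_\Gamma$ feeds the representation through the first coproduct leg, its composition reverses the order and obeys $E_\Gamma E_\Phi = E_{\Gamma\otimes^{\rm rev}\Phi}$, the defining composition of $\Rep(\cA)^{\rm rev}$; conjugation by $U_E$ transports this unchanged to $E_\Gamma(v_e)$ and $E_\Gamma^\ddagger(v_e)$.

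Next I would prove the invariance statement. Using $|K,\cA\rangle=U_E|\Omega\rangle$,
\begin{equation*}
Q_\Gamma(v_e)\,|K,\cA\rangle = U_E\,J_\Gamma(v_e)\,U_E^{-1}U_E\,|\Omega\rangle = U_E\,J_\Gamma(v_e)\,|\Omega\rangle .
\end{equation*}
Since $J_\Gamma(v_e)=\Tr' Z_\Gamma$ is supported only on the qudit $v_e$, whose factor in the product state $|\Omega\rangle$ is the $Z$-preferred state $|1_\cA\rangle_{v_e}$, the action factorizes, and $J_\Gamma|1_\cA\rangle=d_\Gamma|1_\cA\rangle$ (Proposition~\ref{prop:RepSym}) gives $J_\Gamma(v_e)|\Omega\rangle=d_\Gamma|\Omega\rangle$, hence $Q_\Gamma(v_e)|K,\cA\rangle=d_\Gamma|K,\cA\rangle$, which is Eq.~\eqref{eq:RepQd}. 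The same computation, using that $J_\Gamma^\ddagger$, $\tilde J_\Gamma$, $\tilde J_\Gamma^\ddagger$ all act as $d_\Gamma$ on $|1_\cA\rangle$, settles $Q_\Gamma^\ddagger$, $E_\Gamma$ and $E_\Gamma^\ddagger$ simultaneously.

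I expect the only genuinely delicate point to be the coproduct bookkeeping behind the reversed fusion for $E_\Gamma$ and $E_\Gamma^\ddagger$: one must track which leg of $\Delta_2(h)=\sum_{(h)} h^{(1)}\otimes h^{(2)}\otimes h^{(3)}$ feeds each character when composing $\tilde Z_\Gamma$ with $\tilde Z_\Phi$, and verify that the ordering is opposite to that of the $J_\Gamma$ case, so that the operators close on $\Rep(\cA)^{\rm rev}$ rather than on $\Rep(\cA)$. Everything else — invertibility of $U_E$, the cancellation inside the conjugation, and the factorization of the local operator on the product state $|\Omega\rangle$ — is routine once Proposition~\ref{prop:RepSym} is in hand.
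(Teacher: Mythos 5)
Your proposal is correct and follows essentially the same route as the paper, which defines $Q_\Gamma(v_e)$ and $Q_\Gamma^\ddagger(v_e)$ precisely as conjugates $U_E\,J_\Gamma(v_e)\,U_E^{-1}$ and then inherits both the fusion composition law and the eigenvalue relation $Q_\Gamma(v_e)|K,\cA\rangle=d_\Gamma|K,\cA\rangle$ from the single-qudit identities of Proposition~\ref{prop:RepSym} together with $J_\Gamma|1_\cA\rangle=d_\Gamma|1_\cA\rangle$ on the $Z$-preferred factor of $|\Omega\rangle$. The only difference is that you spell out the cancellation $U_E^{-1}U_E=\id$ and the factorization on the product state explicitly, which the paper leaves as "easy to check."
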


\section{Hopf cluster-lattice state on $d$-dimensional spatial manifold}
\label{sec:ClusterLattice}

As discussed in the last section, to ensure a finite-depth circuit, we impose commutativity for local operations, which implies certain local structures of even vertex for the graph.  
For our convenience, we introduce the following definition:

\begin{definition}[Cluster lattice for a $d$-dimensional manifold]
For an arbitrary $d$-dimensional spatial manifold, a cluster lattice $\mathbb{M}^d$ can be constructed as follows: We begin by cellulating the manifold, and then consider its $1$-dimensional skeleton as the lattice. The vertices of this lattice are referred to as odd vertices. Subsequently, we assign an orientation to each edge of the lattice and place even vertices along these edges. See Fig.~\ref{fig:cluster-lattice-graph} for an illustration.
\end{definition}

\begin{figure}
    \centering
    \includegraphics[width=8cm]{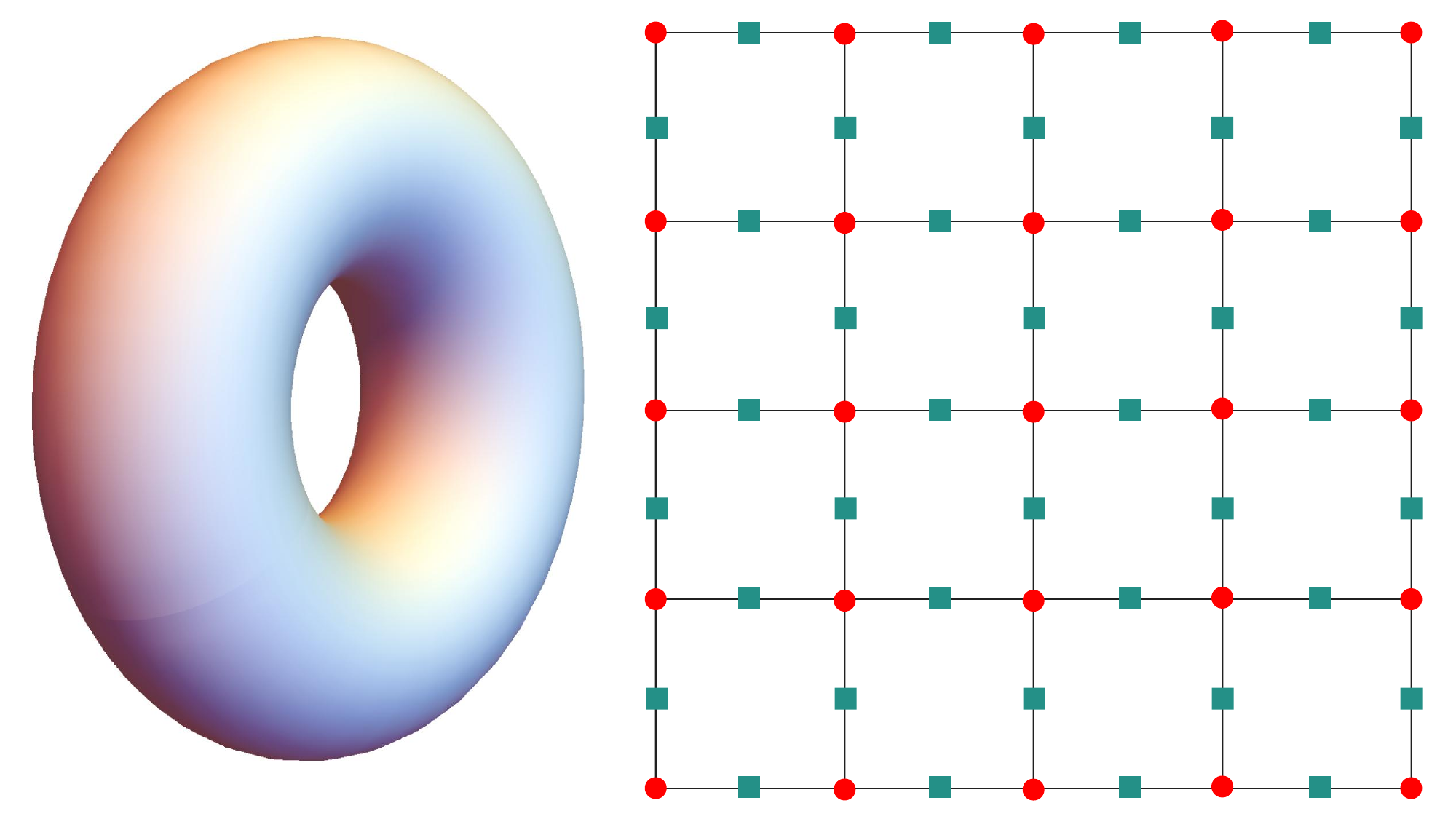}
    \caption{An example of cluster lattice on a torus $\mathbb{T}^2$. We first create a cellulation of $\mathbb{T}^2$ and regard the vertices as odd vertices (depicted as red vertices in the figure). An orientation is assigned to each edge. Then, for each edge, we add an even vertex (represented as teal vertices in the figure). These even vertices are all bivalent. The original edge now becomes two edges connected to the even vertex, one input edge, and one output edge.}
    \label{fig:cluster-lattice-graph}
\end{figure}

We will use the term `cluster lattice' to distinguish it from the `cluster graph'. Note that a cluster lattice is a special case of a cluster graph.
As a graph $\mathbb{M}^d=(V,E)$, a cluster lattice is always bipartite. The even vertices are all bivalent, with one input edge and one output edge, while the odd vertices can connect to an arbitrary number of even vertices depending on the lattice structure.
We still need to introduce a local ordering of edges connected to an odd vertex. To simplify the notation, we still adopt a global ordering of all edges of the lattice, naturally inducing a local ordering for each vertex. 
Also, notice that any given local ordering of $N_E(v_o)$ can be embedded into a global ordering of total edge set $E$.

Recall that there exists a local order of edges connecting to each odd vertex $v_o$, denoted as $e_{v_o,1}, \ldots, e_{v_o,m}$. We set $U_{e_{v_o,k}}=C\overset{\rightarrow}{X}$ if $v_o$ is the starting point of edge $e_{v_o,k}$; otherwise, we set $U_{e_{v_o,k}}= C\overset{\leftarrow}{X}$. We define the odd vertex operator as 
\begin{equation}
    W_{v_o}=\prod_{j=1}^m U_{e_{v_o,j}},
\end{equation}
where the order is crucial, since $[U_{e_{v_o,j}},U_{e_{v_o,l}}]\neq 0$ in general. 
However, for different odd vertices $v_o$ and $w_o$, the corresponding entangler operators commute 
\begin{equation}
    [W_{v_o},W_{w_o}]=0.
\end{equation}
For two odd vertices $v_o$ and $w_o$ that do not simultaneously connect to some even vertex, it's clear that $[W_{v_o},W_{w_o}]=0$. If they connect to some even vertex $u_e$ simultaneously via edges $e_i$ and $e_j$, from the construction of the cluster lattice, we observe that $u_e$ must not be the endpoint (or starting point) of two edges simultaneously. This implies that $[U_{e_1},U_{e_2}]=0$, as one arises from the $\XR$ and the other from the $\XL$. Consequently, $[W_{v_o},W_{w_o}]=0$. This guarantees that the circuits have a locality structure.

\begin{definition}[Hopf cluster-lattice state]
For a cluster lattice $\mathbb{M}^d$ and a Hopf algebra $\cA$, the Hopf cluster state is defined as
\begin{equation}\label{eq:Wcluster}
    |\mathbb{M}^d,\cA\rangle=(\prod_{v_o} W_{v_o} )|\Omega\rangle.
\end{equation}
Using the global ordering of the edge set, it can be expressed as
\begin{equation}
     |\mathbb{M}^d,\cA\rangle=(\prod_{j=1}^{|E|} U_{e_j}) |\Omega\rangle.
\end{equation}
It's easy to verify that the two definitions are equivalent.
\end{definition}

For a $1$-dimensional manifold, all vertices of a bipartite graph are bivalent. To ensure that the graph is a cluster lattice, we only need to impose some extra conditions on the orientation of each edge. This simplified the discussion of SPT phases in the 1d cluster state.
We believe our construction here for the Hopf cluster-lattice state provides a natural generalization of cluster states for discussing topological phases that have a Hopf symmetry. This provides a quite different construction from the anyonic chain \cite{Feiguin2007interacting} and the model based on Hopf comodule algebra \cite{inamura2022lattice}.
The detailed discussion from a topological phase perspective will be given in another work \cite{jia2024cluster}.

\section{Hopf tensor network representation}
\label{sec:HopfTensor}
In this section, we will establish the tensor network representation for the Hopf cluster state. 
Our construction of a tensor network is a combination of tensor network and string diagrams of Hopf algebra, which is a powerful tool when doing calculations. Notice that this should not be confused with the tensor network introduced in Refs.~\cite{Buerschaper2013a,jia2023boundary,Jia2023weak,girelli2021semidual} in solving Hopf and weak Hopf quantum double models.
In this part, we will always fix a basis ${1_{\mathcal{A}}=g_0,g_1,\cdots,g_{|\cA|-1}}$ of the Hopf algebra $\mathcal{A}$ and all the contractions are assumed to be made for this basis if not specified.

\subsection{Hopf algebra data as tensors}

We assume the network is read upwards, thus a general element $h\in \cA$ can be represented 
as 
\begin{equation}
h=\sum_k c_k g_k =  \begin{aligned}
        \begin{tikzpicture}
            \draw[black,fill=lightgray] (0.5,0.5) circle (0.2);          \draw[line width=.6pt,black] (0.5,0.7)--(0.5,1.3);
            \node[ line width=0.6pt, dashed, draw opacity=0.5] (a) at (0.5,0.5){$h$};
        \end{tikzpicture}
    \end{aligned},
\end{equation}
with $\{g_k\}_{k=0}^{|\cA|-1}$ the fixed basis and coefficients $c_k\in \Cbb$. 
recall that the inner product of $\cA$ is given by Eq.~\eqref{eq:innerProd}, $c_k=\langle g_k,h\rangle$.
The $*$-operation of $\cA$ is antilinear: $h^*=\sum_k c_k^* g^*_k$, where $c^*_k$ is complex conjugation $c_k$ and $g_k^*$ is involution of $g_k$. From $\langle g_k,g_l\rangle=\Lambda(g^*_kg_l)=\delta_{k,l}$, we see $\langle s,t\rangle =\sum_k a_k^*b_k$ with $a_k$ and $b_k$ the coefficients of $s$ and $t$ respectively.
The inner product can be diagrammatically represented as
\begin{equation}
\langle s,t\rangle =\Lambda(s^*t)=\sum_k a_k^* b_k =  
\begin{aligned}
        \begin{tikzpicture}
            \draw[black,fill=lightgray] (0.5,0.5) circle (0.2); 
            \draw[black,fill=lightgray] (0.5,1.5) circle (0.2);          
            \draw[line width=.6pt,black] (0.5,0.7)--(0.5,1.3);
            \node[ line width=0.6pt, dashed, draw opacity=0.5] (a) at (0.5,0.5){$s^*$};
            \node[ line width=0.6pt, dashed, draw opacity=0.5] (a) at (0.5,1.5){$t$};
        \end{tikzpicture}
    \end{aligned}.
\end{equation}
We will utilize three distinct types of tensor legs: (i) virtual Hopf legs, represented as black, denoting contracted edges spanning the basis $\cA$; (ii) physical Hopf legs, highlighted as blue, corresponding to physical degrees of freedom valued in $\cA$; and (iii) virtual non-Hopf legs, denoted by cyan, taking values in a Hilbert space V (possibly a representation space of $\cA$ or additional ancillary spaces). These distinctions facilitate the clear representation and manipulation of tensors in our computations.

\vspace{1em}
\emph{Algebra tensor.} ---
The algebraic structure is characterized by the structure constants $A_{ab}^c$, where
\begin{equation}
\mu(g_a\otimes g_b)=g_ag_b=\sum_c A_{ab}^c g_c.
\end{equation}
It can be represented as a tensor:
\begin{equation}
 A_{ab}^c=  \begin{aligned}
        \begin{tikzpicture}
           \draw[fill=lightgray]  (0,0) -- (1,0) -- (0.5,0.5) -- cycle;
           \draw[line width=.6pt,black] (0.2,0)--(0.2,-0.6);
           \draw[line width=.6pt,black] (0.8,0)--(0.8,-0.6);
           \draw[line width=.6pt,black] (0.5,0.5)--(0.5,1.1);
            \node[ line width=0.6pt, dashed, draw opacity=0.5] (a) at (0,-0.4){$a$};
            \node[ line width=0.6pt, dashed, draw opacity=0.5] (a) at (1,-0.4){$b$};
            \node[ line width=0.6pt, dashed, draw opacity=0.5] (a) at (0.7,0.9){$c$};
        \end{tikzpicture}
    \end{aligned}.
\end{equation}
We will use labeled lines to represent the given basis with the respective label, dotted lines to represent the identity element $g_0 = 1_{\mathcal{A}}$, and the lines connecting two tensors to represent the contraction of the corresponding indices. For more details about the tensor network, see Refs.~\cite{Orus2014tensornet,Cirac2021MPSreivew}.
The unit axiom of algebra can be represented diagrammatically as follows:
\begin{align}
  &  \begin{aligned}
    A_{0b}^c=A_{b0}^c=\delta_{b,c},
\end{aligned}\\
&   \begin{aligned}
        \begin{tikzpicture}
           \draw[fill=lightgray]  (0,0) -- (1,0) -- (0.5,0.5) -- cycle;
           \draw[dotted,line width=.6pt,black] (0.2,0)--(0.2,-0.6);
           \draw[line width=.6pt,black] (0.8,0)--(0.8,-0.6);
           \draw[line width=.6pt,black] (0.5,0.5)--(0.5,1.1);
            \node[ line width=0.6pt, dashed, draw opacity=0.5] (a) at (0,-0.4){$0$};
            \node[ line width=0.6pt, dashed, draw opacity=0.5] (a) at (1,-0.4){$b$};
            \node[ line width=0.6pt, dashed, draw opacity=0.5] (a) at (0.7,0.9){$c$};
        \end{tikzpicture}
    \end{aligned}=\delta_{b,c} \,\,
     \begin{aligned}
        \begin{tikzpicture}
           \draw[line width=.6pt,black] (0.8,0.6)--(0.8,-.6);
            \node[ line width=0.6pt, dashed, draw opacity=0.5] (a) at (1,0){$c$};
        \end{tikzpicture}
    \end{aligned}
    = \begin{aligned}
        \begin{tikzpicture}
           \draw[fill=lightgray]  (0,0) -- (1,0) -- (0.5,0.5) -- cycle;
           \draw[line width=.6pt,black] (0.2,0)--(0.2,-0.6);
           \draw[dotted,line width=.6pt,black] (0.8,0)--(0.8,-0.6);
           \draw[line width=.6pt,black] (0.5,0.5)--(0.5,1.1);
            \node[ line width=0.6pt, dashed, draw opacity=0.5] (a) at (0,-0.4){$b$};
            \node[ line width=0.6pt, dashed, draw opacity=0.5] (a) at (1,-0.4){$0$};
            \node[ line width=0.6pt, dashed, draw opacity=0.5] (a) at (0.7,0.9){$c$};
        \end{tikzpicture}
    \end{aligned}.
\end{align}
The associativity can be represented as follows:
\begin{align}
  &  \begin{aligned}
       \sum_{d} A_{ab}^d A_{dc}^e=\sum_{d} A_{ad}^e A_{bc}^d.
    \end{aligned}\\
   &\begin{aligned}
        \begin{tikzpicture}
           \draw[fill=lightgray]  (0,0) -- (1,0) -- (0.5,0.5) -- cycle;
           \draw[fill=lightgray]  (0.3,1.1) -- (1.5,1.1) -- (0.9,1.6) -- cycle;
           \draw[line width=.6pt,black] (0.2,0)--(0.2,-0.6);
           \draw[line width=.6pt,black] (0.8,0)--(0.8,-0.6);
           \draw[line width=.6pt,black] (0.5,0.5)--(0.5,1.1);
           \draw[line width=.6pt,black] (1.3,1.1)--(1.3,-.6);
          \draw[line width=.6pt,black] (0.9,1.6)--(0.9,2.1);
            \node[ line width=0.6pt, dashed, draw opacity=0.5] (a) at (0,-0.4){$a$};
            \node[ line width=0.6pt, dashed, draw opacity=0.5] (a) at (1.5,-0.4){$c$};
            \node[ line width=0.6pt, dashed, draw opacity=0.5] (a) at (1,-0.4){$b$};
            \node[ line width=0.6pt, dashed, draw opacity=0.5] (a) at (1.1,1.9){$e$};
        \end{tikzpicture}
    \end{aligned}=
    \begin{aligned}
        \begin{tikzpicture}
           \draw[fill=lightgray]  (0,0) -- (1,0) -- (0.5,0.5) -- cycle;
           \draw[fill=lightgray]  (-0.5,1.1) -- (.7,1.1) -- (0.1,1.6) -- cycle;
           \draw[line width=.6pt,black] (0.2,0)--(0.2,-0.6);
           \draw[line width=.6pt,black] (0.8,0)--(0.8,-0.6);
           \draw[line width=.6pt,black] (0.5,0.5)--(0.5,1.1);
           \draw[line width=.6pt,black] (-.3,1.1)--(-.3,-.6);
          \draw[line width=.6pt,black] (0.1,1.6)--(0.1,2.1);
            \node[ line width=0.6pt, dashed, draw opacity=0.5] (a) at (0,-0.4){$b$};
            \node[ line width=0.6pt, dashed, draw opacity=0.5] (a) at (1,-0.4){$c$};
            \node[ line width=0.6pt, dashed, draw opacity=0.5] (a) at (0.3,1.9){$e$};
            \node[ line width=0.6pt, dashed, draw opacity=0.5] (a) at (-0.5,-0.4){$a$};
        \end{tikzpicture}
    \end{aligned}.
\end{align}
Notice that associativity allows us to define the algebra tensor $A_{a_1\cdots a_n}^b$ with $n$ input legs and one output leg without ambiguity. There are $\frac{(2n-2)!}{n! (n-1)!}$ (the $(n-1)$-th Catalan number) equivalent ways to decompose $A_{a_1\cdots a_n}^b$ into the contraction of basic algebra tensors.

\vspace{1em}
\emph{Coalgebra tensor.} ---
The coproduct structure is characterized by the structure constant $C^{bc}_a$
\begin{equation}
    \Delta(g_a)=\sum_{b,c}C^{bc}_a g_b\otimes g_c.
\end{equation}
The $C^{bc}_a$ can be represented as
\begin{equation}
 C_{a}^{bc}=  \begin{aligned}
        \begin{tikzpicture}
           \draw[fill=lightgray]  (0,0) -- (1,0) -- (0.5,-0.5) -- cycle;
           \draw[line width=.6pt,black] (0.2,0)--(0.2,0.6);
           \draw[line width=.6pt,black] (0.8,0)--(0.8,0.6);
           \draw[line width=.6pt,black] (0.5,-0.5)--(0.5,-1.1);
            \node[ line width=0.6pt, dashed, draw opacity=0.5] (a) at (0,0.4){$b$};
            \node[ line width=0.6pt, dashed, draw opacity=0.5] (a) at (1,0.4){$c$};
            \node[ line width=0.6pt, dashed, draw opacity=0.5] (a) at (0.7,-0.9){$a$};
        \end{tikzpicture}
    \end{aligned}.
\end{equation}
The counit map can be represented as an edge-removing operation $\varepsilon(g_a)=\varepsilon_a$ and is denoted as
\begin{equation}
    \begin{aligned}
        \begin{tikzpicture}
            \draw[black,fill=lightgray] (0.5,1.5) circle (0.2);       
           \draw[line width=.6pt,black] (0.5,0.7)--(0.5,1.3);
            \node[ line width=0.6pt, dashed, draw opacity=0.5] (a) at (0.5,1.5){$\varepsilon$};
             \node[ line width=0.6pt, dashed, draw opacity=0.5] (a) at (0.7,1.1){$a$};
        \end{tikzpicture}
    \end{aligned} =\varepsilon_a.
\end{equation}
The counit axiom can thus be represented as 
\begin{align}
  &  \begin{aligned}
  \sum_b \varepsilon_b C_{a}^{bc}=\sum_c \varepsilon_d A_{a}^{cd}=\delta_{a,c},
\end{aligned}\\
&   \begin{aligned}
        \begin{tikzpicture}
           \draw[fill=lightgray]  (0,0) -- (1,0) -- (0.5,-0.5) -- cycle;
            \draw[black,fill=lightgray] (0.2,0.8) circle (0.2);      
           \draw[line width=.6pt,black] (0.2,0)--(0.2,0.6);
           \draw[line width=.6pt,black] (0.8,0)--(0.8,0.6);
           \draw[line width=.6pt,black] (0.5,-0.5)--(0.5,-1.1);
            \node[ line width=0.6pt, dashed, draw opacity=0.5] (a) at (0.2,0.8){$\varepsilon$};
            \node[ line width=0.6pt, dashed, draw opacity=0.5] (a) at (1,0.4){$c$};
            \node[ line width=0.6pt, dashed, draw opacity=0.5] (a) at (0.7,-0.9){$a$};
        \end{tikzpicture}
    \end{aligned}=\delta_{a,c} \,\,
     \begin{aligned}
        \begin{tikzpicture}
           \draw[line width=.6pt,black] (0.8,0.6)--(0.8,-.6);
            \node[ line width=0.6pt, dashed, draw opacity=0.5] (a) at (1,0){$c$};
        \end{tikzpicture}
    \end{aligned}
    =  \begin{aligned}
        \begin{tikzpicture}
           \draw[fill=lightgray]  (0,0) -- (1,0) -- (0.5,-0.5) -- cycle;
            \draw[black,fill=lightgray] (0.8,0.8) circle (0.2);      
           \draw[line width=.6pt,black] (0.2,0)--(0.2,0.6);
           \draw[line width=.6pt,black] (0.8,0)--(0.8,0.6);
           \draw[line width=.6pt,black] (0.5,-0.5)--(0.5,-1.1);
            \node[ line width=0.6pt, dashed, draw opacity=0.5] (a) at (0.8,0.8){$\varepsilon$};
            \node[ line width=0.6pt, dashed, draw opacity=0.5] (a) at (0,0.4){$c$};
            \node[ line width=0.6pt, dashed, draw opacity=0.5] (a) at (0.7,-0.9){$a$};
        \end{tikzpicture}
    \end{aligned}.
\end{align}
The coassociativity can be represented as
\begin{align}
  &  \begin{aligned}
       \sum_{c} C^{bc}_a C^{de}_c=\sum_{c} C_{c}^{bd} C_{a}^{ce},
    \end{aligned}\\
   &\begin{aligned}
        \begin{tikzpicture}
           \draw[fill=lightgray]  (0,0) -- (1,0) -- (0.5,-0.5) -- cycle;
           \draw[fill=lightgray]  (0.3,1.1) -- (1.3,1.1) -- (0.8,0.6) -- cycle;
           \draw[line width=.6pt,black] (0.2,0)--(0.2,1.7);
           \draw[line width=.6pt,black] (0.8,0)--(0.8,0.6);
           \draw[line width=.6pt,black] (0.5,-0.5)--(0.5,-1.1);
           \draw[line width=.6pt,black] (1.1,1.1)--(1.1,1.7);
          \draw[line width=.6pt,black] (0.5,1.1)--(0.5,1.7);
            \node[ line width=0.6pt, dashed, draw opacity=0.5] (a) at (0.7,-0.7){$a$};
            \node[ line width=0.6pt, dashed, draw opacity=0.5] (a) at (0,1.5){$b$};
            \node[ line width=0.6pt, dashed, draw opacity=0.5] (a) at (0.7,1.5){$d$};
            \node[ line width=0.6pt, dashed, draw opacity=0.5] (a) at (1.3,1.5){$e$};
        \end{tikzpicture}
    \end{aligned}=
    \begin{aligned}
        \begin{tikzpicture}
           \draw[fill=lightgray]  (0,0) -- (1,0) -- (0.5,-0.5) -- cycle;
           \draw[fill=lightgray]  (-0.3,1.1) -- (.7,1.1) -- (0.2,0.6) -- cycle;
           \draw[line width=.6pt,black] (0.2,0)--(0.2,0.6);
           \draw[line width=.6pt,black] (0.8,0)--(0.8,1.7);
           \draw[line width=.6pt,black] (0.5,-0.5)--(0.5,-1.1);
           \draw[line width=.6pt,black] (-.1,1.1)--(-.1,1.7);
          \draw[line width=.6pt,black] (0.5,1.1)--(0.5,1.7);
            \node[ line width=0.6pt, dashed, draw opacity=0.5] (a) at (0.7,-0.7){$a$};
            \node[ line width=0.6pt, dashed, draw opacity=0.5] (a) at (-.4,1.5){$b$};
            \node[ line width=0.6pt, dashed, draw opacity=0.5] (a) at (0.2,1.5){$d$};
            \node[ line width=0.6pt, dashed, draw opacity=0.5] (a) at (1.,1.5){$e$};
        \end{tikzpicture}
    \end{aligned}.
\end{align}
The coassociativity allows us to define the coalgebra tensor $C^{a_1\cdots a_n}_b$ with $n$ input legs and one output leg without ambiguity. There are also $\frac{(2n-2)!}{n! (n-1)!}$ (the $(n-1)$-th Catalan number) equivalent ways to decompose $C^{a_1\cdots a_n}_b$ into the contraction of basic coalgebra tensors.

\vspace{1em}
\emph{Bialgebra axioms.} --- The definition of bialgebra requires that $\varepsilon$ and $\Delta$  must be algebra homomorphisms.
For counit tensor, we have $\varepsilon(g_ag_b)=\varepsilon(g_a)\varepsilon(g_b)$ for all $a,b$,  which is equivalent to say $\sum_c A_{ab}^c\varepsilon_c=\varepsilon_a\varepsilon_b$.
\begin{equation}
\begin{aligned}
        \begin{tikzpicture}
           \draw[black,fill=lightgray] (0.5,1.3) circle (0.2);    
           \draw[fill=lightgray]  (0,0) -- (1,0) -- (0.5,0.5) -- cycle;
           \draw[line width=.6pt,black] (0.2,0)--(0.2,-0.6);
           \draw[line width=.6pt,black] (0.8,0)--(0.8,-0.6);
           \draw[line width=.6pt,black] (0.5,0.5)--(0.5,1.1);
            \node[ line width=0.6pt, dashed, draw opacity=0.5] (a) at (0,-0.4){$a$};
            \node[ line width=0.6pt, dashed, draw opacity=0.5] (a) at (1,-0.4){$b$};
            \node[ line width=0.6pt, dashed, draw opacity=0.5] (a) at (0.7,0.9){$c$};
              \node[ line width=0.6pt, dashed, draw opacity=0.5] (a) at (0.5,1.3){$\varepsilon$};
        \end{tikzpicture}
    \end{aligned}
    =
    \begin{aligned}
        \begin{tikzpicture}
           \draw[black,fill=lightgray] (0.2,0.2) circle (0.2);    
            \draw[black,fill=lightgray] (0.8,0.2) circle (0.2);
           \draw[line width=.6pt,black] (0.2,0)--(0.2,-0.6);
           \draw[line width=.6pt,black] (0.8,0)--(0.8,-0.6);
            \node[ line width=0.6pt, dashed, draw opacity=0.5] (a) at (0,-0.4){$a$};
            \node[ line width=0.6pt, dashed, draw opacity=0.5] (a) at (1,-0.4){$b$};
          \node[ line width=0.6pt, dashed, draw opacity=0.5] (a) at (0.2,0.2){$\varepsilon$};
        \node[ line width=0.6pt, dashed, draw opacity=0.5] (a) at (0.8,0.2){$\varepsilon$};
        \end{tikzpicture}
    \end{aligned}.
\end{equation}
For the coalgebra tensor, we have
$\Delta(g_ag_b)=\Delta(g_a)\Delta(g_b)$, which is equivalent to $\sum_c A_{ab}^cC_c^{de}=\sum_{i,j,k,l}C_a^{ij}C_b^{kl}A_{ik}^dA_{jl}^e$.
\begin{equation}
 \begin{aligned}
        \begin{tikzpicture}
           \draw[fill=lightgray]  (0,0) -- (1,0) -- (0.5,0.5) -- cycle;
           \draw[fill=lightgray]  (0,1.6) -- (1,1.6) -- (0.5,1.1) -- cycle;
           \draw[line width=.6pt,black] (0.2,1.6)--(0.2,2.2);
           \draw[line width=.6pt,black] (0.8,1.6)--(0.8,2.2);
           \draw[line width=.6pt,black] (0.2,0)--(0.2,-0.6);
           \draw[line width=.6pt,black] (0.8,0)--(0.8,-0.6);
           \draw[line width=.6pt,black] (0.5,0.5)--(0.5,1.1);
            \node[ line width=0.6pt, dashed, draw opacity=0.5] (a) at (0,-0.4){$a$};
            \node[ line width=0.6pt, dashed, draw opacity=0.5] (a) at (1,-0.4){$b$};
            \node[ line width=0.6pt, dashed, draw opacity=0.5] (a) at (0.7,0.9){$c$};
            \node[ line width=0.6pt, dashed, draw opacity=0.5] (a) at (0,2.0){$d$};
            \node[ line width=0.6pt, dashed, draw opacity=0.5] (a) at (1,2.0){$e$};
        \end{tikzpicture}
    \end{aligned}=
    \begin{aligned}
      \begin{tikzpicture}
           \draw[fill=lightgray]  (0,0) -- (1,0) -- (0.5,-0.5) -- cycle;
           \draw[line width=.6pt,black] (0.2,0)--(0.2,0.6);
           \draw[line width=.6pt,black] (0.8,0)--(1.4,0.6);
           \draw[line width=.6pt,black] (0.5,-0.5)--(0.5,-1.1);
            \node[ line width=0.6pt, dashed, draw opacity=0.5] (a) at (0,0.2){$i$};
            \node[ line width=0.6pt, dashed, draw opacity=0.5] (a) at (0.6,0.2){$j$};
            \node[ line width=0.6pt, dashed, draw opacity=0.5] (a) at (0.7,-0.9){$a$};
          \draw[fill=lightgray]  (1.2,0) -- (2.2,0) -- (1.7,-0.5) -- cycle;
           \draw[line width=.6pt,black] (1.4,0)--(1.15,0.25);
           \draw[line width=.6pt,black] (0.8,0.6)--(1.05,0.35);

           \draw[line width=.6pt,black] (2,0)--(2,0.6);
           \draw[line width=.6pt,black] (1.7,-0.5)--(1.7,-1.1);
            \node[ line width=0.6pt, dashed, draw opacity=0.5] (a) at (2.2,0.2){$l$};
            \node[ line width=0.6pt, dashed, draw opacity=0.5] (a) at (1.5,0.2){$k$};
            \node[ line width=0.6pt, dashed, draw opacity=0.5] (a) at (1.9,-0.9){$b$};
           \draw[fill=lightgray]  (0,0.6) -- (1,0.6) -- (0.5,1.1) -- cycle;
               \draw[fill=lightgray]  (1.2,0.6) -- (2.2,0.6) -- (1.7,1.1) -- cycle;
        \draw[line width=.6pt,black] (0.5,1.1)--(.5,1.7);
        \draw[line width=.6pt,black] (1.7,1.1)--(1.7,1.7);
       \node[ line width=0.6pt, dashed, draw opacity=0.5] (a) at (1.9,1.4){$e$};
               \node[ line width=0.6pt, dashed, draw opacity=0.5] (a) at (0.7,1.4){$d$};
        \end{tikzpicture}
    \end{aligned}.
\end{equation}
These basic rules of tensor contraction can be used to derive more complicated rules, such as those for $\Delta(g_{a}g_bg_c\cdots)$.

\vspace{1em}
\emph{Antipode tensor.} --- For Hopf algebra, antipode can also be represented as a tensor $S(g_a)=\sum_b S_a^bg_b$.
It satisfies $\sum_{b,c,d} C_a^{bc}S_c^dA_{bd}^e=\varepsilon_a \delta_{e,0}=\sum_{b,c,d} C_a^{bc}S_b^dA_{dc}^e$.

\begin{equation}
\begin{aligned}
        \begin{tikzpicture}
           \draw[fill=lightgray]  (0,0) -- (1,0) -- (0.5,-0.5) -- cycle;
           \draw[line width=.6pt,black] (0.2,0)--(0.2,0.6);
           \draw[line width=.6pt,black] (0.8,0)--(0.8,1.6);
           \draw[line width=.6pt,black] (0.5,-0.5)--(0.5,-1.1);
           \draw[line width=.6pt,black,fill=lightgray] (0,0.6) rectangle ++(0.4,0.4);
           \draw[line width=.6pt,black] (0.2,1)--(0.2,1.6);
            \node[ line width=0.6pt, dashed, draw opacity=0.5] (a) at (0.2,0.8){$S$};
            \draw[fill=lightgray]  (0,1.6) -- (1,1.6) -- (0.5,2.1) -- cycle;
            \draw[line width=.6pt,black] (0.5,2.1)--(0.5,2.6);
            \node[ line width=0.6pt, dashed, draw opacity=0.5] (a) at (0.7,2.3){$e$};
            \node[ line width=0.6pt, dashed, draw opacity=0.5] (a) at (0.7,-0.9){$a$};
        \end{tikzpicture}
    \end{aligned}=\delta_{e,0}
    \begin{aligned}
        \begin{tikzpicture}
            \draw[black,fill=lightgray] (0.5,1.5) circle (0.2);       
           \draw[line width=.6pt,black] (0.5,0.7)--(0.5,1.3);
            \node[ line width=0.6pt, dashed, draw opacity=0.5] (a) at (0.5,1.5){$\varepsilon$};
             \node[ line width=0.6pt, dashed, draw opacity=0.5] (a) at (0.7,1.1){$a$};
        \end{tikzpicture}
    \end{aligned}
    =\begin{aligned}
        \begin{tikzpicture}
           \draw[fill=lightgray]  (0,0) -- (1,0) -- (0.5,-0.5) -- cycle;
           \draw[line width=.6pt,black] (0.2,0)--(0.2,1.6);
           \draw[line width=.6pt,black] (0.8,0)--(0.8,0.6);
           \draw[line width=.6pt,black] (0.5,-0.5)--(0.5,-1.1);
           \draw[line width=.6pt,black,fill=lightgray] (0.6,0.6) rectangle ++(0.4,0.4);
           \draw[line width=.6pt,black] (0.8,1)--(0.8,1.6);
            \node[ line width=0.6pt, dashed, draw opacity=0.5] (a) at (0.8,0.8){$S$};
            \draw[fill=lightgray]  (0,1.6) -- (1,1.6) -- (0.5,2.1) -- cycle;
            \draw[line width=.6pt,black] (0.5,2.1)--(0.5,2.6);
            \node[ line width=0.6pt, dashed, draw opacity=0.5] (a) at (0.7,2.3){$e$};
            \node[ line width=0.6pt, dashed, draw opacity=0.5] (a) at (0.7,-0.9){$a$};
        \end{tikzpicture}
    \end{aligned}.
\end{equation}

Using these structure tensors, given a Hopf quantum circuit with input as Hopf qudits and gates as Hopf operations, we can translate the corresponding string diagram into a tensor network, which we call a Hopf tensor network.

\subsection{Hopf tensor network representation of Hopf cluster state}

For the Hopf cluster state, there exists a Hopf tensor network representation. As we will see, our construction of the Hopf tensor network is a combination of tensor network and string diagrams, and it can naturally recover the matrix product state (MPS) and projected entangled pair states (PEPS) representations for the case $\mathcal{A}=\mathbb{C}[G]$ \cite{brell2015generalized,fechisin2023noninvertible}. Many properties therein have a natural explanation using the Hopf algebra structure.

Hereinafter we will put Haar integral $|\lambda\rangle$ instead of $|\one\rangle \propto |\lambda\rangle$ on each odd vertex of a cluster graph for the simplicity of the discussion.
For a given Hopf cluster state $|K,\cA\rangle$, we can draw a corresponding string diagram using the structure constant of $\cA$, see Fig.~\ref{fig:cluster-TN2d} and Fig.~\ref{fig:cluster-TN} for an illustration, and this string diagram can be translated into a Hopf tensor network if we replace the corresponding morphisms with structure tensors.

For an odd vertex connecting $n-1$ edges, the corresponding local tensor is an $n$-leg tensor with one physical leg and $n-1$ virtual legs, it can be represented by
\begin{equation}
   (S\otimes \id \otimes \cdots \otimes \id)\circ \Delta_n(\lambda)=\begin{aligned}
        \begin{tikzpicture}
           \draw[fill=lightgray]  (-0.5,0) -- (1.5,0) -- (0.5,-0.5) -- cycle;
           \draw[line width=.6pt,black] (-0.3,0)--(-0.3,1);
           \draw[line width=.6pt,black] (0,0)--(0,1);
           \draw[line width=.6pt,blue] (1.3,0)--(1.3,1);
           \draw[line width=.6pt,black] (0.5,-0.5)--(0.5,-1.1);
           \draw[black,fill=lightgray] (0.5,-.9) circle (0.2);   
           \draw[line width=.6pt,fill=lightgray] (-.5,0.3) rectangle ++(0.4,.4);
           \node[ line width=0.6pt, dashed, draw opacity=0.5] (a) at (-0.7,1.3){$S(\lambda^{(1)})$};
           \node[ line width=0.6pt, dashed, draw opacity=0.5] (a) at (0.3,1.3){$\lambda^{(2)}$};
           \node[ line width=0.6pt, dashed, draw opacity=0.5] (a) at (0.6,0.4){$\cdots$};
            \node[ line width=0.6pt, dashed, draw opacity=0.5] (a) at (1.3,1.3){{\color{blue} $\lambda^{(n)}$ } };
            \node[ line width=0.6pt, dashed, draw opacity=0.5] (a) at (0.5,-0.9){$\lambda$};
            \node[ line width=0.6pt, dashed, draw opacity=0.5] (a) at (0.5,-0.2){$C$}; 
            \node[ line width=0.6pt, dashed, draw opacity=0.5] (a) at (-0.3,.5){$S$};
        \end{tikzpicture}
    \end{aligned}
    =
    \begin{aligned}
        \begin{tikzpicture}  
        \draw[-latex,line width=.6pt,blue] (0,0) -- (-0.5,0.7);
        \draw[-latex,line width=.6pt,black] (-0.6,-0.6)-- (-0.2,-0.2);
        \draw[-latex,line width=.6pt,black] (-0.5,-0.5)-- (0,0);
        \draw[-latex,line width=.6pt,black] (0,0.1) -- (.9,0.1);
        \draw[black,fill=red] (0,0) circle (0.2);   
        \node[ line width=0.6pt, dashed, draw opacity=0.5] (a) at (0,0){\small {\color{white}$\lambda$}};
        \node[ line width=0.6pt, dashed, draw opacity=0.5] (a) at (-1.2,-0.7){\small {\color{black} $S(\lambda^{\cone})$ } };
        \node[ line width=0.6pt, dashed, draw opacity=0.5] (a) at (1.3,0){\small {\color{black} $\lambda^{\ctwo}$ } };
        \node[ line width=0.6pt, dashed, draw opacity=0.5] (a) at (-0.5,0.9){\small {\color{blue} $\lambda^{(n)}$ } };
        \node[ line width=0.6pt, dashed, draw opacity=0.5] (a) at (0.3,0.6){\small {\color{black} $\ddots$ } };
        \end{tikzpicture}
    \end{aligned}.
\end{equation}
The middle diagram represents the complete version of the odd vertex tensor, while the right one is a simplified version. In the middle diagram, we assume that all legs are directed upwards. In the right diagram, we need to emphasize the direction of the legs to determine if there is an antipode map.
The local ordering of the edges connecting to the odd vertex determines how we place the comultiplication components on each leg of the tensor.
If the gate operator for the edge connecting to even vertex is $C\XL$, there will be an antipode tensor; otherwise, there is no $S$.
We've employed blue legs to denote the physical degrees of freedom and black lines to represent the virtual indices in $\cA$. The blue leg always take the last comultiplication component $\lambda^{(n)}$.

We observe that the odd vertex tensor essentially serves as a comultiplication tensor.
It's also clear that, to contract the odd vertex tensor with a nearby even vertex tensor, the local ordering of edges is crucial, we can not change the order of edges arbitrarily.
However, it is worth mentioning that for the Haar integral, we have $\tau\circ \Delta(\lambda)=\Delta(\lambda)$, this means 
\begin{equation}
    \sum_{(\lambda)} \lambda^{\cone}\otimes \lambda^{\ctwo}=  \sum_{(\lambda)} \lambda^{\ctwo}\otimes \lambda^{\cone},
\end{equation}
or equivalently, $\sum_a \lambda^aC_{a}^{b_1 b_2}=\sum_a \lambda^aC_{a}^{b_2 b_1}$.
This further implies
\begin{equation}
\begin{aligned}
     \sum_{(\lambda)} \lambda^{\cone}\otimes \lambda^{\ctwo}\otimes \lambda^{\cthree}
     = & \sum_{(\lambda)} \lambda^{\cthree}\otimes \lambda^{\cone}\otimes \lambda^{\ctwo}
     =&\sum_{(\lambda)} \lambda^{\ctwo}\otimes \lambda^{\cthree}\otimes \lambda^{\cone},
\end{aligned}
\end{equation}
and so on in a similar fashion. Thus, the odd vertex tensor exhibits a cyclic property. This implies that we can permute the indices according to the above rule before performing the contraction.

\begin{figure}[t]
    \centering
    \includegraphics[width=9cm]{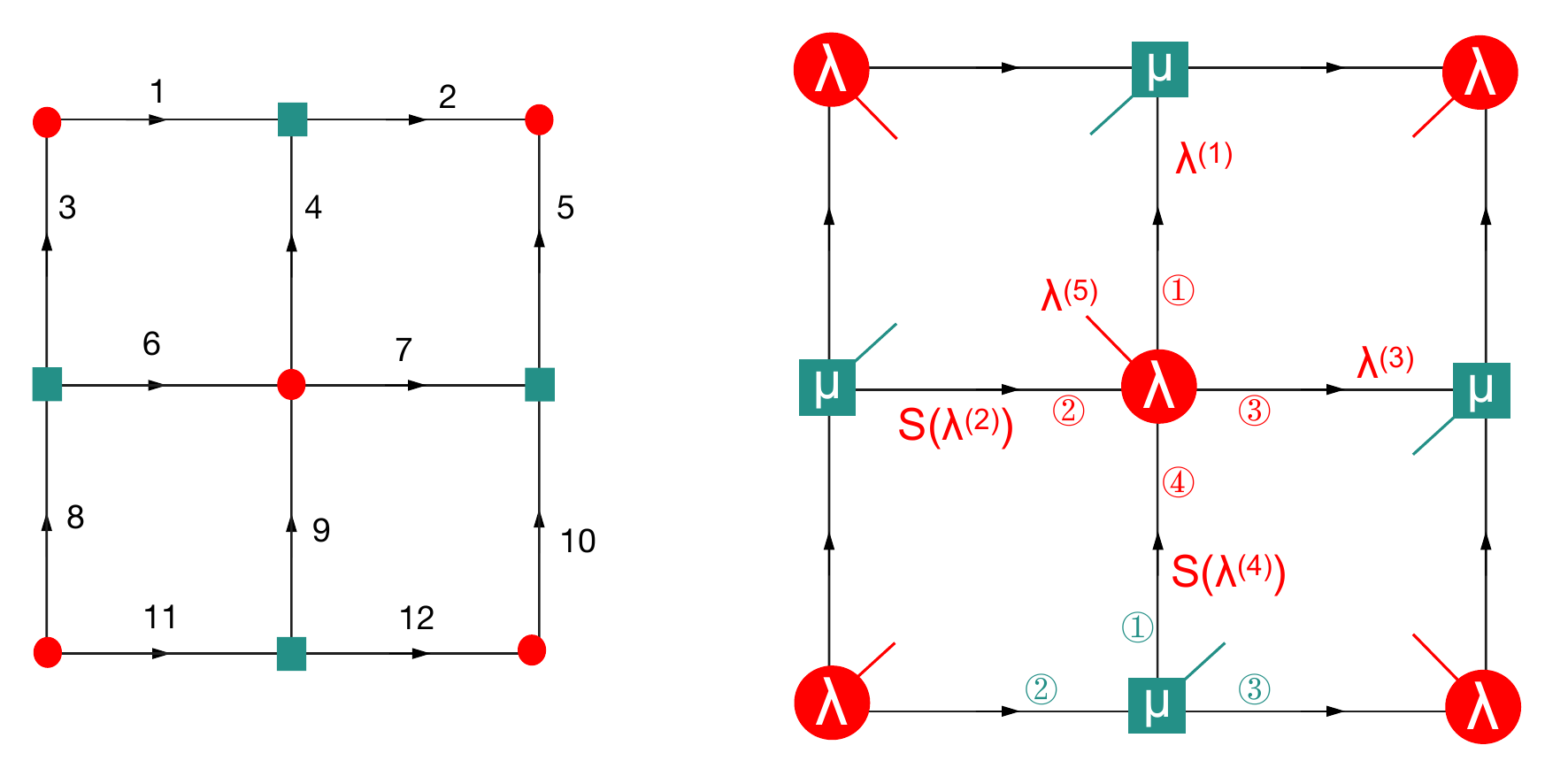}
    \caption{Illustration of the Hopf tensor network representation of the Hopf cluster state. On the left, there is a cluster graph with a globally ordered edge set, $e_1, \cdots, e_{12}$. For each vertex, the neighboring edges have a local ordering induced by global ordering of edge set. For an odd vertex, this local ordering determines how we place the comultiplication component on each edge and the edge direction determine whether we take antipode; for an even vertex, the local ordering determines the order for taking multiplication.
    On the right, we present the Hopf tensor network for the corresponding Hopf cluster state, where $\lambda$ represents Haar integral and $\mu$ represnets the multiplication map.
}
    \label{fig:cluster-TN2d}
\end{figure}

For the even vertex $v_e$, the tensor is essentially a multiplication tensor,
the leg corresponding to input state $1_{\cA}=g_0$ can be removed under the multiplication. The even vertex tensor can be represented as
\begin{equation}
  \mu(a_1 \cdots 1_{\cA} \cdots a_n )=
  \begin{aligned}
        \begin{tikzpicture}
           \draw[fill=lightgray]  (-0.5,0) -- (1.5,0) -- (0.5,0.5) -- cycle;
           \draw[line width=.6pt,black] (-0.3,0)--(-0.3,-1);
           \draw[line width=.6pt,black] (1.3,0)--(1.3,-1);
           \draw[line width=.6pt,black,blue] (0.5,0.5)--(0.5,0.8);
           \draw[line width=.6pt,black,dotted] (0.5,0)--(0.5,-1);
            \node[ line width=0.6pt, dashed, draw opacity=0.5] (a) at (0,-0.4){$\cdots$};
            \node[ line width=0.6pt, dashed, draw opacity=0.5] (a) at (1,-0.4){$\cdots$};
            \node[ line width=0.6pt, dashed, draw opacity=0.5] (a) at (0.4,-1.3){$1_{\cA}$};
            \node[ line width=0.6pt, dashed, draw opacity=0.5] (a) at (-0.4,-1.3){$a_1$};
             \node[ line width=0.6pt, dashed, draw opacity=0.5] (a) at (0.5,0.2){$A$};
            \node[ line width=0.6pt, dashed, draw opacity=0.5] (a) at (1.4,-1.3){$a_n$};
            \node[ line width=0.6pt, dashed, draw opacity=0.5] (a) at (0.5,1){\small {\color{blue} $a_1\cdots a_n$ } };
        \end{tikzpicture}
    \end{aligned}
    =
       \begin{aligned}
        \begin{tikzpicture}  
        \draw[line width=.6pt,blue] (0,0) -- (0,0.8);
        \draw[line width=.6pt,black] (0,0) -- (0.8,0);
        \draw[line width=.6pt,black] (0,0) -- (-0.8,0);
       \draw[dotted,line width=.6pt,black] (0,0) -- (0,-0.8);
        \draw[fill=teal]  (-0.2,-0.2) rectangle ++(.4,.4);
        \node[ line width=0.6pt, dashed, draw opacity=0.5] (a) at (0,0){\small {\color{white}$\mu$}};
        \node[ line width=0.6pt, dashed, draw opacity=0.5] (a) at (-0.9,0){\small {\color{black} $a_1$ } };
        \node[ line width=0.6pt, dashed, draw opacity=0.5] (a) at (1.2,0){\small {\color{black} $a_n$ } };
         \node[ line width=0.6pt, dashed, draw opacity=0.5] (a) at (0,1){\small {\color{blue} $a_1\cdots a_n$ } };
            \node[ line width=0.6pt, dashed, draw opacity=0.5] (a) at (-0.4,-.4){\small {\color{black} $\ddots$ } };
        \node[ line width=0.6pt, dashed, draw opacity=0.5] (a) at (0.6,-.4){\small {\color{black} $\iddots$ } };
        \end{tikzpicture}
    \end{aligned}
\end{equation}
Notice that for even vertex tensor, we do not need to stress that the directions of legs, since this part of information is encoded in the odd vertex tensor, all virtual legs of even vertex tensor are contracted with an odd vertex tensor.
We can alternatively encode the information of the antipode map in the even vertex: If the edge is an input edge, no antipode tensor is needed; otherwise, we contract the corresponding input line with an antipode tensor. This simply reflects how we handle the legs connecting even and odd vertices (See Fig.~\ref{fig:cluster-TN}).
However, the local ordering of edges connecting to the even vertex is still crucial here. If there are $n$ odd vertices connected to this even vertex, the corresponding even vertex tensor is given by $A_{a_1,\cdots,a_n'}^b S_{a_n}^{a'_n}$. It's clear that the local ordering for $N_E(v_e)$ is also necessary, as multiplication is not generally commutative.

The operations on Hopf qudits can also be represented using Hopf tensor networks.
The generalized Pauli-X operators can be represented as
\begin{equation}
    \XR_g=\begin{aligned}
        \begin{tikzpicture}
           \draw[fill=lightgray]  (0,0) -- (1,0) -- (0.5,0.5) -- cycle;
           \draw[line width=.6pt,black] (0.2,0)--(0.2,-0.6);
           \draw[line width=.6pt,black] (0.8,0)--(0.8,-1.6);
           \draw[line width=.6pt,black] (0.5,0.5)--(0.5,1.1);
           \draw[black,fill=lightgray] (0.2,-.6) circle (0.2);   
            \node[ line width=0.6pt, draw opacity=0.5] (a) at (0.2,-0.6){$g$};
        \end{tikzpicture}
    \end{aligned},\quad 
        \XL_g=\begin{aligned}
        \begin{tikzpicture}
           \draw[fill=lightgray]  (0,0) -- (1,0) -- (0.5,0.5) -- cycle;
           \draw[line width=.6pt,black] (0.2,0)--(0.2,-1.6);
           \draw[line width=.6pt,black] (0.8,0)--(0.8,-1.6);
           \draw[line width=.6pt,black] (0.5,0.5)--(0.5,1.1);
           \draw[black,fill=lightgray] (0.6,-.4) rectangle ++(0.4,-0.4);   
                   \draw[black,fill=lightgray] (0.8,-1.4) circle (0.2);  
            \node[ line width=0.6pt, draw opacity=0.5] (a) at (0.8,-0.6){$S$};
            \node[ line width=0.6pt, draw opacity=0.5] (a) at (0.8,-1.4){$g$};
        \end{tikzpicture}
    \end{aligned}\,\,.
\end{equation}
The generalized Pauli-Z operators can be represented as
\begin{equation}
    Z_{\Gamma}=\begin{aligned}
        \begin{tikzpicture}
           \draw[fill=lightgray]  (0,0) -- (1,0) -- (0.5,-0.5) -- cycle;
           \draw[line width=.6pt,black] (0.2,0)--(0.2,1);
           \draw[line width=.6pt,black] (0.8,0)--(0.8,0.6);
           \draw[line width=.6pt,black] (0.5,-0.5)--(0.5,-1.1);
          \draw[black,fill=lightgray] (0.6,.4) rectangle ++(0.4,0.4);   
           \draw[line width=.6pt,black,cyan] (1,0.6)--(1.4,0.6);
            \draw[line width=.6pt,black,cyan] (0.6,0.6)--(0.23,0.6);
            \draw[line width=.6pt,black,cyan] (0,0.6)--(0.17,0.6);
            \node[ line width=0.6pt, dashed, draw opacity=0.5] (a) at (0.8,0.6){$\Gamma$};
        \end{tikzpicture}
    \end{aligned}, \quad 
       Z_{\Gamma}^{\ddagger}= \begin{aligned}
        \begin{tikzpicture}
           \draw[fill=lightgray]  (0,0) -- (1,0) -- (0.5,-0.5) -- cycle;
           \draw[line width=.6pt,black] (0.2,0)--(0.2,1);
           \draw[line width=.6pt,black] (0.8,0)--(0.8,1);
           \draw[line width=.6pt,black] (0.5,-0.5)--(0.5,-1.1);
          \draw[black,fill=lightgray] (0,.4) rectangle ++(0.4,0.4);   
           \draw[line width=.6pt,black,cyan] (0.4,1.2)--(.8,1.2);
            \draw[line width=.6pt,black,cyan] (0,1.2)--(-.4,1.2);
           \draw[black,fill=lightgray] (0,1) rectangle ++(0.4,0.4); 
            \node[ line width=0.6pt, dashed, draw opacity=0.5] (a) at (0.2,1.2){$\Gamma$};
            \node[ line width=0.6pt, dashed, draw opacity=0.5] (a) at (0.2,0.6){$S$};
        \end{tikzpicture}
    \end{aligned}\,\,,
\end{equation}
where we use a cyan line to represent the virtual indices for the representation matrix $\Gamma\in \Rep(\cA)$. All other Hopf qudit operators can be represented in a similar way.

For $\Gamma,\Phi\in \Rep(\cA)$, $ Z_{\Gamma}Z_{\Phi}|g_k\rangle $ can be represented as  
\begin{equation}
   \begin{aligned}
        \begin{tikzpicture}
           \draw[fill=lightgray]  (0,0) -- (1,0) -- (0.5,-0.5) -- cycle;
           \draw[line width=.6pt,black] (0.2,0)--(0.2,1);
           \draw[line width=.6pt,black] (0.8,0)--(0.8,0.6);
           \draw[line width=.6pt,black] (0.5,-0.5)--(0.5,-1.1);
          \draw[black,fill=lightgray] (0.6,.4) rectangle ++(0.4,0.4);   
           \draw[line width=.6pt,black,cyan] (1,0.6)--(1.4,0.6);
            \draw[line width=.6pt,black,cyan] (0.6,0.6)--(0.23,0.6);
            \draw[line width=.6pt,black,cyan] (0,0.6)--(0.17,0.6);
         \draw[fill=lightgray]  (-0.3,1.5) -- (0.7,1.5) -- (0.2,1) -- cycle;
    \draw[line width=.6pt,black] (-0.1,1.5)--(-0.1,2.3);
           \draw[line width=.6pt,black] (0.5,1.5)--(0.5,1.9);
           \draw[black,fill=lightgray] (0.3,1.7) rectangle ++(0.4,0.4); 
          \draw[line width=.6pt,black,cyan] (0.7,1.9)--(1.1,1.9);
          \draw[line width=.6pt,black,cyan] (0.3,1.9)--(0,1.9);
            \node[ line width=0.6pt, dashed, draw opacity=0.5] (a) at (0.8,0.6){$\Phi$};
            \node[ line width=0.6pt, dashed, draw opacity=0.5] (a) at (0.5,1.9){$\Gamma$};
         \node[ line width=0.6pt, dashed, draw opacity=0.5] (a) at (0.7,-1){$k$};
        \end{tikzpicture}
    \end{aligned}
    = \begin{aligned}
        \begin{tikzpicture}
           \draw[fill=lightgray]  (0,0) -- (1,0) -- (0.5,-0.5) -- cycle;
           \draw[line width=.6pt,black] (0.2,0)--(0.2,0.4);
           \draw[line width=.6pt,black] (0.8,0)--(0.8,0.6);
           \draw[line width=.6pt,black] (0.5,-0.5)--(0.5,-1.1);
          \draw[black,fill=lightgray] (0.6,.4) rectangle ++(0.4,0.4);   
           \draw[line width=.6pt,black,cyan] (1,0.6)--(1.4,0.6);
            \draw[line width=.6pt,black,cyan] (0.6,0.6)--(0.5,0.6);
         \draw[fill=lightgray]  (-0.3,-1.1) -- (0.7,-1.1) -- (0.2,-1.6) -- cycle;
         \draw[line width=.6pt,black] (-0.1,-1.1)--(-0.1,1);
           \draw[black,fill=lightgray] (0,.2) rectangle ++(0.4,0.4); 
          \draw[line width=.6pt,black,cyan] (0.4,0.45)--(0.5,0.45);
        \draw[line width=.6pt,black,cyan] (0,0.45)--(-0.05,0.45);
                \draw[line width=.6pt,black,cyan] (-0.15,0.45)--(-0.35,0.45);
             \draw[line width=.6pt,black] (0.2,-1.6)--(0.2,-2.1);
            \node[ line width=0.6pt, dashed, draw opacity=0.5] (a) at (0.8,0.6){$\Phi$};
            \node[ line width=0.6pt, dashed, draw opacity=0.5] (a) at (0.2,0.4){$\Gamma$};
         \node[ line width=0.6pt, dashed, draw opacity=0.5] (a) at (0.4,-1.8){$k$};
        \end{tikzpicture}
    \end{aligned}
    =\begin{aligned}
        \begin{tikzpicture}
           \draw[line width=.6pt,black] (0.5,-0.5)--(0.5,-1.1); 
           \draw[line width=.6pt,black,cyan] (0.2,-0.25)--(0,-0.25);
            \draw[line width=.6pt,black,cyan] (2,-0.25)--(2.4,-0.25);
         \draw[fill=lightgray]  (-0.3,-1.1) -- (0.7,-1.1) -- (0.2,-1.6) -- cycle;
         \draw[line width=.6pt,black] (-0.1,-1.1)--(-0.1,0.5);
           \draw[black,fill=lightgray] (0.2,-.5) rectangle ++(1.8,0.6); 
             \draw[line width=.6pt,black] (0.2,-1.6)--(0.2,-2.1);
         \node[ line width=0.6pt, dashed, draw opacity=0.5] (a) at (0.4,-1.8){$k$};
              \node[ line width=0.6pt, dashed, draw opacity=0.5] (a) at (1,-0.2){$\oplus_{\Psi}N_{\Gamma\Phi}^{\Psi}\Psi$};
        \end{tikzpicture}
    \end{aligned}.
\end{equation}
This implies $Z_{\Gamma}Z_{\Phi}=Z_{\Gamma \otimes \Phi}$. In a similar way, we can show $\tilde{Z}_{\Gamma}\tilde{Z}_{\Phi}=\tilde{Z}_{\Phi\otimes\Gamma}$.
By employing the Hopf tensor network, we have a more intuitive approach to handling quantum states and operations for Hopf qudits

Notice that our proposal for the Hopf tensor network representation differs from the matrix product operator approach discussed in Refs.~\cite{molnar2022matrix, GarreRubio2023classifyingphases}. However, the method of matrix product operator algebras can also be applied to 1d Hopf qudit chains.
The tensor networks based on Hopf algebras discussed in Refs.~\cite{Buerschaper2013a, jia2023boundary, Jia2023weak, girelli2021semidual} are actually special cases of the Hopf tensor network described here (if we regard the function $\varphi \in \bar{\cA}$ as a special tensor $f(g_k)=f_k$). A more general form will be discussed in Sec.~\ref{sec:GraphHypergraphHopf}.
Since the Hopf tensor network in Refs.~\cite{Buerschaper2013a, girelli2021semidual, jia2023boundary, Jia2023weak} can be used to solve the Hopf quantum double model, it can also be employed to solve the Hopf cluster state model based on the correspondence between the Hopf cluster state and the Hopf quantum double model (See Sec.~\ref{sec:QD}).

\begin{figure}[t]
    \centering
    \includegraphics[width=7cm]{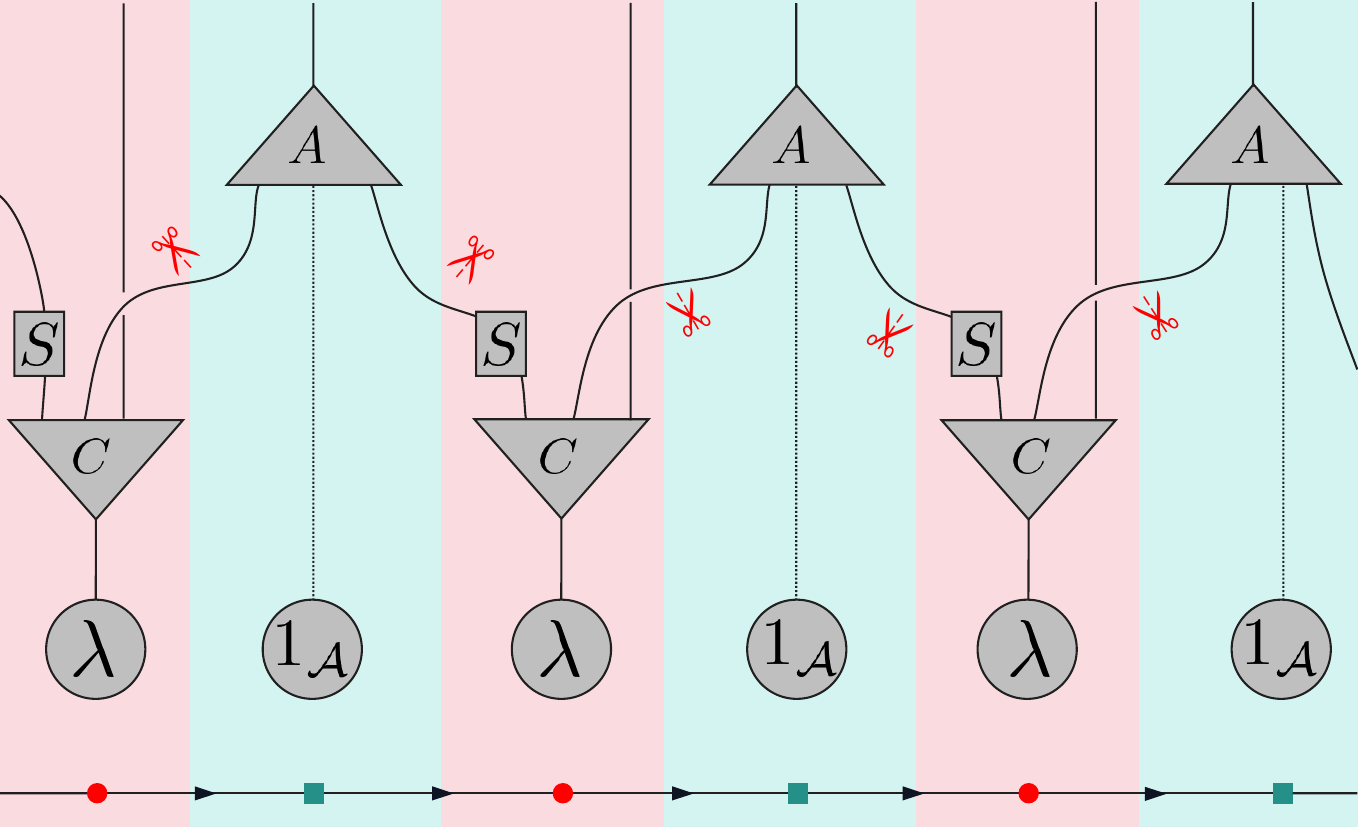}
    \caption{The Hopf tensor network for an one dimensional Hopf cluster state.}
    \label{fig:cluster-TN}
\end{figure}

\subsection{Example: 1d cluster lattice case}
\label{sec:OneDimHopf}

Let us take (1+1)D Hopf cluster lattice state as an example. The basic building block of the Hopf tensor network is odd vertex tensor and even vertex tensor, see Fig.~\ref{fig:cluster-TN} for a depiction.

Consider a cluster lattice as in Fig.~\ref{fig:cluster-TN}. For odd vertex, the tensor is given by
\begin{equation}
       \begin{aligned}
        \begin{tikzpicture}  
        \draw[line width=.6pt,blue] (0,0) -- (0,0.8);
        \draw[line width=.6pt,black] (0,0) -- (0.8,0);
        \draw[line width=.6pt,black] (0,0) -- (-0.8,0);
        \draw[black,fill=red] (0,0) circle (0.2);   
        \node[ line width=0.6pt, dashed, draw opacity=0.5] (a) at (0,0){\small {\color{white}$\lambda$}};
        \node[ line width=0.6pt, dashed, draw opacity=0.5] (a) at (-1.4,0){\small {\color{black} $S(\lambda^{\cone})$ } };
        \node[ line width=0.6pt, dashed, draw opacity=0.5] (a) at (1.2,0){\small {\color{black} $\lambda^{\ctwo}$ } };
         \node[ line width=0.6pt, dashed, draw opacity=0.5] (a) at (0,1.2){\small {\color{black} $\lambda^{\cthree}$ } };
        \end{tikzpicture}
    \end{aligned}
    :=
    \begin{aligned}
    \begin{tikzpicture}
           \draw[fill=lightgray]  (-0.2,0) -- (1.2,0) -- (0.5,-0.5) -- cycle;
           \draw[line width=.6pt,black] (0,0)--(0,0.6) -- (-0.4,0.8);
           \draw[line width=.6pt,black] (0.5,0)--(0.5,0.2) -- (0.96,0.4) ;
           \draw[line width=.6pt,black] (1.04,0.44)-- ++ (.4,.2) ;
           \draw[line width=.6pt,blue] (1,0)--(1,0.8);
           \draw[line width=.6pt,black] (0.5,-0.5)--(0.5,-1.1);
           \draw[black,fill=lightgray] (0.5,-.9) circle (0.2);   
           \draw[fill=lightgray]  (-0.2,0.1) rectangle ++(.4,.4);
           \node[ line width=0.6pt, dashed, draw opacity=0.5] (a) at (-0.4,1.2){$S(\lambda^{(1)})$};
           \node[ line width=0.6pt, dashed, draw opacity=0.5] (a) at (1,1.2){$\lambda^{(3)}$};
            \node[ line width=0.6pt, dashed, draw opacity=0.5] (a) at (1.8,0.8){$\lambda^{(2)}$};
            \node[ line width=0.6pt, dashed, draw opacity=0.5] (a) at (0.5,-0.9){$\lambda$};
            \node[ line width=0.6pt, dashed, draw opacity=0.5] (a) at (0.5,-0.2){$C$}; 
            \node[ line width=0.6pt, dashed, draw opacity=0.5] (a) at (0,0.3){$S$}; 
        \end{tikzpicture}
    \end{aligned}.
\end{equation}
This is given by first taking the comultiplication of the Haar integral $\Delta_2(\lambda)$, then acting the antipode $S$ to the first component $\lambda^{\cone}$. This antipode reflects the fact that for the corresponding edge, we apply the edge gate $C\XL$.

The even vertex tensor is simply a multiplication tensor with two input legs, notice that since we assign $1_{\cA}$ to the even vertex, it won't contribute to the multiplication, thus we omit this dotted leg. Diagrammatically, we have 
\begin{equation}
       \begin{aligned}
        \begin{tikzpicture}  
        \draw[line width=.6pt,blue] (0,0) -- (0,0.8);
        \draw[line width=.6pt,black] (0,0) -- (0.8,0);
        \draw[line width=.6pt,black] (0,0) -- (-0.8,0);
        \draw[fill=teal]  (-0.2,-0.2) rectangle ++(.4,.4);
        \node[ line width=0.6pt, dashed, draw opacity=0.5] (a) at (0,0){\small {\color{white}$\mu$}};
        \node[ line width=0.6pt, dashed, draw opacity=0.5] (a) at (-0.9,0){\small {\color{black} $x$ } };
        \node[ line width=0.6pt, dashed, draw opacity=0.5] (a) at (1,0){\small {\color{black} $y$ } };
         \node[ line width=0.6pt, dashed, draw opacity=0.5] (a) at (0,1){\small {\color{black} $x\cdot y$ } };
        \end{tikzpicture}
    \end{aligned}
    :=
    \begin{aligned}
     \begin{tikzpicture}
           \draw[fill=lightgray]  (0,0) -- (1,0) -- (0.5,0.5) -- cycle;
           \draw[line width=.6pt,black] (0.2,0)--(0.2,-0.4);
           \draw[line width=.6pt,black] (0.8,0)--(0.8,-.4);
           \draw[line width=.6pt,blue] (0.5,0.5)--(0.5,.9);
            \node[ line width=0.6pt, draw opacity=0.5] (a) at (0.2,-0.6){$x$};
             \node[ line width=0.6pt, draw opacity=0.5] (a) at (0.8,-0.6){$y$};
            \node[ line width=0.6pt, draw opacity=0.5] (a) at (0.5,1.1){$x\cdot y$};
             \node[ line width=0.6pt, draw opacity=0.5] (a) at (0.46,0.2){$A$};
        \end{tikzpicture}
    \end{aligned},
\end{equation}
where $\mu$ represents the multiplication map.

By gluing the odd and even vertex tensors, we obtain a Hopf tensor network representation of the 1d Hopf cluster state. For example, consider the open chain, we have
\begin{equation}
       \begin{aligned}
        \begin{tikzpicture}  
        \draw[line width=.6pt,black] (-0.8,0) -- (6.8,0);
        \draw[line width=.6pt,black] (0,0) -- (0.8,0);
        \draw[line width=.6pt,black] (0,0) -- (-0.8,0);
           \draw[line width=.6pt,blue] (0,0) -- (0,0.8);
        \draw[line width=.6pt,blue] (6,0) -- (6,0.8);
         \draw[line width=.6pt,blue] (1,0) -- (1,1.2);
      \draw[line width=.6pt,blue] (2,0) -- (2,0.8);
                 \draw[line width=.6pt,blue] (3,0) -- (3,1.2);
       \draw[line width=.6pt,blue] (4,0) -- (4,0.8);
          \draw[line width=.6pt,blue] (5,0) -- (5,1.2);
                     \draw[black,fill=red] (0,0) circle (0.2); 
        \draw[fill=teal]  (0.8,-0.2) rectangle ++(.4,.4);
                \draw[fill=teal]  (2.8,-0.2) rectangle ++(.4,.4);
                        \draw[fill=teal]  (4.8,-0.2) rectangle ++(.4,.4);
        \draw[black,fill=red] (2,0) circle (0.2);
                   \draw[black,fill=red] (4,0) circle (0.2); 
                        \draw[black,fill=red] (6,0) circle (0.2);
        \node[ line width=0.6pt, dashed, draw opacity=0.5] (a) at (0,0){\small {\color{white}$\lambda$}};
            \node[ line width=0.6pt, dashed, draw opacity=0.5] (a) at (2,0){\small {\color{white}$\lambda$}};
                \node[ line width=0.6pt, dashed, draw opacity=0.5] (a) at (4,0){\small {\color{white}$\lambda$}};
                    \node[ line width=0.6pt, dashed, draw opacity=0.5] (a) at (6,0){\small {\color{white}$\lambda$}};
        \node[ line width=0.6pt, dashed, draw opacity=0.5] (a) at (-1.4,0){\small {\color{black} $S(\lambda^{\cone})$ } };
                \node[ line width=0.6pt, dashed, draw opacity=0.5] (a) at (7.4,0){\small {\color{black} $\lambda^{(2''')}$ } };
        \node[ line width=0.6pt, dashed, draw opacity=0.5] (a) at (1.05,0){\small {\color{white} $\mu$ } };
                \node[ line width=0.6pt, dashed, draw opacity=0.5] (a) at (3.05,0){\small {\color{white} $\mu$ } };
                        \node[ line width=0.6pt, dashed, draw opacity=0.5] (a) at (5.05,0){\small {\color{white} $\mu$ } };
         \node[ line width=0.6pt, dashed, draw opacity=0.5] (a) at (0,1){\small {\color{black} $\lambda^{(3)}$ } };
        \node[ line width=0.6pt, dashed, draw opacity=0.5] (a) at (2,1){\small {\color{black} $\lambda^{(3')}$ } };
         \node[ line width=0.6pt, dashed, draw opacity=0.5] (a) at (4,1){\small {\color{black} $\lambda^{(3'')}$ } };
         \node[ line width=0.6pt, dashed, draw opacity=0.5] (a) at (6,1){\small {\color{black} $\lambda^{(3''')}$ } };
        \node[ line width=0.6pt, dashed, draw opacity=0.5] (a) at (0.7,1.5){\small {\color{black} $\lambda^{(2)}S(\lambda^{(1')})$ } };
        \node[ line width=0.6pt, dashed, draw opacity=0.5] (a) at (3,1.5){\small {\color{black} $\lambda^{(2')}S(\lambda^{(1'')})$ } };
        \node[ line width=0.6pt, dashed, draw opacity=0.5] (a) at (5.3,1.5){\small {\color{black} $\lambda^{(2'')}S(\lambda^{(1''')})$ } };
        \end{tikzpicture}
    \end{aligned}
\end{equation}
where we use the prime symbol to distinguish the component obtained from the comultiplication of Haar integral on different odd vertices.

Let us now derive the local relations for the action of Hopf Pauli-X and Pauli-Z on odd and even vertex tensors. These results will be crucial tools for studying the symmetry of the 1d Hopf cluster chain.
Notice that for Haar integral of $\cA$, we have
\begin{equation}
    \sum_{(\lambda)}  S(\lambda^{\cone})\otimes \lambda^{(2)} x=  \sum_{(\lambda)}  xS(\lambda^{\cone})\otimes \lambda^{\ctwo}. \label{eq:idempLambda}
\end{equation}
This further implies that 
\begin{equation}
\begin{aligned}
    \sum_{(\lambda)} S(\lambda^{\cone})\otimes \lambda^{\ctwo}\otimes \lambda^{\cthree} x 
    = \sum_{(\lambda)} 
 x^{\ctwo} S(\lambda^{\cone})\otimes \lambda^{\ctwo} S(x^{\cone} )\otimes \lambda^{\cthree}.
\end{aligned}
\end{equation} 
Substituting $x=S(h)$, we obtain 
\begin{equation}
    \begin{aligned}
   \sum_{(\lambda)} S(\lambda^{\cone})\otimes \lambda^{\ctwo}\otimes \lambda^{\cthree} S(h) 
    = \sum_{(\lambda)} 
 S(h^{\cone}) S(\lambda^{\cone})\otimes \lambda^{\ctwo} h^{\ctwo} \otimes \lambda^{\cthree}.
\end{aligned}
\end{equation}
This can be represented using the Hopf tensor network as
\begin{equation}
        \begin{aligned}
        \begin{tikzpicture}  
        \draw[line width=.6pt,black] (0,0) -- (0,1.3);
        \draw[line width=.6pt,black] (0,0) -- (0.8,0);
        \draw[line width=.6pt,black] (0,0) -- (-0.8,0);
        \draw[black,fill=red] (0,0) circle (0.2);   
        \draw[fill=white]  (-0.3,0.4) rectangle ++(.6,.6);
        \node[ line width=0.6pt, dashed, draw opacity=0.5] (a) at (0,0.73){\small $\XL_x$};
        \node[ line width=0.6pt, dashed, draw opacity=0.5] (a) at (0,0){\small {\color{white}$\lambda$}};
        \end{tikzpicture}
    \end{aligned}
    =
     \begin{aligned}
        \begin{tikzpicture}  
        \draw[line width=.6pt,black] (0,0) -- (0,0.8);
        \draw[line width=.6pt,black] (0,0) -- (1.8,0);
        \draw[line width=.6pt,black] (0,0) -- (-1.8,0);
        \draw[black,fill=red] (0,0) circle (0.2);   
        \draw[fill=white]  (-1.4,-0.35) rectangle ++(1.,.7);
        \draw[fill=white]  (.4,-0.35) rectangle ++(1.,.7);
        \node[ line width=0.6pt, dashed, draw opacity=0.5] (a) at (-0.9,0){\small $\tilde{\XR}_{x^{\cone}}$};
        \node[ line width=0.6pt, dashed, draw opacity=0.5] (a) at (0.9,0){\small $\tilde{\XL}_{x^{\ctwo}}$};
        \node[ line width=0.6pt, dashed, draw opacity=0.5] (a) at (0,0){\small {\color{white}$\lambda$}};
        \end{tikzpicture}
    \end{aligned}. \label{eq:TNX1}
\end{equation}
Since $S(\lambda) = \lambda$, substituting it into Eq.~\eqref{eq:idempLambda} and applying the twist map $\tau: w \otimes v \mapsto v \otimes w$, we have
\begin{equation}
  \sum_{(\lambda)}  S(\lambda^{\cone})x\otimes \lambda^{\ctwo}=\sum_{(\lambda)} S(\lambda^{\cone})\otimes x \lambda^{\ctwo}.
\end{equation}
This further implies that
\begin{equation}
 \sum_{(\lambda)}   S(\lambda^{\cone})x^{\ctwo} \otimes S(x^{\cone}) \lambda^{\ctwo} \otimes \lambda^{\cthree} =\sum_{(\lambda)} S(\lambda^{\cone})\otimes \lambda^{\ctwo} \otimes x \lambda^{\cthree}.
\end{equation}
This can be represented using the Hopf tensor network as
\begin{equation}
        \begin{aligned}
        \begin{tikzpicture}  
        \draw[line width=.6pt,black] (0,0) -- (0,1.3);
        \draw[line width=.6pt,black] (0,0) -- (0.8,0);
        \draw[line width=.6pt,black] (0,0) -- (-0.8,0);
        \draw[black,fill=red] (0,0) circle (0.2);   
        \draw[fill=white]  (-0.3,0.4) rectangle ++(.6,.6);
        \node[ line width=0.6pt, dashed, draw opacity=0.5] (a) at (0,0.73){\small $\XR_x$};
        \node[ line width=0.6pt, dashed, draw opacity=0.5] (a) at (0,0){\small {\color{white}$\lambda$}};
        \end{tikzpicture}
    \end{aligned}
    =
     \begin{aligned}
        \begin{tikzpicture}  
        \draw[line width=.6pt,black] (0,0) -- (0,0.8);
        \draw[line width=.6pt,black] (0,0) -- (1.8,0);
        \draw[line width=.6pt,black] (0,0) -- (-1.8,0);
        \draw[black,fill=red] (0,0) circle (0.2);   
        \draw[fill=white]  (-1.4,-0.35) rectangle ++(1.,.7);
        \draw[fill=white]  (.4,-0.35) rectangle ++(1.,.7);
        \node[ line width=0.6pt, dashed, draw opacity=0.5] (a) at (-0.9,0){\small $\tilde{\XL}_{x^{\ctwo}}$};
        \node[ line width=0.6pt, dashed, draw opacity=0.5] (a) at (0.9,0){\small $\tilde{\XR}_{x^{\cone}}$};
        \node[ line width=0.6pt, dashed, draw opacity=0.5] (a) at (0,0){\small {\color{white}$\lambda$}};
        \end{tikzpicture}
    \end{aligned}. \label{eq:TNX2}
\end{equation}
Using the facts $\tilde{\XR}_x = {\XR}_{S(x)}$ and $\tilde{\XL}_x = \XL_{S^{-1}(x)}$, we can obtain expressions for the action of $\tilde{\XL}$ and $\tilde{\XR}$ on odd tensors similar to Eqs.~\eqref{eq:TNX1} and \eqref{eq:TNX2}.

For even vertex tensors, the Hopf Pauli-X actions have simpler forms, since the odd tensor are multiplication map
\begin{equation}
        \begin{aligned}
        \begin{tikzpicture}  
        \draw[line width=.6pt,black] (0,0) -- (0,1.3);
        \draw[line width=.6pt,black] (0,0) -- (0.8,0);
        \draw[line width=.6pt,black] (0,0) -- (-0.8,0);
        \draw[fill=teal]  (-0.2,-0.2) rectangle ++(.4,.4);
        \draw[fill=white]  (-0.3,0.4) rectangle ++(.6,.6);
        \node[ line width=0.6pt, dashed, draw opacity=0.5] (a) at (0,0.73){\small $\XR_x$};
        \node[ line width=0.6pt, dashed, draw opacity=0.5] (a) at (0,0){\small {\color{white}$\mu$}};
        \end{tikzpicture}
    \end{aligned}
    =
     \begin{aligned}
        \begin{tikzpicture}  
        \draw[line width=.6pt,black] (0,0) -- (0,0.8);
        \draw[line width=.6pt,black] (0,0) -- (0.8,0);
        \draw[line width=.6pt,black] (0,0) -- (-1.8,0);
        \draw[fill=teal]  (-0.2,-0.2) rectangle ++(.4,.4);
        \draw[fill=white]  (-1.2,-0.3) rectangle ++(.6,.6);
        \node[ line width=0.6pt, dashed, draw opacity=0.5] (a) at (-0.9,0){\small ${\XR}_{x}$};
        \node[ line width=0.6pt, dashed, draw opacity=0.5] (a) at (0,0){\small {\color{white}$\mu$}};
        \end{tikzpicture}
    \end{aligned}, \label{eq:TNX3}
\end{equation}
\begin{equation}
        \begin{aligned}
        \begin{tikzpicture}  
        \draw[line width=.6pt,black] (0,0) -- (0,1.3);
        \draw[line width=.6pt,black] (0,0) -- (0.8,0);
        \draw[line width=.6pt,black] (0,0) -- (-0.8,0);
        \draw[fill=teal]  (-0.2,-0.2) rectangle ++(.4,.4);
        \draw[fill=white]  (-0.3,0.4) rectangle ++(.6,.6);
        \node[ line width=0.6pt, dashed, draw opacity=0.5] (a) at (0,0.73){\small $\XL_x$};
        \node[ line width=0.6pt, dashed, draw opacity=0.5] (a) at (0,0){\small {\color{white}$\mu$}};
        \end{tikzpicture}
    \end{aligned}
    =
     \begin{aligned}
        \begin{tikzpicture}  
        \draw[line width=.6pt,black] (0,0) -- (0,0.8);
        \draw[line width=.6pt,black] (0,0) -- (1.8,0);
        \draw[line width=.6pt,black] (0,0) -- (-.8,0);
        \draw[fill=teal]  (-0.2,-0.2) rectangle ++(.4,.4);
        \draw[fill=white]  (.6,-0.3) rectangle ++(.6,.6);
        \node[ line width=0.6pt, dashed, draw opacity=0.5] (a) at (0.9,0){\small ${\XL}_{x}$};
        \node[ line width=0.6pt, dashed, draw opacity=0.5] (a) at (0,0){\small {\color{white}$\mu$}};
        \end{tikzpicture}
    \end{aligned}. \label{eq:TNX4}
\end{equation}
For $\tilde{\XR}_x = {\XR}_{S(x)}$ and $\tilde{\XL}_x = \XL_{S^{-1}(x)}$, similar expressions exist.

The action of Hopf Pauli-Z on odd and even vertex tensors can also be derived using the property of Hopf algebra. For odd vertex tensor, we have
\begin{equation}
        \begin{aligned}
        \begin{tikzpicture}  
        \draw[line width=.6pt,black] (0,0) -- (0,1.3);
        \draw[line width=.6pt,black] (0,0) -- (0.8,0);
        \draw[line width=.6pt,black] (0,0) -- (-0.8,0);
        \draw[black,fill=red] (0,0) circle (0.2);   
        \draw[line width=.6pt,cyan] (-.6,0.7) -- (0.6,0.7);
        \draw[fill=white]  (-0.3,0.4) rectangle ++(.6,.6);
        \node[ line width=0.6pt, dashed, draw opacity=0.5] (a) at (0,0.73){\small $Z_{\Gamma}$};
        \node[ line width=0.6pt, dashed, draw opacity=0.5] (a) at (0,0){\small {\color{white}$\lambda$}};
        \end{tikzpicture}
    \end{aligned}
    =
     \begin{aligned}
        \begin{tikzpicture}  
        \draw[line width=.6pt,black] (0,0) -- (0,0.8);
        \draw[line width=.6pt,black] (0,0) -- (0.8,0);
        \draw[line width=.6pt,black] (0,0) -- (-1.8,0);
        \draw[black,fill=red] (0,0) circle (0.2);   
        \draw[line width=.6pt,cyan] (-1.5,0.1) -- (-0.3,0.1);
        \draw[fill=white]  (-1.2,-0.3) rectangle ++(.6,.6);
        \node[ line width=0.6pt, dashed, draw opacity=0.5] (a) at (-0.9,0){\small $\tilde{Z}_{\Gamma}^{\ddagger}$};
        \node[ line width=0.6pt, dashed, draw opacity=0.5] (a) at (0,0){\small {\color{white}$\lambda$}};
        \end{tikzpicture}
    \end{aligned}, \label{eq:TNZ3}
\end{equation}
where we have used $\sum_{(\lambda)} \lambda^{\cone}\otimes \lambda^{\ctwo}\otimes \lambda^{\cthree}\otimes \lambda^{(4)}=\sum_{(\lambda)} \lambda^{\ctwo}\otimes \lambda^{\cthree}\otimes \lambda^{(4)}\otimes \lambda^{(1)}$ and then apply $S\otimes \id \otimes \id \otimes \Gamma$, and we use cyan leg to represent the virtual indices that live in the representation space of $\cA$.
Similarly, we have
\begin{equation}
        \begin{aligned}
        \begin{tikzpicture}  
        \draw[line width=.6pt,black] (0,0) -- (0,1.3);
        \draw[line width=.6pt,black] (0,0) -- (0.8,0);
        \draw[line width=.6pt,black] (0,0) -- (-0.8,0);
        \draw[black,fill=red] (0,0) circle (0.2);   
    \draw[line width=.6pt,cyan] (-.6,0.7) -- (0.6,0.7);
        \draw[fill=white]  (-0.3,0.4) rectangle ++(.6,.6);
        \node[ line width=0.6pt, dashed, draw opacity=0.5] (a) at (0,0.73){\small $Z_{\Gamma}^{\ddagger}$};
        \node[ line width=0.6pt, dashed, draw opacity=0.5] (a) at (0,0){\small {\color{white}$\lambda$}};
        \end{tikzpicture}
    \end{aligned}
     =
  \begin{aligned}
        \begin{tikzpicture}  
        \draw[line width=.6pt,black] (0,0) -- (0,0.8);
        \draw[line width=.6pt,black] (0,0) -- (1.8,0);
        \draw[line width=.6pt,black] (0,0) -- (-.8,0);
       \draw[black,fill=red] (0,0) circle (0.2);   
        \draw[line width=.6pt,cyan] (1.5,0.1) -- (0.3,0.1);
        \draw[fill=white]  (.6,-0.3) rectangle ++(.6,.6);
        \node[ line width=0.6pt, dashed, draw opacity=0.5] (a) at (0.9,0){\small $\tilde{Z}_{\Gamma}^{\ddagger}$};
        \node[ line width=0.6pt, dashed, draw opacity=0.5] (a) at (0,0){\small {\color{white}$\lambda$}};
        \end{tikzpicture}
    \end{aligned}, \label{eq:TNZ4}
\end{equation}

\begin{equation}
        \begin{aligned}
        \begin{tikzpicture}  
        \draw[line width=.6pt,black] (0,0) -- (0,0.8);
        \draw[line width=.6pt,black] (0,0) -- (0.8,0);
        \draw[line width=.6pt,black] (0,0) -- (-1.8,0);
        \draw[black,fill=red] (0,0) circle (0.2);   
        \draw[line width=.6pt,cyan] (-1.5,0.1) -- (-0.3,0.1);
        \draw[fill=white]  (-1.2,-0.3) rectangle ++(.6,.6);
        \node[ line width=0.6pt, dashed, draw opacity=0.5] (a) at (-0.9,0){\small ${Z}_{\Gamma}^{\ddagger}$};
        \node[ line width=0.6pt, dashed, draw opacity=0.5] (a) at (0,0){\small {\color{white}$\lambda$}};
        \end{tikzpicture}
    \end{aligned}
     =
  \begin{aligned}
        \begin{tikzpicture}  
        \draw[line width=.6pt,black] (0,0) -- (0,0.8);
        \draw[line width=.6pt,black] (0,0) -- (1.8,0);
        \draw[line width=.6pt,black] (0,0) -- (-.8,0);
       \draw[black,fill=red] (0,0) circle (0.2);   
        \draw[line width=.6pt,cyan] (1.5,0.1) -- (0.3,0.1);
        \draw[fill=white]  (.6,-0.3) rectangle ++(.6,.6);
        \node[ line width=0.6pt, dashed, draw opacity=0.5] (a) at (0.9,0){\small $\tilde{Z}_{\Gamma}$};
        \node[ line width=0.6pt, dashed, draw opacity=0.5] (a) at (0,0){\small {\color{white}$\lambda$}};
        \end{tikzpicture}
    \end{aligned}, \label{eq:TNZZ1}
\end{equation}

\begin{equation}
        \begin{aligned}
        \begin{tikzpicture}  
        \draw[line width=.6pt,black] (0,0) -- (0,0.8);
        \draw[line width=.6pt,black] (0,0) -- (0.8,0);
        \draw[line width=.6pt,black] (0,0) -- (-1.8,0);
        \draw[black,fill=red] (0,0) circle (0.2);   
        \draw[line width=.6pt,cyan] (-1.5,0.1) -- (-0.3,0.1);
        \draw[fill=white]  (-1.2,-0.3) rectangle ++(.6,.6);
        \node[ line width=0.6pt, dashed, draw opacity=0.5] (a) at (-0.9,0){\small $\tilde{Z}_{\Gamma}$};
        \node[ line width=0.6pt, dashed, draw opacity=0.5] (a) at (0,0){\small {\color{white}$\lambda$}};
        \end{tikzpicture}
    \end{aligned}
     =
  \begin{aligned}
        \begin{tikzpicture}  
        \draw[line width=.6pt,black] (0,0) -- (0,0.8);
        \draw[line width=.6pt,black] (0,0) -- (1.8,0);
        \draw[line width=.6pt,black] (0,0) -- (-.8,0);
       \draw[black,fill=red] (0,0) circle (0.2);   
        \draw[line width=.6pt,cyan] (1.5,0.1) -- (0.3,0.1);
        \draw[fill=white]  (.6,-0.3) rectangle ++(.6,.6);
        \node[ line width=0.6pt, dashed, draw opacity=0.5] (a) at (0.9,0){\small ${Z}_{\Gamma}^{\ddagger}$};
        \node[ line width=0.6pt, dashed, draw opacity=0.5] (a) at (0,0){\small {\color{white}$\lambda$}};
        \end{tikzpicture}
    \end{aligned}. \label{eq:TNZZ2}
\end{equation}
For even vertex tensor, we have
\begin{equation}
        \begin{aligned}
        \begin{tikzpicture}  
        \draw[line width=.6pt,black] (0,0) -- (0,1.3);
        \draw[line width=.6pt,black] (0,0) -- (0.8,0);
        \draw[line width=.6pt,black] (0,0) -- (-0.8,0);
            \draw[line width=.6pt,cyan] (0.8,0.7) -- (-0.8,0.7);
                \draw[fill=teal]  (-0.2,-0.2) rectangle ++(.4,.4);
        \draw[fill=white]  (-0.3,0.4) rectangle ++(.6,.6);
        \node[ line width=0.6pt, dashed, draw opacity=0.5] (a) at (0,0.73){\small $Z_{\Gamma}$};
        \node[ line width=0.6pt, dashed, draw opacity=0.5] (a) at (0,0){\small {\color{white}$\mu$}};
        \end{tikzpicture}
    \end{aligned}
    =
     \begin{aligned}
        \begin{tikzpicture}  
        \draw[line width=.6pt,black] (0,0) -- (0,1.2);
        \draw[line width=.6pt,black] (0,0) -- (1.4,0);
        \draw[line width=.6pt,black] (0,0) -- (-1.4,0);
                 \draw[line width=.6pt,cyan] (-1.4,0.7) -- (-0.05,0.7);
                      \draw[line width=.6pt,cyan] (1.4,0.7) -- (0.05,0.7);
        \draw[fill=teal]  (-0.2,-0.2) rectangle ++(.4,.4);
        \draw[fill=white]  (-1.,-0.2) rectangle ++(.6,1);
        \draw[fill=white]  (.4,-0.2) rectangle ++(.6,1);
        \node[ line width=0.6pt, dashed, draw opacity=0.5] (a) at (-0.75,0.3){\small $Z_{\Gamma}$};
        \node[ line width=0.6pt, dashed, draw opacity=0.5] (a) at (0.7,0.3){\small $Z_{\Gamma}$};
        \node[ line width=0.6pt, dashed, draw opacity=0.5] (a) at (0,0){\small {\color{white}$\mu$}};
        \end{tikzpicture}
    \end{aligned}, \label{eq:TNZ1}
\end{equation}
\begin{equation}
        \begin{aligned}
        \begin{tikzpicture}  
        \draw[line width=.6pt,black] (0,0) -- (0,1.3);
        \draw[line width=.6pt,black] (0,0) -- (0.8,0);
        \draw[line width=.6pt,black] (0,0) -- (-0.8,0);
            \draw[line width=.6pt,cyan] (0.8,0.7) -- (-0.8,0.7);
                \draw[fill=teal]  (-0.2,-0.2) rectangle ++(.4,.4);
        \draw[fill=white]  (-0.3,0.4) rectangle ++(.6,.6);
        \node[ line width=0.6pt, dashed, draw opacity=0.5] (a) at (0,0.73){\small $Z_{\Gamma}^{\ddagger}$};
        \node[ line width=0.6pt, dashed, draw opacity=0.5] (a) at (0,0){\small {\color{white}$\mu$}};
        \end{tikzpicture}
    \end{aligned}
    =
     \begin{aligned}
        \begin{tikzpicture}  
        \draw[line width=.6pt,black] (0,0) -- (0,1.2);
        \draw[line width=.6pt,black] (0,0) -- (1.4,0);
        \draw[line width=.6pt,black] (0,0) -- (-1.4,0);
        \draw[line width=.6pt,cyan] (-1.05,0.35) -- (-1.2,0.35) -- (-1.2,0.1) -- (-1,0.1);
                \draw[line width=.6pt,cyan] (-.35,0.35) -- (-0.05,0.35) ;
                               \draw[line width=.6pt,cyan] (.05,0.35) -- (1.2,0.35) -- (1.2,0.1) -- (1,0.1) ;
                        \draw[line width=.6pt,cyan] (0.4,-0.35) -- (-1.4,-0.35) ;
                      \draw[line width=.6pt,cyan] (1.4,0.6) -- (0.05,0.6);
  \draw[line width=.6pt,cyan] (-.4,0.6) -- (-0.05,0.6);
        \draw[fill=teal]  (-0.2,-0.2) rectangle ++(.4,.4);
        \draw[fill=white]  (-1.,-0.2) rectangle ++(.6,1);
        \draw[fill=white]  (.4,0.2) rectangle ++(.6,-0.8);
        \node[ line width=0.6pt, dashed, draw opacity=0.5] (a) at (-0.75,0.3){\small $Z_{\Gamma}^{\ddagger}$};
        \node[ line width=0.6pt, dashed, draw opacity=0.5] (a) at (0.7,-0.3){\small $Z_{\Gamma}^{\ddagger}$};
        \node[ line width=0.6pt, dashed, draw opacity=0.5] (a) at (0,0){\small {\color{white}$\mu$}};
        \end{tikzpicture}
    \end{aligned}, \label{eq:TNZ2}
\end{equation}
Using $\tilde{Z}_{\Gamma} = Z_{\Gamma(S^{-1}(\cdot))}$ and $\tilde{Z}_{\Gamma}^{\ddagger} = Z_{\Gamma(S(\cdot))}$, we obtain similar expressions for the actions of $\tilde{Z}_{\Gamma}$ and $\tilde{Z}_{\Gamma}^{\ddagger}$ on odd and even vertex tensors.

\section{One dimensional Hopf cluster state model}
\label{sec:Hopf1d}

In this section, we delve into the one-dimensional Hopf cluster state, a lattice theory that exhibits fusion category symmetry in (1+1)D. Distinct from anyonic chains, the Hopf cluster state model possesses a tensor product structure for its total Hilbert space, rendering it advantageous for the discussion of SPT phases in the context of quantum matter.
As we will see in Sec.~\ref{sec:QD}, the (1+1)D Hopf cluster state model can be regarded as a boundary theory of the (2+1)D Hopf quantum double model~\cite{Buerschaper2013a,jia2023boundary,Jia2023weak}.
This also sheds new light on the connection between the SPT phase and the topological phase: the categorical symmetry of $d$ dimensional SPT phase is related to the $d+1$ dimension topological order \cite{Kong2020algebraic}. These topics will be discussed in detail in Ref.~\cite{jia2024cluster}.

\subsection{Hopf cluster state lattice Hamiltonian}
Let us now consider a 1d cluster lattice $\mathbb{M}^1$, which is depicted in Fig.~\ref{fig:cluster-TN}. We will label the vertices as $1,\cdots,2L$.
The preferred state is $|\Omega\rangle=|\lambda\rangle_1|1_{\cA}\rangle_2 |\lambda\rangle_3\cdots |1_{\cA}\rangle_{2L}$.
For periodic boundary conditions, we assume that all edges of the cluster graph are directed from left to right. For an odd vertex $v_o$, the local ordering of the edge set $N_E(v_o)$ is assumed to be $\{e_1 = e_L, e_2 = e_R\}$, where $e_L$ and $e_R$ are the edges connecting to $v_o$ from the left and right sides, respectively.
The entangler operator is then given by
\begin{equation}\label{eq:ClusterU}
    U_E = \prod_{i: \text{odd}} C\XR_{i,i+1} C\XL_{i,i-1},
\end{equation}
which means $W_{v_i} = C\XR_{i,i+1} C\XL_{i,i-1}$ in Eq.~\eqref{eq:Wcluster}.
Notice that different choices of edge orientations and the local ordering of the edge set $N_E(v_o)$ result in different models, but their physical properties are the same. This will be discussed later in Sec.~\ref{sec:QD}.

For odd vertex $i$, we define X-type local stabilizer operator as ($\lambda$ is Haar integral)
\begin{equation}
    \Av_i=\sum_{(\lambda)} \XL_{\lambda^{\cone}}(i-1) \otimes \XR_{\lambda^{\cthree}}(i)\otimes \XR_{\lambda^{\ctwo}}(i+1). 
\end{equation}
We can similarly define $\Av^h$ for $h\in \cA$.
For even vertex $j$, we define the Z-type operator
\begin{equation}
    \Bf^{\Gamma}_j=\Tr' [Z^{\ddagger}_{\Gamma}(j-1)\otimes Z_{\Gamma}(j) \otimes Z_{\Gamma}(j+1)]
\end{equation}
and
\begin{equation}
    \Bf_j=\sum_{\Gamma\in \Irr(\cA)} \frac{d_{\Gamma}}{|\cA|} \Bf^{\Gamma}_j.
\end{equation}
As will be explained in Sec.~\ref{sec:QD}, $\Bf^{\Gamma}_j$ is defined based on the character function $\chi_{\Gamma}:\cA \to \mathbb{C}$ which is an element in the dual Hopf algebra $\bar{\cA}$. This can also be generalized to a general element $\psi\in \bar{\cA}$. We can define $\Bf^{\psi}$ using the comultiplication structure of $\bar{\cA}$.

\begin{remark}

Notice that $\Av_i$'s and $\Bf_j$'s bear a resemblance to the vertex and face operators $A_v$ and $B_f$ in the quantum double model \cite{Kitaev2003,Buerschaper2013a,chen2021ribbon,jia2022electricmagnetic,jia2023boundary,Jia2023weak}.
We can prove their equivalence, see Sec.~\ref{sec:QD}.
Refer to Fig.~\ref{fig:QDcluster}, folding the 1d cluster lattice into a zigzag configuration yields a 2d (quasi-1d) lattice. In the quantum double lattice construction, Hopf qudits are positioned on the edges. To ensure a properly defined quantum double lattice, an ancillary vertex (seen as the bottom vertex in Fig.~\ref{fig:QDcluster}) is introduced.
We'll term this lattice the virtual lattice, with its associated vertices and faces named virtual vertices and virtual faces, respectively.
Given that the directions of each virtual edge are chosen as downwards and rightwards, and the edges surrounding the faces and vertices are assumed to be clockwise, we assert that the corresponding vertex operator $A_v$ is equivalent to the odd-site operator $\Av_i$ defined earlier. Similarly, the face operator $B_f$ corresponds to the odd-site operator $\Bf_j$ defined previously. See Sec.~\ref{sec:QD} for a detailed discussion. 
\end{remark}

\begin{figure}
    \centering
    \includegraphics[width=12cm]{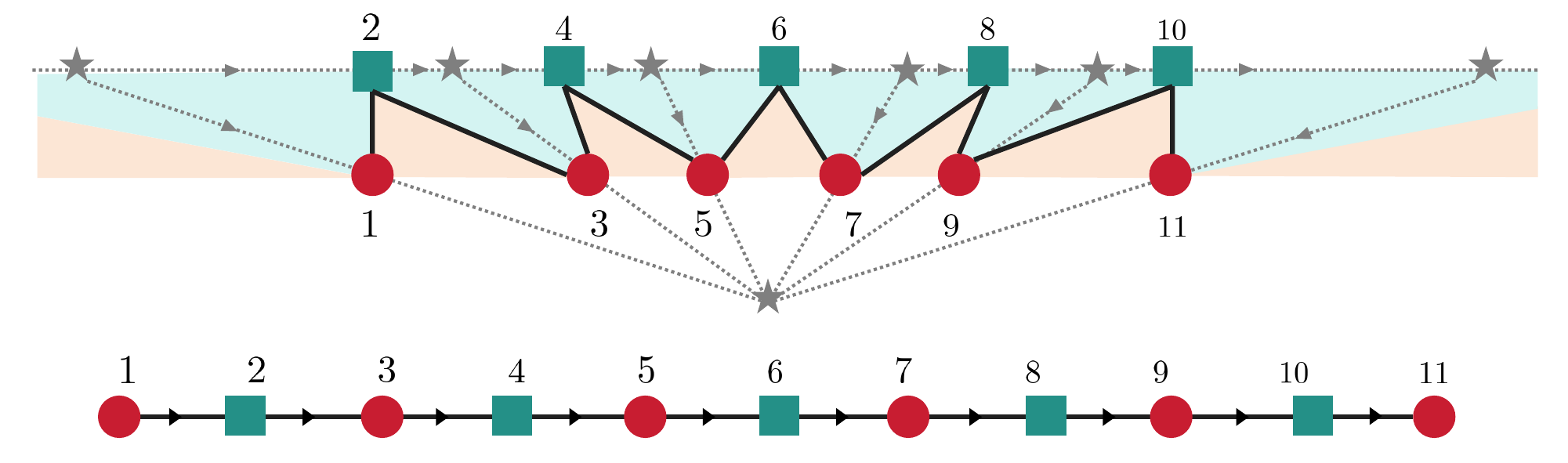}
    \caption{Illustration depicting the transformation from local stabilizers for the Hopf cluster state into the face and vertex operators of the Hopf quantum double model. By folding the 1d cluster lattice into a zigzag configuration, a 2d lattice is formed, represented by gray dotted lines in the figure, with vertices depicted as stars.}
    \label{fig:QDcluster}
\end{figure}

\begin{theorem}\label{thm:1dHopfLattice}
    The Hopf cluster state lattice Hamiltonian 
    \begin{equation}\label{eq:Hamiltonian1d}
        H_{\rm cluster}=-\sum_{i:\rm odd} \Av_i-\sum_{j:\rm even} \Bf_j
    \end{equation}
    is a local commutative projector (LCP) Hamiltonian, viz.,  $\Av_i,\Bf_j$ are projectors and $[\Av_i,\Bf_j]=0$ for all $i\in 2\mathbb{N}+1$ and even $j\in 2\mathbb{N}$, $[\Av_k,\Av_l]=0$ and $[\Bf_s,\Bf_t]=0$ for all $k,l\in 2\mathbb{N}+1$ and $s,t\in 2\mathbb{N}$. 
The Hopf cluster state $|\mathbb{M}^1,\cA\rangle$ is the stabilizer state of $\Av_i$ and $\Bf_j$
\begin{equation}
\Av_i|\mathbb{M}^1,\cA\rangle=\Bf_j|\mathbb{M}^1,\cA\rangle=|\mathbb{M}^1,\cA\rangle
\end{equation}
for all odd $i$ and even $j$.    
\end{theorem}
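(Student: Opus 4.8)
The plan is to exploit the fact, set up in Sec.~\ref{subsec:stabilizer}, that $\Av_i$ and $\Bf_j$ are merely the conjugates of single-site seed operators by the entangler circuit. Concretely, the first step is to establish the two identities
\begin{equation}
\Av_i = U_E\,\XR_\lambda(i)\,U_E^{-1}, \qquad \Bf_j = U_E\,Z(j)\,U_E^{-1},
\end{equation}
i.e.\ that the explicit three-site formulas in the statement coincide with $T(i)$ and $Q(j)$ (and $\Bf_j^\Gamma$ with $Q_\Gamma(j)$). Since $\XR_\lambda(i)$ and $Z(j)$ are supported on the single site $i$ (resp.\ $j$), and the $W_{v_o}$ mutually commute, the conjugation collapses to the single odd-vertex gate $W_{v_i}=C\XR_{i,i+1}C\XL_{i,i-1}$ meeting its neighbours; carrying this out is a direct telescoping computation using the coproduct, the antipode axiom $\sum_{(g)}S(g^{\cone})g^{\ctwo}=\varepsilon(g)1_{\cA}$, and $S^2=\id_{\cA}$.

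Granting these identities, the stabilizer property is immediate: since $U_E|\Omega\rangle=|\mathbb{M}^1,\cA\rangle$, and $\XR_\lambda$ fixes $|\lambda\rangle$ (as $\lambda^2=\lambda$) while $Z$ fixes $|1_{\cA}\rangle$, one gets $\Av_i|\mathbb{M}^1,\cA\rangle=U_E\XR_\lambda(i)|\Omega\rangle=|\mathbb{M}^1,\cA\rangle$ and likewise $\Bf_j|\mathbb{M}^1,\cA\rangle=|\mathbb{M}^1,\cA\rangle$. Idempotency and all three commutation relations then follow for free, because conjugation by the fixed invertible $U_E$ is an inner automorphism of the operator algebra: it sends idempotents to idempotents ($\XR_\lambda^2=\XR_{\lambda^2}=\XR_\lambda$ and $Z^2=Z$ give $\Av_i^2=\Av_i$, $\Bf_j^2=\Bf_j$) and preserves commutators. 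The decisive point is that, although $\Av_i$ and $\Bf_j$ overlap on two physical sites, their seeds $\XR_\lambda(i)$ and $Z(j)$ sit on the distinct single sites $i\neq j$ and hence commute trivially, so $[\Av_i,\Bf_j]=U_E[\XR_\lambda(i),Z(j)]U_E^{-1}=0$; identical reasoning gives $[\Av_k,\Av_l]=0$ and $[\Bf_s,\Bf_t]=0$. As a cross-check one may verify the overlap commutators directly from the explicit formulas: on a shared site $\XR$ and $\XL$ always commute, and $Z_\Gamma$ commutes with $Z_\Phi^{\ddagger}$ by coassociativity.

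The remaining, genuinely nontrivial property is Hermiticity, since conjugation preserves it only when $U_E$ is unitary. I would handle $\Av_i$ directly from its explicit form: using $\XR_g^\dagger=\XR_{g^*}$, $\XL_g^\dagger=\XL_{g^*}$, $\lambda^*=\lambda$, and the $*$-compatibility $\Delta_3(\lambda)^*=\Delta_3(\lambda^*)=\Delta_3(\lambda)$ (the three tensor factors act on different sites, so the adjoint may be reordered freely), one reads off $\Av_i^\dagger=\Av_i$. For $\Bf_j$ the analogous claim rests on the adjoint relation on the dual algebra, $(T_+^\varphi)^\dagger=T_+^{\varphi^*}$ with $\langle\varphi^*,x\rangle=\overline{\langle\varphi,S(x)^*\rangle}$; combined with $\Lambda^*=\Lambda$ and the identifications $J_\Gamma=T_+^{\chi_\Gamma}$, $J_\Gamma^{\ddagger}=T_-^{\chi_\Gamma}$, this yields $\Bf_j^\dagger=\Bf_j$ and in particular the Hermiticity of $Z=T_+^{\Lambda}$ flagged earlier.

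The hard part will be precisely this last identity $(T_+^\varphi)^\dagger=T_+^{\varphi^*}$: proving it requires unwinding the definition of the $*$-operation on $\bar{\cA}$ against the inner product $\langle x,y\rangle=\Lambda(x^*y)$ and invoking the trace and invariance properties of the Haar measure, and it is the one step that does not reduce to a mechanical application of the Hopf axioms. By contrast, the conjugation computation of the first step, though lengthy, is routine once the Sweedler components of $\Delta_n(\lambda)$ are tracked carefully and the telescoping products are collapsed via the antipode axiom.
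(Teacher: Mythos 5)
There is a genuine gap at the very first step: the identity $\Bf_j = U_E\,Z(j)\,U_E^{-1}$ on which everything downstream rests is false for a non-cocommutative Hopf algebra. Your companion identity $\Av_i = U_E\,\XR_{\lambda}(i)\,U_E^{-1}$ does hold, because both gates of $W_{v_i}$ have the odd site $i$ as their \emph{control}, so the conjugation stays supported on $\{i-1,i,i+1\}$ and collapses via $\Delta_3(\lambda g)=\varepsilon(g)\Delta_3(\lambda)$. But the even site $j$ is the \emph{target} of its two adjacent gates, whose controls are the odd sites $j\pm1$, and $W_{v_{j+1}}=C\XR_{j+1,j+2}\,C\XL_{j+1,j}$ carries a second gate hanging off the same control. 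Conjugating $Z(j)=T_+^{\Lambda}(j)$ by $W_{v_{j+1}}$ on $|x\rangle_j|b\rangle_{j+1}|c\rangle_{j+2}$ gives
\begin{equation}
\sum_{(b)}\Lambda\bigl(x\,b^{(2)}\bigr)\,\bigl|S(b^{(3)})\bigr\rangle_j\otimes\bigl|b^{(5)}\bigr\rangle_{j+1}\otimes\bigl|b^{(4)}S(b^{(1)})\,c\bigr\rangle_{j+2},
\end{equation}
and the factor $b^{(4)}S(b^{(1)})$ does not reduce to $\varepsilon(\cdot)1_{\cA}$, since the antipode axiom only cancels \emph{adjacent} Sweedler components. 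Hence $U_E Z(j)U_E^{-1}$ is supported on five sites and cannot equal the three-site operator $\Bf_j$; the two coincide only when $\Delta$ is cocommutative (e.g.\ $\cA=\Cbb[G]$), which is exactly the special case your intuition is imported from. The operator $Q(v_e)=U_E Z(v_e)U_E^{-1}$ of Sec.~\ref{subsec:stabilizer} is therefore a genuinely different stabilizer from $\Bf_j$, and your ``for free'' derivations of $\Bf_j^2=\Bf_j$, $[\Av_i,\Bf_j]=0$, $[\Bf_s,\Bf_t]=0$ and $\Bf_j|\mathbb{M}^1,\cA\rangle=|\mathbb{M}^1,\cA\rangle$ all collapse with it. Your fallback cross-check fails too: on the shared odd site, $\XR_g$ and $Z_{\Gamma}^{\ddagger}$ do \emph{not} commute, because $\sum\Gamma(S(g^{(1)}h^{(1)}))\otimes|g^{(2)}h^{(2)}\rangle\neq\sum\Gamma(S(h^{(1)}))\otimes|gh^{(2)}\rangle$, so $[\Av_i,\Bf_{i\pm1}]=0$ is not a site-by-site statement but a cancellation across all three sites that uses $\lambda x=\varepsilon(x)\lambda$ and the traciality of $\chi_{\Gamma}$.

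The paper's actual route avoids this: it proves $\Av_i^{\dagger}=\Av_i$ and $\Av_i^2=\Av_i$ directly from $\XR_g^{\dagger}=\XR_{g^*}$, $\XL_g^{\dagger}=\XL_{g^*}$, $\lambda^*=\lambda$ and $\lambda^2=\lambda$; it identifies $\Bf_j^{\Gamma}=B^{\chi_{\Gamma}}$ and hence $\Bf_j=B_f^{\Lambda}$ with the quantum double face operator (Proposition~\ref{prop:AB-QD}), importing Hermiticity, idempotency and the commutation relations from that model; and it checks $[\Av_i,\Bf_j^{\Gamma}]=0$ together with the stabilizer property by an explicit Hopf tensor-network computation on arbitrary product states. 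To salvage your strategy you would need either to restrict to cocommutative $\cA$, or to replace the seed $Z(j)$ by an operator whose $U_E$-conjugate genuinely reproduces the three-site $\Bf_j$ --- which is, in effect, what the quantum double correspondence supplies.
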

\begin{proof}
    We first prove that $\Av_i$'s are projectors. Recall that the inner product of Hopf qudit $\cA$ is given by Eq.~\eqref{eq:innerProd}, $\Lambda(x^*gy)=\Lambda((g^*x)^*y)$ implies that 
\begin{equation}
    \XR_g^{\dagger}=\XR_{g^*}.
\end{equation}
Using Eq.~\eqref{eq:Sstar} and cocomutativity of $\Lambda$, we have $\Lambda(x^*yS(g))=\Lambda(S(g)x^*y)=\Lambda((xS(g)^*)^*y)=\Lambda((xS(g^*))^*y)$, which implies
\begin{equation}
     \XL_g^{\dagger}=\XL_{g^*}.
\end{equation}
By definition of $\Av_i^{\lambda}$, we see
\begin{equation}
    (\Av_i^{\lambda})^{\dagger}=\Av_i^{\lambda^*}.
\end{equation}
Since $\lambda^*=\lambda$, we have $(\Av_i^{\lambda})^{\dagger}=\Av_i^{\lambda}$, viz., $\Av_i$'s are Hermitian.
It's clear that $\Av^{x}_i\Av^y_i=\Av^{xy}_i$. From $\lambda^2=\lambda$, we have $(\Av_i^{\lambda})^2=\Av_i^{\lambda}$, meaning that $\Av_i$'s are projectors.

Directly proving $\Bf^{\dagger}=\Bf$ and $\Bf^{2}=\Bf$ is challenging. However, leveraging Proposition~\ref{prop:AB-QD} in Appendix, we obtain $\Bf=B_f^{\Lambda}$. In references \cite{jia2023boundary,Jia2023weak}, it's shown that $B_f^{\Lambda}$ are Hermitian projectors. Hence, $\Bf_j$ are projectors for all $j\in 2\Nbb$.

Using Proposition~\ref{prop:AB-QD} in the next section, the commutativity of $[\Av_k,\Av_l]=[\Bf_s,\Bf_t]=[\Av_i,\Bf_j]=0$ can be proved in a similar way as that in Refs.~\cite{jia2023boundary,Jia2023weak}.
Notice that the character $\chi_{\Gamma}$ is cocommutative (i.e. $\chi_{\Gamma}(xy)=\chi_{\Gamma}(yx)$), which implies a stronger relation
\begin{equation}
    [\Av_i, \Bf_{j}^{\Gamma}]=0.
\end{equation}
Here we will provide an alternative proof of above equation based on the Hopf tensor network representation of the Hopf cluster state.
For a 1d lattice with $2L$ vertices, the total space is $\cA^{\otimes 2L}$. For product state $|\Omega(x_1,\cdots,x_{2L})\rangle=|x_i\rangle\otimes \cdots\otimes |x_{2L}\rangle$, which applying the cluster circuit $U_E$ in Eq.~\eqref{eq:ClusterU}, we obtain a state $|\Psi(x_1,\cdots,x_{2L})\rangle$. Since $U_E$ is invertible, its necessary to show that $\Av_i\Bf_{j}^{\Gamma}|\Psi(x_1,\cdots,x_{2L})\rangle= \Bf_j^{\Gamma} \Av_i|\Psi(x_1,\cdots,x_{2L})\rangle$ for all such $|\Psi(x_1,\cdots,x_{2L})\rangle$.
The only non-trivial case is when $j=i\pm 1$, let's assume $j=i+1$ without loss of generality.
We have the following Hopf tensor network representation of $|\Psi(x_1,\cdots,x_{2L})\rangle$:
\begin{equation}\label{eq:TENSORPROOF}
    \begin{aligned}
        \begin{tikzpicture}  
        \draw[line width=.6pt,black] (0.3,0) -- (0.3,0.5);
        \draw[line width=.6pt,black]  (0.3,0.63) -- (0.3,2);
        \draw[line width=.6pt,black] (0,0.3)-- (0,0.5) -- (2,1);
        \draw[line width=.6pt,black] (-0.3,0)-- (-0.3,0.5) -- (-0.5,1);
        \draw[line width=.6pt,black] (4.3,0) -- (4.3,0.5);
        \draw[line width=.6pt,black]  (4.3,0.63) -- (4.3,2);
        \draw[line width=.6pt,black] (4,0.3)-- (4,0.5) -- (6,1);
        \draw[line width=.6pt,black] (3.7,0)--(3.7,0.5) --(2,1);
        \draw[line width=.6pt,black] (2,0)--(2,2);
        \draw[line width=.6pt,black] (7.7,0)--(7.7,0.5) --(6,1);
        \draw[line width=.6pt,black] (6,0)--(6,2);
        \draw[line width=.6pt,black] (8.3,0)--(8.3,0.75);
        \draw[line width=.6pt,black] (8.3,2)--(8.3,0.85);

        \draw[line width=.6pt,black] (8,0)-- (8,0.5) -- (8.5,1);
        \draw[black,fill=lightgray] (0,0) circle (0.35);   
        \draw [fill=lightgray] (2,0) circle (0.35);
        \draw [fill=lightgray] (4,0) circle (0.35);
        \draw [fill=lightgray] (6,0) circle (0.35);
        \draw [fill=lightgray] (8,0) circle (0.35);
        \draw[fill=lightgray]  (1.75,0.9) -- (2.25,0.9) -- (2,1.2) -- cycle;
        \draw[fill=lightgray]  (5.75,0.9) -- (6.25,0.9) -- (6,1.2) -- cycle;
        \draw[fill=lightgray]  (3.1,0.9) rectangle ++(-.4,-0.4) ;
        \draw[fill=lightgray]  (7.1,0.9) rectangle ++(-.4,-0.4) ;
        \node[ line width=0.6pt, dashed, draw opacity=0.5] (a) at (2.9,0.7){\small $S$};
        \node[ line width=0.6pt, dashed, draw opacity=0.5] (a) at (6.9,0.7){\small $S$};
        \node[ line width=0.6pt, dashed, draw opacity=0.5] (a) at (0,0){\small $x_i$};
        \node[ line width=0.6pt, dashed, draw opacity=0.5] (a) at (2,0){\small $x_{i+1}$};
        \node[ line width=0.6pt, dashed, draw opacity=0.5] (a) at (4,0){\small $x_{i+2}$};
        \node[ line width=0.6pt, dashed, draw opacity=0.5] (a) at (6,0){\small $x_{i+3}$};
        \node[ line width=0.6pt, dashed, draw opacity=0.5] (a) at (8,0){\small $x_{i+4}$};
        \node[ line width=0.6pt, dashed, draw opacity=0.5] (a) at (-0.8,1){$x_i^{\cone}$};
        \node[ line width=0.6pt, dashed, draw opacity=0.5] (a) at (-0.1,1.8){$x_i^{\cthree}$};
        \node[ line width=0.6pt, dashed, draw opacity=0.5] (a) at (2,2.4){$x_i^{\ctwo}x_{i+1}S(x_{i+2}^{\cone})$};
        \node[ line width=0.6pt, dashed, draw opacity=0.5] (a) at (4,2.4){$x_{i+2}^{\cthree}$};
         \node[ line width=0.6pt, dashed, draw opacity=0.5] (a) at (6,2.4){$x_{i+2}^{\ctwo}x_{i+3}S(x_{i+4}^{\cone})$};
        \node[ line width=0.6pt, dashed, draw opacity=0.5] (a) at (8,2.4){$x_{i+4}^{\cthree}$};
        \node[ line width=0.6pt, dashed, draw opacity=0.5] (a) at (9,1){$x_{i+4}^{\ctwo}$};
        \end{tikzpicture}
    \end{aligned}
\end{equation}
When applying $\Av_{i+2}$, from $\lambda x_{i+2}=\varepsilon(x_{i+2})\lambda$, we have
\begin{equation}
   \Av_{i+2} |\Psi(\cdots,x_{i},x_{i+1},x_{i+2},x_{i+3},x_{i+4},\cdots)\rangle= \varepsilon(x_{i+2}) |\Psi(\cdots,x_{i},x_{i+1},\lambda,x_{i+3},x_{i+4},\cdots)\rangle\label{eq:Avact}
\end{equation}
Applying $\Bf_{i+1}^{\Gamma}$ on  $|\Psi(\cdots,x_{i},x_{i+1},x_{i+2},x_{i+3},x_{i+4},\cdots)\rangle$ gives
\begin{align}
   & \Tr' \Gamma[S(x_{i}^{(4)})x_{i}^{\cthree} x_{i+1}^{\ctwo} S(x_{i+2}^{\cone}) x_{i+2}^{(5)}]
    |x_{i}^{\cone},x_i^{(5)}, x_{i}^{(2)}x_{i+1}^{\cone} S(x_{i+2}^{\ctwo}),x_{i+2}^{(4)},x_{i+2}^{(3)}x_{i+3}S(x_{1+4}^{\cone}),x_{i+4}^{\cthree},x_{i+4}^{\ctwo}\rangle\nonumber\\
  &  = \Tr' \Gamma[x_{i+1}^{\ctwo} S(x_{i+2}^{\cone}) x_{i+2}^{(5)}]  |x_{i}^{\cone},x_i^{(3)}, x_{i}^{(2)}x_{i+1}^{\cone} S(x_{i+2}^{\ctwo}),x_{i+2}^{(4)},x_{i+2}^{(3)}x_{i+3}S(x_{1+4}^{\cone}),x_{i+4}^{\cthree},x_{i+4}^{\ctwo}\rangle \label{eq:Bfact}
\end{align}
where we omit the sum symbol for comultiplication to avoid cluttering the equations.
Combining the fact that 
\begin{equation}\label{eq:lambdaPerm}
\begin{aligned}
   & \sum_{(\lambda)}\lambda^{(1)}\otimes \lambda^{(2)}\otimes \lambda^{(3)}\otimes \lambda^{(4)}\otimes \lambda^{(5)}
    = &\sum_{(\lambda)}\lambda^{(2)}\otimes \lambda^{(3)}\otimes \lambda^{(4)}\otimes \lambda^{(5)}\otimes \lambda^{(1)}
\end{aligned} 
\end{equation}
and Eqs.~\eqref{eq:Avact}  and \eqref{eq:Bfact} we obtain
\begin{align}
&\Bf_{i+1}^{\Gamma}\Av_{1+2}|\Psi(\cdots,x_{i},x_{i+1},x_{i+2},x_{i+3},x_{i+4},\cdots)\rangle\nonumber\\
=& \varepsilon(x_{i+2}) \Bf_{i+1}^{\Gamma}|\Psi(\cdots,x_{i},x_{i+1},\lambda,x_{i+3},x_{i+4},\cdots)\rangle\nonumber\\
=&\Tr' \Gamma[x_{i+1}^{\ctwo}] \varepsilon(x_{i+2}) |\Psi(\cdots,x_{i},x_{i+1}^{(1)},x_{i+2},x_{i+3},x_{i+4},\cdots)\rangle
\end{align}
In a similar way, we can prove that, applying $\Av_{1+2}\Bf_{i+1}^{\Gamma}$ on $|\Psi(\cdots,x_{i},x_{i+1},x_{i+2},x_{i+3},x_{i+4},\cdots)\rangle$ gives the same result.

Notice that Eqs.~\eqref{eq:Avact}  and \eqref{eq:Bfact} also implies that $|\mathbb{M}^1,\cA\rangle$ is invariant under $\Av_i$ and $\Bf_j$. Since $\varepsilon(\lambda)=1$, we have 
\begin{equation}
\Av_i|\mathbb{M}^1,\cA\rangle=\varepsilon(\lambda)|\mathbb{M}^1,\cA\rangle=|\mathbb{M}^1,\cA\rangle.
\end{equation}
Combining Eqs.~\eqref{eq:Bfact} and \eqref{eq:lambdaPerm}, we have
\begin{equation}
    \Bf_j|\mathbb{M}^1,\cA\rangle= \Tr \Gamma(1_{\cA}) |\mathbb{M}^1,\cA\rangle=d_{\Gamma} |\mathbb{M}^1,\cA\rangle.
\end{equation}
This, together with the Peter-Weyl formula for Hopf algebra, further implies that $\Bf_j|\mathbb{M}^1,\cA\rangle=|\mathbb{M}^1,\cA\rangle$.
\end{proof}

\subsection{Non-invertible symmetries}
The Hopf cluster state Hamiltonian in Eq.~\eqref{eq:Hamiltonian1d} has two non-invertible global symmetries characterized by $\cA$ and $\Rep(\cA)^{\rm rev}$ for the open boundary condition.
Notice that the symmetry $\Rep(\cA)^{\rm rev}$ is in fact the Grothendieck ring $\operatorname{Gr}(\Rep(\cA)^{\rm rev})$ of $\Rep(\cA)^{\rm rev}$, which has the basis as characters for irreducible representations of $\cA$. Since character $\chi_{\Gamma}$ is in the dual Hopf algebra $\bar{\cA}$, we have an embedding
\begin{equation}
   \operatorname{Gr}(\Rep(\cA)^{\rm rev}) \hookrightarrow  \bar{\cA}^{\rm op}, 
\end{equation}
where superscript `$\textrm{op}$' means that we take opposite algebra, viz., $a\cdot^{\rm op} b:= b\cdot a$.
This implies that the symmetry algebra can be embedded into $\cA\times \bar{\cA}$.

Let us first consider the Hopf $\cA$-symmetry, which is defined as an operator acting on all odd vertices:
\begin{equation}
    F_h=\sum_{(h)} \XL_{h^{(1)}}(1)\otimes \XL_{h^{(2)}}(3) \otimes \cdots \otimes \XL_{h^{(L)}}(2L-1).
\end{equation}
It's clear that
\begin{equation}
    F_gF_h=F_{gh},\quad F_{1_{\cA}}=I.
\end{equation}
Notice that this symmetry is not the same as on-site group symmetry, where $U_g = \otimes_j U_g^{(j)}$ and each local site carries the same group element $g$. Here, for different vertices, we assign a different component of the comultiplication of $h \in \cA$.

\begin{proposition} \label{prop:A-symm}
      The Hopf cluster state Hamiltonian $ H_{\rm cluster}$ in Eq.~\eqref{eq:Hamiltonian1d} commutes with $F_g$ for all $g\in \cA$, thus it has a Hopf $\cA$-symmetry.
\end{proposition}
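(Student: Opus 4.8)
The plan is to verify that $F_g$ commutes with every local term of $H_{\rm cluster}$ separately, i.e. to establish $[F_g,\Av_i]=0$ for all odd $i$ and $[F_g,\Bf_j]=0$ for all even $j$. Since $\Bf_j=\sum_{\Gamma\in\Irr(\cA)}\frac{d_\Gamma}{|\cA|}\Bf_j^{\Gamma}$, the second claim reduces to $[F_g,\Bf_j^{\Gamma}]=0$ for each irrep $\Gamma$. Throughout I use the two structural facts that $F_g$ is supported only on the odd vertices and acts there through left regular actions $\XL$, and that, written out, $F_g=\sum_{(g)}\bigotimes_k \XL_{g^{(k)}}(v_k)$ over the ordered odd vertices $v_1=1,v_3=3,\dots$.

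The case of $\Av_i$ is immediate. The support of $\Av_i$ is $\{i-1,i,i+1\}$, carrying regular-action factors on the even neighbours $i\pm1$ and the factor $\XR_{\lambda^{(3)}}(i)$ on the odd vertex $i$. Because $F_g$ is trivial on every even vertex, its only overlap with $\Av_i$ occurs at the odd vertex $i$, where it contributes some $\XL_{g^{(k)}}(i)$. Since $[\XL_a,\XR_b]=0$ for all $a,b\in\cA$ and the remaining tensor factors of each summand of $F_g$ have support disjoint from $\Av_i$, one checks termwise that $\Av_i$ commutes with every summand of $F_g$, hence $[F_g,\Av_i]=0$.

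The substantive case is $\Bf_j$, whose support $\{j-1,j,j+1\}$ overlaps $F_g$ at the two odd vertices $j-1$ and $j+1$. First I would use coassociativity to isolate the relevant part of $F_g$: the factors on odd vertices other than $j\pm1$ commute through $\Bf_j$ by disjointness, so it suffices to show that the two-site operator $P_w=\sum_{(w)}\XL_{w^{(1)}}(j-1)\,\XL_{w^{(2)}}(j+1)$ commutes with $\Bf_j^{\Gamma}$, where $w$ is the intermediate comultiplication component feeding these two adjacent vertices. Writing the states on $(j-1,j,j+1)$ as $(x,y,z)$ and unfolding $Z_\Gamma^{\ddagger},Z_\Gamma$, the face term acts by
\begin{equation*}
\Bf_j^{\Gamma}|x,y,z\rangle=\sum \chi_\Gamma(S(x^{(1)})\,y^{(2)}\,z^{(2)})\,|x^{(2)},y^{(1)},z^{(1)}\rangle .
\end{equation*}
I would then compute $\Bf_j^{\Gamma}P_w$ by feeding $xS(w^{(1)})$ and $zS(w^{(2)})$ into this formula, expanding with $\Delta(ab)=\Delta(a)\Delta(b)$ and $\Delta(S(a))=(S\otimes S)\circ\tau\circ\Delta(a)$, and simplifying the outer antipode via $S(ab)=S(b)S(a)$ together with $S^2=\id$.

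The crucial simplification, and the step I expect to be the main obstacle, is reabsorbing the extra antipode factors of $w$ that land inside the character. After the expansions, the argument of $\chi_\Gamma$ picks up a factor of the shape $S(w^{(3)})\,w^{(2)}$ alongside $S(x^{(1)})\,y^{(2)}\,z^{(2)}$; using the cyclicity (cocommutativity) of the character, $\chi_\Gamma(ab)=\chi_\Gamma(ba)$, I can move it to the front and collapse it through the identity $\sum S(u^{(2)})u^{(1)}=\varepsilon(u)1_\cA$ (a consequence of $S^2=\id$). The residual counit then merges the leftover $w$-components back into a single $\Delta(w)=\sum w^{(1)}\otimes w^{(2)}$, producing precisely $P_w\Bf_j^{\Gamma}|x,y,z\rangle$; hence $[\Bf_j^{\Gamma},P_w]=0$ and therefore $[F_g,\Bf_j]=0$. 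An alternative route, parallel to the proof of Theorem~\ref{thm:1dHopfLattice}, is to propagate the $\XL$ factors through the odd- and even-vertex tensors using the local moves \eqref{eq:TNX1}--\eqref{eq:TNX4}; the antipode bookkeeping and the cyclicity of $\chi_\Gamma$ remain the decisive ingredients, so the character property is again where the argument genuinely turns.
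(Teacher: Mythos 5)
Your proposal is correct and follows essentially the same route as the paper: the $\Av_i$ terms are handled by the commutativity of left and right regular actions at the odd vertex, and the $\Bf_j^{\Gamma}$ terms by an explicit Sweedler computation whose decisive ingredients are the cyclicity $\chi_\Gamma(ab)=\chi_\Gamma(ba)$ and the identity $\sum_{(g)}S(g^{(2)})g^{(1)}=\varepsilon(g)1_\cA$ from $S^2=\id$. The only (harmless) difference is that you verify the commutator directly on product states after isolating the two relevant comultiplication components, whereas the paper runs the same manipulation on the entangled tensor-network states $|\Psi(x_1,\dots,x_{2L})\rangle$.
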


\begin{proof}
Notice that $\Av$, when acting on Hopf qudit for odd vertices, is a action from the left-hand side, while  $F_g$ acts from the right-hand side, thus $[F_g,\Av]=0$.
To prove $[\Bf,F_g]=0$, consider Hopf tensor network in Eq.~\eqref{eq:TENSORPROOF}, applying $F_g$, we replace $x^{(3)}_i$ with $x^{(3)}S( g^{(\frac{i+1}{2})})$ and $x^{(3)}_{i+2}$ with $x^{(3)}_{i+2} S( g^{(\frac{i+1}{2} +1)})$. Then applying $B^{\Gamma}_{i+2}$ gives the factor
\begin{equation}
\begin{aligned}
    & \Tr' \Gamma[S^2(g^{(\frac{i+1}{2} +1)})S(x^{(4)}_i) \cdots  x_{i+2}^{(5)} S(g^{(\frac{i+1}{2} +2)}) ]\\
    = & \Tr'\Gamma[ S(g^{(\frac{i+1}{2} +2)}) g^{(\frac{i+1}{2} +1)} S(x^{(4)}_i) \cdots  x_{i+2}^{(5)}] \\
    =&\varepsilon(g^{(\frac{i+1}{2} +1)})\Tr'\Gamma[  S(x^{(4)}_i) \cdots  x_{i+2}^{(5)}].
\end{aligned}  
\end{equation}
This implies that $[\Bf_{i+1}^{\Gamma},F_g]=0$, leading us to the required result.
\end{proof}

Consider an open cluster chain with two odd vertices at the ends \footnote{For other cases of open chains that may have one odd vertex and one even vertex, or two even vertices at the ends, the analysis is similar.}, when acting $F_g$ on the corresponding Hopf cluster state, it is equivalent to acting $\sum_{(g)} \tilde{\XL}_{g^{(1)}} \otimes \tilde{\XR}_{g^{(2)}}$ on virtual legs at two ends. This can be derived using Eq.~\eqref{eq:TNX1}. Represented as Hopf tensor network, we obtain (the $\sum_{(g)}$ for comultiplication has be omitted):
\begin{align}
     \sum_{(g)}  \begin{aligned}
        \begin{tikzpicture}  
        \draw[line width=.6pt,black] (-0.8,0) -- (6.8,0);
        \draw[line width=.6pt,black] (0,0) -- (0.8,0);
        \draw[line width=.6pt,black] (0,0) -- (-0.8,0);
           \draw[line width=.6pt,blue] (0,0) -- (0,1.5);
        \draw[line width=.6pt,blue] (6,0) -- (6,1.5);
         \draw[line width=.6pt,blue] (1,0) -- (1,1.5);
        \draw[line width=.6pt,blue] (2,0) -- (2,1.5);
       \draw[line width=.6pt,blue] (4,0) -- (4,1.5);
        \draw[line width=.6pt,blue] (5,0) -- (5,1.5);
         \draw[black,fill=red] (0,0) circle (0.2); 
         \draw[black,fill=white] (-0.3,0.5) rectangle ++ (0.8,0.6); 
             \draw[black,fill=white] (1.7,0.5) rectangle ++ (0.8,0.6); 
                 \draw[black,fill=white] (3.5,0.5) rectangle ++ (1.2,0.6); 
                     \draw[black,fill=white] (5.7,0.5) rectangle ++ (0.8,0.6); 
        \draw[fill=teal]  (0.8,-0.2) rectangle ++(.4,.4);
                \draw[line width=0pt, white,fill=white]  (2.7,-0.2) rectangle ++(.6,.4);
                        \draw[fill=teal]  (4.8,-0.2) rectangle ++(.4,.4);
                             \draw[black,fill=yellow] (-1,-.2) rectangle ++ (0.4,0.4); 
                                \draw[black,fill=yellow] (6.6,-.2) rectangle ++ (0.4,0.4); 
        \draw[black,fill=red] (2,0) circle (0.2);
                   \draw[black,fill=red] (4,0) circle (0.2); 
                        \draw[black,fill=red] (6,0) circle (0.2);
        \node[ line width=0.6pt, dashed, draw opacity=0.5] (a) at (0,0){\small {\color{white}$\lambda$}};
            \node[ line width=0.6pt, dashed, draw opacity=0.5] (a) at (2,0){\small {\color{white}$\lambda$}};
                \node[ line width=0.6pt, dashed, draw opacity=0.5] (a) at (4,0){\small {\color{white}$\lambda$}};
                    \node[ line width=0.6pt, dashed, draw opacity=0.5] (a) at (6,0){\small {\color{white}$\lambda$}};
        \node[ line width=0.6pt, dashed, draw opacity=0.5] (a) at (-0.7,0){\small {\color{black} $x$ } };
        \node[ line width=0.6pt, dashed, draw opacity=0.5] (a) at (6.9,0){\small {\color{black} $y$ } };
        \node[ line width=0.6pt, dashed, draw opacity=0.5] (a) at (1.05,0){\small {\color{white} $\mu$ } };
        \node[ line width=0.6pt, dashed, draw opacity=0.5] (a) at (3.05,0){\small {\color{black} $\cdots$ } };
        \node[ line width=0.6pt, dashed, draw opacity=0.5] (a) at (3.05,0.8){\small {\color{black} $\cdots$ } };
        \node[ line width=0.6pt, dashed, draw opacity=0.5] (a) at (5.05,0){\small {\color{white} $\mu$ } };
         \node[ line width=0.6pt, dashed, draw opacity=0.5] (a) at (0.15,0.8){\small {\color{black} $\XL_{g^{\cone}}$ } };
        \node[ line width=0.6pt, dashed, draw opacity=0.5] (a) at (2.15,0.8){\small {\color{black} $\XL_{g^{\ctwo}}$ } };
        \node[ line width=0.6pt, dashed, draw opacity=0.5] (a) at (4.15,0.8){\small {\color{black} $\XL_{g^{(n-1)}}$ } };
        \node[ line width=0.6pt, dashed, draw opacity=0.5] (a) at (6.15,0.8){\small {\color{black} $\XL_{g^{(n)}}$ } };
        \end{tikzpicture}
    \end{aligned}\nonumber\\
      = \sum_{(g)} \begin{aligned}
        \begin{tikzpicture}  
        \draw[line width=.6pt,black] (-2,0) -- (7.8,0);
        \draw[line width=.6pt,black] (0,0) -- (0.8,0);
        \draw[line width=.6pt,black] (0,0) -- (-0.8,0);
           \draw[line width=.6pt,blue] (0,0) -- (0,1.);
        \draw[line width=.6pt,blue] (6,0) -- (6,1.);
         \draw[line width=.6pt,blue] (1,0) -- (1,1.);
        \draw[line width=.6pt,blue] (2,0) -- (2,1.);
       \draw[line width=.6pt,blue] (4,0) -- (4,1.);
        \draw[line width=.6pt,blue] (5,0) -- (5,1.);
         \draw[black,fill=red] (0,0) circle (0.2); 
        \draw[fill=teal]  (0.8,-0.2) rectangle ++(.4,.4);
                \draw[line width=0pt, white,fill=white]  (2.7,-0.2) rectangle ++(.6,.4);
                        \draw[fill=teal]  (4.8,-0.2) rectangle ++(.4,.4);
                            \draw[black,fill=white] (-1.3,-.2) rectangle ++ (0.8,0.8); 
                             \draw[black,fill=white] (6.6,-.3) rectangle ++ (0.8,0.8);      
                             \draw[black,fill=yellow] (-2,-.2) rectangle ++ (0.4,0.4); 
                                \draw[black,fill=yellow] (7.6,-.2) rectangle ++ (0.4,0.4); 
        \draw[black,fill=red] (2,0) circle (0.2);
                   \draw[black,fill=red] (4,0) circle (0.2); 
                        \draw[black,fill=red] (6,0) circle (0.2);
        \node[ line width=0.6pt, dashed, draw opacity=0.5] (a) at (0,0){\small {\color{white}$\lambda$}};
            \node[ line width=0.6pt, dashed, draw opacity=0.5] (a) at (2,0){\small {\color{white}$\lambda$}};
                \node[ line width=0.6pt, dashed, draw opacity=0.5] (a) at (4,0){\small {\color{white}$\lambda$}};
                    \node[ line width=0.6pt, dashed, draw opacity=0.5] (a) at (6,0){\small {\color{white}$\lambda$}};
        \node[ line width=0.6pt, dashed, draw opacity=0.5] (a) at (-0.8,0.13){\small {\color{black} $\tilde{\XL}_{g^{\cone}}$ } };
        \node[ line width=0.6pt, dashed, draw opacity=0.5] (a) at (7.1,0.12){\small {\color{black} $\tilde{\XR}_{g^{\ctwo}}$ } };
           \node[ line width=0.6pt, dashed, draw opacity=0.5] (a) at (-1.8,0){\small {\color{black} $x$ } };
        \node[ line width=0.6pt, dashed, draw opacity=0.5] (a) at (7.9,0){\small {\color{black} $y$ } };
        \node[ line width=0.6pt, dashed, draw opacity=0.5] (a) at (1.05,0){\small {\color{white} $\mu$ } };
        \node[ line width=0.6pt, dashed, draw opacity=0.5] (a) at (3.05,0){\small {\color{black} $\cdots$ } };
        \node[ line width=0.6pt, dashed, draw opacity=0.5] (a) at (3.05,0.8){\small {\color{black} $\cdots$ } };
        \node[ line width=0.6pt, dashed, draw opacity=0.5] (a) at (5.05,0){\small {\color{white} $\mu$ } };
        \end{tikzpicture}
    \end{aligned}
\end{align}
where $x,y\in \cA$ are boundary elements that we choose to take the inner product with dangling legs at two ends of the chain.

We can also introduce a non-invertible global $\Rep(\cA)^{\rm rev}$-symmetry that acts on all even vertices:
\begin{equation}\label{eq:Zstring}
    D_{\Gamma}=\Tr' [\tilde{Z}_{\Gamma}(2)\otimes \tilde{Z}_{\Gamma}(4)\otimes \cdots \otimes \tilde{Z}_{\Gamma}(2L)].
\end{equation}
They satisfy a fusion rule
\begin{equation}
    D_{\Gamma}D_{\Phi}=D_{\Phi\otimes \Gamma},
\end{equation}
which means that they form a $\Rep(\cA)^{\rm rev}$ symmetry.
Note that this differs from the definition in Ref.~\cite{fechisin2023noninvertible} due to the cocommutative nature of the group algebra $\Delta(g) = g \otimes g$ for $g \in G$. Thus, they use $Z_{\Gamma}$ to construct the global symmetry. However, for a general Hopf algebra, this is not the case, since
\begin{equation}
    Z_{\Gamma}Z_{\Phi}|x\rangle =\sum_{(x)} x^{\cone} \otimes \Gamma(x^{\ctwo}) \otimes \Phi(x^{\cthree})\neq \sum_{(x)} x^{\cone} \otimes \Phi(x^{\ctwo}) \otimes \Gamma(x^{\cthree})=  Z_{\Phi}Z_{\Gamma}|x\rangle.
\end{equation}
The operator $\Tr' [Z_{\Gamma}(2) \otimes Z_{\Gamma}(4) \otimes \cdots \otimes Z_{\Gamma}(2L)]$ does not commute with $\Bf$, and therefore is no longer a symmetry operator.

\begin{proposition}\label{prop:Rep-symm}
      The Hopf cluster state Hamiltonian $ H_{\rm cluster}$ in Eq.~\eqref{eq:Hamiltonian1d} commutes with $D_{\Gamma}$ for all $\Gamma\in \Rep(\cA)^{\rm rev}$, thus it has a $\Rep(\cA)^{\rm rev}$-symmetry.
\end{proposition}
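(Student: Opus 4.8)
The plan is to reduce the claim $[H_{\rm cluster},D_\Gamma]=0$ to the two families of local commutators $[\Av_i,D_\Gamma]=0$ (odd $i$) and $[\Bf_j,D_\Gamma]=0$ (even $j$), since $D_\Gamma$ is a fixed operator and $H_{\rm cluster}$ is a sum of the $\Av_i$ and $\Bf_j$. The operator $D_\Gamma$ is supported on the even vertices only, $\Av_i$ on $\{i-1,i,i+1\}$ and $\Bf_j$ on $\{j-1,j,j+1\}$; so for odd $i$ the shared sites are the even neighbours $i\pm1$, while for even $j$ the only shared site is $j$ itself. Combined with the fusion rule $D_\Gamma D_\Phi=D_{\Phi\otimes\Gamma}$ already recorded before the statement, establishing these two commutators proves the $\Rep(\cA)^{\rm rev}$ symmetry.

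The commutator with $\Bf_j$ is the easy half. Since $\Bf_j$ and $D_\Gamma$ share only the even site $j$, it suffices to check that the operators acting there commute on a single Hopf qudit: $\Bf_j$ contributes $Z_\Gamma(j)=(\id\otimes\Gamma)\circ\Delta$ and $D_\Phi$ contributes $\tilde Z_\Phi(j)=(\Phi\otimes\id)\circ\Delta$. A one-line Sweedler computation using coassociativity gives $Z_\Gamma\tilde Z_\Phi|h\rangle=\sum_{(h)}\Phi(h^{\cone})\otimes|h^{\ctwo}\rangle\otimes\Gamma(h^{\cthree})=\tilde Z_\Phi Z_\Gamma|h\rangle$; moreover the matrix factors live in the distinct auxiliary spaces $V_\Gamma$ and $V_\Phi$, which are traced independently, so the two partial traces do not interfere. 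Hence $[\Bf_j^\Gamma,D_\Phi]=0$ for every $\Gamma$, and therefore $[\Bf_j,D_\Phi]=0$.

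The commutator with $\Av_i$ is the substantive part, and I would prove it by the same Hopf tensor-network bookkeeping used for $[\Av_i,\Bf_j^\Gamma]=0$ in Theorem~\ref{thm:1dHopfLattice}, applied to the two even legs $i\pm1$ of the network in Eq.~\eqref{eq:TENSORPROOF}. The mechanism to exhibit is the following: $\Av_i$ replaces the state at $i-1$ by $x_{i-1}S(\lambda^{\cone})$ (from $\XL_{\lambda^{\cone}}$) and the state at $i+1$ by $\lambda^{\ctwo}x_{i+1}$ (from $\XR_{\lambda^{\ctwo}}$), so the two $\Phi$-matrices that $D_\Gamma$ inserts at these \emph{adjacent} even vertices acquire between them the factor $\Phi(S(\lambda^{\cone}))\Phi(\lambda^{\ctwo})=\Phi(S(\lambda^{\cone})\lambda^{\ctwo})$, with the remaining component $\lambda^{\cthree}$ landing on the odd site $i$. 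In the cocommutative (group) case $\Delta(h)=h\otimes h$ means $\tilde Z_\Gamma$ does not split the state, the insertion collapses to the identity by the antipode axiom $\sum_{(\lambda)}S(\lambda^{\cone})\lambda^{\ctwo}\otimes\lambda^{\cthree}=1_{\cA}\otimes\lambda$, and the commutator is immediate; for a genuinely non-cocommutative $\cA$ the obstruction is that $\tilde Z_\Gamma$ simultaneously splits and reshuffles the underlying states, so this naive antipode cancellation no longer closes on its own.

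The key will be to feed in the permutation/cyclicity property of the Haar integral, Eq.~\eqref{eq:lambdaPerm}, together with $\lambda x=\varepsilon(x)\lambda=x\lambda$ and $S(\lambda)=\lambda$, exactly as in the theorem: rotating the comultiplication components $\lambda^{\cone}\otimes\cdots\otimes\lambda^{(k)}$ cyclically lets one bring the antipode-carrying component next to the component it must annihilate, after which the antipode axiom turns the inserted $\Phi$-factors at $i\pm1$ into the identity on $V_\Gamma$ while leaving the two output even-states and the trace over the rest of the $\Phi$-string untouched. Matching the resulting states and $\Phi$-string for the two orderings $D_\Gamma\Av_i$ and $\Av_i D_\Gamma$ then yields $[\Av_i,D_\Gamma]=0$. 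I expect the bulk of the work, and the main obstacle, to be precisely this comultiplication bookkeeping: tracking which component of $\lambda$ is sent into $V_\Gamma$, which becomes the new state, and which is the antipode partner, so that the cyclic identity Eq.~\eqref{eq:lambdaPerm} applies cleanly. As in Theorem~\ref{thm:1dHopfLattice}, the algebra seems to offer no shortcut past this step.
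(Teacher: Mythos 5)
Your proposal is correct and follows essentially the same route as the paper: the $\Bf_j$ commutator via $[Z_\Phi,\tilde Z_\Gamma]=0$ on the shared even site, and the $\Av_i$ commutator via a Sweedler computation in which the antipode axiom collapses $S(\lambda^{(2)})\lambda^{(3)}$ between the two adjacent even-site $\Gamma$-insertions. The only minor over-caution is your reliance on the cyclicity of $\lambda$ (Eq.~\eqref{eq:lambdaPerm}); the paper's computation shows the counit and antipode axioms alone suffice, since the antipode-carrying component and its annihilation partner are already adjacent in the trace string.
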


\begin{proof}
    The commutativity relation $[\Bf^{\Phi}_{2k},D_{\Gamma}]$ is clear from the definition. For odd vertex of $\Bf^{\Phi}=Z_{\Phi}^{\ddagger}(2k-1)\otimes Z_{\Phi}(2k)\otimes Z_{\Phi}(2k+1)$, the corresponding operator is $Z_{\Phi}(2k)$. We have $[Z_{\Phi}(2k),\tilde{Z}_{\Gamma}(2k)]=0$, which implies $[\Bf^{\Phi}(2k),D_{\Gamma}]=0$ and further $[\Bf(2k),D_{\Gamma}]=0$ .
    
    Consider an odd vertex, $\Av_{2k+1} = \XL_{\lambda^{\cone}} \otimes \XR_{\lambda^{\cthree}} \otimes \XR_{\lambda^{\ctwo}}$. For the product state $|\cdots, x_{2k}, x_{2k+1}, x_{2k+2}, \cdots \rangle$, we have
    \begin{equation}
    \begin{aligned}
        & D^{\Gamma}     \Av_{2k+1}  |\cdots,x_{2k},x_{2k+1},x_{2k+2},\cdots \rangle \\
         =&\Tr'\Gamma(\cdots x_{2k}^{\cone}S(\lambda^{\ctwo}) \lambda^{\cthree}x_{2k}^{\cone}\cdots)  |\cdots,x_{2k}^{\ctwo}S(\lambda^{\cone}),\lambda^{(5)}x_{2k+1},\lambda^{(4)}x_{2k+2}^{\ctwo},\cdots \rangle \\
         =&\varepsilon(\lambda^{(2)})\Tr'\Gamma(\cdots x_{2k}^{\cone}x_{2k}^{\cone}\cdots)  |\cdots,x_{2k}^{\ctwo}S(\lambda^{\cone}),\lambda^{(4)}x_{2k+1},\lambda^{(3)}x_{2k+2}^{\ctwo},\cdots \rangle\\
         =&\Tr'\Gamma(\cdots x_{2k}^{\cone}x_{2k}^{\cone}\cdots)  |\cdots,x_{2k}^{\ctwo}S(\lambda^{\cone}),\lambda^{(3)}x_{2k+1},\lambda^{(2)}x_{2k+2}^{\ctwo},\cdots \rangle\\
         =& \Av_{2k+1}  D^{\Gamma}      |\cdots,x_{2k},x_{2k+1},x_{2k+2},\cdots \rangle 
    \end{aligned}.
    \end{equation}
This completes the proof.
\end{proof}

Similar to $\cA$-symmetry, the bulk symmetry operators can also be pushed to boundaries. A Hopf Z-string operator (Eq.~\eqref{eq:Zstring}) can be effectively regarded as two boundary Z-operators $\Tr' \tilde{Z}^{\ddagger}_{\Gamma} \otimes Z_{\Gamma}^{\ddagger}$.
This can be derived based on Eqs.~\eqref{eq:TNZ3}-\eqref{eq:TNZ2}, if we take open boundary condition for Z-string operator($\alpha$ and $\beta$ as left and right indices), represented as Hopf tensor network, we have 
\begin{align}
    \begin{aligned}
        \begin{tikzpicture}  
        \draw[line width=.6pt,black] (-0.8,0) -- (6.8,0);
        \draw[line width=.6pt,black] (0,0) -- (0.8,0);
        \draw[line width=.6pt,black] (0,0) -- (-0.8,0);
           \draw[line width=.6pt,blue] (0,0) -- (0,1.5);
        \draw[line width=.6pt,blue] (6,0) -- (6,1.5);
         \draw[line width=.6pt,blue] (1,0) -- (1,1.5);
        \draw[line width=.6pt,blue] (2,0) -- (2,1.5);
        \draw[line width=.6pt,cyan] (0.4,0.8) -- (1.96,.8);
        \draw[line width=.6pt,cyan] (2.7,0.8) -- (2.05,.8);
        \draw[line width=.6pt,cyan] (3.3,0.8) -- (3.95,.8);
        \draw[line width=.6pt,cyan] (5.6,0.8) -- (4.05,.8);
           \draw[cyan,fill=cyan] (0.4,0.8) circle (0.1);
                   \node[ line width=0.6pt, dashed, draw opacity=0.5] (a) at (.43,0.8){\tiny {\color{white} $\alpha$ } };
                            \draw[cyan,fill=cyan] (5.6,0.8) circle (0.1);
                   \node[ line width=0.6pt, dashed, draw opacity=0.5] (a) at (5.63,0.8){\tiny {\color{white} $\beta$ } };
       \draw[line width=.6pt,blue] (4,0) -- (4,1.5);
        \draw[line width=.6pt,blue] (5,0) -- (5,1.5);
         \draw[black,fill=red] (0,0) circle (0.2); 
         \draw[black,fill=white] (.7,0.5) rectangle ++ (0.6,0.6); 
             \draw[black,fill=white] (4.7,0.5) rectangle ++ (0.6,0.6); 
        \draw[fill=teal]  (0.8,-0.2) rectangle ++(.4,.4);
                \draw[line width=0pt, white,fill=white]  (2.7,-0.2) rectangle ++(.6,.4);
                        \draw[fill=teal]  (4.8,-0.2) rectangle ++(.4,.4);
                             \draw[black,fill=yellow] (-1,-.2) rectangle ++ (0.4,0.4); 
                                \draw[black,fill=yellow] (6.6,-.2) rectangle ++ (0.4,0.4); 
        \draw[black,fill=red] (2,0) circle (0.2);
                   \draw[black,fill=red] (4,0) circle (0.2); 
                        \draw[black,fill=red] (6,0) circle (0.2);
        \node[ line width=0.6pt, dashed, draw opacity=0.5] (a) at (0,0){\small {\color{white}$\lambda$}};
            \node[ line width=0.6pt, dashed, draw opacity=0.5] (a) at (2,0){\small {\color{white}$\lambda$}};
                \node[ line width=0.6pt, dashed, draw opacity=0.5] (a) at (4,0){\small {\color{white}$\lambda$}};
                    \node[ line width=0.6pt, dashed, draw opacity=0.5] (a) at (6,0){\small {\color{white}$\lambda$}};
        \node[ line width=0.6pt, dashed, draw opacity=0.5] (a) at (-0.7,0){\small {\color{black} $x$ } };
        \node[ line width=0.6pt, dashed, draw opacity=0.5] (a) at (6.9,0){\small {\color{black} $y$ } };
        \node[ line width=0.6pt, dashed, draw opacity=0.5] (a) at (1.05,0){\small {\color{white} $\mu$ } };
        \node[ line width=0.6pt, dashed, draw opacity=0.5] (a) at (3.05,0){\small {\color{black} $\cdots$ } };
        \node[ line width=0.6pt, dashed, draw opacity=0.5] (a) at (3.05,0.8){\small {\color{cyan} $\cdots$ } };
        \node[ line width=0.6pt, dashed, draw opacity=0.5] (a) at (5.05,0){\small {\color{white} $\mu$ } };
         \node[ line width=0.6pt, dashed, draw opacity=0.5] (a) at (1,0.8){\small {\color{black} $\tilde{Z}_{\Gamma}$} };
          \node[ line width=0.6pt, dashed, draw opacity=0.5] (a) at (5,0.8){\small {\color{black} $\tilde{Z}_{\Gamma}$} };
        \end{tikzpicture}
    \end{aligned}\nonumber\\
      =  \begin{aligned}
        \begin{tikzpicture}  
        \draw[line width=.6pt,black] (-2,0) -- (7.8,0);
        \draw[line width=.6pt,black] (0,0) -- (0.8,0);
        \draw[line width=.6pt,black] (0,0) -- (-0.8,0);
           \draw[line width=.6pt,blue] (0,0) -- (0,1.);
        \draw[line width=.6pt,blue] (6,0) -- (6,1.);
         \draw[line width=.6pt,blue] (1,0) -- (1,1.);
        \draw[line width=.6pt,blue] (2,0) -- (2,1.);
       \draw[line width=.6pt,blue] (4,0) -- (4,1.);
        \draw[line width=.6pt,blue] (5,0) -- (5,1.);
        \draw[line width=.6pt,cyan] (-1.7,0.4) -- (-.05,.4);
        \draw[line width=.6pt,cyan] (.05,0.4) -- (0.95,.4);
         \draw[line width=.6pt,cyan] (1.05,0.4) -- (1.95,.4);
                 \draw[line width=.6pt,cyan] (2.05,0.4) -- (2.7,.4);
                      \draw[line width=.6pt,cyan] (3.3,0.4) -- (3.95,.4);
                          \draw[line width=.6pt,cyan] (4.05,0.4) -- (4.95,.4);
                              \draw[line width=.6pt,cyan] (5.05,0.4) -- (5.95,.4);
                                  \draw[line width=.6pt,cyan] (6.05,0.4) -- (6.95,.4);
                                      \draw[line width=.6pt,cyan] (7.05,0.4) -- (7.6,.4);
         \draw[black,fill=red] (0,0) circle (0.2); 
        \draw[fill=teal]  (0.8,-0.2) rectangle ++(.4,.4);
                \draw[line width=0pt, white,fill=white]  (2.7,-0.2) rectangle ++(.6,.4);
                        \draw[fill=teal]  (4.8,-0.2) rectangle ++(.4,.4);
                            \draw[black,fill=white] (-1.3,-.2) rectangle ++ (0.6,0.8); 
                             \draw[black,fill=white] (6.6,-.2) rectangle ++ (0.6,0.8);      
                             \draw[black,fill=yellow] (-2,-.2) rectangle ++ (0.4,0.4); 
                                \draw[black,fill=yellow] (7.6,-.2) rectangle ++ (0.4,0.4);
               \draw[cyan,fill=cyan] (-1.7,0.4) circle (0.1);
                   \node[ line width=0.6pt, dashed, draw opacity=0.5] (a) at (-1.68,0.4){\tiny {\color{white} $\alpha$ } };
                            \draw[cyan,fill=cyan] (7.6,0.4) circle (0.1);
                   \node[ line width=0.6pt, dashed, draw opacity=0.5] (a) at (7.62,0.4){\tiny {\color{white} $\beta$ } };                   
        \draw[black,fill=red] (2,0) circle (0.2);
                   \draw[black,fill=red] (4,0) circle (0.2); 
                        \draw[black,fill=red] (6,0) circle (0.2);
        \node[ line width=0.6pt, dashed, draw opacity=0.5] (a) at (0,0){\small {\color{white}$\lambda$}};
            \node[ line width=0.6pt, dashed, draw opacity=0.5] (a) at (2,0){\small {\color{white}$\lambda$}};
                \node[ line width=0.6pt, dashed, draw opacity=0.5] (a) at (4,0){\small {\color{white}$\lambda$}};
                    \node[ line width=0.6pt, dashed, draw opacity=0.5] (a) at (6,0){\small {\color{white}$\lambda$}};
        \node[ line width=0.6pt, dashed, draw opacity=0.5] (a) at (-0.9,0.1){\small {\color{black} ${Z}^{\ddagger}_{\Gamma}$ } };
        \node[ line width=0.6pt, dashed, draw opacity=0.5] (a) at (6.95,0.1){\small {\color{black} ${Z}_{\Gamma}^{\ddagger}$ } };
           \node[ line width=0.6pt, dashed, draw opacity=0.5] (a) at (-1.8,0){\small {\color{black} $x$ } };
        \node[ line width=0.6pt, dashed, draw opacity=0.5] (a) at (7.9,0){\small {\color{black} $y$ } };
        \node[ line width=0.6pt, dashed, draw opacity=0.5] (a) at (1.05,0){\small {\color{white} $\mu$ } };
        \node[ line width=0.6pt, dashed, draw opacity=0.5] (a) at (3.05,0){\small {\color{black} $\cdots$ } };
        \node[ line width=0.6pt, dashed, draw opacity=0.5] (a) at (3.05,0.4){\small {\color{cyan} $\cdots$ } };
        \node[ line width=0.6pt, dashed, draw opacity=0.5] (a) at (5.05,0){\small {\color{white} $\mu$ } };
        \end{tikzpicture}
    \end{aligned}
\end{align}
where we use cyan dot to represent the free indices in representation space.

Using the embedding of $\Rep(\cA)^{\mathrm{rev}}$ into $\bar{\cA}^{\mathrm{op}}$, we see that the $Z$-string operator $D_{\Gamma}$ in Eq.~\eqref{eq:Zstring} can be regarded as taking the comultiplication $\Delta_{L}(\chi_{\Gamma})$ (where $L$ is the number of even vertices) of $\chi_{\Gamma}$ and assigning the component $\chi_{\Gamma}^{(i)}$ to the $i$-th even vertex. This implies that the categorical symmetry $\Rep(\cA)^{\mathrm{rev}}$ is a special case of Hopf symmetry, and this is a general phenomenon when we only consider the fusion structure of the category.

\section{Correspondence between 1d Hopf cluster state model and quasi-1d Hopf quantum double model}
\label{sec:QD}

As mentioned before, the 1d cluster state model is intricately connected to Kitaev's quantum double model. In this section, we establish this connection by mapping a 1d cluster state lattice Hamiltonian into a quasi-1d quantum double model for Hopf qudits. This correspondence can assist us in providing a weak Hopf generalization of the 1d cluster state model based on our previous work \cite{Jia2023weak}.

\vspace{1em}
\emph{Qubit case.} --- 
It's worthwhile to start by discussing the qubit case, where the quantum double model is commonly known as the toric code model \cite{Kitaev2003}.
For quantum double lattice, the vertex operators and face operators are of the forms:
\begin{equation}
    A_v=\prod_{j\in \partial v}X_j,\quad B_f=\prod_{j\in \partial f} Z_j,
\end{equation}
where $v,f$ are vertices and faces of the quantum double lattice.
There are two kinds of formalisms for the quantum double lattice. In edge-lattice one, as described in Ref.~\cite{Kitaev2003}, the qudit is placed on the edges, allowing discussion of vertices and faces. The other formalism is the chessboard representation, commonly used in the quantum error-correction community. 
As shown in Fig.~\ref{fig:QDcluster}, the edge-lattice one draw as a gray dotted lattice, where the vertices are represented as star symbols.
The chessboard representation is given by light green and light orange triangles.

The lattice Hamiltonian reads 
\begin{equation}
    H=-\sum_v A_v-\sum_f B_f.
\end{equation}
Consider a quasi-1d quantum double lattice, composed of vertex and face plaquettes arranged in a sequence, represented using a chessboard layout as shown in Fig.~\ref{fig:QDRibbon}.
In the case of periodic boundary conditions (Fig.~\ref{fig:QDRibbon}), when viewing the lattice as a 2d disk, it becomes apparent that the vertex plaquette operator surrounded by all face plaquettes can be expressed as a product of all vertex plaquette operators. Consequently, we can eliminate this particular vertex plaquette from the lattice. More precisely, consider a bicycle-wheel-like quantum double lattice on a disk. The boundary vertex operator is given by \( A_v = X_{e_1}X_{e_2}X_{s_3} \), where the \( X_{e_1} \) and \( X_{e_2} \) operators act on the "tire" edges, and the \( X_{s_3} \) operator acts on the "spoke" edge. In contrast, the bulk vertex operator is \( A_u = X_{s_1}X_{s_2} \cdots X_{s_n} \), where all \( X_{s_i} \) operators act on the "spoke" edges. We observe that \( \prod_{v: \text{tire}} A_v = A_u \). Here, for each "tire" edge, the \( X \) operator acts twice, effectively becoming the identity. This implies that there is no need to impose an additional \( A_u = 1 \) condition to obtain the ground state. Furthermore, note that removing the bulk vertex operator introduces a rough boundary at the center of the disk \cite{jia2022electricmagnetic}.

\begin{figure}[t]
    \centering
    \includegraphics[width=10cm]{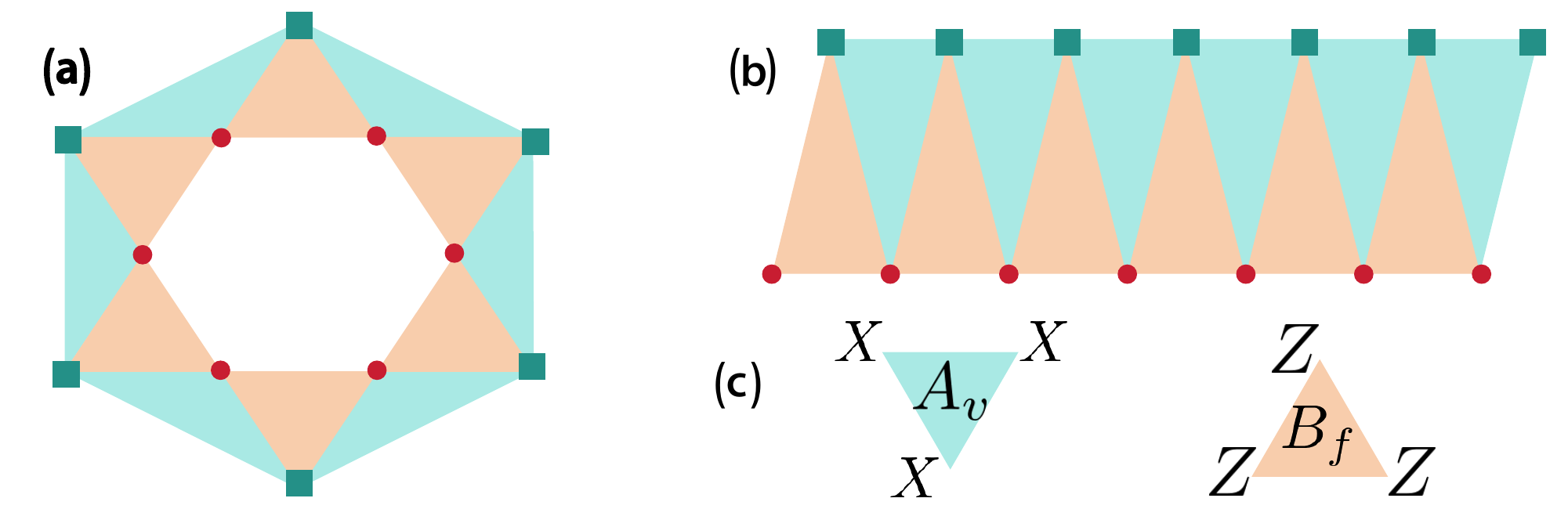}
    \caption{Illustration depicting the quasi-1d toric code model represented in a chess-board layout: (a) periodic boundary conditions and (b) open boundary conditions. The vertex plaquettes are depicted as light green triangles, while the face plaquettes are represented as light orange triangles. The red circles represent odd vertices, while the teal squares denote even vertices in the original Hopf cluster lattice.}
    \label{fig:QDRibbon}
\end{figure}

Via folding the 1d cluster state model into a zigzag form (Fig.~\ref{fig:QDcluster}), we see that the cluster state model in Eq.~\eqref{eq:ClusterHamiltonianCSS} is completely the same with the quasi-1d toric code model. To summarize, we have

\begin{proposition}
    The 1d  qubit cluster state model is equivalent to a quasi-1d toric code model.
\end{proposition}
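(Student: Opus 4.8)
The plan is to make the geometric folding sketched in Fig.~\ref{fig:QDcluster} precise and then verify that it carries the cluster stabilizers onto the quantum double stabilizers verbatim. First I would fix the zigzag embedding: place the $2L$ qubits of the 1d chain on the edges of a quasi-1d (bicycle-wheel) square lattice so that each odd vertex $i$ of the chain becomes a \emph{vertex} (star) $v$ of the lattice and each even vertex $j$ becomes a \emph{face} (plaquette) $f$. Concretely, after folding the linear chain into a zigzag, the three consecutive sites $i-1,i,i+1$ surrounding an odd vertex become exactly the three edges incident to a star $v$, while the three sites $j-1,j,j+1$ surrounding an even vertex become the three edges bounding a plaquette $f$. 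I would record the adjacency dictionary explicitly, using the orientation conventions fixed in the remark preceding the theorem (virtual edges directed downward and rightward, boundaries taken clockwise).

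With the embedding in hand, the second step is the stabilizer matching. Specializing $\Av_i,\Bf_j$ to the group algebra $\Cbb[\Zbb_2]$, the $X$-type term at odd site $i$ in Eq.~\eqref{eq:ClusterHamiltonianCSS}, namely $X_{i-1}X_iX_{i+1}$, is then literally $A_v=\prod_{e\in\partial v}X_e$ for the corresponding star; likewise the $Z$-type term $Z_{j-1}Z_jZ_{j+1}$ at even site $j$ is $B_f=\prod_{e\in\partial f}Z_e$ for the corresponding plaquette. Since the folding is a bijection both on qubits and on the two families of stabilizers, the two Hamiltonians are identified term by term, hence share the same stabilizer group, ground space, and spectrum.

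The third step handles the one global discrepancy, for which I would invoke the bicycle-wheel argument already spelled out above (see Fig.~\ref{fig:QDRibbon}). On the quasi-1d lattice with periodic boundary conditions viewed as a disk, the vertex operators are not independent: the product $\prod_{v:\,\text{tire}}A_v$ over the boundary (``tire'') stars collapses to the single central (``spoke'') vertex operator $A_u$, because each tire edge is acted on twice and $X^2=I$. Thus the central vertex plaquette is redundant and may be removed without changing the ground space, which is equivalent to opening a rough boundary at the center of the disk; for the open chain the analogous count gives one rough and one smooth boundary. After this removal the remaining $A_v$ and $B_f$ are precisely the folded images of the $\Av_i$ and $\Bf_j$, so Eq.~\eqref{eq:ClusterHamiltonianCSS} coincides with $H=-\sum_v A_v-\sum_f B_f$, establishing the equivalence.

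The main obstacle I expect is combinatorial and geometric rather than algebraic: one must check that the folding uniformly realizes the incidence relation ``three cluster sites $\leftrightarrow$ three edges of a star/plaquette'' all along the chain, including at the seams of the zigzag and at the endpoints, and one must track the orientation bookkeeping so that the placements of $X$ and $Z$ in $A_v,B_f$ agree with those in $\Av_i,\Bf_j$. Once the adjacency dictionary is fixed consistently, the remaining identifications are immediate, since $X^2=Z^2=I$ trivializes all redundancies and makes the two stabilizer groups literally equal.
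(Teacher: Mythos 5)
Your proposal is correct and follows essentially the same route as the paper: the zigzag folding that sends odd cluster sites to stars and even sites to (triangular) plaquettes, the term-by-term identification $X_{i-1}X_iX_{i+1}\leftrightarrow A_v$ and $Z_{j-1}Z_jZ_{j+1}\leftrightarrow B_f$, and the bicycle-wheel redundancy $\prod_{v:\,\text{tire}}A_v=A_u$ used to discard the central vertex operator. The paper presents this as an informal discussion preceding the proposition, so your more explicit bookkeeping of the adjacency dictionary is a faithful (and slightly more careful) rendering of the same argument.
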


It's noteworthy to mention that the cluster state Hamiltonian in Eq.~\eqref{eq:ClusterHamiltonianNormal} is equivalent to the quasi-1d toric code model expressed in Wen's formulation \cite{wen2003quantum}.
Notice that the global symmetries of the cluster state model can be expressed 
\begin{equation}
    F=\otimes_{i:\rm odd} X_i,\quad D=\otimes_{j:\rm even} Z_j.
\end{equation}
We see that in the quantum double model, this is just the closed string operator, $F$ is a closed magnetic string and $D$ is a closed electric string.

\vspace{1em}
\emph{Hopf qudit case.} --- 
In the case of finite groups and more general Hopf algebras, establishing this equivalence is not straightforward, as the vertex and face operators appear to differ significantly from those of the quantum double model. However, as we will observe, they are still equivalent, akin to the qubit case.

To proceed with establishing the equivalence between the 1d Hopf cluster state model and the quasi-1d Hopf quantum double model, let's first revisit the definition of local stabilizers in the Hopf quantum double model.

The vertex operators are constructed using left regular actions: $L_+^h=\XR_h$ and $L_-^h=\XL_h$. We have adopted the notation $L_{\pm}$ following Kitaev's original work \cite{Kitaev2003} to enhance clarity in the discussion.
The edge operator are constructed from \cite{Buerschaper2013a,chen2021ribbon,jia2023boundary,Jia2023weak}:
\begin{align}
		T_+^{\varphi}|x\rangle=&|\varphi \rightharpoonup x\rangle=| \sum_{(x)}\langle \varphi, x^{(2)}\rangle x^{(1)}\rangle,\\
		T_-^{\varphi}|x\rangle
        =&|x\leftharpoonup \bar{S}(\varphi) \rangle 
		=| \sum_{(x)}\langle \bar{S}(\varphi), x^{(1)}\rangle x^{(2)}\rangle 
		=| \sum_{(x)}\langle \varphi, S(x^{(1)})\rangle x^{(2)}\rangle,
	\end{align}
where $\varphi\in\bar{\cA}$ is linear functional over $\cA$ and we have used Sweedler's arrow notation. Refer to our previous work \cite{jia2023boundary,Jia2023weak} for details about the notation we have used.

Let $j$ be a directed edge with endpoint $v$. We define $L^h(j,v)$ as follows: if $v$ is the origin of $j$, then $L^h(j,v)=L_+^h(j)$; otherwise, $L^h(j,v)=L_{-}^h(j)$. 
For edge $j$ surrounding face $f$, if $f$ is on the left hand side, we choose $T^{\varphi}(j,f)=T_{-}^{\varphi}(j)$, otherwise, we choose $T^{\varphi}(j,f)=T_+^{\varphi}(j)$.
For our purposes, for a given site $s=(v,f)$, we order the edges around the vertex  $v$ and the face $f$ clockwise with the origin $s$, which is different from that in Refs.~\cite{jia2023boundary,Jia2023weak}.
This is also the reason that we choose to use the above convention.
The vertex operators and face operators on a site are defined as
\begin{equation}   
\begin{aligned}
        \begin{tikzpicture}   
           \draw[line width=.6pt,-latex,black] (0,0)--(0,1);
         \draw[line width=.6pt,black] (0,0.8)--(0,1.8);
        \draw[black,fill=black] (0,1.8) circle (0.08); 
           \draw[black,fill=black] (0,0) circle (0.08);     
           \node[ line width=0.6pt, dashed, draw opacity=0.5] (a) at (0,-0.4){$L_+$};
           \node[line width=0.6pt, dashed, draw opacity=0.5] (a) at (0,2.1){$L_{-}$};
           \node[line width=0.6pt, dashed, draw opacity=0.5] (a) at (0.4,0.9){$T_+$};
           \node[line width=0.6pt, dashed, draw opacity=0.5] (a) at (-0.4,0.9){$T_-$};
        \end{tikzpicture}
    \end{aligned}\quad 
    \begin{aligned}
		A^{h}(s)= \sum_{(h)} L^{h^{(1)}}(j_1,v)\otimes \cdots \otimes L^{h^{(n)}}(j_n,v),\\
		B^{\varphi}(s)= \sum_{(\varphi)} T^{\varphi^{(1)}}(j_1,f)\otimes \cdots \otimes T^{\varphi^{(n)}}(j_n,f).
	\end{aligned}
 \end{equation} 
For example, consider the following configuration
	\begin{equation}
		\begin{aligned}
			\begin{tikzpicture}
				\draw[-latex,black] (-1,0) -- (0,0); 
				\draw[-latex,black] (0,0) -- (1,0); 
				\draw[-latex,black] (0,0) -- (0,1); 
				\draw[-latex,black] (0,-1) -- (0,0); 
				\draw[-latex,black] (0,1) -- (1,1);
				\draw[-latex,black] (1,0) -- (1,1);  
				\draw[line width=0.5pt, dotted, red] (0,0) -- (0.5,0.5);
				\draw [fill = black] (0,0) circle (1.2pt);
				\draw [fill = black] (0.5,0.5) circle (1.2pt);
                    \draw[red,-latex] (0.2,0.2) arc (45:-250:0.3);
                    \draw[red,-latex] (0.4,0.4) arc (225:-90:0.3);
				\node[ line width=0.2pt, dashed, draw opacity=0.5] (a) at (-0.2,-0.2){$v$};
				\node[ line width=0.2pt, dashed, draw opacity=0.5] (a) at (0.7,0.7){$f$};
				\node[ line width=0.2pt, dashed, draw opacity=0.5] (a) at (-1.2,0){$x_5$};
				\node[ line width=0.2pt, dashed, draw opacity=0.5] (a) at (0,-1.2){$x_6$};
				\node[ line width=0.2pt, dashed, draw opacity=0.5] (a) at (0.8,-0.3){$x_1$};
				\node[ line width=0.2pt, dashed, draw opacity=0.5] (a) at (1.3,0.5){$x_2$};
				\node[ line width=0.2pt, dashed, draw opacity=0.5] (a) at (0.5,1.2){$x_3$};
				\node[ line width=0.2pt, dashed, draw opacity=0.5] (a) at (-0.2,0.5){$x_4$};
			\end{tikzpicture}
		\end{aligned}
	\end{equation}
the corresponding quantum double vertex and face operators are defined as
\begin{equation}
  \begin{aligned}
    &A^h(s)=\sum_{(h)} L_-^{h^{(1)}}(j_1) \otimes L_+^{h^{(2)}}(j_6)\otimes L_+^{h^{(3)}}(j_5) \otimes L_-^{h^{(4)}}(j_4) ,\\
    & B^{\varphi}(s)=\sum_{(\varphi)}T_{-}^{\varphi^{(1)}} (j_4) \otimes T_{-}^{\varphi^{(2)}} (j_3) \otimes T_{+}^{\varphi^{(3)}} (j_2) \otimes T_{+}^{\varphi^{(4)}} (j_1).
    \end{aligned}
\end{equation}
The local stabilizers for the quantum double model are denoted as $A_v=A^{\lambda}_v$ and $B_f=B^{\Lambda}_f$, where $\lambda\in \mathcal{A}$ and $\Lambda\in \bar{\mathcal{A}}$ are Haar integrals. Because of the cocommutativity of Haar integrals, their definition is independent of the initial site, relying solely on the vertex and face, respectively.

The equivalence between the 1d Hopf cluster state model and the quasi-1d quantum double model can be summarized as follows:

\begin{proposition}\label{prop:AB-QD}
By folding the 1D cluster lattice into a zigzag configuration (Fig.~\ref{fig:QDcluster}), the 1d Hopf cluster state Hamiltonian aligns with the quasi-1d Hopf quantum double Hamiltonian, where the odd vertex and even vertex operators correspond to the vertex and face operators of the quantum double model, respectively:
\begin{equation}
    \Av_{2k+1}=A_v^{\lambda},\quad \Bf_{2k}=B_f^{\Lambda}.
\end{equation}
If we look at the quantum double lattice, this means that the Hopf cluster state model is equivalent to a quantum double model with a rough boundary and a smooth boundary defined on a ladder (Fig.~\ref{fig:ladder}).
\end{proposition}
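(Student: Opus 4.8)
I want to prove Proposition~\ref{prop:AB-QD}, the identification $\Av_{2k+1}=A_v^{\lambda}$ and $\Bf_{2k}=B_f^{\Lambda}$ under the zigzag folding of Figure~\ref{fig:QDcluster}. Let me think carefully about what actually needs to be checked.

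The core claim is that two sets of operators, defined in two different languages (cluster-state Pauli operators vs. quantum-double vertex/face operators), become literally the same operator once the 1d chain is folded into the zigzag quasi-1d lattice. So the proof is fundamentally a careful bookkeeping argument: I need to (i) fix the folding map and the induced geometry, (ii) read off which quantum-double local operators $A_v^\lambda, B_f^\Lambda$ sit at the folded odd/even vertices, and (iii) verify term-by-term that these coincide with $\Av_{2k+1}$ and $\Bf_{2k}$.

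Let me work out the dictionary. On the cluster side, $\Av_i=\sum_{(\lambda)}\XL_{\lambda^{(1)}}(i-1)\otimes\XR_{\lambda^{(3)}}(i)\otimes\XR_{\lambda^{(2)}}(i+1)$ and $\Bf_j^\Gamma=\Tr'[Z_\Gamma^\ddagger(j-1)\otimes Z_\Gamma(j)\otimes Z_\Gamma(j+1)]$. On the quantum-double side, $A^h(s)$ is a comultiplication-distributed product of $L_\pm^{h^{(k)}}$ over edges around a vertex, and $B^\varphi(s)$ a distributed product of $T_\pm^{\varphi^{(k)}}$ around a face. So I need three key ingredients. First, the translation of regular actions: the excerpt already states $L_+^h=\XR_h$, $L_-^h=\XL_h$, and that $Z_\Gamma=(\id\otimes\Gamma)\circ\Delta$ with the correspondences $J_\Gamma=\Tr'Z_\Gamma=T_+^{\chi_\Gamma}$, $J_\Gamma^\ddagger=T_-^{\chi_\Gamma}$ (Eqs.~\eqref{eq:JT1}--\eqref{eq:JT2}). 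Second, the geometric matching: under the folding, the three cluster sites $(i-1,i,i+1)$ around odd vertex $i$ become exactly the three edges incident to a quantum-double vertex $v$, with orientations arranged so that the left neighbour uses $L_-=\XL$ and the two right-going edges use $L_+=\XR$ — this is what dictates the placement of antipodes and matches the comultiplication labels $\lambda^{(1)},\lambda^{(2)},\lambda^{(3)}$. Third, the summation over $\Gamma$ with weights $d_\Gamma/|\cA|$ must reproduce the Haar measure $\Lambda$ via the character formula \eqref{eq:HaarLambda1}: $\Bf_j=\sum_\Gamma\frac{d_\Gamma}{|\cA|}\Bf_j^\Gamma$ corresponds to $B_f^\Lambda$ precisely because $\Lambda=\frac{1}{|\cA|}\sum_\Gamma d_\Gamma\chi_\Gamma$ and $B_f^\varphi$ is linear in $\varphi$.

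The steps I would carry out in order: First, set up the folding explicitly — specify how vertices $1,\dots,2L$ of the chain map to vertices and edges of the quasi-1d ladder, introduce the ancillary bottom vertex mentioned in the Remark, and fix all edge orientations (downward and rightward) and the clockwise edge-ordering convention around each site, so that $A^h(s)$ and $B^\varphi(s)$ are unambiguous. Second, for a single odd vertex compute $A_v^\lambda$ from its definition, distribute $\Delta_3(\lambda)$ over the three incident edges, substitute $L_\pm\to\XR,\XL$ according to orientation, and match against $\Av_{2k+1}$, using cocommutativity of $\lambda$ to align the Sweedler indices. Third, for a single even vertex compute $B_f^\Lambda$, expand $\Lambda$ via \eqref{eq:HaarLambda1}, use $T_\pm^{\chi_\Gamma}=J_\Gamma^{(\ddagger)}=\Tr'Z_\Gamma^{(\ddagger)}$ to rewrite each factor, and match against $\Bf_{2k}=\sum_\Gamma\frac{d_\Gamma}{|\cA|}\Bf_{2k}^\Gamma$. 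Finally, invoke the Remark's observation that removing the bulk/ancillary vertex operator (which equals the product of all boundary vertex operators) produces the rough boundary, while the dual statement for faces gives the smooth boundary, yielding the ladder picture of Figure~\ref{fig:ladder}.

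\textbf{Main obstacle.} The hard part will not be any single algebraic identity — all the needed Hopf-algebra facts ($S^2=\id$, cocommutativity of $\lambda$ and $\Lambda$, the character expansion of $\Lambda$, the $J$--$T$ correspondences) are already available in the excerpt. The genuine difficulty is the \emph{orientation and ordering bookkeeping}: the quantum-double operators depend on a choice of base site $s=(v,f)$, on edge orientations, and on the clockwise ordering, and these conventions were deliberately chosen (as the text notes, differently from the earlier references \cite{jia2023boundary,Jia2023weak}) precisely so that the folding matches. I must verify that with the stated conventions the Sweedler labels $h^{(1)},\dots,h^{(n)}$ assigned by $A^h(s)$ land on the same tensor factors, and with the same $L_+/L_-$ (hence same antipode placement), as in $\Av$ — and similarly that the face-operator's $T_+/T_-$ split matches the $Z_\Gamma/Z_\Gamma^\ddagger$ split in $\Bf^\Gamma$. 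Because cocommutativity of the Haar integrals makes $A_v^\lambda$ and $B_f^\Lambda$ independent of the base site, I expect these matchings to close, but the argument requires drawing the folded neighbourhood of a generic odd and even vertex and tracking every index; this combinatorial verification, rather than any deep structural fact, is where the real work lies.
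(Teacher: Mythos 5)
Your plan follows the paper's proof essentially step for step: the odd-vertex identity $\Av_{2k+1}=A_v^{\lambda}$ is read off from the folding and the dictionary $L_+^h=\XR_h$, $L_-^h=\XL_h$, and the even-vertex identity is obtained by first showing $\Bf_j^{\Gamma}=B^{\chi_{\Gamma}}$ and then summing with weights $d_{\Gamma}/|\cA|$ via the character expansion $\Lambda=\frac{1}{|\cA|}\sum_{\Gamma}d_{\Gamma}\chi_{\Gamma}$ of Eq.~\eqref{eq:HaarLambda1}. One step in your outline is stated imprecisely and, taken literally, would fail: you propose to ``use $T_\pm^{\chi_\Gamma}=J_\Gamma^{(\ddagger)}=\Tr'Z_\Gamma^{(\ddagger)}$ to rewrite each factor,'' but the factors of $B_f^{\chi_\Gamma}$ are $T_\pm^{\chi_\Gamma^{(k)}}$ built from the \emph{comultiplication components} of the character, while $\Bf_j^\Gamma$ carries a single \emph{joint} trace $\Tr'[\Gamma(S(x_{j-1}^{(1)})x_j^{(2)}x_{j+1}^{(2)})]$ that does not factor into three separate traces $J_\Gamma^\ddagger\otimes J_\Gamma\otimes J_\Gamma$. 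The correct bridge — and the only genuinely algebraic step in the paper's proof — is the pairing identity $\chi_\Gamma(abc)=\sum_{(\chi_\Gamma)}\chi_\Gamma^{(1)}(a)\chi_\Gamma^{(2)}(b)\chi_\Gamma^{(3)}(c)$ from Eq.~\eqref{eq:ComultiplicationBarA}, which identifies the joint trace with the distributed product defining $B^{\chi_\Gamma}(s)$; since you already describe $B^\varphi(s)$ as a comultiplication-distributed product elsewhere in your dictionary, this is a repair rather than a missing idea.
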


\begin{proof}
By folding the 1D cluster lattice into a zigzag configuration (Fig.~\ref{fig:QDcluster}), we observe that the odd vertex operator $\Av_i$ corresponds to the vertex operator $A_v^{\lambda}$ by definition.
For even vertex operators $\Bf_j$'s, the proof is more complicated.
Notice that, for $\Gamma\in \Irr(\cA)$, when acting $\Bf^{\Gamma}_j$ on $x_{j-1},x_{j},x_{j+1}$, we have
\begin{equation}
    \Bf^{\Gamma}_j|x_{j-1},x_{j},x_{j+1}\rangle =\Tr' [\Gamma(S(x_{j-1}^{\cone})x_j^{\ctwo}x_{j+1}^{\ctwo})] |x_{j-1}^{\ctwo},x_{j}^{\cone},x_{j+1}^{\cone}\rangle.
\end{equation}
Using the character $\chi_{\Gamma}\in \bar{\cA}$, we see 
\begin{equation}
    \Bf^{\Gamma}_j|x_{j-1},x_{j},x_{j+1}\rangle  =\sum_{\chi_{\Gamma}}{\chi_{\Gamma}^{\cone}}(S(x_{j-1}^{\cone})) \chi_{\Gamma}^{\ctwo}(x_j^{\ctwo}) \chi_{\Gamma}^{\cthree}(x_{j+1}^{\ctwo}) 
    |x_{j-1}^{\ctwo},x_{j}^{\cone},x_{j+1}^{\cone}\rangle.
\end{equation}
This means that 
\begin{equation}
     \Bf^{\Gamma}_j=B^{\chi_{\Gamma}}.
\end{equation}
Therefore
\begin{equation}
     \Bf_j=\sum_{\Gamma\in \Irr(\cA)} \frac{d_{\Gamma}}{|\cA|} B^{\chi_{\Gamma}}.
\end{equation}
From Eq.~\eqref{eq:HaarLambda1}, we obtain
\begin{equation}
    \Bf^{\Gamma}_j=B_f^{\Lambda}.
\end{equation}
This completes the proof.
\end{proof}

We have demonstrated that the 1d cluster state model is equivalent to the quantum double model with a smooth boundary, where the internal vertex operator for the 2d bulk is omitted.
In this case, the boundary data is chosen to be identical to that of the bulk \cite{jia2022electricmagnetic,jia2023boundary}. 
The global symmetries of the Hopf cluster state model are also equivalent to closed ribbon operators of the quasi-1d Hopf quantum double model. 
Since we have removed the internal vertex operator for a bicycle-wheel lattice (take periodic boundary condition for the gray lattice in Fig.~\ref{fig:QDcluster}) and focus solely on the face operators and vertex operators on the boundary, this implies that we must first perform some coarse-graining (entanglement renormalization \cite{buerschaper2013electric}) on the bulk to map the bulk lattice into a bicycle-wheel lattice and then project the bulk to obtain the cluster state model \cite{albert2021spin}.

There is another way to interpret the correspondence between the cluster state model and quantum double model. As shown in Fig.~\ref{fig:QDRibbon}, we can regard the cluster state model as a quantum double model on a very thin ribbon whose one boundary (teal) is smooth, while the other boundary (red) is rough \cite{jia2022electricmagnetic,jia2023boundary}. 
Also, note that the result in this section asserts that the cluster state model is a specific choice of boundary conditions for the quantum double model. 
A natural question arises: What model corresponds to other boundary conditions of the Hopf quantum double model? This question can be addressed by employing SymTFT or topological holography \cite{jia2024SymTFT}.

\section{SymTFT and Hopf ladder model}
\label{sec:SymTFT}

In the previous section, we show that the Hopf cluster state model can be regarded as a quasi-1d Hopf quantum double model with a smooth boundary and a rough boundary. 
In fact, this point of view can be generalized by considering general boundary conditions in the framework of SymTFT. The SymTFT (also called topological holography) provides a general framework for us to understand the fusion category symmetry for both gapped and gapless phases \cite{Kong2020algebraic,SchaferNameki2024ICTP,huang2023topologicalholo,bhardwaj2024lattice,freed2024topSymTFT,gaiotto2021orbifold,bhardwaj2023generalizedcharge,apruzzi2023symmetry,bhardwaj2024gappedphases,Zhang2024anomaly,Ji2020categoricalsym}.
A thorough discussion on the application of SymTFT to Hopf and weak Hopf lattice gauge theory will be presented in our forthcoming work \cite{jia2024SymTFT}. Here, we only summarize some key results.

The SymTFT is a triple $(\mathcal{Z}(\eB_{\rm sym}), \eB_{\rm sym}, \eB_{\rm phys})$ that forms a sandwich, where: (i) $\mathcal{Z}(\eB_{\rm sym})$ is a 2d topological phase; (ii) $\eB_{\rm sym}$ is called the symmetry boundary and characterizes the fusion category symmetry of the (1+1)D phase. The symmetry boundary must be a gapped topological boundary; (iii) The physical boundary $\eB_{\rm phys}$ need not be gapped or topological.
The SymTFT for (1+1)D phase is defined on a sandwich manifold $\Sigma \times [0,1]$ with $\operatorname{dim} \Sigma = 1$.
In this section, we only consider the case that $\Sigma$ is closed, i.e., it is a circle.
We place $\mathcal{Z}(\eB_{\rm sym})$ in the bulk $\Sigma \times (0,1)$, and position the two boundaries $\eB_{\rm sym}$ and $\eB_{\rm phys}$ on $\Sigma \times \{0\}$ and $\Sigma \times \{1\}$, respectively. After compactification over the interval $[0,1]$, we obtain a (1+1)D theory.

To understand the Hopf cluster state model in the SymTFT framework, consider the fusion category symmetry given by a fusion category $\eB_{\rm sym}=\Rep(\cA)$ for some Hopf algebra $\cA$. The corresponding SymTFT consists of the following data:
\begin{itemize}
    \item For the (2+1)D bulk, we put a Hopf lattice gauge theory whose topological excitation is given by the representation category $\Rep(D(\cA))$ of the quantum double $D(\cA)$, notice that $\Rep(D(\cA))$ is braided monoidal equivalent to the Drinfeld center $\mathcal{Z}(\eB_{\rm sym})$ of the multifusion symmetry $\eB_{\rm sym}\simeq \Rep(\cA)$.
    \item The symmetry boundary is a topological boundary condition $\eB_{\rm sym} = \Rep(\cA)$ of the bulk lattice gauge phase $\Rep(D(\cA))$, which corresponds to a smooth boundary for which all magnetic fluxes are condensed~\cite{jia2023boundary}. This topological boundary condition
    can be characterized in at least three different ways: (i) A Lagrangian algebra $L_m\in \Rep(D(\cA))$ which is a superposition of all magnetic fluxes; (ii) A module category $\EuScript{M}=\mathsf{Rep}(\cA)$ over $\Rep(\cA)$; (iii) A  comodule algebra $K=\cA$ over Hopf algebra $\cA$.
    \item The physical boundary condition $\eB_{\rm phys}$ is chosen as a rough boundary. This is also a gapped topological boundary condition, meaning that after compactification, the resulting (1+1)D phase is gapped. The Lagrangian algebra $L_e$ of this boundary is a superposition of all electric charges. The module category is $\eN = \mathsf{Vect}$. The comodule algebra is trivial comodule algebra $\mathbb{C}$.

    \item The local operators are generated by Wilson lines that can end at both boundaries. For our Hopf lattice gauge theory, the Wilson lines are ribbon operators connecting the two boundaries. Since the two boundaries we choose are fully electric-condensate boundary and a fully magnetic-condensate boundary, there is only one Wilson operator that can connect both boundaries, which is the one that carries the trivial topological charge. This guarantees the ground state space is non-degenerate:
    \begin{equation}
      \text{GSD}=\dim \Hom(L_m\otimes L_e,\mathbf{1})=1,
    \end{equation}  
    where $\mathbf{1}$ is the vacuum charge in the bulk.
    The fusion category symmetry acts on local operators via half-braiding, which is realized in quantum double mode via half-braiding ribbon operators \cite{jia2023boundary}.
    Since the only Wilson line carries the vacuum charge, this implies that the fusion category symmetry $\eB_{\rm sym} = \Rep(\cA)$ is fully preserved, with no symmetry breaking.
\end{itemize}

Notice that since both of two boundaries are gapped, we can choose any one of them as the symmetry boundary. We have chosen smooth boundary as symmetry boundary and see the there is a $\Rep(\cA)$ symmetry. We can also choose the rough boundary as the symmetry boundary, in this case the symmetry is give by $\Rep(\bar{\cA})$, the representation category of dual Hopf algebras. Since each object in $\Rep(\bar{\cA})$ can be regarded as an elements in $\cA$, we have an embedding 
\begin{equation}
  \mathsf{Fun}_{\Rep(\cA)} (\mathsf{Vect},\mathsf{Vect}) \simeq   \Rep(\bar{\cA})\hookrightarrow \cA.
\end{equation}
All these symmetry elements are cocommutative due to the fact that character functions are cocommutative.
This matches well with our analysis of the Hopf symmetry of the cluster state model using the explicit lattice construction: on a closed manifold, the symmetry element must be cocommutative.

Now let us consider the general case, where we choose two different general boundaries for the Hopf quantum double model. One boundary is given by the $\cA$-comodule algebra $K$, and the other is given by the $\cA$-comodule algebra $J$.
For the bulk \(\Sigma \times (0,1)\), we assign a Hopf algebra $\cA$. 
On the symmetry boundary \(\Sigma \times \{0\}\), we assign a Hopf  comodule algebra \(K\). Depending of the orientation of the boundary, we can choose left or right comodule algebras as input data \cite{jia2023boundary,Jia2023weak}.
Similarly, on the physical boundary \(\Sigma \times \{1\}\), we assign a Hopf comodule algebra \(J\).
The compactification over the interval \([0,1]\) can be realized by considering an ultra-thin sandwich lattice, which is a ladder lattice denoted by \(\mathbb{M}^1\). See Figure~\ref{fig:ladder} for a depiction. The edges for symmetry boundary are drawn in red, the edges for physical boundary are drawn in blue, and the bulk edge is drawn in black.

\begin{figure}[t]
    \centering
    \includegraphics[width=10cm]{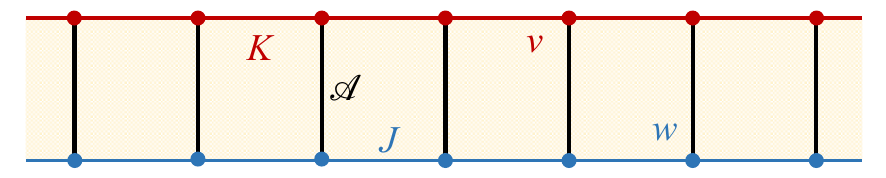}
    \caption{Depiction of the ladder lattice of SymTFT sandwich for Hopf lattice gauge theory.}
    \label{fig:ladder}
\end{figure}

There are three types of local operators: the face operator $\Bf_f$, which is defined using the Haar measure of $\cA$; the red vertex operator $\Av_v$; and the blue vertex operator $\Av_w$, both of which are defined using the symmetric separability idempotent of the comodule algebra. The explicit expressions are given in Refs.~\cite{jia2023boundary, Jia2023weak}. The Hamiltonian is defined as
\begin{equation}
   H_{\rm ladder} = -\sum_f \Bf_f - \sum_{v:~\text{s.b.}} \Av^K_v - \sum_{w:~\text{p.b.}} \Av^J_w.
\end{equation}
This model will be called the Hopf ladder model~\cite{jia2024SymTFT}. The Hopf cluster state model is a special case of this model with $K = \mathbb{C}$ and $J = \cA$, as we have pointed out. Also notice that the Hopf ladder model can be solved exactly using the Hopf tensor network \cite{jia2023boundary,Jia2023weak}.
If we choose the two boundaries to be the same, $J = K$, we obtain the symmetry breaking phase, in which the fusion category symmetry $\eC = {_K}\mathsf{Mod}_K^H$ \cite{jia2023boundary,Jia2023weak} is fully broken.
In this case, the ground state space is degenerate, and the degeneracy is determined by the number of fusion channels between the anyon-condensates of the two boundaries.
To ensure that the model gives an SPT phase protected by $\eC = {_K}\mathsf{Mod}_K^H$, we need to require that there are no nontrivial fusion channels between the two boundary anyon-condensates. The existence of nontrivial fusion channels will partially or fully break the fusion category symmetry~\cite{huang2023topologicalholo, jia2024SymTFT}.

\section{Graph and hypergraph states from the pairing of weak Hopf algebras}
\label{sec:GraphHypergraphHopf}

In the previous sections, our main focus was to construct the Hopf graph state based on a bipartite graph by generalizing the CSS-type qubit graph state (Eq.~\eqref{eq:CSScluster}). In this part, we will consider the general hypergraph, of which a graph is a special case, and discuss how to construct the Hopf hypergraph state based on the decorated hypergraph by generalizing Eq.~\eqref{eq:QubiGraphState}.

A hypergraph $G=(V,E)$ consists of a vertex set $V=\{v_1,\cdots,v_n\}$, and a hyperedge set $E=\{e_1,\cdots,e_m\} \subset 2^V$ (i.e., a $k$-hyperedge is a set $e_i=\{v_{i_1},\cdots,v_{i_k}\}\subset V$). 
A $k$-uniform hypergraph is a hypergraph whose each hyperedge contains precisely $k$ vertices. A graph is $2$-uniform hypergraph.
We can also introduce the direction of the $k$-hyperedge, which is an ordered $k$-tuple $e=(v_{i_1},\cdots,v_{i_k})$. For a $k$-hyperedge, there are $k!$ possible choices of hyperedge directions.
The hyperedges of a multi-hypergraph are allowed to appear repeatedly in the hyperedge set, the number of copies of a hyperedge $e$ is called multiplicity of $e$ and is denoted as $m_e$.
A hypergraph whose all hyperedges are directed is called a directed hypergraph.
A hypergraph is called connected if, for any two vertices, there is a set of hyperedges (a hyper-path) that connect them.
If a hypergraph is not connected, it can be broken into several connected components, therefore, it suffices to investigate connected hypergraph only.

\subsection{Generalized hypergraph state for qudit}

In order to construct the quantum state from a given directed hypergraph, we assign a corresponding Hilbert space $\mathcal{H}_{v_i}$ to each vertex $v_i$ (e.g., for discrete variable case, $\mathcal{H}_{v_i}=\mathbb{C}^d$; for continuous variable case, $\mathcal{H}_{v_i}=L^2(\mathbb{R})$). To each $k$-hyperedge $e_i=\{v_{i_1},\cdots,v_{i_k}\}$,
we assign a $k$-linear functional $\phi_{e_i}: \otimes_j \mathcal{H}_{v_{i_j}} \to \mathbb{C}$. By definition, $\phi_{e_i}$ is determined by the values $\phi_{e_i} (i_1 ,\cdots ,i_k)$ over basis elements $I=(i_1,\cdots,i_k)\in \mathbb{Z}_{d_1}\times \cdots \mathbb{Z}_{d_k}$.
In this way, we obtain a decorated hypergraph $\mathcal{G}=(V,E,\{\mathcal{H}_{v_i}\},\{\phi_{e_i}\})$.
The corresponding state $|\mathcal{G}\rangle$ can be constructed using a quantum circuit that will be discussed later.
Notice that for hyperedge with no direction, $\phi_e$ needs to be symmetric for all variables. More precisely, for a $k$-hyperedge, $\phi_e(i_{\sigma(1)} ,\cdots ,i_{\sigma(k)})=\phi_{e_i} (i_1,\cdots ,i_k)$ holds for arbitrary $k$-permutation $\sigma\in \operatorname{Aut}(I)\simeq \mathfrak{S}_k$.

Consider a $k$-hyperedge $e$, we can introduce the hyperedge gate $U_{e}$ to realize the edge function $\phi_e$ (to ensure the unitarity of $U_e$, here we choose $\phi_e=e^{i\theta_e}$ with $\theta_e$ a real function, a more general case will be discussed in the sequel):
\begin{equation}
\begin{aligned}
    	U_e |I\rangle	 = \phi_e(I)|I\rangle	 = e^{i\theta_e(I)} |I\rangle.
\end{aligned}
\end{equation} 
This $U_e$ is in fact a controlled operation
\begin{align}
&U_e= \sum_{I_{e\setminus \{v_n\}}} |I_{e\setminus \{v_n\}}\rangle \langle I_{e\setminus \{v_n\}}|\otimes U_{I_{e\setminus \{v_n\}}},
\end{align}
where  $ U_{I_{e\setminus \{v_n\}}}$ is a diagonal unitary matrix with diagonal entries equal to  $e^{i\theta_e(i_1,\cdots,i_{n-1},0)}$, $\cdots$, $e^{i\theta_e(i_1,\cdots,i_{n-1},d_n-1)}$.
For the edge with no direction, using the permutation symmetry of the hyperedge function, there will also be a corresponding symmetry of the hyperedge gate, namely, it does not depend on the choice of target vertex on which the controlled-operation acts.
For instance, consider the qubit case, if $e$ is a 2-hyperedge (viz., an edge), 
if $e$ is not directed, then $\theta_e(i,j)=\theta_e(j,i)$,
\begin{align}
    U_e=&|0\rangle \langle 0| \otimes \operatorname{diag}(e^{ia},1)+ |1\rangle \langle 1| \otimes \operatorname{diag}(1,e^{i b}) \nonumber \\
    =&\operatorname{diag}(e^{ia},1) \otimes|0\rangle \langle 0| +\operatorname{diag}(1,e^{i b}) \otimes |1\rangle \langle 1|,
\end{align}
where $a,b$ are two independent real parameters and we have omitted the overall phase factor. If we take $a=0$ and $b=\pi$, $U_e$ becomes the controlled-$Z$ gate.

There are several crucial classes of hyperedge gates, one prototypical class is the controlled power gate.
Choose a single-particle gate $V$ and set $C^0(V)=V$, we define recursively 
\begin{equation}
    C^k(V)=C[C^{k-1}(V)]=\sum_{q}|q\rangle \langle q| \otimes (C^{k-1}(V))^q.
\end{equation}
It's easy to check that
\begin{equation}
    C^k(Z)=\sum_{q_1,\cdots,q_k} |q_1\cdots q_k\rangle \langle q_1\cdots q_k| \otimes V^{q_1\times \cdots \times q_k}.
\end{equation}
Examples of controlled power gates include the following:
\begin{itemize}
    \item For qubit, the controlled-$Z$ gate $C(Z)$ and $k$-controlled-$Z$ gate $C^{k}(Z)$ (with $C^0(Z)=Z$).
    \item For qudit, using the Heisenberg-Weyl operator $Z_d=\sum_{q\in \mathbb{Z}_d} w_d^q |q\rangle \langle q|$ with $w_d$ the $d$-th root of unity, we have $C(Z_d)$ and $C^{k}(Z_d)$ (with $C^0(Z_d)=Z_d$).
    \item For fixed diagonal qudit  unitary gate $P_{\theta}=\operatorname{diag}(1,e^{i\theta},\cdots,e^{ i (d-1)\theta})$, $C^k(P_{\theta})$ does not depend on the choice of target vertex. Setting $\theta=2 \pi/d$, the $k$-controlled-$Z_d$ gate is obtained.
\end{itemize}

Another gate we need is the generalized Hadamard gate \cite{cui2015generalized}
\begin{equation}\label{eq:HaarState}
	H|0\rangle =|h\rangle,
\end{equation}
where $h\in \Cbb^d$ is a image of $|0\rangle$. 
For qubit case, $H$ is just Hadamard gate.
We can now give the definition of the hypergraph state
\begin{definition}
For an arbitrary decorated hypergraph $\mathcal{G}=(V,E,\{\mathcal{H}_{v_i}\},\{\phi_{e_i}\})$, there is a corresponding generalized hypergraph state defined as 
\begin{equation} \label{eq:GraphState}
	|\mathcal{G}\rangle =( \prod_{e\in E} U_e^{m_e}) ( \otimes_{v\in V} H_{v}) |0\rangle^{\otimes |V|},
\end{equation}
where $H_v$'s are generalized Hadamard gates acting on each vertex $v$ and $U_e$'s are hyperedge gates determined by hyperedge function $\phi_e$, and $m_e$ is the hyperedge multiplicity.  
Note that, hyperedge unitary operators $U_e$'s commute with each other in general, thus we need not to specify the order in which these gates are applied.
\end{definition}

The general procedure for finding stabilizers of a given generalized hypergraph state $|\mathcal{G}\rangle$ is as follows: firstly, find the stabilizers of state $|h\rangle^{\otimes|V|}$, this can be done by choosing local operators $K_{v}$ whose $+1$ eigenstate is integral state $|h\rangle$, i.e., $K_v=|h\rangle \langle h|\oplus A$ (Notice that $A$ is required to be a normal operator $[A,A^{\dagger}]=0$ with all eigenvalues unequal to $+1$). Then, using the trick of conjugation operation we can obtain the stabilizers of the hypergraph state  $T_v=(\prod_{e}U_e )K_v (\prod_{e}U_e)^{\dagger}$.

\subsection{Weak Hopf hypergraph state}

The above construction of generalized hypergraph states can be extended to Hopf qudits. It's worth noting that all constructions in this section also apply to weak Hopf algebras.

To use the comultiplication here, we need a local ordering structure around each vertex $v$.
 Suppose there are $n=|N_E(v)|$ hyperedges connected to $v$, we need to introduce and ordering of the set $\{v\}\cup N_E(v)$. For example, consider a vertex $v$ connected to four hyperedges $e_1,\cdots,e_4$, we need to order $\{v,e_1,e_2,e_3,e_4\}$:
\begin{equation}\label{eq:vertexOrder}
\begin{aligned}
        \begin{tikzpicture}
           \draw[line width=.6pt,black] (0,0)--(0,-1);
           \draw[line width=.6pt,black] (0,0)--(0,1);
           \draw[line width=1.6pt,red] (0,0)--(.4,-.4);
           \draw[line width=.6pt,black] (0,0)--(1,0);
           \draw[line width=.6pt,black] (0,0)--(-1,0);
           \draw[black,fill=lightgray] (0,0) circle (0.2);
           \node[ line width=0.6pt, dashed, draw opacity=0.5] (a) at (0,0){$v$};
           \node[ line width=0.6pt, dashed, draw opacity=0.5] (a) at (0.6,-0.6){$2$};
           \node[ line width=0.6pt, dashed, draw opacity=0.5] (a) at (0,1.3){$1$};
           \node[ line width=0.6pt, dashed, draw opacity=0.5] (a) at (0,-1.3){$3$};
         \node[ line width=0.6pt, dashed, draw opacity=0.5] (a) at (-1.3,0){$4$};
          \node[ line width=0.6pt, dashed, draw opacity=0.5] (a) at (1.3,0){$5$};
          \node[ line width=0.6pt, dashed, draw opacity=0.5] (a) at (0.2,0.6){$e_1$};
          \node[ line width=0.6pt, dashed, draw opacity=0.5] (a) at (0.2,-0.6){$e_2$};
         \node[ line width=0.6pt, dashed, draw opacity=0.5] (a) at (0.6,0.2){$e_4$};
         \node[ line width=0.6pt, dashed, draw opacity=0.5] (a) at (-0.6,0.2){$e_3$};
        \end{tikzpicture}
    \end{aligned}.
\end{equation}
where we use the red line to denote $v$, the black line to denote the hyperedges, the ordering in this case is $(e_1,v,e_2,e_3,e_4)$.
If we put $h\in \cA$ on the vertex $v$, we introduce comultiplication $\Delta_{|N_E(v)|+1}(h)$ and put the respective component according to the local ordering to the set $\{v\}\cup N_e(v)$. We denote it as 
\begin{equation}
    h\mapsto \sum_{(h)}h^{(v)}\otimes h^{(e_{v,1})}\otimes \cdots \otimes h^{(e_{v,|N_E(v)|})}. 
\end{equation}
E.g., for the example in Eq.~\eqref{eq:vertexOrder}, $h^{(v)}=h^{(2)}$, $h^{e_2}=h^{(3)}$, and so on.

For the directed hyperedge $e$, we denote the ordered vertices in $e$ as $v_{e,1},\cdots v_{e,m}$. For example, consider a 3-hyperedge
\begin{equation}
  \begin{aligned}
        \begin{tikzpicture}
           \draw[line width=.6pt,black] (0,0)--(0,-1);
           \draw[line width=.6pt,black] (0,0)--(0,1);
           \draw[line width=.6pt,black] (0.2,1)--(1.8,1);
           \draw[line width=1.6pt,red] (0,0)--(.4,-.4);
           \draw[line width=1.6pt,red] (2,0)--(2.4,-.4);
          \draw[line width=1.6pt,red] (1,1)--(1.4,1.4);
           \draw[line width=.6pt,black] (0,0)--(3,0);
           \draw[line width=.6pt,black] (0,0)--(-1,0);
          \draw[line width=.6pt,black] (1,1)--(0,0);
            \draw[line width=.6pt,black] (1,1)--(1,2);
            \draw[line width=.6pt,black] (1,1)--(2,0);
              \draw[line width=.6pt,black] (2,-1)--(2,0);
          \fill[lime]   (-0.2,-0.2) -- (2.2,-0.2) -- (1,1.2) -- cycle;
           \draw[line width=.6pt,black] (1,1)--(1,0.5);
           \draw[line width=.6pt,black] (2,0)--(1.2,0.26);
           \draw[line width=.6pt,black] (0,0)--(0.8,0.26);
           \draw[black,fill=lightgray] (0,0) circle (0.3);     
           \draw[black,fill=lightgray] (2,0) circle (0.3); 
           \draw[black,fill=lightgray] (1,1) circle (0.3);  
           \node[ line width=0.6pt, dashed, draw opacity=0.5] (a) at (0,0){$v_{e,1}$};
           \node[line width=0.6pt, dashed, draw opacity=0.5] (a) at (2,0){$v_{e,2}$};
           \node[line width=0.6pt, dashed, draw opacity=0.5] (a) at (1,1){$v_{e,3}$};
           \node[line width=0.6pt, dashed, draw opacity=0.5] (a) at (1,0.3){$e$};
        \end{tikzpicture}
    \end{aligned},  
\end{equation}
there is an ordering of the vertices inside the edge, the edge function $\psi_e:\cA^{\otimes m}\to \Cbb$ acts on the elements of the corresponding vertices according to this ordering of $N_V(e)$.

\begin{definition}[(Weak) Hopf hypergraph state]
  Consider a directed hypergraph $\mathcal{G}$ with local ordering around each vertex,  for a given (weak) Hopf algebra, we put $h_v\in \cA$  for each vertex $v$ and put a linear function $\psi_e: \cA\otimes \cdots \otimes \cA \to \Cbb$. 
  The corresponding (weak) Hopf hypergraph state is given by
  \begin{equation}
      |\mathcal{G}\rangle= \sum_{(h_v),v\in V} [\prod_{e\in E} \psi_e(h^{e}_{v_{e,1}},\cdots, h^{(e)}_{v_{e,m}})] \bigotimes_{v\in V} |h_v^{(v)} \rangle.
  \end{equation}
\end{definition}

A crucial class of the edge function is defined by choosing 
 $\phi \in \bar{\cA}$, and define
\begin{equation}
   \psi_e(h^{(e)}_{v_{e,1}}, \cdots, h^{(e)}_{v_{e,m}}) = \phi(h^{(e)}_{v_{e,1}} \cdots h^{(e)}_{v_{e,m}}).
\end{equation}
In other words, we first multiply $h^{(e)}_{v_{e,1}}, \cdots, h^{(e)}_{v_{e,m}}$ and then apply $\phi$.
This can be equivalently regarded as taking the comultiplication of $\phi$ and putting the corresponding component $\phi^{(i)}$ to the vertices $v_{e,i}$.
Then using the pairing between $\cA$ and $\bar{\cA}$, we have 
\begin{equation}
    \phi(h^{(e)}_{v_{e,1}} \cdots h^{(e)}_{v_{e,m}})=\sum_{(\phi)}\phi^{(1)}(h^{(e)}_{v_{e,1}})\cdots \phi^{(m)}(h^{(e)}_{v_{e,m}}).
\end{equation}
The  (weak) Hopf quantum double state \cite{Buerschaper2013a,girelli2021semidual,jia2023boundary,Jia2023weak} is a special case of our (weak) Hopf hypergraph state with $\phi$ chosen as Haar integral in $\bar{\cA}$ and the local ordering satisfy some other constraints. On a square lattice, it's a $4$-uniform hypergraph state.

\section{Discussion}
\label{sec: Discussion}
In this paper, we propose a general construction of the Hopf cluster state. We also construct the lattice Hamiltonian for the 1d Hopf cluster state and discuss its non-invertible symmetries. This state is a potential candidate for the 1d SPT phase, differing from both the anyonic chain construction \cite{Feiguin2007interacting} and the construction based on Hopf comodule algebras \cite{inamura2022lattice}.
The discussion of the SPT phase for this model will be given in our future work \cite{jia2024cluster}.
The connection we have established between the Hopf cluster state model and the quasi-1d quantum double model is also of significant interest.

Despite the progress that has been made, there are still some open problems. As mentioned, our construction does not apply to weak Hopf algebras. However, in investigating fusion category symmetry, it's essential to consider weak Hopf algebras. This is because any given fusion category symmetry can be viewed as a representation category of some weak Hopf algebra. Notably, there exist fusion categories that cannot be represented as the representation category of Hopf algebras. We leave the solution to this problem for future study \cite{jia2024SymTFT}, leveraging the correspondence between the 1d cluster state model and the quasi-1d quantum double model. The application of our model in MQBC can be implemented in a similar way as that in Ref.~\cite{fechisin2023noninvertible}. This direction will also be discussed in our future work.

\begin{acknowledgments}
I would like to extend my sincere gratitude to Dagomir Kaszlikowski for his invaluable support. I am deeply thankful to Sheng Tan and Jinsong Wu for their warm hospitality and enriching discussions during my stay at BIMSA, where part of this work was carried out. Furthermore, I am  thankful to Daniel Bulmash, Yuting Hu and Liang Kong for the productive discussion. Additionally, I acknowledge Jos\'{e} Garre Rubio, Sakura Sch\"{a}fer-Nameki, and Apoorv Tiwari for bringing my attention to some pertinent works on non-invertible symmetry and providing valuable comments.
This work is supported by the National Research Foundation in Singapore and A*STAR under its CQT Bridging Grant.

\end{acknowledgments}

\bibliographystyle{apsrev4-1-title}
\bibliography{mybib}

\begin{thebibliography}{83}%
\makeatletter
\providecommand \@ifxundefined [1]{%
 \@ifx{#1\undefined}
}%
\providecommand \@ifnum [1]{%
 \ifnum #1\expandafter \@firstoftwo
 \else \expandafter \@secondoftwo
 \fi
}%
\providecommand \@ifx [1]{%
 \ifx #1\expandafter \@firstoftwo
 \else \expandafter \@secondoftwo
 \fi
}%
\providecommand \natexlab [1]{#1}%
\providecommand \enquote  [1]{``#1''}%
\providecommand \bibnamefont  [1]{#1}%
\providecommand \bibfnamefont [1]{#1}%
\providecommand \citenamefont [1]{#1}%
\providecommand \href@noop [0]{\@secondoftwo}%
\providecommand \href [0]{\begingroup \@sanitize@url \@href}%
\providecommand \@href[1]{\@@startlink{#1}\@@href}%
\providecommand \@@href[1]{\endgroup#1\@@endlink}%
\providecommand \@sanitize@url [0]{\catcode `\\12\catcode `\$12\catcode
  `\&12\catcode `\#12\catcode `\^12\catcode `\_12\catcode `\%12\relax}%
\providecommand \@@startlink[1]{}%
\providecommand \@@endlink[0]{}%
\providecommand \url  [0]{\begingroup\@sanitize@url \@url }%
\providecommand \@url [1]{\endgroup\@href {#1}{\urlprefix }}%
\providecommand \urlprefix  [0]{URL }%
\providecommand \Eprint [0]{\href }%
\providecommand \doibase [0]{http://dx.doi.org/}%
\providecommand \selectlanguage [0]{\@gobble}%
\providecommand \bibinfo  [0]{\@secondoftwo}%
\providecommand \bibfield  [0]{\@secondoftwo}%
\providecommand \translation [1]{[#1]}%
\providecommand \BibitemOpen [0]{}%
\providecommand \bibitemStop [0]{}%
\providecommand \bibitemNoStop [0]{.\EOS\space}%
\providecommand \EOS [0]{\spacefactor3000\relax}%
\providecommand \BibitemShut  [1]{\csname bibitem#1\endcsname}%
\let\auto@bib@innerbib\@empty
\bibitem [{\citenamefont {Nielsen}\ and\ \citenamefont
  {Chuang}(2010)}]{Nielsen2010}%
  \BibitemOpen
  \bibfield  {author} {\bibinfo {author} {\bibfnamefont {M.~A.}\ \bibnamefont
  {Nielsen}}\ and\ \bibinfo {author} {\bibfnamefont {I.~L.}\ \bibnamefont
  {Chuang}},\ }\href
  {http://www.cambridge.org/cn/academic/subjects/physics/quantum-physics-quantum-information-and-quantum-computation/quantum-computation-and-quantum-information-10th-anniversary-edition?format=PB&isbn=9781107002173#eUc7irgofUw4ZEd5.97}
  {\emph {\bibinfo {title} {Quantum computation and quantum information}}}\
  (\bibinfo  {publisher} {Cambridge university press},\ \bibinfo {year}
  {2010})\BibitemShut {NoStop}%
\bibitem [{\citenamefont {Preskill}(1998)}]{preskill1998}%
  \BibitemOpen
  \bibfield  {author} {\bibinfo {author} {\bibfnamefont {J.}~\bibnamefont
  {Preskill}},\ }\href {http://www.theory.caltech.edu/people/preskill/ph229/}
  {\enquote {\bibinfo {title} {Lecture notes for physics 229: Quantum
  information and computation},}\ } (\bibinfo {year} {1998})\BibitemShut
  {NoStop}%
\bibitem [{\citenamefont {Deutsch}(1989)}]{deutsch1989quantum}%
  \BibitemOpen
  \bibfield  {author} {\bibinfo {author} {\bibfnamefont {D.~E.}\ \bibnamefont
  {Deutsch}},\ }\bibfield  {title} {\enquote {\bibinfo {title} {Quantum
  computational networks},}\ }\href
  {https://royalsocietypublishing.org/doi/10.1098/rspa.1989.0099} {\bibfield
  {journal} {\bibinfo  {journal} {Proceedings of the royal society of London.
  A. mathematical and physical sciences}\ }\textbf {\bibinfo {volume} {425}},\
  \bibinfo {pages} {73} (\bibinfo {year} {1989})}\BibitemShut {NoStop}%
\bibitem [{\citenamefont {Raussendorf}\ and\ \citenamefont
  {Briegel}(2001)}]{Raussendorf2001}%
  \BibitemOpen
  \bibfield  {author} {\bibinfo {author} {\bibfnamefont {R.}~\bibnamefont
  {Raussendorf}}\ and\ \bibinfo {author} {\bibfnamefont {H.~J.}\ \bibnamefont
  {Briegel}},\ }\bibfield  {title} {\enquote {\bibinfo {title} {A one-way
  quantum computer},}\ }\href {\doibase 10.1103/PhysRevLett.86.5188} {\bibfield
   {journal} {\bibinfo  {journal} {Phys. Rev. Lett.}\ }\textbf {\bibinfo
  {volume} {86}},\ \bibinfo {pages} {5188} (\bibinfo {year}
  {2001})}\BibitemShut {NoStop}%
\bibitem [{\citenamefont {Raussendorf}\ \emph {et~al.}(2003)\citenamefont
  {Raussendorf}, \citenamefont {Browne},\ and\ \citenamefont
  {Briegel}}]{Raussendorf2003measurement}%
  \BibitemOpen
  \bibfield  {author} {\bibinfo {author} {\bibfnamefont {R.}~\bibnamefont
  {Raussendorf}}, \bibinfo {author} {\bibfnamefont {D.~E.}\ \bibnamefont
  {Browne}}, \ and\ \bibinfo {author} {\bibfnamefont {H.~J.}\ \bibnamefont
  {Briegel}},\ }\bibfield  {title} {\enquote {\bibinfo {title}
  {Measurement-based quantum computation on cluster states},}\ }\href {\doibase
  10.1103/PhysRevA.68.022312} {\bibfield  {journal} {\bibinfo  {journal} {Phys.
  Rev. A}\ }\textbf {\bibinfo {volume} {68}},\ \bibinfo {pages} {022312}
  (\bibinfo {year} {2003})},\ \Eprint {http://arxiv.org/abs/quant-ph/0301052}
  {arXiv:quant-ph/0301052 [quant-ph]} \BibitemShut {NoStop}%
\bibitem [{\citenamefont {Nielsen}(2006)}]{nielsen2006cluster}%
  \BibitemOpen
  \bibfield  {author} {\bibinfo {author} {\bibfnamefont {M.~A.}\ \bibnamefont
  {Nielsen}},\ }\bibfield  {title} {\enquote {\bibinfo {title} {Cluster-state
  quantum computation},}\ }\href
  {https://www.sciencedirect.com/science/article/abs/pii/S0034487706800145}
  {\bibfield  {journal} {\bibinfo  {journal} {Reports on Mathematical Physics}\
  }\textbf {\bibinfo {volume} {57}},\ \bibinfo {pages} {147} (\bibinfo {year}
  {2006})},\ \Eprint {http://arxiv.org/abs/quant-ph/0504097}
  {arXiv:quant-ph/0504097 [quant-ph]} \BibitemShut {NoStop}%
\bibitem [{\citenamefont {Briegel}\ \emph {et~al.}(2009)\citenamefont
  {Briegel}, \citenamefont {Browne}, \citenamefont {D{\"u}r}, \citenamefont
  {Raussendorf},\ and\ \citenamefont {Van~den Nest}}]{briegel2009measurement}%
  \BibitemOpen
  \bibfield  {author} {\bibinfo {author} {\bibfnamefont {H.~J.}\ \bibnamefont
  {Briegel}}, \bibinfo {author} {\bibfnamefont {D.~E.}\ \bibnamefont {Browne}},
  \bibinfo {author} {\bibfnamefont {W.}~\bibnamefont {D{\"u}r}}, \bibinfo
  {author} {\bibfnamefont {R.}~\bibnamefont {Raussendorf}}, \ and\ \bibinfo
  {author} {\bibfnamefont {M.}~\bibnamefont {Van~den Nest}},\ }\bibfield
  {title} {\enquote {\bibinfo {title} {Measurement-based quantum
  computation},}\ }\href {https://www.nature.com/articles/nphys1157} {\bibfield
   {journal} {\bibinfo  {journal} {Nature Physics}\ }\textbf {\bibinfo {volume}
  {5}},\ \bibinfo {pages} {19} (\bibinfo {year} {2009})},\ \Eprint
  {http://arxiv.org/abs/0910.1116} {arXiv:0910.1116 [quant-ph]} \BibitemShut
  {NoStop}%
\bibitem [{\citenamefont {Hein}\ \emph {et~al.}(2006)\citenamefont {Hein},
  \citenamefont {D{\"u}r}, \citenamefont {Eisert}, \citenamefont {Raussendorf},
  \citenamefont {Nest},\ and\ \citenamefont {Briegel}}]{hein2006entanglement}%
  \BibitemOpen
  \bibfield  {author} {\bibinfo {author} {\bibfnamefont {M.}~\bibnamefont
  {Hein}}, \bibinfo {author} {\bibfnamefont {W.}~\bibnamefont {D{\"u}r}},
  \bibinfo {author} {\bibfnamefont {J.}~\bibnamefont {Eisert}}, \bibinfo
  {author} {\bibfnamefont {R.}~\bibnamefont {Raussendorf}}, \bibinfo {author}
  {\bibfnamefont {M.}~\bibnamefont {Nest}}, \ and\ \bibinfo {author}
  {\bibfnamefont {H.-J.}\ \bibnamefont {Briegel}},\ }\href@noop {} {\enquote
  {\bibinfo {title} {Entanglement in graph states and its applications},}\ }
  (\bibinfo {year} {2006}),\ \Eprint {http://arxiv.org/abs/quant-ph/0602096}
  {arXiv:quant-ph/0602096 [quant-ph]} \BibitemShut {NoStop}%
\bibitem [{\citenamefont {Qu}\ \emph {et~al.}(2013)\citenamefont {Qu},
  \citenamefont {Wang}, \citenamefont {Li},\ and\ \citenamefont
  {Bao}}]{Qu2013encoding}%
  \BibitemOpen
  \bibfield  {author} {\bibinfo {author} {\bibfnamefont {R.}~\bibnamefont
  {Qu}}, \bibinfo {author} {\bibfnamefont {J.}~\bibnamefont {Wang}}, \bibinfo
  {author} {\bibfnamefont {Z.-s.}\ \bibnamefont {Li}}, \ and\ \bibinfo {author}
  {\bibfnamefont {Y.-r.}\ \bibnamefont {Bao}},\ }\bibfield  {title} {\enquote
  {\bibinfo {title} {Encoding hypergraphs into quantum states},}\ }\href
  {\doibase 10.1103/PhysRevA.87.022311} {\bibfield  {journal} {\bibinfo
  {journal} {Phys. Rev. A}\ }\textbf {\bibinfo {volume} {87}},\ \bibinfo
  {pages} {022311} (\bibinfo {year} {2013})},\ \Eprint
  {http://arxiv.org/abs/1211.3911} {arXiv:1211.3911 [quant-ph]} \BibitemShut
  {NoStop}%
\bibitem [{\citenamefont {Rossi}\ \emph {et~al.}(2013)\citenamefont {Rossi},
  \citenamefont {Huber}, \citenamefont {Bru{\ss}},\ and\ \citenamefont
  {Macchiavello}}]{rossi2013quantum}%
  \BibitemOpen
  \bibfield  {author} {\bibinfo {author} {\bibfnamefont {M.}~\bibnamefont
  {Rossi}}, \bibinfo {author} {\bibfnamefont {M.}~\bibnamefont {Huber}},
  \bibinfo {author} {\bibfnamefont {D.}~\bibnamefont {Bru{\ss}}}, \ and\
  \bibinfo {author} {\bibfnamefont {C.}~\bibnamefont {Macchiavello}},\
  }\bibfield  {title} {\enquote {\bibinfo {title} {Quantum hypergraph
  states},}\ }\href
  {https://iopscience.iop.org/article/10.1088/1367-2630/15/11/113022}
  {\bibfield  {journal} {\bibinfo  {journal} {New Journal of Physics}\ }\textbf
  {\bibinfo {volume} {15}},\ \bibinfo {pages} {113022} (\bibinfo {year}
  {2013})},\ \Eprint {http://arxiv.org/abs/1211.5554} {arXiv:1211.5554
  [quant-ph]} \BibitemShut {NoStop}%
\bibitem [{\citenamefont {Steinhoff}\ \emph {et~al.}(2017)\citenamefont
  {Steinhoff}, \citenamefont {Ritz}, \citenamefont {Miklin},\ and\
  \citenamefont {G\"{u}hne}}]{Steinhoff2017qudit}%
  \BibitemOpen
  \bibfield  {author} {\bibinfo {author} {\bibfnamefont {F.~E.~S.}\
  \bibnamefont {Steinhoff}}, \bibinfo {author} {\bibfnamefont {C.}~\bibnamefont
  {Ritz}}, \bibinfo {author} {\bibfnamefont {N.~I.}\ \bibnamefont {Miklin}}, \
  and\ \bibinfo {author} {\bibfnamefont {O.}~\bibnamefont {G\"{u}hne}},\
  }\bibfield  {title} {\enquote {\bibinfo {title} {Qudit hypergraph states},}\
  }\href {\doibase 10.1103/PhysRevA.95.052340} {\bibfield  {journal} {\bibinfo
  {journal} {Phys. Rev. A}\ }\textbf {\bibinfo {volume} {95}},\ \bibinfo
  {pages} {052340} (\bibinfo {year} {2017})},\ \Eprint
  {http://arxiv.org/abs/1612.06418} {arXiv:1612.06418 [quant-ph]} \BibitemShut
  {NoStop}%
\bibitem [{\citenamefont {Xiong}\ \emph {et~al.}(2018)\citenamefont {Xiong},
  \citenamefont {Zhen}, \citenamefont {Cao}, \citenamefont {Chen},\ and\
  \citenamefont {Chen}}]{Xiong2017qudit}%
  \BibitemOpen
  \bibfield  {author} {\bibinfo {author} {\bibfnamefont {F.-L.}\ \bibnamefont
  {Xiong}}, \bibinfo {author} {\bibfnamefont {Y.-Z.}\ \bibnamefont {Zhen}},
  \bibinfo {author} {\bibfnamefont {W.-F.}\ \bibnamefont {Cao}}, \bibinfo
  {author} {\bibfnamefont {K.}~\bibnamefont {Chen}}, \ and\ \bibinfo {author}
  {\bibfnamefont {Z.-B.}\ \bibnamefont {Chen}},\ }\bibfield  {title} {\enquote
  {\bibinfo {title} {Qudit hypergraph states and their properties},}\ }\href
  {\doibase 10.1103/PhysRevA.97.012323} {\bibfield  {journal} {\bibinfo
  {journal} {Phys. Rev. A}\ }\textbf {\bibinfo {volume} {97}},\ \bibinfo
  {pages} {012323} (\bibinfo {year} {2018})},\ \Eprint
  {http://arxiv.org/abs/1701.07733} {arXiv:1701.07733 [quant-ph]} \BibitemShut
  {NoStop}%
\bibitem [{\citenamefont {Cui}\ \emph {et~al.}(2015)\citenamefont {Cui},
  \citenamefont {Yu},\ and\ \citenamefont {Zeng}}]{cui2015generalized}%
  \BibitemOpen
  \bibfield  {author} {\bibinfo {author} {\bibfnamefont {S.~X.}\ \bibnamefont
  {Cui}}, \bibinfo {author} {\bibfnamefont {N.}~\bibnamefont {Yu}}, \ and\
  \bibinfo {author} {\bibfnamefont {B.}~\bibnamefont {Zeng}},\ }\bibfield
  {title} {\enquote {\bibinfo {title} {Generalized graph states based on
  hadamard matrices},}\ }\href
  {https://aip.scitation.org/doi/10.1063/1.4926427} {\bibfield  {journal}
  {\bibinfo  {journal} {Journal of Mathematical Physics}\ }\textbf {\bibinfo
  {volume} {56}},\ \bibinfo {pages} {072201} (\bibinfo {year} {2015})},\
  \Eprint {http://arxiv.org/abs/1502.07195} {arXiv:1502.07195 [quant-ph]}
  \BibitemShut {NoStop}%
\bibitem [{\citenamefont {Brell}(2015)}]{brell2015generalized}%
  \BibitemOpen
  \bibfield  {author} {\bibinfo {author} {\bibfnamefont {C.~G.}\ \bibnamefont
  {Brell}},\ }\bibfield  {title} {\enquote {\bibinfo {title} {Generalized
  cluster states based on finite groups},}\ }\href
  {https://iopscience.iop.org/article/10.1088/1367-2630/17/2/023029} {\bibfield
   {journal} {\bibinfo  {journal} {New Journal of Physics}\ }\textbf {\bibinfo
  {volume} {17}},\ \bibinfo {pages} {023029} (\bibinfo {year} {2015})},\
  \Eprint {http://arxiv.org/abs/1408.6237} {arXiv:1408.6237 [quant-ph]}
  \BibitemShut {NoStop}%
\bibitem [{\citenamefont {Fechisin}\ \emph {et~al.}(2023)\citenamefont
  {Fechisin}, \citenamefont {Tantivasadakarn},\ and\ \citenamefont
  {Albert}}]{fechisin2023noninvertible}%
  \BibitemOpen
  \bibfield  {author} {\bibinfo {author} {\bibfnamefont {C.}~\bibnamefont
  {Fechisin}}, \bibinfo {author} {\bibfnamefont {N.}~\bibnamefont
  {Tantivasadakarn}}, \ and\ \bibinfo {author} {\bibfnamefont {V.~V.}\
  \bibnamefont {Albert}},\ }\href@noop {} {\enquote {\bibinfo {title}
  {Non-invertible symmetry-protected topological order in a group-based cluster
  state},}\ } (\bibinfo {year} {2023}),\ \Eprint
  {http://arxiv.org/abs/2312.09272} {arXiv:2312.09272 [cond-mat.str-el]}
  \BibitemShut {NoStop}%
\bibitem [{\citenamefont {Walschaers}\ \emph {et~al.}(2018)\citenamefont
  {Walschaers}, \citenamefont {Sarkar}, \citenamefont {Parigi},\ and\
  \citenamefont {Treps}}]{Walschaers2018tailoring}%
  \BibitemOpen
  \bibfield  {author} {\bibinfo {author} {\bibfnamefont {M.}~\bibnamefont
  {Walschaers}}, \bibinfo {author} {\bibfnamefont {S.}~\bibnamefont {Sarkar}},
  \bibinfo {author} {\bibfnamefont {V.}~\bibnamefont {Parigi}}, \ and\ \bibinfo
  {author} {\bibfnamefont {N.}~\bibnamefont {Treps}},\ }\bibfield  {title}
  {\enquote {\bibinfo {title} {Tailoring non-gaussian continuous-variable graph
  states},}\ }\href {\doibase 10.1103/PhysRevLett.121.220501} {\bibfield
  {journal} {\bibinfo  {journal} {Phys. Rev. Lett.}\ }\textbf {\bibinfo
  {volume} {121}},\ \bibinfo {pages} {220501} (\bibinfo {year} {2018})},\
  \Eprint {http://arxiv.org/abs/1804.09444} {arXiv:1804.09444 [quant-ph]}
  \BibitemShut {NoStop}%
\bibitem [{\citenamefont {Moore}(2019)}]{Moore2019quantum}%
  \BibitemOpen
  \bibfield  {author} {\bibinfo {author} {\bibfnamefont {D.~W.}\ \bibnamefont
  {Moore}},\ }\bibfield  {title} {\enquote {\bibinfo {title} {Quantum
  hypergraph states in continuous variables},}\ }\href {\doibase
  10.1103/PhysRevA.100.062301} {\bibfield  {journal} {\bibinfo  {journal}
  {Phys. Rev. A}\ }\textbf {\bibinfo {volume} {100}},\ \bibinfo {pages}
  {062301} (\bibinfo {year} {2019})},\ \Eprint
  {http://arxiv.org/abs/1909.03871} {arXiv:1909.03871 [quant-ph]} \BibitemShut
  {NoStop}%
\bibitem [{\citenamefont {Looi}\ \emph {et~al.}(2008)\citenamefont {Looi},
  \citenamefont {Yu}, \citenamefont {Gheorghiu},\ and\ \citenamefont
  {Griffiths}}]{Looi2008QECC}%
  \BibitemOpen
  \bibfield  {author} {\bibinfo {author} {\bibfnamefont {S.~Y.}\ \bibnamefont
  {Looi}}, \bibinfo {author} {\bibfnamefont {L.}~\bibnamefont {Yu}}, \bibinfo
  {author} {\bibfnamefont {V.}~\bibnamefont {Gheorghiu}}, \ and\ \bibinfo
  {author} {\bibfnamefont {R.~B.}\ \bibnamefont {Griffiths}},\ }\bibfield
  {title} {\enquote {\bibinfo {title} {Quantum-error-correcting codes using
  qudit graph states},}\ }\href {\doibase 10.1103/PhysRevA.78.042303}
  {\bibfield  {journal} {\bibinfo  {journal} {Phys. Rev. A}\ }\textbf {\bibinfo
  {volume} {78}},\ \bibinfo {pages} {042303} (\bibinfo {year} {2008})},\
  \Eprint {http://arxiv.org/abs/0712.1979} {arXiv:0712.1979 [quant-ph]}
  \BibitemShut {NoStop}%
\bibitem [{\citenamefont {Markham}\ and\ \citenamefont
  {Sanders}(2008)}]{Markham2008graph}%
  \BibitemOpen
  \bibfield  {author} {\bibinfo {author} {\bibfnamefont {D.}~\bibnamefont
  {Markham}}\ and\ \bibinfo {author} {\bibfnamefont {B.~C.}\ \bibnamefont
  {Sanders}},\ }\bibfield  {title} {\enquote {\bibinfo {title} {Graph states
  for quantum secret sharing},}\ }\href {\doibase 10.1103/PhysRevA.78.042309}
  {\bibfield  {journal} {\bibinfo  {journal} {Phys. Rev. A}\ }\textbf {\bibinfo
  {volume} {78}},\ \bibinfo {pages} {042309} (\bibinfo {year} {2008})},\
  \Eprint {http://arxiv.org/abs/0808.1532} {arXiv:0808.1532 [quant-ph]}
  \BibitemShut {NoStop}%
\bibitem [{\citenamefont {Keet}\ \emph {et~al.}(2010)\citenamefont {Keet},
  \citenamefont {Fortescue}, \citenamefont {Markham},\ and\ \citenamefont
  {Sanders}}]{Keet2010quantum}%
  \BibitemOpen
  \bibfield  {author} {\bibinfo {author} {\bibfnamefont {A.}~\bibnamefont
  {Keet}}, \bibinfo {author} {\bibfnamefont {B.}~\bibnamefont {Fortescue}},
  \bibinfo {author} {\bibfnamefont {D.}~\bibnamefont {Markham}}, \ and\
  \bibinfo {author} {\bibfnamefont {B.~C.}\ \bibnamefont {Sanders}},\
  }\bibfield  {title} {\enquote {\bibinfo {title} {Quantum secret sharing with
  qudit graph states},}\ }\href {\doibase 10.1103/PhysRevA.82.062315}
  {\bibfield  {journal} {\bibinfo  {journal} {Phys. Rev. A}\ }\textbf {\bibinfo
  {volume} {82}},\ \bibinfo {pages} {062315} (\bibinfo {year} {2010})},\
  \Eprint {http://arxiv.org/abs/1004.4619} {arXiv:1004.4619 [quant-ph]}
  \BibitemShut {NoStop}%
\bibitem [{\citenamefont {Son}\ \emph {et~al.}(2012)\citenamefont {Son},
  \citenamefont {Amico},\ and\ \citenamefont {Vedral}}]{son2012topological}%
  \BibitemOpen
  \bibfield  {author} {\bibinfo {author} {\bibfnamefont {W.}~\bibnamefont
  {Son}}, \bibinfo {author} {\bibfnamefont {L.}~\bibnamefont {Amico}}, \ and\
  \bibinfo {author} {\bibfnamefont {V.}~\bibnamefont {Vedral}},\ }\bibfield
  {title} {\enquote {\bibinfo {title} {Topological order in 1d cluster state
  protected by symmetry},}\ }\href@noop {} {\bibfield  {journal} {\bibinfo
  {journal} {Quantum Information Processing}\ }\textbf {\bibinfo {volume}
  {11}},\ \bibinfo {pages} {1961} (\bibinfo {year} {2012})},\ \Eprint
  {http://arxiv.org/abs/1111.7173} {arXiv:1111.7173 [quant-ph]} \BibitemShut
  {NoStop}%
\bibitem [{\citenamefont {Seifnashri}\ and\ \citenamefont
  {Shao}(2024)}]{seifnashri2024cluster}%
  \BibitemOpen
  \bibfield  {author} {\bibinfo {author} {\bibfnamefont {S.}~\bibnamefont
  {Seifnashri}}\ and\ \bibinfo {author} {\bibfnamefont {S.-H.}\ \bibnamefont
  {Shao}},\ }\href@noop {} {\enquote {\bibinfo {title} {Cluster state as a
  non-invertible symmetry protected topological phase},}\ } (\bibinfo {year}
  {2024}),\ \Eprint {http://arxiv.org/abs/2404.01369} {arXiv:2404.01369
  [cond-mat.str-el]} \BibitemShut {NoStop}%
\bibitem [{\citenamefont {Kitaev}(2003)}]{Kitaev2003}%
  \BibitemOpen
  \bibfield  {author} {\bibinfo {author} {\bibfnamefont {A.}~\bibnamefont
  {Kitaev}},\ }\bibfield  {title} {\enquote {\bibinfo {title} {Fault-tolerant
  quantum computation by anyons},}\ }\href {\doibase
  https://doi.org/10.1016/S0003-4916(02)00018-0} {\bibfield  {journal}
  {\bibinfo  {journal} {Annals of Physics}\ }\textbf {\bibinfo {volume}
  {303}},\ \bibinfo {pages} {2 } (\bibinfo {year} {2003})},\ \Eprint
  {http://arxiv.org/abs/quant-ph/9707021} {arXiv:quant-ph/9707021 [quant-ph]}
  \BibitemShut {NoStop}%
\bibitem [{\citenamefont {Albert}\ \emph {et~al.}(2021)\citenamefont {Albert},
  \citenamefont {Aasen}, \citenamefont {Xu}, \citenamefont {Ji}, \citenamefont
  {Alicea},\ and\ \citenamefont {Preskill}}]{albert2021spin}%
  \BibitemOpen
  \bibfield  {author} {\bibinfo {author} {\bibfnamefont {V.~V.}\ \bibnamefont
  {Albert}}, \bibinfo {author} {\bibfnamefont {D.}~\bibnamefont {Aasen}},
  \bibinfo {author} {\bibfnamefont {W.}~\bibnamefont {Xu}}, \bibinfo {author}
  {\bibfnamefont {W.}~\bibnamefont {Ji}}, \bibinfo {author} {\bibfnamefont
  {J.}~\bibnamefont {Alicea}}, \ and\ \bibinfo {author} {\bibfnamefont
  {J.}~\bibnamefont {Preskill}},\ }\href@noop {} {\enquote {\bibinfo {title}
  {Spin chains, defects, and quantum wires for the quantum-double edge},}\ }
  (\bibinfo {year} {2021}),\ \Eprint {http://arxiv.org/abs/2111.12096}
  {arXiv:2111.12096 [cond-mat.str-el]} \BibitemShut {NoStop}%
\bibitem [{\citenamefont {Cordova}\ \emph {et~al.}(2022)\citenamefont
  {Cordova}, \citenamefont {Dumitrescu}, \citenamefont {Intriligator},\ and\
  \citenamefont {Shao}}]{cordova2022snowmass}%
  \BibitemOpen
  \bibfield  {author} {\bibinfo {author} {\bibfnamefont {C.}~\bibnamefont
  {Cordova}}, \bibinfo {author} {\bibfnamefont {T.~T.}\ \bibnamefont
  {Dumitrescu}}, \bibinfo {author} {\bibfnamefont {K.}~\bibnamefont
  {Intriligator}}, \ and\ \bibinfo {author} {\bibfnamefont {S.-H.}\
  \bibnamefont {Shao}},\ }\href@noop {} {\enquote {\bibinfo {title} {Snowmass
  white paper: Generalized symmetries in quantum field theory and beyond},}\ }
  (\bibinfo {year} {2022}),\ \Eprint {http://arxiv.org/abs/2205.09545}
  {arXiv:2205.09545 [hep-th]} \BibitemShut {NoStop}%
\bibitem [{\citenamefont {Brennan}\ and\ \citenamefont
  {Hong}(2023)}]{brennan2023introduction}%
  \BibitemOpen
  \bibfield  {author} {\bibinfo {author} {\bibfnamefont {T.~D.}\ \bibnamefont
  {Brennan}}\ and\ \bibinfo {author} {\bibfnamefont {S.}~\bibnamefont {Hong}},\
  }\href@noop {} {\enquote {\bibinfo {title} {Introduction to generalized
  global symmetries in {QFT} and particle physics},}\ } (\bibinfo {year}
  {2023}),\ \Eprint {http://arxiv.org/abs/2306.00912} {arXiv:2306.00912
  [hep-ph]} \BibitemShut {NoStop}%
\bibitem [{\citenamefont {McGreevy}(2023)}]{mcgreevy2023generalized}%
  \BibitemOpen
  \bibfield  {author} {\bibinfo {author} {\bibfnamefont {J.}~\bibnamefont
  {McGreevy}},\ }\bibfield  {title} {\enquote {\bibinfo {title} {Generalized
  symmetries in condensed matter},}\ }\href
  {https://www.annualreviews.org/content/journals/10.1146/annurev-conmatphys-040721-021029}
  {\bibfield  {journal} {\bibinfo  {journal} {Annual Review of Condensed Matter
  Physics}\ }\textbf {\bibinfo {volume} {14}},\ \bibinfo {pages} {57} (\bibinfo
  {year} {2023})},\ \Eprint {http://arxiv.org/abs/2204.03045} {arXiv:2204.03045
  [cond-mat.str-el]} \BibitemShut {NoStop}%
\bibitem [{\citenamefont {Luo}\ \emph {et~al.}(2024)\citenamefont {Luo},
  \citenamefont {Wang},\ and\ \citenamefont {Wang}}]{luo2023lecture}%
  \BibitemOpen
  \bibfield  {author} {\bibinfo {author} {\bibfnamefont {R.}~\bibnamefont
  {Luo}}, \bibinfo {author} {\bibfnamefont {Q.-R.}\ \bibnamefont {Wang}}, \
  and\ \bibinfo {author} {\bibfnamefont {Y.-N.}\ \bibnamefont {Wang}},\
  }\bibfield  {title} {\enquote {\bibinfo {title} {Lecture notes on generalized
  symmetries and applications},}\ }\href {\doibase
  https://doi.org/10.1016/j.physrep.2024.02.002} {\bibfield  {journal}
  {\bibinfo  {journal} {Physics Reports}\ }\textbf {\bibinfo {volume} {1065}},\
  \bibinfo {pages} {1} (\bibinfo {year} {2024})},\ \Eprint
  {http://arxiv.org/abs/2307.09215} {arXiv:2307.09215 [hep-th]} \BibitemShut
  {NoStop}%
\bibitem [{\citenamefont {Shao}(2024)}]{shao2024whats}%
  \BibitemOpen
  \bibfield  {author} {\bibinfo {author} {\bibfnamefont {S.-H.}\ \bibnamefont
  {Shao}},\ }\href@noop {} {\enquote {\bibinfo {title} {What's done cannot be
  undone: Tasi lectures on non-invertible symmetries},}\ } (\bibinfo {year}
  {2024}),\ \Eprint {http://arxiv.org/abs/2308.00747} {arXiv:2308.00747
  [hep-th]} \BibitemShut {NoStop}%
\bibitem [{\citenamefont {Sch\"{a}fer-Nameki}(2024)}]{SchaferNameki2024ICTP}%
  \BibitemOpen
  \bibfield  {author} {\bibinfo {author} {\bibfnamefont {S.}~\bibnamefont
  {Sch\"{a}fer-Nameki}},\ }\bibfield  {title} {\enquote {\bibinfo {title}
  {{ICTP} lecture on (non-)invertible generalized symmetries},}\ }\href
  {\doibase https://doi.org/10.1016/j.physrep.2024.01.007} {\bibfield
  {journal} {\bibinfo  {journal} {Physics Reports}\ }\textbf {\bibinfo {volume}
  {1063}},\ \bibinfo {pages} {1} (\bibinfo {year} {2024})},\ \Eprint
  {http://arxiv.org/abs/2305.18296} {arXiv:2305.18296 [hep-th]} \BibitemShut
  {NoStop}%
\bibitem [{\citenamefont {Bhardwaj}\ \emph
  {et~al.}(2024{\natexlab{a}})\citenamefont {Bhardwaj}, \citenamefont
  {Bottini}, \citenamefont {Fraser-Taliente}, \citenamefont {Gladden},
  \citenamefont {Gould}, \citenamefont {Platschorre},\ and\ \citenamefont
  {Tillim}}]{Bhardwaj2024lecture}%
  \BibitemOpen
  \bibfield  {author} {\bibinfo {author} {\bibfnamefont {L.}~\bibnamefont
  {Bhardwaj}}, \bibinfo {author} {\bibfnamefont {L.~E.}\ \bibnamefont
  {Bottini}}, \bibinfo {author} {\bibfnamefont {L.}~\bibnamefont
  {Fraser-Taliente}}, \bibinfo {author} {\bibfnamefont {L.}~\bibnamefont
  {Gladden}}, \bibinfo {author} {\bibfnamefont {D.~S.}\ \bibnamefont {Gould}},
  \bibinfo {author} {\bibfnamefont {A.}~\bibnamefont {Platschorre}}, \ and\
  \bibinfo {author} {\bibfnamefont {H.}~\bibnamefont {Tillim}},\ }\bibfield
  {title} {\enquote {\bibinfo {title} {Lectures on generalized symmetries},}\
  }\href {\doibase https://doi.org/10.1016/j.physrep.2023.11.002} {\bibfield
  {journal} {\bibinfo  {journal} {Physics Reports}\ }\textbf {\bibinfo {volume}
  {1051}},\ \bibinfo {pages} {1} (\bibinfo {year} {2024}{\natexlab{a}})},\
  \Eprint {http://arxiv.org/abs/2307.07547} {arXiv:2307.07547 [hep-th]}
  \BibitemShut {NoStop}%
\bibitem [{\citenamefont {Delcamp}\ and\ \citenamefont
  {Tiwari}(2024)}]{delcamp2024higher}%
  \BibitemOpen
  \bibfield  {author} {\bibinfo {author} {\bibfnamefont {C.}~\bibnamefont
  {Delcamp}}\ and\ \bibinfo {author} {\bibfnamefont {A.}~\bibnamefont
  {Tiwari}},\ }\bibfield  {title} {\enquote {\bibinfo {title} {Higher
  categorical symmetries and gauging in two-dimensional spin systems},}\ }\href
  {https://scipost.org/10.21468/SciPostPhys.16.4.110} {\bibfield  {journal}
  {\bibinfo  {journal} {SciPost Physics}\ }\textbf {\bibinfo {volume} {16}},\
  \bibinfo {pages} {110} (\bibinfo {year} {2024})},\ \Eprint
  {http://arxiv.org/abs/2301.01259} {arXiv:2301.01259 [hep-th]} \BibitemShut
  {NoStop}%
\bibitem [{\citenamefont {Fr\"{o}hlich}\ \emph {et~al.}(2004)\citenamefont
  {Fr\"{o}hlich}, \citenamefont {Fuchs}, \citenamefont {Runkel},\ and\
  \citenamefont {Schweigert}}]{Frohlich2004kramers}%
  \BibitemOpen
  \bibfield  {author} {\bibinfo {author} {\bibfnamefont {J.}~\bibnamefont
  {Fr\"{o}hlich}}, \bibinfo {author} {\bibfnamefont {J.}~\bibnamefont {Fuchs}},
  \bibinfo {author} {\bibfnamefont {I.}~\bibnamefont {Runkel}}, \ and\ \bibinfo
  {author} {\bibfnamefont {C.}~\bibnamefont {Schweigert}},\ }\bibfield  {title}
  {\enquote {\bibinfo {title} {Kramers-wannier duality from conformal
  defects},}\ }\href {\doibase 10.1103/PhysRevLett.93.070601} {\bibfield
  {journal} {\bibinfo  {journal} {Phys. Rev. Lett.}\ }\textbf {\bibinfo
  {volume} {93}},\ \bibinfo {pages} {070601} (\bibinfo {year} {2004})},\
  \Eprint {http://arxiv.org/abs/cond-mat/0404051} {arXiv:cond-mat/0404051
  [cond-mat.stat-mech]} \BibitemShut {NoStop}%
\bibitem [{\citenamefont {Fuchs}\ \emph {et~al.}(2007)\citenamefont {Fuchs},
  \citenamefont {Gaberdiel}, \citenamefont {Runkel},\ and\ \citenamefont
  {Schweigert}}]{fuchs2007topological}%
  \BibitemOpen
  \bibfield  {author} {\bibinfo {author} {\bibfnamefont {J.}~\bibnamefont
  {Fuchs}}, \bibinfo {author} {\bibfnamefont {M.~R.}\ \bibnamefont
  {Gaberdiel}}, \bibinfo {author} {\bibfnamefont {I.}~\bibnamefont {Runkel}}, \
  and\ \bibinfo {author} {\bibfnamefont {C.}~\bibnamefont {Schweigert}},\
  }\bibfield  {title} {\enquote {\bibinfo {title} {Topological defects for the
  free boson cft},}\ }\href
  {https://iopscience.iop.org/article/10.1088/1751-8113/40/37/016} {\bibfield
  {journal} {\bibinfo  {journal} {Journal of Physics A: Mathematical and
  Theoretical}\ }\textbf {\bibinfo {volume} {40}},\ \bibinfo {pages} {11403}
  (\bibinfo {year} {2007})},\ \Eprint {http://arxiv.org/abs/0705.3129}
  {arXiv:0705.3129 [hep-th]} \BibitemShut {NoStop}%
\bibitem [{\citenamefont {Fr{\"o}hlich}\ \emph {et~al.}(2010)\citenamefont
  {Fr{\"o}hlich}, \citenamefont {Fuchs}, \citenamefont {Runkel},\ and\
  \citenamefont {Schweigert}}]{frohlich2010defect}%
  \BibitemOpen
  \bibfield  {author} {\bibinfo {author} {\bibfnamefont {J.}~\bibnamefont
  {Fr{\"o}hlich}}, \bibinfo {author} {\bibfnamefont {J.}~\bibnamefont {Fuchs}},
  \bibinfo {author} {\bibfnamefont {I.}~\bibnamefont {Runkel}}, \ and\ \bibinfo
  {author} {\bibfnamefont {C.}~\bibnamefont {Schweigert}},\ }\bibfield  {title}
  {\enquote {\bibinfo {title} {Defect lines, dualities and generalised
  orbifolds},}\ }in\ \href
  {https://www.worldscientific.com/doi/abs/10.1142/9789814304634_0056} {\emph
  {\bibinfo {booktitle} {XVIth International Congress On Mathematical Physics:
  (With DVD-ROM)}}}\ (\bibinfo {organization} {World Scientific},\ \bibinfo
  {year} {2010})\ pp.\ \bibinfo {pages} {608--613},\ \Eprint
  {http://arxiv.org/abs/0909.5013} {arXiv:0909.5013 [math-ph]} \BibitemShut
  {NoStop}%
\bibitem [{\citenamefont {Bhardwaj}\ and\ \citenamefont
  {Tachikawa}(2018)}]{bhardwaj2018finite}%
  \BibitemOpen
  \bibfield  {author} {\bibinfo {author} {\bibfnamefont {L.}~\bibnamefont
  {Bhardwaj}}\ and\ \bibinfo {author} {\bibfnamefont {Y.}~\bibnamefont
  {Tachikawa}},\ }\bibfield  {title} {\enquote {\bibinfo {title} {On finite
  symmetries and their gauging in two dimensions},}\ }\href@noop {} {\bibfield
  {journal} {\bibinfo  {journal} {Journal of High Energy Physics}\ }\textbf
  {\bibinfo {volume} {2018}},\ \bibinfo {pages} {1} (\bibinfo {year} {2018})},\
  \Eprint {http://arxiv.org/abs/1704.02330} {arXiv:1704.02330 [hep-th]}
  \BibitemShut {NoStop}%
\bibitem [{\citenamefont {Chang}\ \emph {et~al.}(2019)\citenamefont {Chang},
  \citenamefont {Lin}, \citenamefont {Shao}, \citenamefont {Wang},\ and\
  \citenamefont {Yin}}]{chang2019topological}%
  \BibitemOpen
  \bibfield  {author} {\bibinfo {author} {\bibfnamefont {C.-M.}\ \bibnamefont
  {Chang}}, \bibinfo {author} {\bibfnamefont {Y.-H.}\ \bibnamefont {Lin}},
  \bibinfo {author} {\bibfnamefont {S.-H.}\ \bibnamefont {Shao}}, \bibinfo
  {author} {\bibfnamefont {Y.}~\bibnamefont {Wang}}, \ and\ \bibinfo {author}
  {\bibfnamefont {X.}~\bibnamefont {Yin}},\ }\bibfield  {title} {\enquote
  {\bibinfo {title} {Topological defect lines and renormalization group flows
  in two dimensions},}\ }\href
  {https://link.springer.com/article/10.1007/JHEP01(2019)026} {\bibfield
  {journal} {\bibinfo  {journal} {Journal of High Energy Physics}\ }\textbf
  {\bibinfo {volume} {2019}},\ \bibinfo {pages} {1} (\bibinfo {year} {2019})},\
  \Eprint {http://arxiv.org/abs/1802.04445} {arXiv:1802.04445 [hep-th]}
  \BibitemShut {NoStop}%
\bibitem [{\citenamefont {Thorngren}\ and\ \citenamefont
  {Wang}(2024)}]{thorngren2019fusion}%
  \BibitemOpen
  \bibfield  {author} {\bibinfo {author} {\bibfnamefont {R.}~\bibnamefont
  {Thorngren}}\ and\ \bibinfo {author} {\bibfnamefont {Y.}~\bibnamefont
  {Wang}},\ }\bibfield  {title} {\enquote {\bibinfo {title} {Fusion category
  symmetry. part i. anomaly in-flow and gapped phases},}\ }\href
  {https://link.springer.com/article/10.1007/JHEP04(2024)132} {\bibfield
  {journal} {\bibinfo  {journal} {Journal of High Energy Physics}\ }\textbf
  {\bibinfo {volume} {2024}},\ \bibinfo {pages} {1} (\bibinfo {year} {2024})},\
  \Eprint {http://arxiv.org/abs/1912.02817} {arXiv:1912.02817 [hep-th]}
  \BibitemShut {NoStop}%
\bibitem [{\citenamefont {Thorngren}\ and\ \citenamefont
  {Wang}(2021)}]{thorngren2021fusion}%
  \BibitemOpen
  \bibfield  {author} {\bibinfo {author} {\bibfnamefont {R.}~\bibnamefont
  {Thorngren}}\ and\ \bibinfo {author} {\bibfnamefont {Y.}~\bibnamefont
  {Wang}},\ }\href@noop {} {\enquote {\bibinfo {title} {Fusion category
  symmetry ii: Categoriosities at c = 1 and beyond},}\ } (\bibinfo {year}
  {2021}),\ \Eprint {http://arxiv.org/abs/2106.12577} {arXiv:2106.12577
  [hep-th]} \BibitemShut {NoStop}%
\bibitem [{\citenamefont {Komargodski}\ \emph {et~al.}(2021)\citenamefont
  {Komargodski}, \citenamefont {Ohmori}, \citenamefont {Roumpedakis},\ and\
  \citenamefont {Seifnashri}}]{komargodski2021symmetries}%
  \BibitemOpen
  \bibfield  {author} {\bibinfo {author} {\bibfnamefont {Z.}~\bibnamefont
  {Komargodski}}, \bibinfo {author} {\bibfnamefont {K.}~\bibnamefont {Ohmori}},
  \bibinfo {author} {\bibfnamefont {K.}~\bibnamefont {Roumpedakis}}, \ and\
  \bibinfo {author} {\bibfnamefont {S.}~\bibnamefont {Seifnashri}},\ }\bibfield
   {title} {\enquote {\bibinfo {title} {Symmetries and strings of adjoint
  qcd2},}\ }\href {https://link.springer.com/article/10.1007/JHEP03(2021)103}
  {\bibfield  {journal} {\bibinfo  {journal} {Journal of High Energy Physics}\
  }\textbf {\bibinfo {volume} {2021}},\ \bibinfo {pages} {1} (\bibinfo {year}
  {2021})},\ \Eprint {http://arxiv.org/abs/2008.07567} {arXiv:2008.07567
  [hep-th]} \BibitemShut {NoStop}%
\bibitem [{\citenamefont {Inamura}(2022)}]{inamura2022lattice}%
  \BibitemOpen
  \bibfield  {author} {\bibinfo {author} {\bibfnamefont {K.}~\bibnamefont
  {Inamura}},\ }\bibfield  {title} {\enquote {\bibinfo {title} {On lattice
  models of gapped phases with fusion category symmetries},}\ }\href
  {https://link.springer.com/article/10.1007/JHEP03(2022)036} {\bibfield
  {journal} {\bibinfo  {journal} {Journal of High Energy Physics}\ }\textbf
  {\bibinfo {volume} {2022}},\ \bibinfo {pages} {1} (\bibinfo {year} {2022})},\
  \Eprint {http://arxiv.org/abs/2110.12882} {arXiv:2110.12882
  [cond-mat.str-el]} \BibitemShut {NoStop}%
\bibitem [{\citenamefont {Inamura}(2023)}]{inamura2023fermionization}%
  \BibitemOpen
  \bibfield  {author} {\bibinfo {author} {\bibfnamefont {K.}~\bibnamefont
  {Inamura}},\ }\bibfield  {title} {\enquote {\bibinfo {title} {Fermionization
  of fusion category symmetries in 1+1 dimensions},}\ }\href
  {https://link.springer.com/article/10.1007/JHEP10(2023)101} {\bibfield
  {journal} {\bibinfo  {journal} {Journal of High Energy Physics}\ }\textbf
  {\bibinfo {volume} {2023}},\ \bibinfo {pages} {1} (\bibinfo {year} {2023})},\
  \Eprint {http://arxiv.org/abs/2206.13159} {arXiv:2206.13159
  [cond-mat.str-el]} \BibitemShut {NoStop}%
\bibitem [{\citenamefont {Gaiotto}\ \emph {et~al.}(2015)\citenamefont
  {Gaiotto}, \citenamefont {Kapustin}, \citenamefont {Seiberg},\ and\
  \citenamefont {Willett}}]{gaiotto2015generalized}%
  \BibitemOpen
  \bibfield  {author} {\bibinfo {author} {\bibfnamefont {D.}~\bibnamefont
  {Gaiotto}}, \bibinfo {author} {\bibfnamefont {A.}~\bibnamefont {Kapustin}},
  \bibinfo {author} {\bibfnamefont {N.}~\bibnamefont {Seiberg}}, \ and\
  \bibinfo {author} {\bibfnamefont {B.}~\bibnamefont {Willett}},\ }\bibfield
  {title} {\enquote {\bibinfo {title} {Generalized global symmetries},}\
  }\href@noop {} {\bibfield  {journal} {\bibinfo  {journal} {Journal of High
  Energy Physics}\ }\textbf {\bibinfo {volume} {2015}},\ \bibinfo {pages} {1}
  (\bibinfo {year} {2015})},\ \Eprint {http://arxiv.org/abs/1412.5148}
  {arXiv:1412.5148 [hep-th]} \BibitemShut {NoStop}%
\bibitem [{\citenamefont {Kapustin}\ and\ \citenamefont
  {Thorngren}(2017)}]{kapustin2017higher}%
  \BibitemOpen
  \bibfield  {author} {\bibinfo {author} {\bibfnamefont {A.}~\bibnamefont
  {Kapustin}}\ and\ \bibinfo {author} {\bibfnamefont {R.}~\bibnamefont
  {Thorngren}},\ }\bibfield  {title} {\enquote {\bibinfo {title} {Higher
  symmetry and gapped phases of gauge theories},}\ }\href
  {https://link-springer-com.libproxy1.nus.edu.sg/chapter/10.1007/978-3-319-59939-7_5}
  {\bibfield  {journal} {\bibinfo  {journal} {Algebra, Geometry, and Physics in
  the 21st Century: Kontsevich Festschrift}\ ,\ \bibinfo {pages} {177}}
  (\bibinfo {year} {2017})}\BibitemShut {NoStop}%
\bibitem [{\citenamefont {Gomes}(2023)}]{gomes2023introduction}%
  \BibitemOpen
  \bibfield  {author} {\bibinfo {author} {\bibfnamefont {P.~R.}\ \bibnamefont
  {Gomes}},\ }\bibfield  {title} {\enquote {\bibinfo {title} {An introduction
  to higher-form symmetries},}\ }\href
  {https://scipost.org/10.21468/SciPostPhysLectNotes.74} {\bibfield  {journal}
  {\bibinfo  {journal} {SciPost Physics Lecture Notes}\ ,\ \bibinfo {pages}
  {074}} (\bibinfo {year} {2023})},\ \Eprint {http://arxiv.org/abs/2303.01817}
  {arXiv:2303.01817 [hep-th]} \BibitemShut {NoStop}%
\bibitem [{\citenamefont {Bhardwaj}\ \emph
  {et~al.}(2024{\natexlab{b}})\citenamefont {Bhardwaj}, \citenamefont
  {Bottini}, \citenamefont {Schafer-Nameki},\ and\ \citenamefont
  {Tiwari}}]{bhardwaj2024illustrating}%
  \BibitemOpen
  \bibfield  {author} {\bibinfo {author} {\bibfnamefont {L.}~\bibnamefont
  {Bhardwaj}}, \bibinfo {author} {\bibfnamefont {L.~E.}\ \bibnamefont
  {Bottini}}, \bibinfo {author} {\bibfnamefont {S.}~\bibnamefont
  {Schafer-Nameki}}, \ and\ \bibinfo {author} {\bibfnamefont {A.}~\bibnamefont
  {Tiwari}},\ }\href@noop {} {\enquote {\bibinfo {title} {Illustrating the
  categorical landau paradigm in lattice models},}\ } (\bibinfo {year}
  {2024}{\natexlab{b}}),\ \Eprint {http://arxiv.org/abs/2405.05302}
  {arXiv:2405.05302 [cond-mat.str-el]} \BibitemShut {NoStop}%
\bibitem [{\citenamefont {Bhardwaj}\ \emph
  {et~al.}(2024{\natexlab{c}})\citenamefont {Bhardwaj}, \citenamefont
  {Bottini}, \citenamefont {Schafer-Nameki},\ and\ \citenamefont
  {Tiwari}}]{bhardwaj2024lattice}%
  \BibitemOpen
  \bibfield  {author} {\bibinfo {author} {\bibfnamefont {L.}~\bibnamefont
  {Bhardwaj}}, \bibinfo {author} {\bibfnamefont {L.~E.}\ \bibnamefont
  {Bottini}}, \bibinfo {author} {\bibfnamefont {S.}~\bibnamefont
  {Schafer-Nameki}}, \ and\ \bibinfo {author} {\bibfnamefont {A.}~\bibnamefont
  {Tiwari}},\ }\href@noop {} {\enquote {\bibinfo {title} {Lattice models for
  phases and transitions with non-invertible symmetries},}\ } (\bibinfo {year}
  {2024}{\natexlab{c}}),\ \Eprint {http://arxiv.org/abs/2405.05964}
  {arXiv:2405.05964 [cond-mat.str-el]} \BibitemShut {NoStop}%
\bibitem [{\citenamefont {Jia}()}]{jia2024cluster}%
  \BibitemOpen
  \bibfield  {author} {\bibinfo {author} {\bibfnamefont {Z.}~\bibnamefont
  {Jia}},\ }\href@noop {} {\enquote {\bibinfo {title} {{Cluster
  symmetry-protected topological phases from Hopf symmetries}},}\ }\bibinfo
  {howpublished} {in preparation}\BibitemShut {NoStop}%
\bibitem [{\citenamefont {Buerschaper}\ \emph
  {et~al.}(2013{\natexlab{a}})\citenamefont {Buerschaper}, \citenamefont
  {Mombelli}, \citenamefont {Christandl},\ and\ \citenamefont
  {Aguado}}]{Buerschaper2013a}%
  \BibitemOpen
  \bibfield  {author} {\bibinfo {author} {\bibfnamefont {O.}~\bibnamefont
  {Buerschaper}}, \bibinfo {author} {\bibfnamefont {J.~M.}\ \bibnamefont
  {Mombelli}}, \bibinfo {author} {\bibfnamefont {M.}~\bibnamefont
  {Christandl}}, \ and\ \bibinfo {author} {\bibfnamefont {M.}~\bibnamefont
  {Aguado}},\ }\bibfield  {title} {\enquote {\bibinfo {title} {A hierarchy of
  topological tensor network states},}\ }\href {\doibase 10.1063/1.4773316}
  {\bibfield  {journal} {\bibinfo  {journal} {Journal of Mathematical Physics}\
  }\textbf {\bibinfo {volume} {54}},\ \bibinfo {pages} {012201} (\bibinfo
  {year} {2013}{\natexlab{a}})},\ \Eprint {http://arxiv.org/abs/1007.5283}
  {arXiv:1007.5283 [cond-mat.str-el]} \BibitemShut {NoStop}%
\bibitem [{\citenamefont {Yan}\ \emph {et~al.}(2022)\citenamefont {Yan},
  \citenamefont {Chen},\ and\ \citenamefont {Cui}}]{chen2021ribbon}%
  \BibitemOpen
  \bibfield  {author} {\bibinfo {author} {\bibfnamefont {B.}~\bibnamefont
  {Yan}}, \bibinfo {author} {\bibfnamefont {P.}~\bibnamefont {Chen}}, \ and\
  \bibinfo {author} {\bibfnamefont {S.}~\bibnamefont {Cui}},\ }\bibfield
  {title} {\enquote {\bibinfo {title} {Ribbon operators in the generalized
  {K}itaev quantum double model based on {H}opf algebras},}\ }\href
  {https://iopscience.iop.org/article/10.1088/1751-8121/ac552c/meta} {\bibfield
   {journal} {\bibinfo  {journal} {Journal of Physics A: Mathematical and
  Theoretical}\ } (\bibinfo {year} {2022})},\ \Eprint
  {http://arxiv.org/abs/2105.08202} {arXiv:2105.08202 [cond-mat.str-el]}
  \BibitemShut {NoStop}%
\bibitem [{\citenamefont {Jia}\ \emph {et~al.}(2023{\natexlab{a}})\citenamefont
  {Jia}, \citenamefont {Kaszlikowski},\ and\ \citenamefont
  {Tan}}]{jia2023boundary}%
  \BibitemOpen
  \bibfield  {author} {\bibinfo {author} {\bibfnamefont {Z.}~\bibnamefont
  {Jia}}, \bibinfo {author} {\bibfnamefont {D.}~\bibnamefont {Kaszlikowski}}, \
  and\ \bibinfo {author} {\bibfnamefont {S.}~\bibnamefont {Tan}},\ }\bibfield
  {title} {\enquote {\bibinfo {title} {Boundary and domain wall theories of 2d
  generalized quantum double model},}\ }\href
  {https://link.springer.com/article/10.1007/JHEP07(2023)160} {\bibfield
  {journal} {\bibinfo  {journal} {Journal of High Energy Physics}\ }\textbf
  {\bibinfo {volume} {2023}},\ \bibinfo {pages} {1} (\bibinfo {year}
  {2023}{\natexlab{a}})},\ \Eprint {http://arxiv.org/abs/2207.03970}
  {arXiv:2207.03970 [quant-ph]} \BibitemShut {NoStop}%
\bibitem [{\citenamefont {Jia}\ \emph {et~al.}(2023{\natexlab{b}})\citenamefont
  {Jia}, \citenamefont {Tan}, \citenamefont {Kaszlikowski},\ and\ \citenamefont
  {Chang}}]{Jia2023weak}%
  \BibitemOpen
  \bibfield  {author} {\bibinfo {author} {\bibfnamefont {Z.}~\bibnamefont
  {Jia}}, \bibinfo {author} {\bibfnamefont {S.}~\bibnamefont {Tan}}, \bibinfo
  {author} {\bibfnamefont {D.}~\bibnamefont {Kaszlikowski}}, \ and\ \bibinfo
  {author} {\bibfnamefont {L.}~\bibnamefont {Chang}},\ }\bibfield  {title}
  {\enquote {\bibinfo {title} {On weak {H}opf symmetry and weak {H}opf quantum
  double model},}\ }\href {\doibase 10.1007/s00220-023-04792-9} {\bibfield
  {journal} {\bibinfo  {journal} {Communications in Mathematical Physics}\
  }\textbf {\bibinfo {volume} {402}},\ \bibinfo {pages} {3045} (\bibinfo {year}
  {2023}{\natexlab{b}})},\ \Eprint {http://arxiv.org/abs/2302.08131}
  {arXiv:2302.08131 [hep-th]} \BibitemShut {NoStop}%
\bibitem [{\citenamefont {Abe}(2004)}]{abe2004hopf}%
  \BibitemOpen
  \bibfield  {author} {\bibinfo {author} {\bibfnamefont {E.}~\bibnamefont
  {Abe}},\ }\href@noop {} {\emph {\bibinfo {title} {Hopf algebras}}},\ \bibinfo
  {series} {Cambridge Tracts in Mathematics}, Vol.~\bibinfo {volume} {74}\
  (\bibinfo  {publisher} {Cambridge University Press},\ \bibinfo {year}
  {2004})\ pp.\ \bibinfo {pages} {xii+284}\BibitemShut {NoStop}%
\bibitem [{\citenamefont {Kassel}(1995)}]{kassel2012quantum}%
  \BibitemOpen
  \bibfield  {author} {\bibinfo {author} {\bibfnamefont {C.}~\bibnamefont
  {Kassel}},\ }\href {https://link.springer.com/book/10.1007/978-1-4612-0783-2}
  {\emph {\bibinfo {title} {Quantum groups}}},\ \bibinfo {series} {Graduate
  Texts in Mathematics}, Vol.\ \bibinfo {volume} {155}\ (\bibinfo  {publisher}
  {Springer-Verlag, New York},\ \bibinfo {year} {1995})\ pp.\ \bibinfo {pages}
  {xii+531}\BibitemShut {NoStop}%
\bibitem [{\citenamefont {Turaev}(2016)}]{turaev2016quantum}%
  \BibitemOpen
  \bibfield  {author} {\bibinfo {author} {\bibfnamefont {V.~G.}\ \bibnamefont
  {Turaev}},\ }\href
  {https://www.degruyter.com/document/doi/10.1515/9783110435221/html} {\emph
  {\bibinfo {title} {Quantum invariants of knots and 3-manifolds}}},\
  Vol.~\bibinfo {volume} {18}\ (\bibinfo  {publisher} {De Gruyter},\ \bibinfo
  {year} {2016})\ pp.\ \bibinfo {pages} {xii+592}\BibitemShut {NoStop}%
\bibitem [{\citenamefont {Bakalov}\ and\ \citenamefont
  {Kirillov}(2001)}]{bakalov2001lectures}%
  \BibitemOpen
  \bibfield  {author} {\bibinfo {author} {\bibfnamefont {B.}~\bibnamefont
  {Bakalov}}\ and\ \bibinfo {author} {\bibfnamefont {A.~A.}\ \bibnamefont
  {Kirillov}},\ }\href {http://bookstore.ams.org/ulect-21} {\emph {\bibinfo
  {title} {Lectures on tensor categories and modular functors}}},\
  Vol.~\bibinfo {volume} {21}\ (\bibinfo  {publisher} {American Mathematical
  Soc.},\ \bibinfo {year} {2001})\BibitemShut {NoStop}%
\bibitem [{\citenamefont {Larson}\ and\ \citenamefont
  {Radford}(1988)}]{larson1988semisimple}%
  \BibitemOpen
  \bibfield  {author} {\bibinfo {author} {\bibfnamefont {R.~G.}\ \bibnamefont
  {Larson}}\ and\ \bibinfo {author} {\bibfnamefont {D.~E.}\ \bibnamefont
  {Radford}},\ }\bibfield  {title} {\enquote {\bibinfo {title} {Semisimple
  cosemisimple hopf algebras},}\ }\href {https://www.jstor.org/stable/2374545}
  {\bibfield  {journal} {\bibinfo  {journal} {American Journal of Mathematics}\
  }\textbf {\bibinfo {volume} {110}},\ \bibinfo {pages} {187} (\bibinfo {year}
  {1988})}\BibitemShut {NoStop}%
\bibitem [{\citenamefont {Larson}(1971)}]{larson1971characters}%
  \BibitemOpen
  \bibfield  {author} {\bibinfo {author} {\bibfnamefont {R.~G.}\ \bibnamefont
  {Larson}},\ }\bibfield  {title} {\enquote {\bibinfo {title} {Characters of
  hopf algebras},}\ }\href@noop {} {\bibfield  {journal} {\bibinfo  {journal}
  {Journal of algebra}\ }\textbf {\bibinfo {volume} {17}},\ \bibinfo {pages}
  {352} (\bibinfo {year} {1971})}\BibitemShut {NoStop}%
\bibitem [{\citenamefont {Etingof}\ \emph {et~al.}(2016)\citenamefont
  {Etingof}, \citenamefont {Gelaki}, \citenamefont {Nikshych},\ and\
  \citenamefont {Ostrik}}]{etingof2016tensor}%
  \BibitemOpen
  \bibfield  {author} {\bibinfo {author} {\bibfnamefont {P.}~\bibnamefont
  {Etingof}}, \bibinfo {author} {\bibfnamefont {S.}~\bibnamefont {Gelaki}},
  \bibinfo {author} {\bibfnamefont {D.}~\bibnamefont {Nikshych}}, \ and\
  \bibinfo {author} {\bibfnamefont {V.}~\bibnamefont {Ostrik}},\ }\href
  {https://bookstore.ams.org/surv-205} {\emph {\bibinfo {title} {Tensor
  categories}}},\ Vol.\ \bibinfo {volume} {205}\ (\bibinfo  {publisher}
  {American Mathematical Soc.},\ \bibinfo {year} {2016})\ pp.\ \bibinfo {pages}
  {xvi+343}\BibitemShut {NoStop}%
\bibitem [{\citenamefont {Nikshych}(2002)}]{nikshych2002structure}%
  \BibitemOpen
  \bibfield  {author} {\bibinfo {author} {\bibfnamefont {D.}~\bibnamefont
  {Nikshych}},\ }\bibfield  {title} {\enquote {\bibinfo {title} {On the
  structure of weak hopf algebras},}\ }\href
  {https://www.sciencedirect.com/science/article/pii/S0001870802920815}
  {\bibfield  {journal} {\bibinfo  {journal} {Advances in Mathematics}\
  }\textbf {\bibinfo {volume} {170}},\ \bibinfo {pages} {257} (\bibinfo {year}
  {2002})},\ \Eprint {http://arxiv.org/abs/math/0106010} {arXiv:math/0106010
  [math.QA]} \BibitemShut {NoStop}%
\bibitem [{\citenamefont {Bridgeman}\ \emph {et~al.}(2023)\citenamefont
  {Bridgeman}, \citenamefont {Lootens},\ and\ \citenamefont
  {Verstraete}}]{bridgeman2023invertible}%
  \BibitemOpen
  \bibfield  {author} {\bibinfo {author} {\bibfnamefont {J.~C.}\ \bibnamefont
  {Bridgeman}}, \bibinfo {author} {\bibfnamefont {L.}~\bibnamefont {Lootens}},
  \ and\ \bibinfo {author} {\bibfnamefont {F.}~\bibnamefont {Verstraete}},\
  }\bibfield  {title} {\enquote {\bibinfo {title} {Invertible bimodule
  categories and generalized {S}chur orthogonality},}\ }\href
  {https://link.springer.com/article/10.1007/s00220-023-04781-y} {\bibfield
  {journal} {\bibinfo  {journal} {Communications in Mathematical Physics}\
  }\textbf {\bibinfo {volume} {402}},\ \bibinfo {pages} {2691} (\bibinfo {year}
  {2023})},\ \Eprint {http://arxiv.org/abs/2211.01947} {arXiv:2211.01947
  [math.QA]} \BibitemShut {NoStop}%
\bibitem [{\citenamefont {Buerschaper}\ \emph
  {et~al.}(2013{\natexlab{b}})\citenamefont {Buerschaper}, \citenamefont
  {Christandl}, \citenamefont {Kong},\ and\ \citenamefont
  {Aguado}}]{buerschaper2013electric}%
  \BibitemOpen
  \bibfield  {author} {\bibinfo {author} {\bibfnamefont {O.}~\bibnamefont
  {Buerschaper}}, \bibinfo {author} {\bibfnamefont {M.}~\bibnamefont
  {Christandl}}, \bibinfo {author} {\bibfnamefont {L.}~\bibnamefont {Kong}}, \
  and\ \bibinfo {author} {\bibfnamefont {M.}~\bibnamefont {Aguado}},\
  }\bibfield  {title} {\enquote {\bibinfo {title} {Electric--magnetic duality
  of lattice systems with topological order},}\ }\href
  {https://www.sciencedirect.com/science/article/abs/pii/S0550321313004367?via%3Dihub}
  {\bibfield  {journal} {\bibinfo  {journal} {Nuclear Physics B}\ }\textbf
  {\bibinfo {volume} {876}},\ \bibinfo {pages} {619} (\bibinfo {year}
  {2013}{\natexlab{b}})},\ \Eprint {http://arxiv.org/abs/1006.5823}
  {arXiv:1006.5823 [cond-mat.str-el]} \BibitemShut {NoStop}%
\bibitem [{\citenamefont {Bais}\ \emph {et~al.}(2003)\citenamefont {Bais},
  \citenamefont {Schroers},\ and\ \citenamefont {Slingerland}}]{bais2003hopf}%
  \BibitemOpen
  \bibfield  {author} {\bibinfo {author} {\bibfnamefont {A.~F.}\ \bibnamefont
  {Bais}}, \bibinfo {author} {\bibfnamefont {B.~J.}\ \bibnamefont {Schroers}},
  \ and\ \bibinfo {author} {\bibfnamefont {J.~K.}\ \bibnamefont
  {Slingerland}},\ }\bibfield  {title} {\enquote {\bibinfo {title} {Hopf
  symmetry breaking and confinement in (2+1)-dimensional gauge theory},}\
  }\href {https://doi.org/10.1088/1126-6708/2003/05/068} {\bibfield  {journal}
  {\bibinfo  {journal} {Journal of High Energy Physics}\ }\textbf {\bibinfo
  {volume} {2003}},\ \bibinfo {pages} {068} (\bibinfo {year} {2003})},\ \Eprint
  {http://arxiv.org/abs/hep-th/0205114} {arXiv:hep-th/0205114 [hep-th]}
  \BibitemShut {NoStop}%
\bibitem [{\citenamefont {Jia}\ \emph {et~al.}(2024{\natexlab{a}})\citenamefont
  {Jia}, \citenamefont {Tan},\ and\ \citenamefont
  {Kaszlikowski}}]{jia2024weakTube}%
  \BibitemOpen
  \bibfield  {author} {\bibinfo {author} {\bibfnamefont {Z.}~\bibnamefont
  {Jia}}, \bibinfo {author} {\bibfnamefont {S.}~\bibnamefont {Tan}}, \ and\
  \bibinfo {author} {\bibfnamefont {D.}~\bibnamefont {Kaszlikowski}},\
  }\bibfield  {title} {\enquote {\bibinfo {title} {Weak {H}opf symmetry and
  tube algebra of the generalized multifusion string-net model},}\ }\href
  {https://doi.org/10.1007/JHEP07(2024)207} {\bibfield  {journal} {\bibinfo
  {journal} {Journal of High Enegy Physics}\ }\textbf {\bibinfo {volume}
  {07}},\ \bibinfo {pages} {207} (\bibinfo {year} {2024}{\natexlab{a}})},\
  \Eprint {http://arxiv.org/abs/2403.04446} {arXiv:2403.04446 [hep-th]}
  \BibitemShut {NoStop}%
\bibitem [{\citenamefont {B\"{o}hm}\ \emph {et~al.}(1999)\citenamefont
  {B\"{o}hm}, \citenamefont {Nill},\ and\ \citenamefont
  {Szlach\'{a}nyi}}]{BOHM1998weak}%
  \BibitemOpen
  \bibfield  {author} {\bibinfo {author} {\bibfnamefont {G.}~\bibnamefont
  {B\"{o}hm}}, \bibinfo {author} {\bibfnamefont {F.}~\bibnamefont {Nill}}, \
  and\ \bibinfo {author} {\bibfnamefont {K.}~\bibnamefont {Szlach\'{a}nyi}},\
  }\bibfield  {title} {\enquote {\bibinfo {title} {Weak {H}opf algebras: I.
  {I}ntegral theory and ${C}^*$-structure},}\ }\href {\doibase
  https://doi.org/10.1006/jabr.1999.7984} {\bibfield  {journal} {\bibinfo
  {journal} {Journal of Algebra}\ }\textbf {\bibinfo {volume} {221}},\ \bibinfo
  {pages} {385 } (\bibinfo {year} {1999})},\ \Eprint
  {http://arxiv.org/abs/math/9805116} {arXiv:math/9805116 [math.QA]}
  \BibitemShut {NoStop}%
\bibitem [{\citenamefont {Feiguin}\ \emph {et~al.}(2007)\citenamefont
  {Feiguin}, \citenamefont {Trebst}, \citenamefont {Ludwig}, \citenamefont
  {Troyer}, \citenamefont {Kitaev}, \citenamefont {Wang},\ and\ \citenamefont
  {Freedman}}]{Feiguin2007interacting}%
  \BibitemOpen
  \bibfield  {author} {\bibinfo {author} {\bibfnamefont {A.}~\bibnamefont
  {Feiguin}}, \bibinfo {author} {\bibfnamefont {S.}~\bibnamefont {Trebst}},
  \bibinfo {author} {\bibfnamefont {A.~W.~W.}\ \bibnamefont {Ludwig}}, \bibinfo
  {author} {\bibfnamefont {M.}~\bibnamefont {Troyer}}, \bibinfo {author}
  {\bibfnamefont {A.}~\bibnamefont {Kitaev}}, \bibinfo {author} {\bibfnamefont
  {Z.}~\bibnamefont {Wang}}, \ and\ \bibinfo {author} {\bibfnamefont {M.~H.}\
  \bibnamefont {Freedman}},\ }\bibfield  {title} {\enquote {\bibinfo {title}
  {Interacting anyons in topological quantum liquids: The golden chain},}\
  }\href {\doibase 10.1103/PhysRevLett.98.160409} {\bibfield  {journal}
  {\bibinfo  {journal} {Phys. Rev. Lett.}\ }\textbf {\bibinfo {volume} {98}},\
  \bibinfo {pages} {160409} (\bibinfo {year} {2007})},\ \Eprint
  {http://arxiv.org/abs/cond-mat/0612341} {arXiv:cond-mat/0612341
  [cond-mat.str-el]} \BibitemShut {NoStop}%
\bibitem [{\citenamefont {Girelli}\ \emph {et~al.}(2021)\citenamefont
  {Girelli}, \citenamefont {Osei},\ and\ \citenamefont
  {Osumanu}}]{girelli2021semidual}%
  \BibitemOpen
  \bibfield  {author} {\bibinfo {author} {\bibfnamefont {F.}~\bibnamefont
  {Girelli}}, \bibinfo {author} {\bibfnamefont {P.~K.}\ \bibnamefont {Osei}}, \
  and\ \bibinfo {author} {\bibfnamefont {A.}~\bibnamefont {Osumanu}},\
  }\bibfield  {title} {\enquote {\bibinfo {title} {Semidual {K}itaev lattice
  model and tensor network representation},}\ }\href
  {https://link.springer.com/article/10.1007/JHEP09(2021)210} {\bibfield
  {journal} {\bibinfo  {journal} {Journal of High Energy Physics}\ }\textbf
  {\bibinfo {volume} {2021}},\ \bibinfo {pages} {1} (\bibinfo {year} {2021})},\
  \Eprint {http://arxiv.org/abs/1709.00522} {arXiv:1709.00522 [math.QA]}
  \BibitemShut {NoStop}%
\bibitem [{\citenamefont {Or\'{u}s}(2014)}]{Orus2014tensornet}%
  \BibitemOpen
  \bibfield  {author} {\bibinfo {author} {\bibfnamefont {R.}~\bibnamefont
  {Or\'{u}s}},\ }\bibfield  {title} {\enquote {\bibinfo {title} {A practical
  introduction to tensor networks: Matrix product states and projected
  entangled pair states},}\ }\href {\doibase
  https://doi.org/10.1016/j.aop.2014.06.013} {\bibfield  {journal} {\bibinfo
  {journal} {Annals of Physics}\ }\textbf {\bibinfo {volume} {349}},\ \bibinfo
  {pages} {117 } (\bibinfo {year} {2014})},\ \Eprint
  {http://arxiv.org/abs/1306.2164} {arXiv:1306.2164 [cond-mat.str-el]}
  \BibitemShut {NoStop}%
\bibitem [{\citenamefont {Cirac}\ \emph {et~al.}(2021)\citenamefont {Cirac},
  \citenamefont {P\'erez-Garc\'{\i}a}, \citenamefont {Schuch},\ and\
  \citenamefont {Verstraete}}]{Cirac2021MPSreivew}%
  \BibitemOpen
  \bibfield  {author} {\bibinfo {author} {\bibfnamefont {J.~I.}\ \bibnamefont
  {Cirac}}, \bibinfo {author} {\bibfnamefont {D.}~\bibnamefont
  {P\'erez-Garc\'{\i}a}}, \bibinfo {author} {\bibfnamefont {N.}~\bibnamefont
  {Schuch}}, \ and\ \bibinfo {author} {\bibfnamefont {F.}~\bibnamefont
  {Verstraete}},\ }\bibfield  {title} {\enquote {\bibinfo {title} {Matrix
  product states and projected entangled pair states: Concepts, symmetries,
  theorems},}\ }\href {\doibase 10.1103/RevModPhys.93.045003} {\bibfield
  {journal} {\bibinfo  {journal} {Rev. Mod. Phys.}\ }\textbf {\bibinfo {volume}
  {93}},\ \bibinfo {pages} {045003} (\bibinfo {year} {2021})},\ \Eprint
  {http://arxiv.org/abs/2011.12127} {arXiv:2011.12127 [quant-ph]} \BibitemShut
  {NoStop}%
\bibitem [{\citenamefont {Molnar}\ \emph {et~al.}(2022)\citenamefont {Molnar},
  \citenamefont {de~Alarc{\'o}n}, \citenamefont {Garre-Rubio}, \citenamefont
  {Schuch}, \citenamefont {Cirac},\ and\ \citenamefont
  {P{\'e}rez-Garc{\'\i}a}}]{molnar2022matrix}%
  \BibitemOpen
  \bibfield  {author} {\bibinfo {author} {\bibfnamefont {A.}~\bibnamefont
  {Molnar}}, \bibinfo {author} {\bibfnamefont {A.~R.}\ \bibnamefont
  {de~Alarc{\'o}n}}, \bibinfo {author} {\bibfnamefont {J.}~\bibnamefont
  {Garre-Rubio}}, \bibinfo {author} {\bibfnamefont {N.}~\bibnamefont {Schuch}},
  \bibinfo {author} {\bibfnamefont {J.~I.}\ \bibnamefont {Cirac}}, \ and\
  \bibinfo {author} {\bibfnamefont {D.}~\bibnamefont {P{\'e}rez-Garc{\'\i}a}},\
  }\bibfield  {title} {\enquote {\bibinfo {title} {Matrix product operator
  algebras {I}: representations of weak {H}opf algebras and projected entangled
  pair states},}\ }\href {https://arxiv.org/abs/2204.05940} {\bibfield
  {journal} {\bibinfo  {journal} {arXiv preprint arXiv:2204.05940}\ } (\bibinfo
  {year} {2022})}\BibitemShut {NoStop}%
\bibitem [{\citenamefont {Garre-Rubio}\ \emph {et~al.}(2023)\citenamefont
  {Garre-Rubio}, \citenamefont {Lootens},\ and\ \citenamefont
  {Moln{\'{a}}r}}]{GarreRubio2023classifyingphases}%
  \BibitemOpen
  \bibfield  {author} {\bibinfo {author} {\bibfnamefont {J.}~\bibnamefont
  {Garre-Rubio}}, \bibinfo {author} {\bibfnamefont {L.}~\bibnamefont
  {Lootens}}, \ and\ \bibinfo {author} {\bibfnamefont {A.}~\bibnamefont
  {Moln{\'{a}}r}},\ }\bibfield  {title} {\enquote {\bibinfo {title}
  {Classifying phases protected by matrix product operator symmetries using
  matrix product states},}\ }\href {\doibase 10.22331/q-2023-02-21-927}
  {\bibfield  {journal} {\bibinfo  {journal} {{Quantum}}\ }\textbf {\bibinfo
  {volume} {7}},\ \bibinfo {pages} {927} (\bibinfo {year} {2023})},\ \Eprint
  {http://arxiv.org/abs/2203.12563} {arXiv:2203.12563 [cond-mat.str-el]}
  \BibitemShut {NoStop}%
\bibitem [{\citenamefont {Kong}\ \emph {et~al.}(2020)\citenamefont {Kong},
  \citenamefont {Lan}, \citenamefont {Wen}, \citenamefont {Zhang},\ and\
  \citenamefont {Zheng}}]{Kong2020algebraic}%
  \BibitemOpen
  \bibfield  {author} {\bibinfo {author} {\bibfnamefont {L.}~\bibnamefont
  {Kong}}, \bibinfo {author} {\bibfnamefont {T.}~\bibnamefont {Lan}}, \bibinfo
  {author} {\bibfnamefont {X.-G.}\ \bibnamefont {Wen}}, \bibinfo {author}
  {\bibfnamefont {Z.-H.}\ \bibnamefont {Zhang}}, \ and\ \bibinfo {author}
  {\bibfnamefont {H.}~\bibnamefont {Zheng}},\ }\bibfield  {title} {\enquote
  {\bibinfo {title} {Algebraic higher symmetry and categorical symmetry: A
  holographic and entanglement view of symmetry},}\ }\href {\doibase
  10.1103/PhysRevResearch.2.043086} {\bibfield  {journal} {\bibinfo  {journal}
  {Phys. Rev. Res.}\ }\textbf {\bibinfo {volume} {2}},\ \bibinfo {pages}
  {043086} (\bibinfo {year} {2020})},\ \Eprint
  {http://arxiv.org/abs/2005.14178} {arXiv:2005.14178 [cond-mat.str-el]}
  \BibitemShut {NoStop}%
\bibitem [{\citenamefont {Jia}\ \emph {et~al.}(2024{\natexlab{b}})\citenamefont
  {Jia}, \citenamefont {Kaszlikowski},\ and\ \citenamefont
  {Tan}}]{jia2022electricmagnetic}%
  \BibitemOpen
  \bibfield  {author} {\bibinfo {author} {\bibfnamefont {Z.}~\bibnamefont
  {Jia}}, \bibinfo {author} {\bibfnamefont {D.}~\bibnamefont {Kaszlikowski}}, \
  and\ \bibinfo {author} {\bibfnamefont {S.}~\bibnamefont {Tan}},\ }\bibfield
  {title} {\enquote {\bibinfo {title} {Electric-magnetic duality and
  $\mathbb{Z}_2$ symmetry enriched cyclic abelian lattice gauge theory},}\
  }\href {https://doi.org/10.1088/1751-8121/ad5123} {\bibfield  {journal}
  {\bibinfo  {journal} {Journal of Physics A: Mathematical and Theoretical}\
  }\textbf {\bibinfo {volume} {57}},\ \bibinfo {pages} {255203} (\bibinfo
  {year} {2024}{\natexlab{b}})},\ \Eprint {http://arxiv.org/abs/2201.12361}
  {arXiv:2201.12361 [quant-ph]} \BibitemShut {NoStop}%
\bibitem [{\citenamefont {Wen}(2003)}]{wen2003quantum}%
  \BibitemOpen
  \bibfield  {author} {\bibinfo {author} {\bibfnamefont {X.-G.}\ \bibnamefont
  {Wen}},\ }\bibfield  {title} {\enquote {\bibinfo {title} {Quantum orders in
  an exact soluble model},}\ }\href {\doibase 10.1103/PhysRevLett.90.016803}
  {\bibfield  {journal} {\bibinfo  {journal} {Phys. Rev. Lett.}\ }\textbf
  {\bibinfo {volume} {90}},\ \bibinfo {pages} {016803} (\bibinfo {year}
  {2003})},\ \Eprint {http://arxiv.org/abs/quant-ph/0205004}
  {arXiv:quant-ph/0205004 [quant-ph]} \BibitemShut {NoStop}%
\bibitem [{\citenamefont {Jia}(2024)}]{jia2024SymTFT}%
  \BibitemOpen
  \bibfield  {author} {\bibinfo {author} {\bibfnamefont {Z.}~\bibnamefont
  {Jia}},\ }\href@noop {} {\enquote {\bibinfo {title} {{SymTFT perspective on
  (1+1)d lattice models of weak Hopf non-invertible symmetry-protected
  topological phases}},}\ }\bibinfo {howpublished} {in preparation} (\bibinfo
  {year} {2024})\BibitemShut {NoStop}%
\bibitem [{\citenamefont {Huang}\ and\ \citenamefont
  {Cheng}(2023)}]{huang2023topologicalholo}%
  \BibitemOpen
  \bibfield  {author} {\bibinfo {author} {\bibfnamefont {S.-J.}\ \bibnamefont
  {Huang}}\ and\ \bibinfo {author} {\bibfnamefont {M.}~\bibnamefont {Cheng}},\
  }\href {https://arxiv.org/abs/2310.16878} {\enquote {\bibinfo {title}
  {Topological holography, quantum criticality, and boundary states},}\ }
  (\bibinfo {year} {2023}),\ \Eprint {http://arxiv.org/abs/2310.16878}
  {arXiv:2310.16878 [cond-mat.str-el]} \BibitemShut {NoStop}%
\bibitem [{\citenamefont {Freed}\ \emph {et~al.}(2024)\citenamefont {Freed},
  \citenamefont {Moore},\ and\ \citenamefont {Teleman}}]{freed2024topSymTFT}%
  \BibitemOpen
  \bibfield  {author} {\bibinfo {author} {\bibfnamefont {D.~S.}\ \bibnamefont
  {Freed}}, \bibinfo {author} {\bibfnamefont {G.~W.}\ \bibnamefont {Moore}}, \
  and\ \bibinfo {author} {\bibfnamefont {C.}~\bibnamefont {Teleman}},\ }\href
  {https://arxiv.org/abs/2209.07471} {\enquote {\bibinfo {title} {Topological
  symmetry in quantum field theory},}\ } (\bibinfo {year} {2024}),\ \Eprint
  {http://arxiv.org/abs/2209.07471} {arXiv:2209.07471 [hep-th]} \BibitemShut
  {NoStop}%
\bibitem [{\citenamefont {Gaiotto}\ and\ \citenamefont
  {Kulp}(2021)}]{gaiotto2021orbifold}%
  \BibitemOpen
  \bibfield  {author} {\bibinfo {author} {\bibfnamefont {D.}~\bibnamefont
  {Gaiotto}}\ and\ \bibinfo {author} {\bibfnamefont {J.}~\bibnamefont {Kulp}},\
  }\bibfield  {title} {\enquote {\bibinfo {title} {Orbifold groupoids},}\
  }\href {https://link.springer.com/article/10.1007/JHEP02(2021)132} {\bibfield
   {journal} {\bibinfo  {journal} {Journal of High Energy Physics}\ }\textbf
  {\bibinfo {volume} {2021}},\ \bibinfo {pages} {1} (\bibinfo {year} {2021})},\
  \Eprint {http://arxiv.org/abs/2008.05960} {arXiv:2008.05960 [hep-th]}
  \BibitemShut {NoStop}%
\bibitem [{\citenamefont {Bhardwaj}\ and\ \citenamefont
  {Schafer-Nameki}(2023)}]{bhardwaj2023generalizedcharge}%
  \BibitemOpen
  \bibfield  {author} {\bibinfo {author} {\bibfnamefont {L.}~\bibnamefont
  {Bhardwaj}}\ and\ \bibinfo {author} {\bibfnamefont {S.}~\bibnamefont
  {Schafer-Nameki}},\ }\href {https://arxiv.org/abs/2305.17159} {\enquote
  {\bibinfo {title} {Generalized charges, part ii: Non-invertible symmetries
  and the symmetry tft},}\ } (\bibinfo {year} {2023}),\ \Eprint
  {http://arxiv.org/abs/2305.17159} {arXiv:2305.17159 [hep-th]} \BibitemShut
  {NoStop}%
\bibitem [{\citenamefont {Apruzzi}\ \emph {et~al.}(2023)\citenamefont
  {Apruzzi}, \citenamefont {Bonetti}, \citenamefont {Garc{\'\i}a~Etxebarria},
  \citenamefont {Hosseini},\ and\ \citenamefont
  {Sch{\"a}fer-Nameki}}]{apruzzi2023symmetry}%
  \BibitemOpen
  \bibfield  {author} {\bibinfo {author} {\bibfnamefont {F.}~\bibnamefont
  {Apruzzi}}, \bibinfo {author} {\bibfnamefont {F.}~\bibnamefont {Bonetti}},
  \bibinfo {author} {\bibfnamefont {I.}~\bibnamefont {Garc{\'\i}a~Etxebarria}},
  \bibinfo {author} {\bibfnamefont {S.~S.}\ \bibnamefont {Hosseini}}, \ and\
  \bibinfo {author} {\bibfnamefont {S.}~\bibnamefont {Sch{\"a}fer-Nameki}},\
  }\bibfield  {title} {\enquote {\bibinfo {title} {Symmetry tfts from string
  theory},}\ }\href
  {https://link.springer.com/article/10.1007/s00220-023-04737-2} {\bibfield
  {journal} {\bibinfo  {journal} {Communications in mathematical physics}\
  }\textbf {\bibinfo {volume} {402}},\ \bibinfo {pages} {895} (\bibinfo {year}
  {2023})},\ \Eprint {http://arxiv.org/abs/2112.02092} {arXiv:2112.02092
  [hep-th]} \BibitemShut {NoStop}%
\bibitem [{\citenamefont {Bhardwaj}\ \emph
  {et~al.}(2024{\natexlab{d}})\citenamefont {Bhardwaj}, \citenamefont
  {Bottini}, \citenamefont {Pajer},\ and\ \citenamefont
  {Schafer-Nameki}}]{bhardwaj2024gappedphases}%
  \BibitemOpen
  \bibfield  {author} {\bibinfo {author} {\bibfnamefont {L.}~\bibnamefont
  {Bhardwaj}}, \bibinfo {author} {\bibfnamefont {L.~E.}\ \bibnamefont
  {Bottini}}, \bibinfo {author} {\bibfnamefont {D.}~\bibnamefont {Pajer}}, \
  and\ \bibinfo {author} {\bibfnamefont {S.}~\bibnamefont {Schafer-Nameki}},\
  }\href {https://arxiv.org/abs/2310.03784} {\enquote {\bibinfo {title} {Gapped
  phases with non-invertible symmetries: (1+1)d},}\ } (\bibinfo {year}
  {2024}{\natexlab{d}}),\ \Eprint {http://arxiv.org/abs/2310.03784}
  {arXiv:2310.03784 [hep-th]} \BibitemShut {NoStop}%
\bibitem [{\citenamefont {Zhang}\ and\ \citenamefont
  {C\'ordova}(2024)}]{Zhang2024anomaly}%
  \BibitemOpen
  \bibfield  {author} {\bibinfo {author} {\bibfnamefont {C.}~\bibnamefont
  {Zhang}}\ and\ \bibinfo {author} {\bibfnamefont {C.}~\bibnamefont
  {C\'ordova}},\ }\bibfield  {title} {\enquote {\bibinfo {title} {Anomalies of
  $(1+1)$-dimensional categorical symmetries},}\ }\href {\doibase
  10.1103/PhysRevB.110.035155} {\bibfield  {journal} {\bibinfo  {journal}
  {Phys. Rev. B}\ }\textbf {\bibinfo {volume} {110}},\ \bibinfo {pages}
  {035155} (\bibinfo {year} {2024})}\BibitemShut {NoStop}%
\bibitem [{\citenamefont {Ji}\ and\ \citenamefont
  {Wen}(2020)}]{Ji2020categoricalsym}%
  \BibitemOpen
  \bibfield  {author} {\bibinfo {author} {\bibfnamefont {W.}~\bibnamefont
  {Ji}}\ and\ \bibinfo {author} {\bibfnamefont {X.-G.}\ \bibnamefont {Wen}},\
  }\bibfield  {title} {\enquote {\bibinfo {title} {Categorical symmetry and
  noninvertible anomaly in symmetry-breaking and topological phase
  transitions},}\ }\href {\doibase 10.1103/PhysRevResearch.2.033417} {\bibfield
   {journal} {\bibinfo  {journal} {Phys. Rev. Res.}\ }\textbf {\bibinfo
  {volume} {2}},\ \bibinfo {pages} {033417} (\bibinfo {year}
  {2020})}\BibitemShut {NoStop}%
\end{thebibliography}%

\end{document}